\newif\ifextended\extendedtrue
\newcommand{\extended}[2]{\ifextended#1\else#2\fi}
\keywords{uninterpreted programs, program equivalence, program schemes, guarded automata, coalgebra, Kleene algebra with Tests}
\begin{document}

\title{Guarded Kleene Algebra with Tests}
\subtitle{Verification of Uninterpreted Programs in Nearly Linear Time}
\ifextended%
  \titlenote{Extended version with appendix.}
  \def\authornotenum{2}
\else%
  \def\authornotenum{1}
\fi

\author{Steffen Smolka}
\affiliation{%
  \institution{Cornell University}%
  \city{Ithaca, NY}
  \country{USA}%
}
\email{smolka@cs.cornell.edu}

\author{Nate Foster}
\affiliation{%
  \institution{Cornell University}%
  \city{Ithaca, NY}
  \country{USA}%
}
\email{jnfoster@cs.cornell.edu}

\author{Justin Hsu}
\affiliation{%
  \institution{University of Wisconsin--Madison}
  \city{Madison, WI}
  \country{USA}%
}
\email{email@justinh.su}

\author{Tobias Kapp\'{e}}
\affiliation{%
  \institution{University College London}%
  \city{London}
  \country{UK}%
}
\email{tkappe@cs.ucl.ac.uk}

\author{Dexter Kozen}
\affiliation{%
  \institution{Cornell University}%
  \city{Ithaca, NY}
  \country{USA}%
}
\email{kozen@cs.cornell.edu}

\author{Alexandra Silva}
\affiliation{%
  \institution{University College London}%
  \city{London}
  \country{UK}%
}
\email{alexandra.silva@ucl.ac.uk}

\extended{\authorsaddresses{}}{}

\begin{abstract}
Guarded Kleene Algebra with Tests (GKAT) is
a variation on Kleene Algebra with Tests (KAT) that arises by restricting
the union ($+$) and iteration ($*$) operations from KAT to predicate-guarded
versions. We develop the (co)algebraic theory of GKAT and show how it can be 
efficiently used to reason about imperative programs. In contrast to KAT,
whose equational theory is PSPACE-complete, we show that the equational theory
of GKAT is (almost) linear time. We also provide a full Kleene theorem and
prove completeness for an analogue of Salomaa's axiomatization of Kleene Algebra.
\end{abstract}

\maketitle

\begin{quote}
This paper is dedicated to Laurie J. Hendren (1958--2019), whose passion for teaching and research inspired us and many others. Laurie's early work on McCAT~\cite{ErosaH94} helped us understand the limits of Guarded Kleene Algebra with Tests and devise a suitable definition of \emph{well-nestedness} that underpins our main results.
\end{quote}

\section{Introduction}%
\label{sec:intro}

Computer scientists have long explored the connections between
families of programming languages and abstract machines.
This dual perspective has furnished deep
theoretical insights as well as practical tools. As an example,
Kleene's classic result establishing the equivalence of regular
expressions and finite automata~\cite{kleene-1956} inspired decades of work across a
variety of areas including programming language design, mathematical
semantics, and formal verification.

Kleene Algebra with Tests (KAT)~\citep{K96b}, which combines Kleene Algebra (KA)
with Boolean Algebra (BA), is a modern example of this approach.
Viewed from the program-centric perspective, a KAT
models the fundamental constructs that arise in programs:
sequencing, branching, iteration, etc. The equational
theory of KAT enables algebraic reasoning and can be finitely
axiomatized~\citep{KS96a}. Viewed from the machine-centric perspective, a KAT
describes a kind of automaton that
generates a regular language of traces. This shift in perspective
admits techniques from coalgebra for reasoning about
program behavior. In particular, there are efficient algorithms for
checking bisimulation, which can be optimized using properties of
bisimulations~\citep{HopcroftKarp71,bonchi-pous-2013} or symbolic automata
representations~\citep{pous-2014}.

KAT has been used to model computation across a wide variety of areas
including program transformations~\citep{AK01a,K97c}, concurrency
control~\citep{Co94d}, compiler optimizations~\citep{KP00}, cache
control~\citep{BK02a,Co94c}, and more~\citep{Co94c}.  A prominent recent
application is NetKAT~\citep{AFGJKSW13a}, a language for reasoning about the
packet-forwarding behavior of software-defined
networks. NetKAT has a sound and complete
equational theory, and a coalgebraic decision procedure that can be
used to automatically verify many important networking properties
including reachability, loop-freedom, and isolation~\citep{FKMST15a}. However, while
NetKAT's implementation scales well in practice,
deciding equivalence for NetKAT is PSPACE-complete in the worst case~\citep{AFGJKSW13a}.

A natural question to ask is whether there is an efficient fragment of
KAT that is reasonably expressive, while retaining a solid foundation. We
answer this question positively with a comprehensive study of Guarded
Kleene Algebra with Tests (GKAT), the guarded fragment of KAT\@. GKAT is
a propositional abstraction of imperative while programs.
We establish the fundamental properties of GKAT and develop its algebraic and coalgebraic theory.
GKAT replaces the union $(e + f)$ and iteration $(e\star)$ constructs
in KAT with guarded versions: conditionals $(e +_b f)$ and loops
$(e^{(b)})$ guarded by Boolean predicates $b$. The resulting language is
a restriction of full KAT, but sufficiently expressive to
model typical, imperative programs---\eg, essentially all NetKAT
programs needed to solve practical verification problems can be
expressed as guarded programs.

In exchange for a modest sacrifice in expressiveness, GKAT offers two
significant advantages. First, program equivalence (for a fixed Boolean algebra)
is decidable in \emph{nearly linear time}---a substantial improvement over the
PSPACE complexity for KAT~\cite{CKS96a}. Specifically, any GKAT
expression $e$ 
can be represented as a deterministic automaton of size
$\OO(|e|)$, while KAT expressions can require as many as $\OO(2^{|e|})$
states. As a consequence, any language property that is efficiently decidable
for deterministic automata is also efficiently decidable for GKAT\@.
Second, we
believe that GKAT is a better foundation for probabilistic languages
due to well-known issues that arise when combining
non-determinism---which is native to KAT---with probabilistic
choice~\cite{VaraccaWinskel06,MISLOVE2006261}. For example,
ProbNetKAT~\cite{probnetkat-cantor}, a 
probabilistic extension of NetKAT, does not satisfy the KAT axioms,
but its guarded restriction forms a proper GKAT\@.

Although GKAT is a simple restriction of KAT at the syntactic level,
its semantics is surprisingly subtle. In particular,
the ``obvious'' notion of GKAT automata can encode behaviors
that would require non-local control-flow operators
(e.g, $\kw{goto}$ or multi-level $\kw{break}$ statements)~\cite{KT08a}. In contrast, GKAT models programs whose control-flow
always follows a lexical, nested structure. To overcome this
discrepancy, we identify restrictions on
automata to enable an analogue of Kleene's theorem---every
GKAT automaton satisfying our restrictions can be converted to a program, and vice versa. Besides
the theoretical interest in this result, we believe it may
also have practical applications, such as reasoning about
optimizations in a compiler~\cite{mccat}. We also develop an
equational axiomatization for GKAT and prove that it is sound and
complete over a coequationally-defined language model.
The main challenge is that without $+$,
the natural order on KAT programs can no longer be used to
axiomatize a least fixpoint. We instead axiomatize a unique
fixed point, in the style of Salomaa's work on Kleene Algebra~\citep{Salomaa66}.

\subsubsection*{Outline.}
We make the following contributions in this paper.
\begin{itemize}
  \item We initiate a comprehensive study of GKAT, a guarded version of KAT, and show how GKAT models
relational and probabilistic programming languages (\cref{sec:gkat}).
  \item We give a new construction of linear-size automata from GKAT programs (\cref{sec:kleene}).
  As a consequence, the equational theory of GKAT (over a fixed Boolean algebra)
   is decidable in nearly linear time (\cref{sec:decision}).

  \item We identify a class of automata representable as GKAT expressions (\cref{sec:kleene}) that contains all automata produced by the previous construction, yielding a Kleene theorem.

  \item We present axioms for GKAT (\cref{sec:axioms}) and prove that our axiomatization is complete for equivalence with respect to a coequationally-defined language model (\cref{sec:complete}).
\end{itemize}
Omitted proofs appear in the
\extended%
  {appendix.}
  {appendix of the extended version of this paper~\cite{gkat-full}.}

\section{Overview: An Abstract Programming Language}%
\label{sec:gkat}

This section introduces the syntax and semantics of GKAT, an abstract
programming language with uninterpreted actions. Using examples, we
show how GKAT can model relational and probabilistic programming
languages---\ie, by giving actions a concrete
interpretation. An equivalence between abstract GKAT programs thus
implies a corresponding equivalence between concrete programs.




\subsection{Syntax}

The syntax of GKAT is parameterized by abstract sets of \emph{actions}
$\Sigma$ and \emph{primitive tests} $T$, where $\Sigma$ and $T$ are
assumed to be disjoint and nonempty, and $T$ is assumed to be finite.
We reserve $p$ and $q$ to range over actions, and $t$ to range over primitive tests. The language consists of
Boolean expressions, $\Bexp$, and GKAT expressions, $\Exp$, as defined
by the following grammar:\[
  \begin{array}[t]{r@{\;\;}l@{\qquad}l}
  \multicolumn{3}{l}{b,c,d \in \Bexp::=}\\[0.2em]
  \quad \mid  &0              &\kw{false}\\[-0.3mm]
  \mid        &1              &\kw{true}\\[-0.3mm]
  \mid        &t \in T        &t\\[-0.3mm]
  \mid        &b \cdot c      &b~\kw{and}~c\\[-0.3mm]
  \mid        &b + c          &b~\kw{or}~c\\[-0.3mm]
  \mid        &\bneg{b}       &\kw{not}~b
  \end{array}
  \qquad\qquad\qquad
  \begin{array}[t]{r@{\;\,}l@{\qquad}l}
  \multicolumn{3}{l}{e,f,g \in \Exp ::=}\\[0.2em]
  \quad \mid  &p\in \Sigma      &\kw{do}~p\\[-0.3mm]
  \mid        &b \in \Bexp      &\kw{assert}~b\\[-0.3mm]
  \mid        &e \cdot f        &e\kw{;}\,f\\[-0.3mm]
  \mid        &e+_b f           &\kw{if}~b~\kw{then}~e~\kw{else}~f\\[0.3mm]
  \mid        &e^{(b)}          &\kw{while}~b~\kw{do}~e
  \end{array}
\]
The algebraic notation on
the left is more convenient when manipulating terms, while the
notation on the right may be more intuitive when writing programs. We
often abbreviate $e \cdot f$ by $ef$, and omit parentheses following
standard conventions, \eg, writing $bc + d$ instead of $(bc) + d$ and
$ef^{(b)}$ instead of $e(f^{(b)})$.

\subsection{Semantics: Language Model}

Intuitively, we interpret a GKAT expression as the set of ``legal'' execution traces it
induces, where a trace is legal if no assertion fails. To make this formal, let $b
\equiv_\text{BA} c$ denote Boolean equivalence. Entailment is a preorder on the set
of Boolean expressions, $\Bexp$, and can be characterized in terms of equivalence
as follows: $b \leq c \iff b + c \equiv_\text{BA} c$. In the quotient set
$\Bexp/\equiv_\text{BA}$ (the \emph{free Boolean algebra} on generators $T =
\sset{t_1, \dots, t_n}$), entailment is a partial order ${[b]}_{\equiv_\text{BA}} \leq
{[c]}_{\equiv_\text{BA}} \defiff b + c \equiv_\text{BA} c$, with minimum and maximum elements
given by the equivalence classes of $0$ and $1$, respectively.  The minimal
nonzero elements of this order are called \emph{atoms}. We let $\At$ denote the
set of atoms and use lowercase Greek letters $\alpha,\beta, \dots$ to denote
individual atoms.  Each atom is the equivalence class of an expression of the
form $c_1 \cdot c_2 \cdots c_n \in \Bexp$ with $c_i \in \sset{t_i,
\bneg{t_i}}$. Thus we can think of each atom as representing a truth assignment on
$T$, \eg, if $c_i = t_i$ then $t_i$ is set to true, otherwise if $c_i =
\bneg{t_i}$ then $t_i$ is set to false. Likewise, the set $\set{\alpha \in
\At}{\alpha \leq b}$ can be thought of as the set of truth assignments where $b$
evaluates to true; $\equiv_\text{BA}$ is \emph{complete} with respect to this interpretation in that two Boolean expressions are related by $\equiv_\text{BA}$ if and only if their atoms coincide~\cite{birkhoff-bartee-1970}.

A \emph{guarded string} is an element of the regular set
$\Gs \defeq \At \cdot {(\Sigma \cdot \At)}^*$. Intuitively, a non-empty string
$\alpha_0p_1\alpha_1\cdots p_n\alpha_n \in \Gs$ describes a trace of an abstract
program: the atoms $\alpha_i$ describe the state of the system at
various points in time, starting from an initial state $\alpha_0$ and
ending in a final state $\alpha_n$, while the actions $p_i \in \Sigma$ are the
transitions triggered between the various states.
Given two traces, we can combine them sequentially by running one after the other.
Formally, guarded strings compose via a partial \emph{fusion product}
$\diamond \colon \Gs \times \Gs \rightharpoonup \Gs$,
defined for $x,y \in {(\At \cup \Sigma)}^*$ as \[
  x\alpha \diamond \beta y \defeq \begin{cases}
    x \alpha y       &\text{if } \alpha=\beta\\
    \text{undefined} &\text{otherwise.}
  \end{cases}
\]
This product lifts to a total function on languages $L,K \subseteq \Gs$ of
guarded strings, given by
\[
    L \diamond K \defeq \set{x \diamond y}{x \in L, y \in K}.
\]
We need a few more constructions before we can
interpret GKAT expressions as languages representing their possible
traces. First, $2^\Gs$ with the fusion product forms a monoid with identity
$\At$ and so we can define the $n$-th power $L^n$ of a language $L$ inductively
in the usual way:
\begin{mathpar}
L^0 \defeq \At
\and
L^{n+1} \defeq L^n \diamond L
\end{mathpar}
Second, in the special case where $B \subseteq \At$, we write $\bneg{B}$ for $\At - B$
and define:
\begin{mathpar}
L +_B K \defeq (B \diamond L) \cup (\bneg{B} \diamond K)
\and
L^{(B)} \defeq \bigcup_{n \geq 0} {(B \diamond L)}^n \diamond \bneg{B}
\end{mathpar}

We are now ready to interpret GKAT expressions as languages of guarded strings
via the semantic map
$\sem{-} ~\colon~ \Exp \to 2^\Gs$ as follows:
\begin{gather*}
\begin{aligned}[t]
\sem{p} &\defeq \set{\alpha p \beta}{\alpha,\beta \in \At}\\
\sem{b} &\defeq \set{\alpha  \in \At}{\alpha\leq b}
\end{aligned}
\qquad\qquad
\begin{aligned}[t]
\sem{e \cdot f} &\defeq \sem{e} \diamond \sem{f}\\
\sem{e+_b f} &\defeq \sem{e} +_{\sem{b}} \sem{f}
\end{aligned}
\qquad\qquad
\begin{aligned}[t]
\sem{e^{(b)}} &\defeq \sem{e}^{(\sem{b})}
\end{aligned}
\end{gather*}
We call this the \emph{language model} of GKAT\@. Since we
make no assumptions about the semantics of actions, we interpret them
as sets of traces beginning and ending in arbitrary states; this
soundly overapproximates the behavior of any instantiation. A test is
interpreted as the set of states satisfying the test. The traces of
$e \cdot f$ are obtained by composing traces from $e$ with traces from
$f$ in all possible ways that make the final state of an $e$-trace
match up with the initial state of an $f$-trace. The traces of $e +_b
f$ collect traces of $e$ and $f$, restricting to $e$-traces
whose initial state satisfies $b$ and $f$-traces whose initial state
satisfies $\bneg{b}$. The traces of $e^{(b)}$ are obtained by
sequentially composing zero or more $be$-traces and selecting
traces ending in a state satisfying $\bneg b$.

\begin{remark}[Connection to KAT]%
\label{rem:kat-connection}
The expressions for KAT, denoted $\Kexp$, are generated by the same
grammar as for GKAT, except that KAT's union ($+$) replaces
GKAT's guarded union ($+_b$) and KAT's iteration ($e^*$) replaces
GKAT's guarded iteration ($e^{(b)}$).
GKAT's guarded operators can be encoded in KAT;\@ this encoding, which goes back
to early work on Propositional Dynamic Logic~\cite{FL79}, is the standard method
to model conditionals and while loops:
\begin{mathpar}
  e +_b f \mapsto be + \bneg{b}f
  \and
  e^{(b)} \mapsto {(be)}^* \bneg{b}
\end{mathpar}
Thus, there is a homomorphism $\phi\colon \Exp \to \Kexp$ from GKAT
to KAT expressions.
We inherit KAT's language model~\cite{KS96a},
$\KK\den{-}\colon \Kexp \to 2^\Gs$, in the sense that
$\den{-} = \KK\den{-} \circ \phi$.
\end{remark}

\noindent
The languages denoted by GKAT programs satisfy an important property:
\begin{definition}[Determinacy property]%
\label{def:action-det}
A language of guarded strings $L \subseteq \Gs$
satisfies the \emph{determinacy property} if,
whenever string $x,y \in L$ 
agree on their first $n$ atoms, then they agree
on their first $n$ actions (or lack thereof). For example,
$\{\alpha p\gamma, \alpha p \delta, \beta q\delta\}$ and $\{\alpha p\gamma,\beta\}$
for $\alpha\ne\beta$ satisfy the determinacy property, while
$\{\alpha p\beta,\alpha\}$ and $\{\alpha p\beta,\alpha q\delta\}$
for $p\ne q$ do not.
\end{definition}

\noindent
We say that two expressions $e$ and $f$ are \emph{equivalent}
if they have the same semantics---\ie, if $\den{e} = \den{f}$.
In the following sections, we show that this notion of equivalence
\begin{itemize}
\item is sound and complete for relational and  probabilistic interpretations
  (\cref{sec:rel-model,sec:prob-model}),
\item can be finitely and equationally axiomatized in a
  sound (\cref{sec:axioms}) and complete (\cref{sec:complete}) way, and
\item is efficiently decidable in time nearly linear in the sizes of the
  expressions (\cref{sec:kleene,sec:decision}).
\end{itemize}

\subsection{Relational Model}%
\label{sec:rel-model}
This subsection gives an interpretation of GKAT expressions as binary
relations, a common model of input-output behavior for many programming languages.
We show that the language model is sound and complete for this interpretation.
Thus GKAT equivalence implies program equivalence for any programming language
with a suitable relational semantics.

\begin{definition}[Relational Interpretation]
Let $i = (\State, \eval, \sat)$ be a triple consisting of
\begin{itemize}
  \item a set of \emph{states} $\State$,
  \item for each action $p \in \Sigma$,
    a binary relation $\eval(p) \subseteq \State \times \State$, and
  \item for each primitive test $t \in T$,
    a set of states $\sat(t) \subseteq \State$.
\end{itemize}
Then the \emph{relational interpretation} of an expression $e$ with respect
to $i$ is the smallest binary relation
$\rden{i}{e} \subseteq \State \times \State$
satisfying the following rules,
\begin{mathpar}
\inferrule{(\sigma, \sigma') \in \eval(p)}{(\sigma, \sigma') \in \rden{i}{p}}
\and
\inferrule{ \sigma \in \sat^\dagger(b) }{ (\sigma, \sigma) \in \rden{i}{b} }
\and
\inferrule{
 (\sigma, \sigma') \in \rden{i}{e}\\
 (\sigma', \sigma'') \in \rden{i}{f}
}{
 (\sigma, \sigma'') \in \rden{i}{e \cdot f}
}
\and
\inferrule{
 \sigma \in \sat^\dagger(b)\\
 (\sigma, \sigma') \in \rden{i}{e}
}{
 (\sigma, \sigma') \in \rden{i}{e +_b f}
}
\and
\inferrule{
 \sigma \in \sat^\dagger(\bneg{b})\\
 (\sigma, \sigma') \in \rden{i}{f}
}{
 (\sigma, \sigma') \in \rden{i}{e +_b f}
}
\and
\inferrule{
 \sigma \in \sat^\dagger(b)\\
 (\sigma, \sigma') \in \rden{i}{e}\\
 (\sigma', \sigma'') \in \rden{i}{e^{(b)}}\\
}{
 (\sigma, \sigma'') \in \rden{i}{e^{(b)}}
}
\and
\inferrule{
 \sigma \in \sat^\dagger(\bneg{b})
}{
 (\sigma, \sigma) \in \rden{i}{e^{(b)}}
}
\end{mathpar}
where $\sat^\dagger : \Bexp \to 2^\State$ is the usual lifting of $\sat : T \to 2^\State$ to Boolean expression over $T$.
\end{definition}

The rules defining $\rden{i}{e}$ are reminiscent of the big-step
semantics of imperative languages, which arise as instances of the
model for various choices of $i$. The following result says that the
language model from the previous section abstracts the various
relational interpretations in a sound and complete way. It was first
proved for KAT by~\citet{KS96a}.

\begin{restatable}{theorem}{soundcompleteforrel}%
\label{thm:sound-complete-for-rel}
The language model is sound and complete for the relational model: 
\[ \den{e} = \den{f} \quad \iff \quad\forall i.\, \rden{i}{e} = \rden{i}{f}\]
\end{restatable}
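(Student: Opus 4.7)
The plan is to prove the two implications separately, using the guarded-string semantics as a mediator between the expression syntax and the relational model.

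For soundness, I would define a \emph{trace reification} map $\rho_i : 2^{\Gs} \to 2^{\State \times \State}$ that sends a guarded string $\alpha_0 p_1 \alpha_1 \cdots p_n \alpha_n$ to the set of pairs $(\sigma_0,\sigma_n)$ for which there exist intermediate states $\sigma_1,\dots,\sigma_{n-1}$ with $\sigma_j \in \sat^\dagger(\alpha_j)$ and $(\sigma_{j-1}, \sigma_j) \in \eval(p_j)$ for every $j$, extended to languages by union. The key \emph{factorization lemma} is $\rden{i}{e} = \rho_i(\den{e})$, proved by induction on $e$. The cases for actions, tests, sequential composition and conditional are immediate from unfolding both sides. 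The loop case matches the inductive least relation satisfying the two loop rules against the countable union $\bigcup_{n \geq 0} (\den{b} \diamond \den{e})^n \diamond \bneg{\den{b}}$ defining $\den{e^{(b)}}$. Once the lemma is established, $\den{e} = \den{f}$ immediately yields $\rden{i}{e} = \rden{i}{f}$ for every interpretation $i$.

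For completeness, I would exhibit a single \emph{canonical} interpretation $i^\star = (\Gs, \eval^\star, \sat^\star)$ whose relational semantics directly encodes the language. States are guarded strings; tests inspect the final atom, and actions append to the string. Concretely, let $\sat^\star(t) \defeq \{\, s \in \Gs \mid \text{the final atom of } s \text{ is below } t\,\}$ and $\eval^\star(p) \defeq \{\, (s, s \diamond \alpha p \beta) \mid s \in \Gs,\ \alpha \text{ the final atom of } s,\ \beta \in \At\,\}$. A structural induction on $e$ then establishes the invariant
\[
  (s, s') \in \rden{i^\star}{e} \iff s' = s \diamond z \text{ for some } z \in \den{e} \text{ beginning with the final atom of } s.
\]
Taking $s = \alpha$ and letting $\alpha$ range over $\At$ recovers $\den{e} = \{\, s' \mid \exists \alpha \in \At.\ (\alpha, s') \in \rden{i^\star}{e}\,\}$, so $\rden{i^\star}{e} = \rden{i^\star}{f}$ forces $\den{e} = \den{f}$.

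The main obstacle in both directions is, unsurprisingly, the loop case, where the inductive least-relation definition of $\rden{i}{e^{(b)}}$ must be reconciled with the countable-union fixed point in $\den{e^{(b)}}$. One inclusion follows by induction on the number of loop unfoldings; the other exploits minimality of the inductive relation. An alternative, more economical route is to leverage the homomorphism $\phi : \Exp \to \Kexp$ together with $\den{-} = \KK\den{-} \circ \phi$ to reduce to the KAT version of the theorem already proved by \citet{KS96a}; this only requires verifying by induction on $e$ that $\rden{i}{e}$ coincides with the KAT relational semantics of $\phi(e)$, whose sole nontrivial case is, once again, the loop.
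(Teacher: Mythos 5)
Your proposal is correct, but your primary route is genuinely different from the paper's. The paper dispatches the theorem in a few lines: since GKAT's relational semantics is inherited from KAT along the embedding $\phi\colon \Exp \to \Kexp$ (that is, $\rden{i}{-} = \krden{i}{-} \circ \phi$, checked by a routine induction whose only interesting case is the loop) and $\den{-} = \KK\den{-} \circ \phi$, the statement reduces directly to the KAT soundness-and-completeness theorem of \citet{KS96a} --- exactly the ``alternative, more economical route'' you mention at the end. Your main development instead re-proves the result from scratch for the guarded fragment: soundness via the factorization $\rden{i}{e} = \rho_i(\den{e})$ through a trace-reification map (take care that the endpoint constraints $\sigma_0 \in \sat^\dagger(\alpha_0)$ and $\sigma_n \in \sat^\dagger(\alpha_n)$ are part of $\rho_i$, as your ``for every $j$'' intends), and completeness via a single canonical interpretation over the state space $\Gs$ in which actions append $p\beta$ to a string and tests inspect its final atom. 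This is essentially the Kozen--Smith argument specialized to GKAT, and your canonical interpretation is the exact relational analogue of the interpretation the paper uses for probabilistic completeness (\Cref{thm:sound-complete-for-prob}). What your route buys is self-containedness --- no appeal to the KAT theorem and no need for the compatibility lemma $\rden{i}{-} = \krden{i}{-}\circ\phi$ that the paper asserts without proof --- at the cost of redoing the loop case (reconciling the inductive least relation with $\bigcup_{n\geq 0}(\den{b}\diamond\den{e})^n \diamond \bneg{\den{b}}$) twice, once per direction; the paper's reduction is shorter and also yields the refinement version ($\subseteq$ instead of $=$) for free from the KAT result.
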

\noindent
It is worth noting that \Cref{thm:sound-complete-for-rel} also holds
for refinement (\ie, with $\subseteq$ instead of $=$).

\newcommand\Var{\mathsf{Var}}
\begin{example}[IMP]%
\label{ex:imp}
Consider a simple imperative programming language IMP with variable assignments
and arithmetic and boolean expressions:\[
\begin{array}{r@{\quad}r@{~~}c@{~~}l}
\textit{arithmetic expressions} & a \in \Aa & ::= &
  x \in \Var \mid n \in \ZZ \mid a_1 + a_2 \mid a_1 - a_2 \mid a_1 \times a_2 \\
\textit{boolean expressions} & b \in \BB & ::= &
  \kw{false} \mid \kw{true} \mid a_1 < a_2 \mid \kw{not}~b
  \mid b_1~\kw{and}~b_2 \mid b_1~\kw{or}~b_2  \\
\textit{commands} &c \in \CC & ::= &
  \kw{skip} \mid x \coloneqq a \mid c_1;c_2
    \mid \kw{if}~b~\kw{then}~c_1~\kw{else}~c_2
    \mid \kw{while}~b~\kw{do}~c
\end{array}
\]
IMP can be modeled in GKAT using actions for assignments
and primitive tests for comparisons,\footnote{%
Technically, we can only reserve a test for
a \emph{finite subset} of comparisons, as $T$ is
finite. However, for reasoning about pairwise equivalences of
programs, which only contain a finite number of comparisons, this
restriction is not essential.}
\begin{mathpar}
  \Sigma = \set{x\coloneq a}{x \in \Var, a \in \Aa}
  \and
  T = \set{a_1 < a_2}{a_1,a_2 \in \Aa}
\end{mathpar}
and interpreting GKAT expressions over the state space of variable
assignments $\State \coloneq \Var \to \ZZ$:
\begin{align*}
  \eval(x \coloneq a) &\defeq
    \set{(\sigma, \sigma[x:=n])}{\sigma \in \State, n = \Aa\den{a}(\sigma)}\\
  \sigma[x \defeq n] &\defeq \lambda y.\, \begin{cases}
      n         &\text{if } y=x\\
      \sigma(y) &\text{else}
    \end{cases}\\
  \sat(a_1 < a_2) &\defeq
    \set{\sigma \in \State}{\Aa\den{a_1}(\sigma) < \Aa\den{a_2}(\sigma)},
\end{align*}
where $\Aa\den{a} : \State \to \ZZ$ denotes arithmetic evaluation.
Sequential composition, conditionals, and while loops in IMP are modeled by
their GKAT counterparts; $\kw{skip}$ is modeled by $1$.
Thus, IMP equivalence refines GKAT equivalence
(\Cref{thm:sound-complete-for-rel}). For example, the program transformation
\begin{align*}
  &\kw{if}~x<0~\kw{then}~(x\defeq 0 - x; x \defeq 2 \times x)~\kw{else}~
    (x \defeq 2 \times x)\\
  \rightsquigarrow{}
   &(\kw{if}~x<0~\kw{then}~x\defeq 0 - x~\kw{else}~\kw{skip});
    x \defeq 2 \times x
\end{align*}
is sound by the equivalence $pq +_b q \equiv (p +_b 1)\cdot q$.
We study such equivalences further in \Cref{sec:axioms}.
\end{example}

\subsection{Probabilistic Model}%
\label{sec:prob-model}
In this subsection, we give a third interpretation of GKAT expressions in terms
of sub-Markov kernels, a common model for probabilistic programming languages
(PPLs). We show that the language model is sound and complete for this model as well.

We briefly review some basic primitives commonly used in the denotational
semantics of PPLs. For a countable set\footnote{%
We restrict to countable state spaces (i.e., discrete distributions)
for ease of presentation, but this assumption is not essential.
\extended%
  {\Cref{sec:prob-model-continuous}}
  {See the extend version~\cite{gkat-full}}
for a more general version using
measure theory and Lebesgue integration.}
$X$, we
let $\Dist(X)$ denote the set of subdistributions over $X$, \ie, the set of
probability assignments $f : X \to [0,1]$ summing up to at most $1$---\ie, \(
  \sum_{x \in X} f(x) \leq 1
\).
A common distribution is the \emph{Dirac distribution} or \emph{point mass} on
$x \in X$, denoted $\delta_x\in\Dist(X)$; it is the map $y\mapsto [y=x]$ assigning
probability $1$ to x, and probability $0$ to $y \neq x$.
(The \emph{Iverson bracket} $[\phi]$ is defined to be $1$ if the
statement $\phi$ is true, and $0$ otherwise.)
Denotational models of PPLs typically interpret programs as
\emph{Markov kernels}, maps of type $X \to \Dist(X)$.
Such kernels can be composed in sequence using Kleisli composition,
since $\Dist(-)$ is a monad~\cite{giry1982categorical}.

\begin{definition}[Probabilistic Interpretation]
Let $i = (\State, \eval, \sat)$ be a triple consisting of
\begin{itemize}
  \item a countable set of \emph{states} $\State$;
  \item for each action $p \in \Sigma$,
    a sub-Markov kernel $\eval(p) \colon \State \to \Dist(\State)$; and
  \item for each primitive test $t \in T$,
    a set of states $\sat(t) \subseteq \State$.
\end{itemize}
Then the \emph{probabilistic interpretation} of an expression $e$ with respect to $i$
is the sub-Markov kernel $\pden{i}{e} \colon \State \to \Dist(\State)$ defined
as follows:
\begin{align*}
  \pden{i}{p} &\defeq
    \eval(p)\qquad
  \pden{i}{b}(\sigma) \defeq
    [\sigma \in \sat^\dagger(b)] \cdot \delta_\sigma\\
  \pden{i}{e\cdot f}(\sigma)(\sigma') &\defeq
    \sum_{\sigma''} \pden{i}{e}(\sigma)(\sigma'') \cdot
    \pden{i}{f}(\sigma'')(\sigma')\\
  \pden{i}{e +_b f}(\sigma) &\defeq
    [\sigma \in \sat^\dagger(b)] \cdot \pden{i}{e}(\sigma)
    + [\sigma \in \sat^\dagger(\bneg{b})] \cdot \pden{i}{f}(\sigma)\\
  \pden{i}{e^{(b)}}(\sigma)(\sigma') &\defeq
    \lim_{n\to\infty} \pden{i}{{(e +_b 1)}^n \cdot \bneg{b}}(\sigma)(\sigma')
&&\qedhere
\end{align*}
The proofs that the limit exists and that $\pden{i}{e}$ is sub-Markov
for all $e$ can be found
in \extended{\Cref{lem:pmodel-well-defined}}{the extended
version of the paper~\cite{gkat-full}}.

\end{definition}

\begin{restatable}{theorem}{soundcompleteforprob}%
\label{thm:sound-complete-for-prob}
The language model is sound and complete for the probabilistic model: 
\[ \den{e} = \den{f} \quad \iff \quad\forall i.\, \pden{i}{e} = \pden{i}{f}\]
\end{restatable}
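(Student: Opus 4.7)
My plan is to reduce both directions of the equivalence to a single \emph{bridge lemma} that expresses $\pden{i}{e}$ as a weighted sum over the guarded strings in $\den{e}$. Fix an interpretation $i=(\State,\eval,\sat)$; for each guarded string $x=\alpha_0 p_1\alpha_1\cdots p_n\alpha_n$, define the trace kernel
\[
  \mathsf{P}^i_x(\sigma,\sigma')
  \defeq [\sigma\in\sat^\dagger(\alpha_0)]\cdot[\sigma'\in\sat^\dagger(\alpha_n)]
  \sum_{\sigma_1,\ldots,\sigma_{n-1}}\ \prod_{k=1}^n
    \eval(p_k)(\sigma_{k-1})(\sigma_k)\cdot[\sigma_k\in\sat^\dagger(\alpha_k)],
\]
with $\sigma_0=\sigma$ and $\sigma_n=\sigma'$, and $\mathsf{P}^i_\alpha(\sigma,\sigma')\defeq[\sigma=\sigma'\in\sat^\dagger(\alpha)]$ for single atoms. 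The determinacy property of \Cref{def:action-det} makes summing these weights over $\den{e}$ sensible: for any concrete state trajectory, the atom at each step is forced by its $\sat^\dagger$-profile, so at most one $x\in\den{e}$ can contribute to any given run, and $\sum_{x\in\den{e}}\mathsf{P}^i_x(\sigma,\sigma')\le 1$.

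The bridge lemma asserts that for every $e\in\Exp$,
\[
  \pden{i}{e}(\sigma)(\sigma') \;=\; \sum_{x\in\den{e}} \mathsf{P}^i_x(\sigma,\sigma').
\]
I would establish this by structural induction on $e$; the action, test, sequencing and guarded-choice cases fall out of the definitions of $\diamond$ and $+_B$ after reindexing the nested sums over intermediate states. Soundness of the forward direction is then immediate, since $\den{e}=\den{f}$ makes the two sums agree for every $i$, $\sigma$ and $\sigma'$.

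For completeness I would exhibit a single canonical interpretation $i^\star=(\Gs,\eval^\star,\sat^\star)$ that detects any language discrepancy: take $\sat^\star(t)=\set{y\alpha\in\Gs}{\alpha\le t}$ and $\eval^\star(p)(y\alpha)=|\At|^{-1}\sum_{\beta\in\At}\delta_{y\alpha p\beta}$, noting that each $\eval^\star(p)$ is a Markov kernel because $\At$ is finite. Applying the bridge lemma to $i^\star$ (or a direct induction on $e$) yields, for any starting atom $\alpha_0\in\At\subseteq\Gs$ and any $y=\alpha_0 p_1\alpha_1\cdots p_m\alpha_m$,
\[
  \pden{i^\star}{e}(\alpha_0)(y) \;=\; [y\in\den{e}]\cdot|\At|^{-m},
\]
so $\pden{i^\star}{e}=\pden{i^\star}{f}$ forces $\den{e}=\den{f}$, completing the converse by contraposition.

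The hard part is the loop case of the bridge lemma. On the language side, $\den{e^{(b)}}=\bigcup_{n\ge 0}(\den{b}\diamond\den{e})^n\diamond\den{\bneg b}$ is an increasing union of pairwise-disjoint pieces; on the probabilistic side, $\pden{i}{e^{(b)}}$ is defined as the pointwise limit of $\pden{i}{(e+_b 1)^n\cdot\bneg b}$. I would first verify the language identity $\den{(e+_b 1)^n\cdot\bneg b}=\bigcup_{k\le n}(\den{b}\diamond\den{e})^k\diamond\den{\bneg b}$, the non-obvious inclusion resting on the observation that any $\bneg b$-detour inside the unrolling is absorbed into the terminal $\bneg b$ test. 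The bridge identity then holds on each truncation by the cases already handled, and monotone convergence on the non-negative trace kernels interchanges the limit with the countable sum over traces, closing the induction.
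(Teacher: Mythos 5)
Your proposal is correct and follows essentially the same route as the paper: your ``bridge lemma'' is precisely the paper's factorization $\pden{i}{-}=\kappa_i\circ\den{-}$ through a guarded-string-indexed family of trace kernels (proved there by well-founded induction using the same truncation identity $\den{{(e+_b 1)}^n\cdot\bneg{b}}=\bigcup_{m\le n}\den{{(be)}^m\cdot\bneg{b}}$ and monotonicity of the loop approximants), and your canonical interpretation $i^\star$ over $\Gs$ with uniform kernels and suffix-atom tests is exactly the interpretation the paper uses for completeness. No substantive differences.
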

\begin{proof}[Proof Sketch]
By mutual implication.
\begin{itemize}
\item[$\Rightarrow$:]
  For soundness, we define a map
  $\kappa_i \colon \Gs \to \State \to \Dist(\State)$
  from guarded strings to sub-Markov kernels:
  \begin{align*}
    \kappa_i(\alpha)(\sigma) &\defeq
      [\sigma \in \sat^\dagger(\alpha)] \cdot \delta_\sigma\\
    \kappa_i(\alpha p w)(\sigma)(\sigma') &\defeq
      [\sigma \in \sat^\dagger(\alpha)] \cdot
      \sum_{\sigma''} \eval(p)(\sigma)(\sigma'') \cdot
      \kappa_i(w)(\sigma'')(\sigma)
  \end{align*}
  We then lift $\kappa_i$ to languages via pointwise summation,  \(
    \kappa_i(L) \defeq \sum_{w \in L} \kappa_i(w)
  \),
  and establish that any probabilistic interpretation factors through
  the language model via $\kappa_i$:
$\pden{i}{-} = \kappa_i \circ \den{-}$.

\item[$\Leftarrow$:]
  For completeness, we construct an interpretation $i \defeq (\Gs, \eval, \sat)$
  over $\Gs$ as follows,
  \begin{mathpar}
    \eval(p)(w) \defeq \Unif(\set{wp\alpha}{\alpha \in \At})
    \and
    \sat(t) \defeq \set{x\alpha \in \Gs}{\alpha \leq t}
  \end{mathpar}
  and show that $\den{e}$ is fully determined by $\pden{i}{e}$:
  \begin{equation*}
    \den{e} = \set{\alpha x \in \Gs}{\pden{i}{e}(\alpha)(\alpha x) \neq 0}.
  \qedhere
  \end{equation*}
\end{itemize}
\end{proof}
\noindent
As for \Cref{thm:sound-complete-for-rel},
\Cref{thm:sound-complete-for-prob} can also be shown for refinement
(\ie, with $\subseteq$ and $\leq$ instead of $=$).

\begin{example}[Probabilistic IMP]%
\label{ex:prob-imp}
We can extend IMP from \Cref{ex:imp} with a \emph{probabilistic assignment}
command $x \sim \mu$,
where $\mu$ ranges over sub-distributions on $\ZZ$, as follows:
\begin{mathpar}
  c ::= \ldots \mid x \sim \mu
  \and
  \Sigma = \ldots \cup \set{x \sim \mu}{x \in \Var, \mu \in \Dist(\ZZ)}
\end{mathpar}
The interpretation $i = (\Var \to \ZZ, \eval, \sat)$ is as before,
except we now restrict to a finite set of variables to guarantee that
the state space is countable, and interpret actions as sub-Markov
kernels:
\begin{mathpar}
  \eval(x \defeq n)(\sigma) \defeq
    \delta_{\sigma[x \defeq n]}
  \and
  \eval(x \sim \mu)(\sigma) \defeq
    \sum_{n \in \ZZ} \mu(n) \cdot \delta_{\sigma[x \defeq n]}
\end{mathpar}
\end{example}

A concrete example of a PPL based on GKAT is McNetKAT~\cite{mcnetkat},
a recent language and verification tool for reasoning about the
packet-forwarding behavior in networks.



\section{Axiomatization}%
\label{sec:axioms}

In most programming languages, the same behavior can be realized using
different programs. For example, we expect the programs
$\kw{if}~b~\kw{then}~e~\kw{else}~f$ and
$\kw{if}~(\kw{not}~b)~\kw{then}~f~\kw{else}~e$ to encode the same
behavior. Likewise, different expressions in GKAT can denote the same
language of guarded strings. For instance, the previous example is
reflected in GKAT by the fact that the language semantics of $e +_b f$
and $f +_{\bneg{b}} e$ coincide. This raises the questions: what other
equivalences hold between GKAT expressions? And, can all equivalences
be captured by a finite number of equations? In this section, we give
some initial answers to these questions, by proposing a set of axioms
for GKAT and showing that they can be used to prove a large class of
equivalences.

\subsection{Some Simple Axioms}

\begin{figure}
\small
\def\arraystretch{1.2}
\begin{tabular}{l@{~} >{$}r<{$}@{~} >{$}c<{$}@{~} >{$}l<{$} l l@{~} >{$}r<{$}@{~} >{$}c<{$}@{~} >{$}l<{$} l}
\multicolumn{5}{l}{\hspace*{-1ex}\textbf{Guarded Union Axioms}}
    & \multicolumn{5}{l}{\hspace*{-1ex}\textbf{Sequence Axioms} (inherited from KA)} \\
\customlabel{ax:idemp}{U1}. &e+_b e            &\equiv &e                     & (idempotence) & 
   \customlabel{ax:seqassoc}{S1}. & (e \cdot f) \cdot g &\equiv  &e \cdot (f \cdot g)  & (associativity)\\ 
\customlabel{ax:skewcomm}{U2}. &e +_b f           &\equiv &f +_{\bneg{b}} e      & (skew commut.) & 
   \customlabel{ax:absleft}{S2}. &0 \cdot e           &\equiv  &0                    & (absorbing left)\\ 
\customlabel{ax:skewassoc}{U3}. & (e +_b f) +_c g   &\equiv &e +_{bc} (f +_c g)    & (skew assoc.) & 
   \customlabel{ax:absright}{S3}. &e \cdot 0           &\equiv  &0                    & (absorbing right)\\ 
\customlabel{ax:guard-if}{U4}. &e +_b f           &\equiv &be +_b f              & (guardedness) & 
   \customlabel{ax:neutrleft}{S4}. &1 \cdot e           &\equiv  &e                    & (neutral left)\\ 
\customlabel{ax:rightdistr}{U5}. &eg +_b fg &\equiv & (e +_b f) \cdot g             & (right distrib.) & 
   \customlabel{ax:neutrright}{S5}. &e \cdot 1           &\equiv  &e                    & (neutral right) \\[0.5em] 
\multicolumn{5}{l}{\hspace*{-1ex}\textbf{Guarded Loop Axioms}} \\
    \customlabel{ax:unroll}{W1}. &e^{(b)} &\equiv  & ee^{(b)} +_{b} 1  & (unrolling) & 
\multirow{2}{*}{\customlabel{ax:fixpoint}{W3}.} & \multicolumn{3}{c}{\multirow{2}{*}{$\inferrule{g \equiv eg +_b f}{g \equiv e^{(b)} f}\ \text{if}\ E(e) \equiv 0$}} & \multirow{2}{*}{(fixpoint)} \\
    \customlabel{ax:tighten}{W2}. & {(e +_c 1)}^{(b)} &\equiv& {(ce)}^{(b)} & (tightening)
\end{tabular}%
\caption{Axioms for GKAT-expressions.}\label{fig:axioms}
\end{figure}

As an initial answer to the first question, we propose the following.

\begin{definition}
We define $\equiv$ as the smallest congruence (with respect to all operators) on
$\Exp$ that satisfies the axioms given in Figure~\ref{fig:axioms} (for all
$e, f, g \in \Exp$ and $b, c, d \in \Bexp$) and subsumes Boolean equivalence
in the sense that $b \equiv_\text{BA} c$ implies  $b \equiv c$.
\end{definition}

The guarded union axioms~(\nameref{ax:idemp}-\nameref{ax:rightdistr})
can be understood intuitively in terms of conditionals. For instance,
we have the law $e +_b f \equiv f +_{\bneg{b}} e$ discussed before,
but also $eg +_b fg \equiv (e +_b f) \cdot g$, which says that $g$ can
be ``factored out'' of branches of a guarded union. Equivalences for
sequential composition are also intuitive. For instance, $0 \cdot
e \equiv 0$ encodes that any instruction after failure is irrelevant,
because the program has failed. The axioms for
loops~(\nameref{ax:unroll}--\nameref{ax:fixpoint}) are more subtle.
The axiom $e^{(b)} \equiv ee^{(b)} +_b 1$~(\nameref{ax:unroll}) says
that we can think of a guarded loop as equivalent to its
unrolling---\ie, the program $\kw{while}\ b\
\kw{do}\ e$ has the same behavior as the program $\kw{if}\ b\ \kw{then}\ (e;\
\kw{while}\ b\ \kw{do}\ e)\ \kw{else}\ \kw{skip}$. The axiom ${(e +_c 1)}^{(b)} \equiv {(ce)}^{(b)}$~(\nameref{ax:tighten}) states that
if part of a loop body does not have an effect (\ie, is equivalent to $\kw{skip}$), it can be omitted;
we refer to this transformation as \emph{loop tightening}.

To explain the fixpoint axiom~(\nameref{ax:fixpoint}), disregard the side-condition for a moment. In a
sense, this rule states that if $g$ tests (using $b$) whether to
execute $e$ and loop again or execute $f$ (\ie, if $g \equiv eg
+_b f$) then $g$ is a $b$-guarded loop followed by $f$ (\ie,
$g \equiv e^{(b)}f$). However, such a rule is not sound in general.
For instance, suppose $e, f, g, b = 1$; in that case, $1 \equiv
1 \cdot 1 +_1 1$ can be proved using the other axioms, but applying
the rule would allow us to conclude that $1 \equiv 1^{(1)} \cdot 1$,
even though $\sem{1} = \At$ and $\sem{1^{(1)} \cdot 1} = \emptyset$!
The problem here is that, while $g$ is tail-recursive as required by
the premise, this self-similarity is trivial because $e$ does not
represent a productive program. We thus need to restrict the
application of the inference rule to cases where the loop body
is \emph{strictly productive}---\ie, where $e$ is guaranteed to
execute \emph{some} action. To this end, we define the function $E$ as
follows.

\begin{definition}
The function $E: \Exp\to \Bexp$ is defined inductively as follows:
\begin{mathpar}
E(b) \defeq b
\and
E(p) \defeq 0
\and
E(e +_b f) \defeq b \cdot E(e) + \bneg{b} \cdot E(f)
\and
E(e \cdot f) \defeq E(e) \cdot E(f)
\and
E(e^{(b)}) \defeq \bneg{b}
\end{mathpar}
\qedhere
\end{definition}

Intuitively, $E(e)$ is the weakest test that guarantees that $e$ terminates successfully, but does not perform any action.
For instance, $E(p)$ is $0$---the program $p$ is guaranteed to perform the action $p$.
Using $E$, we can now restrict the application of the fixpoint rule to the cases where $E(e) \equiv 0$, \ie, where $e$ performs an action under any circumstance.

\begin{restatable}[Soundness]{theorem}{soundness}
The GKAT axioms are sound for the language model: for all $e, f \in \Exp$,
\[
  e \equiv f \quad \implies \quad \den{e} = \den{f}.
\]
\end{restatable}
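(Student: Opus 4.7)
The plan is to proceed by induction on the derivation of $e \equiv f$. Closure under congruence falls out immediately from the compositionality of $\den{-}$: each semantic constructor ($\diamond$, $+_B$, and $(-)^{(B)}$) depends only on the languages of its arguments, so equalities propagate through every syntactic operator. The subsumption of Boolean equivalence is immediate from the completeness of $\equiv_\text{BA}$ with respect to atoms recalled in \Cref{sec:gkat}: if $b \equiv_\text{BA} c$, then $b$ and $c$ have the same atoms, so $\den{b} = \den{c}$. What remains is to verify each axiom of Figure~\ref{fig:axioms} individually on the language model.

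For the sequence axioms (S1--S5) and the guarded-union axioms (U1--U5), verification reduces to routine set-theoretic manipulations using the definitions of $\diamond$, $+_B$, and $\bneg{B}$, together with the fact that $B \diamond B' = B \cap B'$ for any $B, B' \subseteq \At$ (atoms fuse idempotently or not at all). For example, U4 holds because $B \diamond \den{b} \diamond \den{e} = B \diamond \den{e}$ when $B = \den{b}$; U5 follows from distributivity of $\diamond$ over union; and the skew commutativity/associativity laws reduce to standard identities after noting that $\bneg{\den{bc}} \cap \den{c} = \den{c} \cap \bneg{\den{b}}$. The loop-unrolling axiom W1 is a direct consequence of the definition $L^{(B)} = \bigcup_n (B \diamond L)^n \diamond \bneg{B}$, which satisfies the fixed-point identity $L^{(B)} = (B \diamond L \diamond L^{(B)}) \cup \bneg{B}$. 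The tightening axiom W2 follows by observing that both sides denote exactly those strings whose successive $b$-guarded iterations each lie in $\den{b} \diamond \den{c} \diamond \den{e}$, since the $1$ alternative in ${(e +_c 1)}^{(b)}$ is selected only when $\bneg{c}$ holds and the resulting no-op atom is absorbed by the next $b$-test.

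The main obstacle is the soundness of the fixpoint axiom W3. The key auxiliary lemma, proved by a straightforward structural induction on $e$, is that $\den{E(e)} = \den{e} \cap \At$ for every $e \in \Exp$; the only delicate case is the loop $e^{(b)}$, whose atom-only traces are exactly the atoms in $\bneg{\den{b}}$. This identifies $E(e)$ with the action-free traces of $e$, so $E(e) \equiv 0$ is equivalent to saying that $\den{e}$ is \emph{productive}: every string in $\den{e}$ contains at least one action. Given this lemma, soundness of W3 becomes a uniqueness-of-fixed-point argument. Writing $B = \den{b}$, $L = \den{e}$, and $K = \den{f}$, the already-verified W1 shows that $L^{(B)} \diamond K = \den{e^{(b)} f}$ satisfies the equation $X = (B \diamond L \diamond X) \cup (\bneg{B} \diamond K)$, while the hypothesis $g \equiv eg +_b f$ shows $\den{g}$ does as well. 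I would then prove the equation admits a unique solution whenever $L$ is productive, by strong induction on the number of actions in a guarded string $w$: if $w$ starts with an atom in $\bneg{B}$, its membership is fixed by $K$; if $w$ starts with an atom in $B$, then $w = u \diamond v$ for some $u \in B \diamond L$ and $v \in X$, and productivity of $L$ forces $u$ to contain at least one action, so $v$ has strictly fewer actions than $w$ and the inductive hypothesis applies. This forces $\den{g} = \den{e^{(b)} f}$ and completes the proof.
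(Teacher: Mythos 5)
Your proposal is correct and follows essentially the same route as the paper: induction on the derivation of $\equiv$, congruence and Boolean subsumption from compositionality of $\den{-}$, routine set-theoretic verification of the union, sequence, and loop axioms, and for (\nameref{ax:fixpoint}) an induction over guarded strings that exploits productivity of $e$. Your packaging of that last step as uniqueness of solutions to $X = (B \diamond L \diamond X) \cup (\bneg{B} \diamond K)$, supported by the explicit lemma $\den{E(e)} = \den{e} \cap \At$, is just a mild restatement of the paper's direct induction on string length (which uses the same fact about $E(e) \equiv 0$ implicitly), so no substantive difference remains.
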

\begin{proof}[Proof Sketch]
By induction on the length of derivation of the congruence $\equiv$.
We provide the full proof in the \extended{appendix}{extended version~\cite{gkat-full}} and show just the proof for the fixpoint rule.
Here, we should argue that if $E(e) \equiv 0$ and $\sem{g} = \sem{eg +_b f}$, then also $\sem{g} = \sem{e^{(b)}f}$.
We note that, using soundness of~(\nameref{ax:unroll}) and~(\nameref{ax:rightdistr}),
we can derive that $\sem{ e^{(b)} f} = \sem{ (ee^{(b)}+_b 1) f} = \sem{ee^{(b)} f +_b f}$.

We reason by induction on the length of guarded strings.
In the base case, we know that $\alpha \in \sem{g}$ if and only if $\alpha \in \sem{eg +_b f}$; since $E(e) \equiv 0$, the latter holds precisely when $\alpha \in \sem{f}$ and $\alpha \leq \bneg{b}$, which is equivalent to $\alpha \in \sem{e^{(b)}f}$.
For the inductive step, suppose the claim holds for $y$; then
\begin{align*}
&\phantom{{}\iff{}} \alpha p y \in  \sem{g} \\
&\iff  \alpha p y\in  \sem{eg +_b f} \\
&\iff \alpha p y \in \sem{eg} \land \alpha \leq  b \;\;\textbf{or}\;\; \alpha p y \in \sem{f} \land \alpha \leq \bneg b\\
&\iff \exists y, \beta.\ y = y_1\beta{}y_2 \wedge \alpha p y_1\beta \in \sem{e} \land \beta{}y_2 \in \sem{g} \land \alpha \leq  b \;\;\textbf{or}\;\; \alpha p y \in \sem{f} \land \alpha \leq \bneg b \tag{$E(e) = 0$} \\ 
&\iff \exists y, \beta.\ y = y_1\beta{}y_2 \wedge \alpha p y_1\beta \in \sem{e} \land \beta{}y_2 \in \sem{e^{(b)}f} \land \alpha \leq  b \;\;\textbf{or}\;\; \alpha p y \in \sem{f} \land \alpha \leq \bneg b \tag{IH} \\
&\iff \alpha p y \in \sem{ee^{(b)}f} \land \alpha \leq  b \;\;\textbf{or}\;\; \alpha p y \in \sem{f} \land \alpha \leq \bneg b &  \tag{$E(e) = 0$} \\ 
&\iff \alpha p y \in \sem{ee^{(b)}f +_b f} = \sem{e^{(b)}f}
\tag*{\qedhere}
\end{align*}
\end{proof}

\subsection{A Fundamental Theorem}\label{sec:ft}

\begin{figure}
\small
\def\arraystretch{1.2}
\begin{tabular}{l@{~} >{$}r<{$}@{~} >{$}c<{$}@{~} >{$}l<{$} l@{\qquad\qquad} l@{~} >{$}r<{$}@{~} >{$}c<{$}@{~} >{$}l<{$} l}
\multicolumn{5}{l}{\hspace*{-1ex}\textbf{Guarded Union Facts}} &
    \multicolumn{5}{l}{\hspace*{-2ex}\textbf{Guarded Iteration Facts}}\\[.1ex]
\customlabel{fact:skewassoc-dual}{U3'}. &e +_b (f +_c g)   &\equiv & (e +_b f) +_{b+c} g & (skew assoc.) & 
    \customlabel{fact:guard-loop-out}{W4}.  &e^{(b)} &\equiv  &e^{(b)} \cdot \bneg{b}                    & (guardedness)\\ 
\customlabel{fact:guard-if-dual}{U4'}. &e +_b f           &\equiv &e +_b \bneg{b}f     & (guardedness) & 
   \customlabel{fact:guard-loop-in}{W4'}. &e^{(b)} &\equiv  & {(be)}^{(b)}                                & (guardedness)\\ 
\customlabel{fact:leftdistr}{U5'}. &b \cdot (e +_c f) &\equiv &be +_c bf           & (left distrib.) & 
   \customlabel{fact:neutral-loop}{W5}.  &e^{(0)} &\equiv  &1                                         & (neutrality)\\ 
\customlabel{fact:neutrright2}{U6}. &e +_b 0            &\equiv &be                  & (neutral right) & 
   \customlabel{fact:absorb-loop}{W6}.  &e^{(1)} &\equiv  &0                                         & (absorption)\\ 
\customlabel{fact:trivright}{U7}. &e +_0 f            &\equiv &f                   & (trivial right) & 
   \customlabel{fact:absorb-loop-bool}{W6'}. &b^{(c)} &\equiv &\bneg{c}                                   & (absorption)\\ 
\customlabel{fact:branchsel}{U8}. & b \cdot (e +_b f) &\equiv &be            & (branch selection) &
    \customlabel{fact:fusion}{W7}.  &e^{(c)} &\equiv &e^{(bc)} \cdot e^{(c)}                     & (fusion)\\ 
\end{tabular}
\caption{Derivable GKAT facts}\label{fig:axioms-derivable}
\end{figure}

The side condition on~(\nameref{ax:fixpoint}) is inconvenient when proving facts about loops.
However, it turns out that we can transform any loop into an equivalent, \emph{productive} loop---\ie, one with a loop body $e$ such that $E(e) \equiv 0$.
To this end, we need a way of decomposing a GKAT expression into a guarded sum of an expression that describes termination, and another (strictly productive) expression that describes the next steps that the program may undertake.
As a matter of fact, we already have a handle on the former term: $E(e)$ is a Boolean term that captures the atoms for which $e$ may halt immediately.
It therefore remains to describe the next steps of a program.

\begin{definition}[Derivatives]
For $\alpha \in \At$
we define $D_\alpha\colon \Exp \to 2+\Sigma\times \Exp$ inductively as follows,
where $2=\sset{0,1}$ is the two-element set:
\begin{mathpar}
D_\alpha(b) = \begin{cases}
1 & \alpha \leq b \\
0 & \alpha \not\leq b
\end{cases}
\and
D_\alpha(p) = (p,1)
\and
D_\alpha(e +_b f) = \begin{cases}
D_\alpha(e) & \alpha \leq b \\
D_\alpha(f) & \alpha \leq \bneg{b} \\
\end{cases}
\and
D_\alpha(e \cdot f) = \begin{cases}
(p, e'\cdot f) & D_\alpha(e) = (p,e')\\
0 & D_\alpha(e) = 0\\
D_\alpha(f) & D_\alpha(e) = 1\\
\end{cases}
\and
D_\alpha(e^{(b)})  = \begin{cases}
(p, e'\cdot e^{(b)}) & \alpha \leq  b \land D_\alpha(e) = (p,e')\\
0 & \alpha \leq  b \land D_\alpha(e) \in 2\\
1 & \alpha \leq \bneg b \\
\end{cases}\qedhere
\end{mathpar}
\end{definition}

\noindent
We will use a general type of guarded union to sum over an atom-indexed set of expressions.

\begin{restatable}{definition}{generalsum}
Let $\Phi \subseteq \At$, and let ${\{e_\alpha\}}_{\alpha \in \Phi}$ be a set of expressions indexed by $\Phi$. We write
\begin{mathpar}
\dsum_{\alpha \in \Phi} e_\alpha =
\begin{cases}
e_{\beta} +_\beta \Bigl(\; \dsum\limits_{\alpha \in \Phi\setminus\{\beta\}} e_\alpha \Bigr) & \beta \in \Phi \\
0 & \Phi = \emptyset
\end{cases}
\end{mathpar}
Like other operators on indexed sets, we may abuse notation and replace $\Phi$ by a predicate over some atom $\alpha$, with $e_\alpha$ a function of $\alpha$; for instance, we could write $\dsum_{\alpha \leq 1} \alpha \equiv 1$.
\end{restatable}

\begin{remark}
The definition above is ambiguous in the choice of $\beta$.
However, because of skew-commutativity~(\nameref{ax:skewcomm})
and skew-associativity~(\nameref{ax:skewassoc}),
that does not change the meaning of the expression
as far as $\equiv$ is concerned.
For the details, see \extended{\Cref{sec:gen_choice}}
  {the extended version~\cite{gkat-full}}.
\end{remark}

We are now ready to state the desired decomposition of terms. Following~\cite{rutten-2000,silva-thesis}, we call this the \emph{fundamental theorem} of GKAT, in reference to the strong analogy with the fundamental theorem of calculus, as explained in~\cite{rutten-2000,silva-thesis}.
The proof is included in
\extended{\Cref{apx:omitted}}{the extended version~\cite{gkat-full}}.

\begin{restatable}[Fundamental Theorem]{theorem}{ft}\label{thm:ft}
For all GKAT programs $e$, the following equality holds:
\begin{equation}\label{eq:ft}
e \equiv 1 +_{E(e)} D(e),\qquad\text{ where }~
  D(e) \defeq \dsum_{\alpha\colon D_\alpha(e)=(p_\alpha ,e_\alpha)}  p_\alpha \cdot e_\alpha.
\end{equation}
\end{restatable}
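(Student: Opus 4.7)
The plan is to proceed by structural induction on $e$, with the help of a splitting lemma for the generalized sum $\dsum$: for any Boolean $b$ and atom-indexed family $\{e_\alpha\}_{\alpha \in \Phi}$,
\[
\dsum_{\alpha \in \Phi} e_\alpha \;\equiv\; \Bigl(\dsum_{\alpha \in \Phi,\, \alpha \leq b} e_\alpha\Bigr) \;+_b\; \Bigl(\dsum_{\alpha \in \Phi,\, \alpha \leq \bneg{b}} e_\alpha\Bigr).
\]
This lemma, provable by induction on $\Phi$ using skew commutativity (U2), skew associativity (U3), and guardedness (U4), is the main workhorse for reconciling the different presentations of $\dsum$ that arise in the various inductive cases. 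A useful corollary, combined with left distributivity (U5$'$), is that $b \cdot \dsum_\alpha e_\alpha$ restricts the sum to atoms $\alpha \leq b$, which lets us push Boolean factors through generalized sums. A further convenient invariant, checked alongside, is that $D_\alpha(e) = 1$ iff $\alpha \leq E(e)$.

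The base cases are immediate. For $e = b \in \Bexp$, $E(b) = b$ and $D(b)$ is the empty sum $0$, so $1 +_b 0 \equiv b \cdot 1 \equiv b$ by (U6) and (S5). For $e = p \in \Sigma$, $E(p) = 0$ and $D(p) \equiv \dsum_\alpha p \cdot 1 \equiv p$ after collapsing via (S5) and idempotence (U1), with (U7) discharging the outer $+_0$. The guarded union $e_1 +_b e_2$ is also straightforward: apply the IH to each branch, insert $b$ and $\bneg{b}$ guards using (U4)/(U4$'$), push them inside with (U5$'$), and recombine the two halves with the splitting lemma.

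For sequencing $e = e_1 \cdot e_2$, apply the IH to $e_1$, right-distribute via (U5), absorb $1 \cdot e_2$ with (S4), apply the IH to $e_2$, and then rearrange with (U3) to obtain $1 +_{E(e_1) E(e_2)} \bigl(D(e_2) +_{E(e_1)} D(e_1) \cdot e_2\bigr)$. The inner expression coincides with $D(e_1 \cdot e_2)$ via the splitting lemma at $b = E(e_1)$: atoms $\alpha \leq E(e_1)$ satisfy $D_\alpha(e_1) = 1$, so $D_\alpha(e_1 \cdot e_2)$ reduces to $D_\alpha(e_2)$; the remaining atoms contribute $(p, e_1' \cdot e_2)$ from $D_\alpha(e_1) = (p, e_1')$—precisely matching the two halves of the split.

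The hard case is the guarded loop $e = f^{(b)}$: direct unrolling produces $f^{(b)}$ on both sides of the induction, blocking a closed derivation. The plan is to first use the IH on $f$ and (U2) to write $f \equiv D(f) +_{\bneg{E(f)}} 1$, then invoke tightening (W2) to obtain $f^{(b)} \equiv \bigl(\bneg{E(f)} \cdot D(f)\bigr)^{(b)}$, making the body strictly productive. Unrolling (W1) and (U2) then yield $f^{(b)} \equiv 1 +_{\bneg{b}} \bneg{E(f)} \cdot D(f) \cdot f^{(b)}$. It remains to show that the right branch equals $D(f^{(b)})$ under the $+_{\bneg{b}}$ guard: expand $\dsum$ in $D(f)$ (noting that atoms with $D_\alpha(f) = (p, f')$ are already $\leq \bneg{E(f)}$, so $\bneg{E(f)}$ can be absorbed), then use (U4$'$) to insert the $b$-restriction supplied by the outer guard and apply the push-through corollary to cut the sum down to atoms $\alpha \leq b$—yielding exactly $D(f^{(b)})$. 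The main obstacle throughout is the bookkeeping of atom-indexed sums under nested guards; the splitting lemma together with tightening (W2) for the loop case is what makes this bookkeeping tractable.
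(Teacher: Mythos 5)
Your proposal is correct and follows essentially the same route as the paper: structural induction on $e$, with the base and guarded-union/sequencing cases handled by the same axiom manipulations, and auxiliary $\dsum$-facts (your splitting lemma, guard push-through, and the invariant $D_\alpha(e)=1 \iff \alpha \leq E(e)$) playing the role of the paper's lemmas on generalized guarded unions. Your loop case is the paper's argument with the Productive Loop Lemma inlined—using the induction hypothesis on the body together with (\nameref{ax:skewcomm}) and (\nameref{ax:tighten}) to make the body productive before unrolling—so the logical structure is identical.
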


\noindent
The following observations about $D$ and $E$ are also useful.
\begin{restatable}{lemma}{lemmaed}\label{lem:d-vs-e}
Let $e \in \Exp$; its components $E(e)$ and $D(e)$ satisfy
the following identities:
\begin{mathpar}
E(D(e)) \equiv 0
\and
\bneg{E(e)} \cdot D(e) \equiv D(e)
\and
\bneg{E(e)} \cdot e \equiv D(e)
\end{mathpar}
\end{restatable}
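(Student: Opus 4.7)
The three identities are closely coupled: the first two are structural facts about $D(e)$, and the third combines them with the fundamental theorem. I would prove them in the order listed, relying on induction over the size of the atom-indexed guarded sum $\dsum$ whenever an identity concerns $D(e)$.

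\textbf{Identity (1).} Since every summand of $D(e)$ has the form $p_\alpha \cdot e_\alpha$ for some action $p_\alpha$, and $E(p_\alpha \cdot e_\alpha) = E(p_\alpha)\cdot E(e_\alpha) = 0 \cdot E(e_\alpha) \equiv 0$, I would prove a general auxiliary lemma: if $E(e_\alpha) \equiv 0$ for every $\alpha \in \Phi$, then $E\bigl(\dsum_{\alpha \in \Phi} e_\alpha\bigr) \equiv 0$. The lemma falls out by a short induction on $|\Phi|$, unfolding the recursive definition of $\dsum$ and using the clause $E(e +_b f) = b\cdot E(e) + \bneg{b}\cdot E(f)$ together with the base case $E(0) = 0$.

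\textbf{Identity (2).} The first observation is that for every $\alpha$ in the support $\Phi = \{\alpha : D_\alpha(e) = (p_\alpha, e_\alpha)\}$, one has $\alpha \leq \bneg{E(e)}$. This follows from the easily verified fact (by structural induction on $e$) that $D_\alpha(e) = 1$ if and only if $\alpha \leq E(e)$; hence $D_\alpha(e)$ a pair forces $\alpha \not\leq E(e)$, i.e.\ $\alpha \leq \bneg{E(e)}$. I would then prove an absorption lemma: whenever $\alpha \leq b$ for every $\alpha \in \Phi$, we have $b \cdot \dsum_{\alpha \in \Phi} e_\alpha \equiv \dsum_{\alpha \in \Phi} e_\alpha$. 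Induction on $|\Phi|$: in the step, fix some $\beta \in \Phi$ and let $R$ be the remaining sum. Left distributivity (\nameref{fact:leftdistr}) yields $b(e_\beta +_\beta R) \equiv b e_\beta +_\beta bR$; the right branch collapses by the inductive hypothesis, and for the left branch I use guardedness (\nameref{ax:guard-if}) to insert a $\beta$ in front, rewrite $\beta b \equiv_\text{BA} \beta$ (since $\beta \leq b$), and then drop the $\beta$ again via \nameref{ax:guard-if}. Instantiating $b := \bneg{E(e)}$ gives (2).

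\textbf{Identity (3).} I would chain the previous identity with the fundamental theorem:
\begin{align*}
\bneg{E(e)}\cdot e
&\equiv \bneg{E(e)}\cdot(1 +_{E(e)} D(e)) &&\text{by \cref{thm:ft}}\\
&\equiv \bneg{E(e)} +_{E(e)} \bneg{E(e)}\cdot D(e) &&\text{by \nameref{fact:leftdistr}, \nameref{ax:neutrright}}\\
&\equiv \bneg{E(e)} +_{E(e)} D(e) &&\text{by (2)}\\
&\equiv E(e)\cdot\bneg{E(e)} +_{E(e)} D(e) &&\text{by \nameref{ax:guard-if}}\\
&\equiv 0 +_{E(e)} D(e) &&\text{since } E(e)\bneg{E(e)} \equiv_\text{BA} 0\\
&\equiv D(e) +_{\bneg{E(e)}} 0 &&\text{by \nameref{ax:skewcomm}}\\
&\equiv \bneg{E(e)}\cdot D(e) &&\text{by \nameref{fact:neutrright2}}\\
&\equiv D(e) &&\text{by (2).}
\end{align*}

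\textbf{Main obstacle.} The only non-routine part is the absorption lemma used in (2); the delicate point is that one cannot simply ``push'' $\bneg{E(e)}$ through the $\dsum$ via distributivity alone, because the guards of the nested $+_\beta$ are the individual atoms rather than $\bneg{E(e)}$ itself. The solution is to re-express each branch guard $\beta$ via \nameref{ax:guard-if} before applying the Boolean identity $\beta\cdot\bneg{E(e)} \equiv \beta$. Everything else is a mechanical application of the axioms of \cref{fig:axioms} and the derivable facts of \cref{fig:axioms-derivable}.
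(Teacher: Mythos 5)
Your proof is correct and takes essentially the same route as the paper's: identity (1) via the fact that every summand of $D(e)$ starts with an action, identity (2) via the observation that every atom in the support of $D(e)$ lies below $\bneg{E(e)}$ so the guard can be absorbed into the guarded sum, and identity (3) via the fundamental theorem. The only cosmetic differences are that the paper appeals to its general lemma $c \cdot \dsum_{\alpha \leq b} e_\alpha \equiv \dsum_{\alpha \leq bc} e_\alpha$ where you prove a bespoke absorption lemma, and it closes (3) with branch selection (U8) where you give a longer but equivalent distributivity chain.
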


Using the fundamental theorem and the above, we can now show how to syntactically transform any loop into an equivalent loop whose body $e$ is strictly productive.

\begin{lemma}[Productive Loop]\label{lem:prod}
Let $e \in \Exp$ and $b \in \Bexp$. We have $e^{(b)} \equiv {D(e)}^{(b)}$.
\end{lemma}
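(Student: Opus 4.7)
The plan is to chain together the Fundamental Theorem, the tightening axiom~(\nameref{ax:tighten}), and the second identity of \Cref{lem:d-vs-e}. Intuitively, the body of the loop $e^{(b)}$ can be decomposed into a terminating branch (guarded by $E(e)$) and a productive branch $D(e)$; tightening then lets us discard the terminating branch from inside the loop, since whenever that branch is taken the loop condition is already about to determine exit, and what remains is exactly $\bneg{E(e)} \cdot D(e) \equiv D(e)$.

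Concretely, I would proceed as follows. First, by the Fundamental Theorem, $e \equiv 1 +_{E(e)} D(e)$. Applying skew commutativity~(\nameref{ax:skewcomm}), this is equivalent to $D(e) +_{\bneg{E(e)}} 1$. By congruence under the loop operator,
\[
  e^{(b)} \equiv {\bigl(D(e) +_{\bneg{E(e)}} 1\bigr)}^{(b)}.
\]
Second, I would invoke the tightening axiom~(\nameref{ax:tighten}), instantiated with the expression $D(e)$ in place of $e$ and the Boolean $\bneg{E(e)}$ in place of $c$, to obtain
\[
  {\bigl(D(e) +_{\bneg{E(e)}} 1\bigr)}^{(b)} \equiv {\bigl(\bneg{E(e)} \cdot D(e)\bigr)}^{(b)}.
\]
Third, \Cref{lem:d-vs-e} gives $\bneg{E(e)} \cdot D(e) \equiv D(e)$, so one more application of congruence yields ${\bigl(\bneg{E(e)} \cdot D(e)\bigr)}^{(b)} \equiv {D(e)}^{(b)}$. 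Chaining the three equivalences closes the proof.

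The argument is essentially a three-line calculation with no induction required, since both the Fundamental Theorem and \Cref{lem:d-vs-e} have already done the heavy lifting. The only subtlety I anticipate is a bookkeeping one: making sure that after applying skew commutativity the Boolean guard is exactly in the shape $e' +_{\bneg{E(e)}} 1$ demanded by~(\nameref{ax:tighten}), rather than $1 +_{E(e)} e'$, which is the direction the Fundamental Theorem initially produces. Beyond that, there are no nontrivial obstacles.
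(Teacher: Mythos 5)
Your proof is correct and is essentially identical to the paper's own argument: apply the Fundamental Theorem, skew-commutativity~(\nameref{ax:skewcomm}), the tightening axiom~(\nameref{ax:tighten}) with $c = \bneg{E(e)}$, and finally $\bneg{E(e)} \cdot D(e) \equiv D(e)$ from \Cref{lem:d-vs-e}. No gaps.
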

\begin{proof}
Using \Cref{lem:d-vs-e}, we derive as follows:
\[
    e^{(b)}
        \stackrel{\text{\hyperlink{thm:ft}{FT}}}\equiv {(1 +_{E(e)} D(e))}^{(b)}
        \stackrel{\text{\nameref{ax:skewcomm}}}\equiv {(D(e) +_{\bneg{E(e)}} 1)}^{(b)}
        \stackrel{\text{\nameref{ax:tighten}}}\equiv {(\bneg{E(e)} D(e))}^{(b)}
        \equiv {D(e)}^{(b)}
    \qedhere
\]
\end{proof}

\subsection{Derivable Facts}

The GKAT axioms can be used to derive other natural equivalences of programs, such as the ones in \Cref{fig:axioms-derivable}.
For instance, $e^{(b)} \equiv e^{(b)}\bneg{b}$, labelled~(\nameref{fact:guard-loop-out}), says that $b$ must be false when $e^{(b)}$ ends.

\begin{restatable}{lemma}{derivable}%
\label{lem:derivable}
The facts in \Cref{fig:axioms-derivable} are derivable from the axioms.
\end{restatable}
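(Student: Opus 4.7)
The plan is to derive each equivalence in Figure~\ref{fig:axioms-derivable} purely syntactically from the axioms in Figure~\ref{fig:axioms}, invoking the Fundamental Theorem (Theorem~\ref{thm:ft}) and the Productive Loop lemma (Lemma~\ref{lem:prod}) only where needed. The derivations split cleanly into guarded union facts, which follow from short chains of rewriting using (U1)--(U5), and iteration facts, which additionally require the productive-form trick so that the fixpoint rule (W3) becomes applicable.

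For the guarded union facts the arguments are brief. (U4') follows from (U4) via two applications of skew-commutativity: $e +_b f \equiv f +_{\bneg b} e \equiv \bneg b f +_{\bneg b} e \equiv e +_b \bneg b f$. (U3') mirrors (U3) by the same kind of flipping. (U5') (left distributivity) is obtained by first using (U4)/(U4') to guard each branch so that $b$ can be pushed across, then applying (U5) to factor on the right. (U6), (U7) and (U8) combine idempotence, guardedness, (U5), and the absorption axioms (S2)--(S3). In each case I would write down a three- or four-step equational chain.

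For the iteration facts I would begin with (W4): unrolling with (W1) gives $e^{(b)} \equiv ee^{(b)} +_b 1$, (U4') replaces $1$ by $\bneg b$ in the right branch, and (U5) pulls $\bneg b$ out, giving $e^{(b)} \equiv e^{(b)} \cdot \bneg b$. From (W4) the absorption facts (W5), (W6), and (W6') are essentially immediate via (S3) and (U7). For (W4'), the idea is to pass to the productive representation: Lemma~\ref{lem:prod} replaces $e^{(b)}$ and $(be)^{(b)}$ by $D(e)^{(b)}$ and $D(be)^{(b)}$; since $E(D(-)) \equiv 0$ by Lemma~\ref{lem:d-vs-e}, (W3) applies, and the claim reduces to showing that $D(e)^{(b)}$ satisfies the defining fixpoint equation of $D(be)^{(b)}$, which follows from (U4) inserting the guard $b$ inside the loop body.

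The main obstacle is (W7), the loop fusion law $e^{(c)} \equiv e^{(bc)} \cdot e^{(c)}$. The strategy is to show that the right-hand side satisfies the defining fixpoint equation of $e^{(c)}$ and then appeal to uniqueness via (W3). Unrolling gives
\[
  e^{(bc)}\cdot e^{(c)} \;\equiv\; (ee^{(bc)} +_{bc} 1)\cdot e^{(c)} \;\equiv\; e\bigl(e^{(bc)}e^{(c)}\bigr) +_{bc} e^{(c)},
\]
and one must reshape the guard $bc$ using skew-associativity (U3)/(U3') together with a further unrolling of $e^{(c)}$ on the right-hand branch to recover an equation of the form $g \equiv eg +_c 1$. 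To discharge the productivity side-condition of (W3), I would invoke Lemma~\ref{lem:prod} once more and perform the whole argument with $D(e)$ in place of $e$; both $e^{(c)}$ (via productive loop) and $e^{(bc)}\cdot e^{(c)}$ then become solutions of the same productive fixpoint equation, and uniqueness yields the fusion identity.
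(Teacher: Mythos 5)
Your treatment of the guarded-union facts and of (W4') is essentially the paper's (productive form via \Cref{lem:prod} plus the fixpoint rule W3), but two of your loop derivations have concrete gaps. For (W4), after (W1) and (U4') you have $e^{(b)} \equiv e\,e^{(b)} +_b \bneg{b}$, and you claim (U5) ``pulls $\bneg{b}$ out.'' It cannot: (U5) is right-distributivity $eg +_b fg \equiv (e +_b f)g$ and needs a \emph{common right factor} in both branches; the left branch $e\,e^{(b)}$ only acquires a trailing $\bneg{b}$ once (W4) is already known, so the step is circular. The paper's derivation instead passes to $D(e)^{(b)}$, notes it satisfies $g \equiv D(e)\,g +_b \bneg{b}$, and applies (W3) (legitimate since $E(D(e)) \equiv 0$) to get $D(e)^{(b)} \equiv D(e)^{(b)}\bneg{b}$. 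Since your (W5)/(W6)/(W6') are hung off (W4), this gap propagates, although those three also admit easy independent derivations.

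The more serious problem is (W7). You unroll the right-hand side and, after rewriting the guard with (U3), arrive at something of the shape $(e g +_b e\,e^{(c)}) +_c 1$ with $g = e^{(bc)}e^{(c)}$. To reach a premise of the form $g \equiv e'g +_c 1$ you must collapse the inner union $e g +_b e\,e^{(c)}$ into a single term $e\cdot(\cdots)$, i.e.\ left-distribute sequential composition over $+_b$. GKAT has no such axiom, and it is unsound: $\sem{e(g +_b f)}$ evaluates $b$ \emph{after} $e$, whereas $\sem{eg +_b ef}$ evaluates it before (e.g.\ $p\cdot(1 +_b 0) = pb \neq bp = p1 +_b p0$). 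Moreover the two inner branches differ exactly by the identity being proved, so any attempt to identify them is circular. The paper avoids this by reversing the direction of the fixpoint argument: starting from idempotence, $e^{(c)} \equiv e^{(c)} +_{bc} e^{(c)}$, it manipulates only the left branch (productive loop, (W1), (U8)) to obtain $e^{(c)} \equiv D(e)\,e^{(c)} +_{bc} e^{(c)}$, and then (W3) immediately yields $e^{(c)} \equiv D(e)^{(bc)}e^{(c)} \equiv e^{(bc)}e^{(c)}$. In other words, one shows that the \emph{left-hand side} satisfies the equation whose unique (W3)-solution is the right-hand side, rather than trying to show the right-hand side satisfies the defining equation of $e^{(c)}$; your route, as sketched, does not go through.
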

\begin{proof}[Proof Sketch]
Let us start by showing~(\nameref{fact:neutrright2}).
\begin{align*}
e +_b 0
    &\equiv be +_b 0
        \tag{\nameref{ax:guard-if}. $e +_b f \equiv be +_b f$} \\ 
    &\equiv 0 +_{\bneg{b}} be
        \tag{\nameref{ax:skewcomm}. $e +_b f \equiv f +_{\bneg{b}} e$} \\ 
    &\equiv \bneg{b} be +_{\bneg{b}} be
        \tag{Boolean algebra and \nameref{ax:absleft}. $0 \equiv 0e$} \\ 
    &\equiv be +_{\bneg b} be
        \tag{\nameref{ax:guard-if}. $e +_b f\equiv be +_b f$} \\ 
    &\equiv be
        \tag{\nameref{ax:idemp}. $e +_b e \equiv e$} 
\intertext{%
    To prove~(\nameref{fact:fusion}), we use the productive loop lemma and the fixpoint axiom~(\nameref{ax:fixpoint}).
}
e^{(c)}
    &\equiv e^{(c)} +_{bc}  e^{(c)}
        \tag{\nameref{ax:idemp}. $e +_b e \equiv e$} \\ 
    &\equiv {(D(e))}^{(c)} +_{bc}  e^{(c)}
        \tag{Productive loop lemma}\\
    &\equiv (D(e){D(e)}^{(c)} +_c 1) +_{bc}  e^{(c)}
        \tag{\nameref{ax:unroll}. $e^{(b)} \equiv ee^{(b)} +_b 1$} \\ 
    &\equiv c \cdot (D(e){D(e)}^{(c)} +_c 1) +_{bc} e^{(c)}
        \tag{\nameref{ax:guard-if} and Boolean algebra} \\
    &\equiv c \cdot D(e){D(e)}^{(c)} +_{bc}  e^{(c)}
        \tag{\nameref{fact:branchsel}. $b \cdot (e +_b f) \equiv be$} \\ 
    &\equiv D(e){D(e)}^{(c)} +_{bc}  e^{(c)}
        \tag{\nameref{ax:guard-if} and Boolean algebra} \\
    &\equiv D(e)e^{(c)}  +_{bc}  e^{(c)}
        \tag{Productive loop lemma}\\
    &\equiv {D(e)}^{(bc)} e^{(c)}
        \tag{\nameref{ax:fixpoint}}\\
    &\equiv e^{(bc)} e^{(c)}
        \tag{Productive loop lemma}
\end{align*}
The remaining proofs appear in the \extended{appendix}{extended version~\cite{gkat-full}}.
\end{proof}

We conclude our presentation of derivable facts by showing one more interesting
fact. Unlike the derived facts above, this one is an implication: if the test
$c$ is invariant for the program $e$ given that a test $b$ succeeds, then $c$ is
preserved by a $b$-loop on $e$.

\begin{lemma}[Invariance]\label{lem:invariance}
Let $e \in \Exp$ and $b,c \in \Bexp$.
If $cbe \equiv cbec$, then $ce^{(b)} \equiv {(ce)}^{(b)}c$.
\end{lemma}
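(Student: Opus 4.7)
The plan is to reduce the goal to a form where the fixpoint axiom \nameref{ax:fixpoint} applies, by passing to the strictly productive derivative $d \defeq D(e)$, which satisfies $E(d) \equiv 0$ by \Cref{lem:d-vs-e}.

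First, I would use the Productive Loop Lemma (\Cref{lem:prod}) to normalize both sides of the goal. On the left, $e^{(b)} \equiv d^{(b)}$ immediately gives $ce^{(b)} \equiv cd^{(b)}$. For the right side, I would establish the auxiliary identity $D(ce) \equiv cd$: applying \Cref{lem:d-vs-e} to $ce$ gives $D(ce) \equiv \bneg{E(ce)} \cdot ce$, and since $E(ce) = c \cdot E(e)$, the Boolean equality $\bneg{cE(e)} \cdot c \equiv c \cdot \bneg{E(e)}$ combined with $\bneg{E(e)} \cdot e \equiv d$ yields $D(ce) \equiv c \cdot \bneg{E(e)} \cdot e \equiv cd$. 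Then by \Cref{lem:prod} and congruence, $(ce)^{(b)} \equiv D(ce)^{(b)} \equiv (cd)^{(b)}$, so the right-hand side becomes $(cd)^{(b)} c$, and it suffices to show $cd^{(b)} \equiv (cd)^{(b)} c$.

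Next, I would transfer the invariance hypothesis from $e$ to $d$: left-multiplying $cbe \equiv cbec$ by $\bneg{E(e)}$, commuting Boolean factors, and applying $\bneg{E(e)} \cdot e \equiv d$ yields $cbd \equiv cbdc$, or equivalently $bcd \equiv bcdc$. Using this, I would derive the productive fixed-point equation $cd^{(b)} \equiv (cd) \cdot cd^{(b)} +_b c$ by the chain
\[
  cd^{(b)} \equiv c(dd^{(b)} +_b 1) \equiv cdd^{(b)} +_b c \equiv bcdd^{(b)} +_b c \equiv bcdcd^{(b)} +_b c \equiv (cd) \cdot cd^{(b)} +_b c,
\]
using \nameref{ax:unroll}, left distributivity \nameref{fact:leftdistr}, \nameref{ax:neutrright}, guardedness \nameref{ax:guard-if} (in both directions), and the transferred invariance.

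Finally, since $E(cd) = c \cdot E(d) \equiv 0$, the fixpoint axiom \nameref{ax:fixpoint} applies to this equation, giving $cd^{(b)} \equiv (cd)^{(b)} c$, which chains with the earlier reductions to conclude $ce^{(b)} \equiv (ce)^{(b)} c$. The main subtlety is establishing the auxiliary identity $D(ce) \equiv cd$: without this step, the two sides of the goal cannot be cast as loops over a common productive body, and there is no symmetric way to invoke \nameref{ax:fixpoint}.
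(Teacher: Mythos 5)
Your proof is correct and takes essentially the same route as the paper's: both transfer the invariant to the strictly productive body via $\bneg{E(e)}\cdot e \equiv D(e)$ (obtaining $cb\,D(e) \equiv cb\,D(e)\,c$), set up a guarded fixed-point equation with body $c\cdot D(e)$, apply (\nameref{ax:fixpoint}) using $E(c\cdot D(e)) \equiv 0$, and convert back with \Cref{lem:prod}. The only cosmetic difference is that you justify $D(ce) \equiv c\cdot D(e)$ through the identities of \Cref{lem:d-vs-e} and normalize both sides up front, whereas the paper derives it from the definition of $D$ at the end of the chain.
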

\begin{proof}
We first derive a useful equivalence, as follows:
\begin{align*}
cb \cdot D(e)
    &\equiv cb \cdot \bneg{E(e)} \cdot e
        \tag{\Cref{lem:d-vs-e}} \\
    &\equiv \bneg{E(e)} \cdot cbe
        \tag{Boolean algebra} \\
    &\equiv \bneg{E(e)} \cdot cbec
        \tag{premise} \\
    &\equiv cb \cdot \bneg{E(e)} \cdot ec
        \tag{Boolean algebra} \\
    &\equiv cb \cdot D(e) \cdot c
        \tag{\Cref{lem:d-vs-e}}
\intertext{%
    Next, we show the main claim by deriving
}
ce^{(b)}
    &\equiv c \cdot {D(e)}^{(b)}
        \tag{Productive loop lemma} \\
    &\equiv c \cdot (D(e) \cdot {D(e)}^{(b)} +_b 1)
        \tag{\nameref{ax:unroll}} \\
    &\equiv c \cdot (D(e) \cdot e^{(b)} +_b 1)
        \tag{Productive loop lemma} \\
    &\equiv c \cdot (b \cdot D(e) \cdot e^{(b)} +_b 1)
        \tag{\nameref{ax:skewcomm}} \\
    &\equiv cb \cdot D(e) \cdot e^{(b)} +_b c
        \tag{\nameref{fact:leftdistr}} \\
    &\equiv cb \cdot D(e) \cdot ce^{(b)} +_b c
        \tag{above derivation} \\
    &\equiv c \cdot D(e) \cdot ce^{(b)} +_b c
        \tag{\nameref{ax:skewcomm}} \\
    &\equiv {(c \cdot D(e))}^{(b)}c
        \tag{\nameref{ax:fixpoint}} \\
    &\equiv {D(ce)}^{(b)}c
        \tag{Def. $D$, Boolean algebra} \\
    &\equiv {(ce)}^{(b)}c
        \tag{Productive loop lemma}
\end{align*}
This completes the proof.
\end{proof}

\subsection{A Limited Form of Completeness}%
\label{sec:partial-completness}

Above, we considered a number of axioms that were proven sound with respect to the language model. Ultimately, we would like to show the converse, \ie, that these axioms are sufficient to prove all equivalences between programs, meaning that whenever $\sem{e} = \sem{f}$, it also holds that $e \equiv f$.

We return to this general form of completeness in \Cref{sec:complete}, when we can rely on the coalgebraic theory of GKAT developed in \Cref{sec:kleene,sec:decision}.
At this point, however, we can already prove a special case of completeness related to
Hoare triples.
Suppose $e$ is a GKAT program, and $b$ and $c$ are Boolean expressions
encoding pre- and postconditions.
The equation $\sem{be} = \sem{bec}$ states that every finite, terminating
run of $e$ starting from a state satisfying $b$ concludes in
a state satisfying $c$.
The following states that all valid Hoare triples of this kind can
be established axiomatically:

\begin{restatable}[Hoare completeness]{theorem}{hoarecompleteness}%
\label{thm:hoare-completeness}
Let $e \in \Exp$, $b, c \in \Bexp$.
If $\sem{bec} = \sem{be}$, then $bec \equiv be$.
\end{restatable}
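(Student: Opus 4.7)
The plan is to prove the claim by structural induction on $e$, with the inductive hypothesis quantified over all pairs of Booleans: for every proper subexpression $e'$ of $e$ and all $b', c' \in \Bexp$, if $\sem{b'e'c'} = \sem{b'e'}$ then $b'e'c' \equiv b'e'$. In the base case $e \in \Bexp$, the hypothesis is directly a Boolean equivalence $bec \equiv_\text{BA} be$, which lifts to $\equiv$. In the base case $e = p \in \Sigma$, I case-split: if $b \not\equiv_\text{BA} 0$ then some atom $\alpha \leq b$ exists and $\sem{bp}$ contains $\alpha p\beta$ for every $\beta \in \At$, so the hypothesis forces $c \equiv_\text{BA} 1$, giving the claim via \nameref{ax:neutrright}; otherwise $b \equiv_\text{BA} 0$ and both sides collapse to $0$ by \nameref{ax:absleft}.

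For the sequence case $e = e_1 \cdot e_2$, I introduce an intermediate Boolean $d$ defined as the disjunction of all atoms $\beta$ with $\sem{be_1\beta} \neq \emptyset$, which is a finite Boolean expression because $\At$ is finite. By construction $\sem{be_1d} = \sem{be_1}$, so the IH applied to $e_1$ yields $be_1 d \equiv be_1$. For each atom $\beta \leq d$, the existence of a trace of $be_1$ ending in $\beta$ combined with the hypothesis gives $\sem{\beta e_2 c} = \sem{\beta e_2}$; taking the union over $\beta \leq d$ yields $\sem{de_2 c} = \sem{de_2}$, and the IH applied to $e_2$ gives $de_2 c \equiv de_2$. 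Chaining via congruence and associativity proves the claim. The conditional case $e = e_1 +_{b_0} e_2$ is analogous: the hypothesis splits disjointly into $\sem{bb_0 e_1 c} = \sem{bb_0 e_1}$ and $\sem{b\bneg{b_0} e_2 c} = \sem{b\bneg{b_0} e_2}$, and the IH applied to each side recombines via \nameref{fact:leftdistr} and \nameref{ax:rightdistr}.

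The main obstacle is the loop case $e = f^{(b')}$, which I handle via the Invariance Lemma (\Cref{lem:invariance}). I define the invariant $I$ as the disjunction of all atoms reachable from $b$ by zero or more iterations of $b'f$; this is a finite Boolean expression because $\At$ is finite and the reachability operator stabilizes. By construction: (i) $b \leq I$, (ii) $\sem{Ib'fI} = \sem{Ib'f}$, so the IH applied to $f$ yields $Ib'fI \equiv Ib'f$, and (iii) $I\bneg{b'} \leq c$, because any atom at which $bf^{(b')}$ terminates lies in $I\bneg{b'}$ and, by the hypothesis, in $\sem{c}$. Applying \Cref{lem:invariance} then yields $If^{(b')} \equiv (If)^{(b')}I$, and the claim follows from the chain
\begin{align*}
bf^{(b')}
 &\equiv bIf^{(b')}
 \equiv b(If)^{(b')}I
 \equiv b(If)^{(b')}\bneg{b'}I
 \equiv b(If)^{(b')}I\bneg{b'} \\
 &\equiv b(If)^{(b')}I\bneg{b'}c
 \equiv b(If)^{(b')}Ic
 \equiv bIf^{(b')}c
 \equiv bf^{(b')}c,
\end{align*}
using (i), invariance, \nameref{fact:guard-loop-out}, commutativity of Booleans, and (iii).

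The real difficulty is concentrated in this last case: one must produce an invariant strong enough to convert the global semantic hypothesis into a per-iteration condition on the body, yet simple enough (a Boolean) for the Invariance Lemma to apply. The construction of $I$ as the fixed point of the reachability operator on $\At$ is semantic but finite, and \Cref{lem:invariance} converts this semantic closure into the syntactic equivalence $If^{(b')} \equiv (If)^{(b')}I$ needed to thread the postcondition $c$ out of the loop.
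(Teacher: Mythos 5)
Your proposal is correct and follows essentially the same route as the paper's proof: induction on $e$, an intermediate Boolean of reachable terminal atoms in the sequential case, and in the loop case the invariant of atoms reachable by iterating the guarded body, discharged via the Invariance Lemma (\Cref{lem:invariance}) with exactly the paper's three observations ($b \leq I$, per-iteration preservation via the IH, and $I\bneg{b'} \leq c$). The only differences are cosmetic: you run the final equational chain from $bf^{(b')}$ to $bf^{(b')}c$ rather than the reverse, and your explicit case split on $b \equiv_\text{BA} 0$ in the action base case is slightly more careful than the paper's treatment.
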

\begin{proof}[Proof Sketch]
  By induction on $e$. We show only the case for while loops and defer the
  full proof to \extended{\Cref{apx:omitted}}{the extended version~\cite{gkat-full}}.

  If $e = e_0^{(d)}$, first note that if $b \equiv 0$, then the claim follows trivially.
  For $b \not\equiv 0$, let
  \[
      h = \sum \{ \alpha \in \At : \exists n. \sem{b} \diamond \sem{de_0}^n \diamond \sem{\alpha} \neq \emptyset \}.
  \]
  We make the following observations.
  \begin{enumerate}[(i), leftmargin=1cm]
      \item\label{invariant:start}
      Since $b \not\equiv 0$, we have that $\sem{b} \diamond \sem{{de_0}}^0 \diamond \sem{b} = \sem{b} \neq \emptyset$, and thus $b \leq h$.

      \item\label{invariant:end}
      If $\alpha \leq h\bneg{d}$, then in particular $\gamma{}w\alpha \in \sem{b} \diamond \sem{de_0}^n \diamond \sem{\alpha}$ for some $n$ and $\gamma{}w$.
      Since $\alpha \leq \bneg{d}$, it follows that $\gamma{}w\alpha \in \sem{be_0^{(d)}} = \sem{be_0^{(d)}c}$, and thus $\alpha \leq c$.
      Consequently, $h\bneg{d} \leq c$.

      \item\label{invariant:hold}
      If $\alpha{}w\beta \in \sem{dhe_0}$, then $\alpha \leq h$ and hence there exists an $n$ such that $\gamma{}x\alpha \in \sem{b} \diamond \sem{de_0}^n \diamond \sem{\beta}$.
      But then $\gamma{}x\alpha{}w\beta \in \sem{b}\ \diamond \sem{de_0}^{n+1} \diamond \sem{\beta}$, and therefore $\beta \leq h$.
      We can conclude that $\sem{dhe_0} = \sem{dhe_{0}h}$; by induction, it follows that $dhe_{0}h \equiv dhe_0$.
  \end{enumerate}

  \noindent
  Using these observations and the invariance lemma (\Cref{lem:invariance}), we derive
  \begin{align*}
  be_{0}^{(d)}c
      &\equiv bhe_0^{(d)}c
          \tag{By~\ref{invariant:start}} \\
      &\equiv b \cdot {(he_0)}^{(d)}hc
          \tag{Invariance and~\ref{invariant:hold}} \\
      &\equiv b \cdot {(he_0)}^{(d)}\bneg{d}hc
          \tag{\nameref{fact:guard-loop-out}} \\
      &\equiv b \cdot {(he_0)}^{(d)}\bneg{d}h
          \tag{By~\ref{invariant:end}} \\
      &\equiv b \cdot {(he_0)}^{(d)}h
          \tag{\nameref{fact:guard-loop-out}} \\
      &\equiv bh{e_0}^{(d)}
          \tag{Invariance and~\ref{invariant:hold}} \\
      &\equiv b {e_0}^{(d)}
          \tag{By~\ref{invariant:start}}
  \end{align*}
This completes the proof.
\end{proof}

\noindent
As a special case, the fact that a program has no traces at all can be shown
axiomatically.

\begin{corollary}[Partial Completeness]%
\label{cor:partial-completeness}
If $\sem{e} = \emptyset$, then $e \equiv 0$.
\end{corollary}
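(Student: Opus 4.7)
The plan is to derive Corollary \ref{cor:partial-completeness} directly from the Hoare completeness theorem (\Cref{thm:hoare-completeness}) by choosing appropriate pre- and postconditions. The key observation is that the statement $\sem{e} = \emptyset$ can be rephrased as a trivially true Hoare triple $\sem{1 \cdot e \cdot 0} = \sem{1 \cdot e}$, after which the sequence axioms collapse everything to $0$.

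Concretely, I would first verify the semantic premise needed to apply \Cref{thm:hoare-completeness} with $b = 1$ and $c = 0$. On the one hand, $\sem{1 \cdot e} = \sem{1} \diamond \sem{e} = \At \diamond \emptyset = \emptyset$, using the hypothesis $\sem{e} = \emptyset$. On the other hand, $\sem{1 \cdot e \cdot 0} = \sem{1 \cdot e} \diamond \sem{0} = \emptyset \diamond \emptyset = \emptyset$, since $\sem{0} = \set{\alpha \in \At}{\alpha \leq 0} = \emptyset$. Hence $\sem{1 \cdot e \cdot 0} = \sem{1 \cdot e}$.

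With this premise in hand, Hoare completeness gives $1 \cdot e \cdot 0 \equiv 1 \cdot e$. Finally, I would invoke the sequence axioms to simplify both sides: axiom~\nameref{ax:neutrleft} yields $1 \cdot e \equiv e$, while axiom~\nameref{ax:neutrleft} combined with axiom~\nameref{ax:absright} gives $1 \cdot e \cdot 0 \equiv e \cdot 0 \equiv 0$. Chaining these equivalences yields $e \equiv 0$, as required.

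There is essentially no obstacle here — all the heavy lifting has already been done in \Cref{thm:hoare-completeness}, and the corollary is really just an instantiation at the extremal pre- and postconditions $b = 1$, $c = 0$. The only thing to be mildly careful about is that the hypothesis $\sem{e} = \emptyset$ is what makes the Hoare premise hold vacuously, so the step should be spelled out rather than asserted.
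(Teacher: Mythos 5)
Your proof is correct and follows essentially the same route as the paper: both instantiate \Cref{thm:hoare-completeness} with the extremal pre/postconditions $b=1$, $c=0$, using $\sem{e}=\emptyset$ to get $\sem{1\cdot e}=\emptyset=\sem{1\cdot e\cdot 0}$, and then collapse $1\cdot e\cdot 0\equiv 1\cdot e$ to $e\equiv 0$ via the sequence axioms. The only difference is that you spell out the semantic computation and the uses of (\nameref{ax:neutrleft}) and (\nameref{ax:absright}) explicitly, which the paper leaves implicit.
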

\begin{proof}
We have $\sem{1 \cdot e} = \sem{e} = \emptyset = \sem{1 \cdot e \cdot 0}$,
and thus $e \equiv 1 \cdot e \equiv 1 \cdot e \cdot 0 \equiv 0$ by
\Cref{thm:hoare-completeness}.
\end{proof}

We will return to deriving a general completeness result in \Cref{sec:complete}.
This will rely on the coalgebraic theory of GKAT, which we develop
next (\Cref{sec:kleene,sec:decision}).

\section{Automaton Model and Kleene Theorem}%
\label{sec:kleene}

In this section, we present an automaton model that accepts traces
(\ie, guarded strings) of GKAT programs. We then present language-preserving
constructions from GKAT expressions to automata, and conversely, from automata
to expressions. Our automaton model is rich enough to express programs that go
beyond GKAT;\@ in particular, it can encode traces of programs with \textbf{goto}
statements that have no equivalent GKAT program~\cite{KT08a}. In order to
obtain a Kleene Theorem for GKAT, that is, a correspondence between automata
and GKAT programs, we identify conditions ensuring that the
language accepted by an automaton corresponds to a valid GKAT program.

\subsection{Automata and Languages}
Let $G$ be the functor $GX = {(2 + \Sigma\times X)}^\At$,
where $2=\sset{0,1}$ is the two-element set.
A \emph{$G$-coalgebra} is a pair $\X = \angl{X,\delta^\X}$ with
\emph{state space} $X$ and \emph{transition map} $\delta^\X\colon X\to G X$.
The outcomes $1$ and $0$ model immediate acceptance and rejection, respectively.
From each state $s \in X$, given an input $\alpha\in\At$, the coalgebra performs exactly one of three possible actions: it either produces an output $p\in\Sigma$ and moves to a new state $t$, halts and accepts, or halts and rejects;
that is, either $\delta^\X(s)(\alpha) = (p,t)$, or $\delta^\X(s)(\alpha) = 1$,
or $\delta^\X(s)(\alpha) = 0$.

A \emph{$G$-automaton} is a $G$-coalgebra with a designated start state
$\iota$, commonly denoted as a triple $\X = \angl{X, \delta^\X, \iota}$.
We can represent $G$-coalgebras graphically as in \Cref{fig:coalgebra-example}.
\begin{figure}
    \begin{tikzpicture}
    \node[smallstate,label=above:{$s_1$}] (x) {};
    \node[smallstate,label=above:{$s_2$},right=2cm of x] (x') {};
    \node[smallstate,label=above:{$s_3$},right=2cm of x'] (x'') {};

    \draw[->] (x) edge[output-edge] node[below=1mm] {$\alpha$} ($(x) + (0, -0.40)$);
    \draw[->] (x) edge[transition-edge] node[above] {$\beta/p$} (x');
    \draw[->] (x') edge[transition-edge] node[above] {$c/q$} (x'');
    \end{tikzpicture}
    \caption{%
        Graphical depiction of a $G$-coalgebra $\angl{X, \delta^\X}$.
        States are represented by dots, labeled with the name of that state whenever relevant.
        In this example, $\delta^\X(s_1)(\alpha) = 1$, and $\delta^\X(s_1)(\beta) = (p, s_2)$.
        When $\gamma \in \At$ such that $\delta^\X(s)(\gamma) = 0$, we draw no edge at all.
        We may abbreviate drawings by combining transitions with the same target into a Boolean expression; for instance, when $c = \alpha + \beta$,
        we have $\delta^\X(s_2)(\alpha) = \delta^\X(s_2)(\beta) = (q, s_3)$.
    }\label{fig:coalgebra-example}
\end{figure}

A $G$-coalgebra $\X=\angl{X,\delta^\X}$ can be viewed both as an acceptor
of finite guarded strings $\GS = \At \cdot {(\Sigma \cdot \At)}^*$, or as
an acceptor of finite \emph{and} infinite guarded strings
$\GS \cup \wGs$, where $\wGs \defeq {(\At
\cdot \Sigma)}^\omega$. Acceptance for a state $s$ is captured by the
following equivalences:
\begin{equation}
\label{def:accept}
\begin{aligned}
  \accept(s,\alpha) &~\iff~ \delta^\X(s)(\alpha) = 1\\
  \accept(s,\alpha p x) &~\iff~
    \exists t.\; \delta^\X(s)(\alpha) = (p,t) \land \accept(t,x)
\end{aligned}
\end{equation}
The language of finite guarded strings $\lang\X(s) \subseteq GS$ accepted
from state $s \in X$ is the \emph{least fixpoint} solution of the above system;
in other words, we interpret~\eqref{def:accept}  inductively.
The language of finite and infinite guarded strings
$\Lang\X(s) \subseteq \Gs \cup \wGs$
accepted from state $s$ is the \emph{greatest fixpoint} solution
of the above system;
in other words, we interpret~\eqref{def:accept} coinductively.%
\footnote{%
  The set $\FF$ of maps $F \colon X \to \powerset{\Gs \cup \wGs}$ ordered pointwise by
  subset inclusion forms a complete lattice.
  The monotone map
  \begin{align*}
    \tau\colon \FF \to \FF,\,
    \tau(F) = \lam{s \in X}{
      \set{\alpha\in \At}{\delta^\X(s)(\alpha) = 1}
      \cup \set{apx}{\exists t.\; \delta^\X(s)(\alpha) = (p,t) \land x \in F(t)}
    }
  \end{align*}
  arising from~\eqref{def:accept}
  has least and greatest fixpoints, $\lang\X$ and $\Lang\X$, by the
  Knaster-Tarksi theorem.
}
The two languages are related by the equation $\lang\X(s) = \Lang\X(s) \cap \Gs$.
Our focus will mostly be on the finite-string semantics,
$\lang\X(-)\colon X \to 2^\Gs$, since GKAT expressions denote finite-string
languages, $\den{-}\colon \Exp \to 2^\Gs$.


The language accepted by a $G$-automaton $\X = \angl{X, \delta^X, \iota}$ is the
language accepted by its initial state~$\iota$.  Just like the language model
for GKAT programs, the language semantics of a $G$-automaton satisfies the
determinacy property (see \Cref{def:action-det}). In fact, every language
that satisfies the determinacy property can be recognized by a $G$-automaton,
possibly with infinitely many states.
(We will prove this formally in \Cref{thm:final-for-normal}.)

\begin{figure}[t]
\def\arraystretch{1.2}
\centerline{\(
\begin{array}{c||c|c|c} 
  e & X_e & \delta_e \in X_e \to G X_e
          & \iota_e(\alpha) \in 2 + \Sigma \times X_e
          \\\hline 
  b & \emptyset & \emptyset     & [\alpha\leq b]\\
  p & \sset{*}  & * \mapsto \1  & (p, *)\\[0.5ex]
  f +_b g
    & X_f + X_g
    & \delta_f + \delta_g
    & \begin{cases}
        \iota_f(\alpha) &\alpha \leq b\\
        \iota_g(\alpha) &\alpha \leq \bneg{b}\\
      \end{cases}
    \\[1.2em]
  f\cdot g
    & X_f + X_g
    & (\delta_f + \delta_g)\umod{X_f,\iota_g}
    & \begin{cases}
        \iota_f(\alpha) &\iota_f(\alpha) \neq 1\\
        \iota_g(\alpha) &\iota_f(\alpha) = 1
      \end{cases}
    \\[1.2em]
  f^{(b)}
    & X_f
    & \delta_f\umod{X_f,\iota_e}
    & \begin{cases}
        1               &\alpha \leq \bneg{b}\\
        0               &\alpha\leq b, \iota_f(\alpha)=1\\
        \iota_f(\alpha) &\alpha\leq b, \iota_f(\alpha) \neq 1
      \end{cases}
\end{array}\)}
\caption{Construction of the Thompson coalgebra $\X_e = \angl{X_e, \delta_e}$ with initial
  pseudostate $\iota_e$.}%
\label{fig:thompson}
\end{figure}

\subsection{Expressions to Automata: a Thompson Construction}%
\label{sec:Thompson}

We translate expressions to $G$-coalgebras using a construction reminiscent of
Thompson's construction for regular expressions~\cite{thompson-1968}, where
automata are formed by induction on the structure of the expressions and
combined to reflect the various GKAT operations.

We first set some notation. A \emph{pseudostate} is an element $h \in G X$. We let $\1 \in G X$ denote the pseudostate $\1(\alpha) = 1$, \ie, the constant function returning $1$.
Let $\X=\angl{X, \delta}$ be a $G$-coalgebra. The \emph{uniform continuation} of
$Y \subseteq X$ by $h \in GX$ (in $\X$) is the coalgebra
 $\X\umod{Y,h} \defeq \angl{X, \delta\umod{Y,h}}$, where
\[
  \delta\umod{Y,h}(x)(\alpha) \defeq \begin{cases}
    h(\alpha)           &\text{if } x \in Y, \delta(x)(\alpha) = 1\\
    \delta(x)(\alpha)   &\text{otherwise.}
  \end{cases}
\]
Intuitively, uniform continuation replaces termination of states in a region $Y$
of $\X$ by a transition described by $h\in GX$; this construction will be useful
for modeling operations that perform some kind of sequencing.
\Cref{fig:continuation} schematically describes the uniform continuation
operation, illustrating different changes to the automaton that can occur as a
result; observe that since $h$ may have transitions into $Y$, uniform
continuation can introduce loops.

\begin{figure}
\begin{tikzpicture}
\draw[dotted] (0, -0.2) rectangle (4, 1.8);
\node at (0.3,1.5) {$Y$};

\node[smallstate] (y0) at (0.7,0.7) {};
\node[smallstate] (y1) at (1.9,0.7) {};
\node[smallstate] (y2) at (3.1,0.7) {};

\node[smallstate] (x0) at (0.7,2.3) {};
\node[smallstate] (x1) at (1.9,2.3) {};

\draw[->] (y0) edge[output-edge] node[below=1mm] {\small$\alpha$} ($(y0) + (0,-0.40)$);
\draw[->,gray!80] (y1) edge[output-edge] node[below=1mm] {\small$\beta$} ($(y1) + (0,-0.40)$);
\draw[->,gray!80] (y2) edge[output-edge] node[below=1mm] {\small$\gamma$} ($(y2) + (0,-0.40)$);
\draw[->] (x0) edge[output-edge] node[above=1mm] {\small$\beta$} ($(x0) + (0,0.40)$);

\draw[->,red] (y1) edge[transition-edge] node[right] {\small$\beta/q$} (x1);
\draw[->,red] (y2) edge[transition-edge,loop] node[above] {\small$\gamma/p$} (y2);
\draw[->] (y0) edge[transition-edge] node[above] {\small$\beta/p$} (y1);

\node[smallstate,blue,label={\color{blue}$h$}] (h) at (4.5,2) {};
\draw[->,blue] (h) edge[transition-edge,bend left,out=45,in=125] node[below right] {\small$\gamma/p$} (y2);
\draw[->,blue] (h) edge[transition-edge,bend right] node[above right] {\small$\beta/q$} (x1);
\draw[->,blue] (h) edge[output-edge] node[right=1mm] {\small$\alpha$} ($(h) + (0.40,0)$);
\end{tikzpicture}
\caption{Schematic explanation of the \emph{uniform continuation} $\X\umod{Y,h}$ of $\X$, where
$Y \subseteq X$ and $h \in GX$. The pseudostate $h$ and its transitions are drawn in blue. Transitions present in $\X$ unchanged by the extension are drawn in black; grayed out transitions are replaced by transitions drawn in red as a result of the extension.}\label{fig:continuation}
\end{figure}
We will also need coproducts to combine coalgebras. Intuitively, the coproduct of two coalgebras is just the juxtaposition of both coalgebras. Formally, for
$\X=\angl{X, \delta_1}$ and $\Y=\angl{Y, \delta_2}$,
we write the coproduct as $\X+\Y=\angl{X+Y, \delta_1+\delta_2}$,
where $X + Y$ is the disjoint union of $X$ and $Y$,
and $\delta_1 + \delta_2\colon X+Y \to G(X+Y)$ is the map that applies
$\delta_1$ to states in $X$ and $\delta_2$ to states in $Y$.

\Cref{fig:thompson} presents our translation from expressions $e$ to coalgebras
$\X_e$ using coproducts and uniform continuations, and
\Cref{fig:thompson-schematic} sketches the transformations used to construct the
automaton of a term from its subterms.  We model initial states as
pseudostates, rather than proper states. This trick avoids the
$\epsilon$-transitions that appear in the classical Thompson construction and
yields compact, linear-size automata.
\Cref{fig:thompson-examples} depicts some examples of our construction.

To turn the resulting coalgebra into an automaton, we simply convert the initial
pseudostate into a proper state. Formally, when $\X_e = \angl{X_e, \delta_e}$, we
write $\X_e^\iota$ for the $G$-automaton $\angl{\{\iota\} + X_e,
\delta^\iota_e, \iota}$, where for $x \in X_e$, we set $\delta^\iota_e(x) =
\delta_e(x)$ as well as $\delta_e^\iota(\iota) = \iota_e$.
We call $\X_e$ and $\X_e^\iota$ the \emph{Thompson coalgebra} and
\emph{Thompson automaton} for $e$, respectively.

\noindent
The construction translates expressions to equivalent automata in
the following sense:
\begin{theorem}[Correctness I]%
\label{thm:thompson-correct}
The Thompson automaton for $e$ recognizes $\den{e}$, that is
$\lang{\X_e^\iota}(\iota) = \den{e}$.
\end{theorem}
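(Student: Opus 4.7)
The plan is structural induction on $e$. The base cases are immediate: for $e = b \in \Bexp$, the Thompson coalgebra $\X_b$ has empty state space and $\iota_b(\alpha) = [\alpha \leq b]$, so the automaton accepts $\set{\alpha \in \At}{\alpha \leq b} = \sem{b}$; for $e = p \in \Sigma$, the unique state $*$ accepts every atom and $\iota_p(\alpha) = (p,*)$, so the language is $\set{\alpha p \beta}{\alpha, \beta \in \At} = \sem{p}$.

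For the inductive cases, I would first establish a decomposition lemma for uniform continuation: given $\X = \angl{X, \delta}$, $Y \subseteq X$, a pseudostate $h \in GX$, and $x \in X$,
\[
  \lang{\X\umod{Y,h}}(x) = A_x \cup (B_x \diamond L_h),
\]
where $A_x$ collects the strings of $\lang{\X}(x)$ that terminate outside $Y$ in $\X$, $B_x$ collects the prefixes leading from $x$ to termination inside $Y$ in $\X$, and $L_h$ is the language accepted when $h$ is promoted to a proper initial state of the continued coalgebra. The lemma follows by fixpoint induction, exploiting that transitions out of $X \setminus Y$ are unchanged by the continuation. For $e = f +_b g$ no continuation is involved: the initial pseudostate dispatches to $\iota_f$ on $\alpha \leq b$ and to $\iota_g$ on $\alpha \leq \bneg{b}$, so the inductive hypotheses yield $(\sem{b} \diamond \sem{f}) \cup (\bneg{\sem{b}} \diamond \sem{g}) = \sem{e}$. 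For $e = f \cdot g$, the continuation pseudostate $\iota_g$ has transitions only into $X_g$, so $L_{\iota_g}$ in the decomposition equals $\lang{\X_g^\iota}(\iota)$; combining with the inductive hypotheses gives $\sem{f} \diamond \sem{g} = \sem{e}$.

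The main obstacle is the loop case $e = f^{(b)}$, where the continuation routes termination in $X_f$ back through $\iota_e$, which itself re-enters $X_f$, creating genuine recursion. I would argue that any accepted trace of $\X_e^\iota$ admits a unique factorization $x_1 \diamond x_2 \diamond \cdots \diamond x_n \diamond \alpha$, in which each $x_i \in \sem{f}$ begins on an atom satisfying $b$ (a complete loop iteration consuming at least one action) and $\alpha \leq \bneg{b}$. A subtle ingredient is the clause sending $\alpha$ to $0$ when $\alpha \leq b$ and $\iota_f(\alpha) = 1$: by the inductive hypothesis and the determinacy property of $\sem{f}$, the condition $\iota_f(\alpha) = 1$ forces $\sem{f}$ to contain no string of positive length starting at $\alpha$, so the only candidate loop iteration starting at $\alpha$ is the pure-atom trace that fails to advance. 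The automaton therefore correctly rejects these atoms, matching the identity $\sem{f^{(b)}} = \bigcup_{n \geq 0}(\sem{b} \diamond \sem{f})^n \diamond \bneg{\sem{b}}$, in which such non-advancing iterations contribute nothing. With the factorization in hand, the inductive hypothesis $\lang{\X_f^\iota}(\iota) = \sem{f}$ then identifies each $x_i$ as a witness to $\sem{f}$, closing the equation.
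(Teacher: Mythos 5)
Your argument is correct, but it takes a genuinely different route from the paper's. The paper obtains \Cref{thm:thompson-correct} as a corollary of its solution machinery: \Cref{thm:thompson-roundtrip} constructs, by induction on $e$ and purely axiomatic reasoning, a solution $s$ of $\X_e^\iota$ with $s(\iota) \equiv e$, and \Cref{prop:sol-preserves-lang} together with soundness of $\equiv$ then gives $\den{e} = \den{s(\iota)} = \lang{\X_e^\iota}(\iota)$. You instead work entirely at the level of languages: a decomposition lemma expressing $\lang{\X\umod{Y,h}}(x)$ via runs of $\X$ that accept outside $Y$ versus runs that would accept inside $Y$ and then continue through $h$, plus a factorization of accepted traces of the loop automaton into $b$-guarded, action-consuming iterations. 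Both ingredients are sound---uniform continuation leaves all proper transitions and all acceptances outside $Y$ untouched, and determinism of $G$-automata makes the factorization well-defined---and your handling of the rejection clause ($\alpha \leq b$, $\iota_f(\alpha) = 1$) via determinacy of $\sem{f}$ is precisely the semantic counterpart of the algebraic step the paper performs in the loop case of the roundtrip proof (via \Cref{lem:epsilon-vs-iota} and the fundamental theorem). As for what each approach buys: yours is elementary and self-contained, needing neither the axioms, nor soundness, nor solutions, so correctness of the automaton construction (and hence the decision procedure) would not depend on the algebraic development; the cost is some word-level combinatorics that your sketch defers (the decomposition lemma and the loop factorization both need careful inductive write-ups, and the sequence case silently relies on the same determinacy observation you make explicit for loops). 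The paper's detour through solutions is essentially free, since \Cref{thm:thompson-roundtrip} and \Cref{prop:sol-preserves-lang} are needed anyway for the Kleene theorem and for completeness, and it yields the stronger syntactic conclusion $s(\iota) \equiv e$ rather than only the language equality.
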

\begin{proof}[Proof Sketch]
This is a direct corollary of
\Cref{prop:sol-preserves-lang,thm:thompson-roundtrip}, to follow.
\end{proof}
\noindent
Moreover, the construction is efficiently implementable and yields
small automata:
\begin{proposition}%
\label{prop:thompson-efficient}
The Thompson automaton for $e$ is effectively constructible in time
$\OO(|e|)$ and has $\#_\Sigma(e) + 1$ (thus, $\OO(|e|)$) states,
where $|\At|$ is considered a constant for the time complexity claim,
$|e|$ denotes the size of the expression,
and $\#_\Sigma(e)$ denotes the number of occurrences of actions in $e$.
\end{proposition}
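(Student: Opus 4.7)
The plan is to prove the two claims separately, by structural induction on $e$ and reading off the clauses of Figure~\ref{fig:thompson}.

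For the state-count claim, an immediate induction shows $|X_e| = \#_\Sigma(e)$: the base cases give $|X_b| = 0 = \#_\Sigma(b)$ and $|X_p| = 1 = \#_\Sigma(p)$, and the three compound cases merely reuse or disjointly sum the carriers of their sub-coalgebras (with $X_{f +_b g} = X_f + X_g = X_{f \cdot g}$ and $X_{f^{(b)}} = X_f$) without introducing new states. Adjoining the fresh initial state $\iota$ in passing from $\X_e$ to $\X_e^\iota$ then adds exactly one state, yielding $|X_e^\iota| = \#_\Sigma(e) + 1$.

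For the running-time claim, I would represent the coalgebra by storing each state's transition map $\delta(s) : \At \to 2 + \Sigma \times X$ as a constant-size array indexed by $\At$; since $|\At|$ is a constant, each such array is $O(1)$ space, and computing a fresh $\iota_e$ from Figure~\ref{fig:thompson} takes $O(1)$ time per constructor. With this representation, the base cases and the guarded-union case $f +_b g$ each contribute $O(1)$ work.

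The delicate step is the uniform continuation invoked in the sequencing and loop cases, which must rewrite every transition in $Y \subseteq X$ that currently outputs $1$. To process this in aggregate linear time, I would maintain two auxiliary indices: for every state $s$, a doubly-linked list $L_s$ of the atoms at which $\delta(s)(\alpha) = 1$, and, dually, for every atom $\alpha$, a linked list of the states whose list $L_s$ currently contains $\alpha$. Both indices are populated when a state is allocated (initially all atoms are in $L_s$) and kept in sync under updates in $O(1)$ each. A continuation $\delta\umod{Y,h}$ then iterates over $\alpha \in \At$ in $O(|\At|) = O(1)$ time; for each $\alpha$ with $h(\alpha) \neq 1$, it walks the corresponding dual list, overwrites every $1$-transition with $h(\alpha)$, and removes the atom from the traversed $L_s$; when $h(\alpha) = 1$, nothing needs to be done.

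The main obstacle is bounding the total number of list updates over the entire construction by $O(|e|)$, which I would handle by an amortized argument. The key observation is that each state-atom pair is inserted into $L_s$ exactly once, at state creation, and, by inspection of the two continuation cases, is never re-inserted: a rewrite either overwrites $1$ by a non-$1$ value (removing the entry) or is a no-op that leaves the entry alone. Thus each entry is removed at most once, bounding the total number of updates by $|X_e| \cdot |\At| = O(|e|)$; combined with the $O(1)$ overhead per continuation, the entire construction runs in $O(|e|)$ time.
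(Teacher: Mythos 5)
Your proposal is correct, and it is in fact more detailed than the paper, which states \Cref{prop:thompson-efficient} without proof (the state count and constructibility are left to inspection of \Cref{fig:thompson}). The first half of your argument---$|X_e| = \#_\Sigma(e)$ by structural induction, plus one for the adjoined initial state---is exactly the inspection the paper has in mind. The second half is where you add genuine content: you correctly notice that a naive eager implementation of uniform continuation is \emph{not} obviously linear, since left-nested sequences or loops would re-scan the same region $X_f$ at every enclosing operator, giving quadratic behavior; your per-atom index of currently-accepting states, together with the monotonicity observation that an entry can leave value $1$ but never return to it, yields the amortized $\OO(|e|)$ bound (skipping atoms with $h(\alpha)=1$ in $\OO(1)$ is the essential point, as those entries may survive many continuations).

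One detail you should make explicit: the dual index ``for every atom $\alpha$, the states accepting $\alpha$'' must be maintained \emph{per sub-coalgebra} and concatenated when coproducts are formed, with the continuation in the $f \cdot g$ case performed against $\X_f$'s index before merging with $\X_g$'s. As written, a global list could be read as containing accepting states of $X_g$ as well, and walking it would wrongly overwrite their transitions (the continuation region is $Y = X_f$, not the whole carrier). This is a fixable bookkeeping point---list concatenation costs $\OO(|\At|)=\OO(1)$ per operator---and with it your argument goes through.
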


\begin{figure}[t!]
    \begin{subfigure}[b]{0.38\textwidth}
        \centering
        \begin{tikzpicture}
            \draw[dotted] (0,0) rectangle (1.5,2);
            \draw[dotted] (1.6,0) rectangle (3.1,2);
            \node[gray!80] at (0.2, 1.75) (Xf) {\tiny$\X_f$};
            \node[gray!80] at (2.9, 1.75) (Xg) {\tiny$\X_g$};

            \node[smallstate] at (0.5,1.0) (sf1) {};
            \node[smallstate] at (1.0,0.5) (sf2) {};
            \node[smallstate] at (1.0,1.5) (sf3) {};
            \node[smallstate,gray!80,label=below:{\small\color{gray}$\iota_f$}] at (0.75, -0.5) (if) {};

            \draw[->] (if) edge[transition-edge, bend left,gray!80,pos=0.7] node[above,sloped] {\tiny$\alpha/p$} (sf1);
            \draw[->] (if) edge[transition-edge, bend right,gray!80] node[below,sloped] {\tiny$\beta/q$} (sf2);
            \draw[->] (if) edge[output-edge, gray!80] node[left=1mm] {\tiny$\gamma$} ($(if) - (0.40, 0)$);
            \draw[->] (sf3) edge[output-edge] node[below=1mm] {\tiny$\beta$} ($(sf3) + (0, -0.40)$);

            \node[smallstate] at (2.6,1.0) (sg1) {};
            \node[smallstate] at (2.1,0.5) (sg2) {};
            \node[smallstate] at (2.1,1.5) (sg3) {};
            \node[smallstate,gray!80,label=below:{\small\color{gray!80}$\iota_g$}] at (2.35, -0.5) (ig) {};

            \draw[->] (ig) edge[transition-edge, bend left,gray!80] node[above,sloped,rotate=180] {\tiny$\alpha/r$} (sg2);
            \draw[->] (ig) edge[transition-edge, bend right,gray!80,pos=0.6] node[below,sloped] {\tiny$\beta/s$} (sg1);
            \draw[->] (ig) edge[output-edge, gray!80] node[right=1mm] {\tiny$\eta$} ($(ig) + (0.40, 0)$);
            \draw[->] (sg3) edge[output-edge] node[below=1mm] {\tiny$\alpha$} ($(sg3) + (0, -0.40)$);

            \node[smallstate,label=above:{\small$\iota_e$}] at (1.55, 3) (i) {};
            \draw[->] (i) edge[transition-edge, bend right] node[sloped,above] {\tiny$\alpha/p$} (sf1);
            \draw[->] (i) edge[transition-edge, bend left] node[sloped,above] {\tiny$\beta/s$} (sg1);
            \draw[->] (i) edge[output-edge] node[below=1mm] {\tiny$\gamma,\eta$} ($(i) + (0, -0.40)$);
        \end{tikzpicture}
        \caption{$e = f +_b g$, with $\alpha,\gamma \leq b$ and $\beta,\eta \leq \bneg{b}$.}
    \end{subfigure}
    \begin{subfigure}[b]{0.26\textwidth}
        \centering
        \begin{tikzpicture}
            \draw[dotted] (0,0) rectangle (1.5,2);
            \draw[dotted] (1.6,0) rectangle (3.1,2);
            \node[gray!80] at (0.2, 1.75) (Xf) {\tiny$\X_f$};
            \node[gray!80] at (2.9, 1.75) (Xg) {\tiny$\X_g$};

            \node[smallstate] at (0.5,1.0) (sf1) {};
            \node[smallstate] at (1.0,1.5) (sf3) {};
            \node[smallstate,gray!80,label=below:{\small\color{gray}$\iota_f$}] at (1, -0.5) (if) {};

            \draw[->] (if) edge[transition-edge, bend left,gray!80,pos=0.7] node[below,sloped] {\tiny$\alpha/p$} (sf1);
            \draw[->] (if) edge[output-edge, gray!80] node[left=1mm] {\tiny$\gamma,\eta$} ($(if) - (0.40, 0)$);
            \draw[->] (sf3) edge[output-edge,red] node[below=1mm] {\color{red}\tiny$\beta$} ($(sf3) + (0, -0.40)$);

            \node[smallstate] at (2.6,1.0) (sg1) {};
            \node[smallstate] at (2.1,0.5) (sg2) {};
            \node[smallstate] at (2.1,1.5) (sg3) {};
            \node[smallstate,gray!80,label=below:{\small\color{gray!80}$\iota_g$}] at (2.35, -0.5) (ig) {};

            \draw[->] (ig) edge[transition-edge, bend left,gray!80] node[above,sloped,rotate=180] {\tiny$\beta/r$} (sg2);
            \draw[->] (ig) edge[transition-edge, bend right,gray!80,pos=0.6] node[below,sloped] {\tiny$\gamma/s$} (sg1);
            \draw[->] (ig) edge[output-edge, gray!80] node[right=1mm] {\tiny$\eta$} ($(ig) + (0.40, 0)$);
            \draw[->] (sg3) edge[output-edge] node[below=1mm] {\tiny$\alpha$} ($(sg3) + (0, -0.40)$);

            \node[smallstate,label=above:{\small$\iota_e$}] at (1.55, 3) (i) {};
            \draw[->] (i) edge[transition-edge, bend right] node[sloped,above] {\tiny$\alpha/p$} (sf1);
            \draw[->] (i) edge[transition-edge, bend left] node[sloped,above] {\tiny$\gamma/s$} (sg1);
            \draw[->] (i) edge[output-edge] node[below=1mm] {\tiny$\eta$} ($(i) + (0, -0.40)$);

            \draw[->] (sf3) edge[transition-edge,blue,bend left] node[sloped,below] {\color{blue}\tiny$\beta/r$} (sg2);
        \end{tikzpicture}
        \caption{$e = f \cdot g$}
    \end{subfigure}
    \begin{subfigure}[b]{0.33\textwidth}
        \centering
        \begin{tikzpicture}
            \draw[dotted] (-0.2,0) rectangle (3.0,2);
            \node[gray!80] at (2.8, 1.75) (Xf) {\tiny$\X_f$};

            \node[smallstate] at (0.5,1.6) (sf1) {};
            \node[smallstate] at (2.5,0.8) (sf3) {};
            \node[smallstate,gray!80,label=below:{\small\color{gray}$\iota_f$}] at (0.75, -0.5) (if) {};

            \draw[->] (if) edge[transition-edge, bend left,gray!80,pos=0.5] node[above,sloped,rotate=180] {\tiny$\beta/p$} (sf1);
            \draw[->] (if) edge[output-edge, gray!80] node[right=1mm] {\tiny$\gamma$} ($(if) + (0.40, 0)$);
            \draw[->] (sf3) edge[output-edge,red] node[below=1mm] {\color{red}\tiny$\beta$} ($(sf3) + (0, -0.40)$);

            \node[smallstate,label=above:{\small$\iota_e$}] at (0.75, 3) (i) {};
            \draw[->] (i) edge[transition-edge, bend right,pos=0.4] node[sloped,above] {\tiny$\beta/p$} (sf1);
            \draw[->] (i) edge[output-edge] node[right=1mm] {\tiny$\alpha$} ($(i) + (0.40, 0)$);
            \draw[->] (sf3) edge[transition-edge,blue,bend right,pos=0.5] node[sloped,above] {\tiny$\beta/p$} (sf1);
            \draw[->] (sf1) edge[transition-edge,bend right,pos=0.5] node[sloped,below] {\tiny$\alpha/q$} (sf3);
        \end{tikzpicture}
        \vspace{-0.5mm} 
        \caption{$e = f^{(b)}$, with $\beta,\gamma \leq b$ and $\alpha \leq \bneg{b}$}
    \end{subfigure}
    \caption{Schematic depiction of the Thompson construction for guarded union, sequencing and guarded loop operators. The initial psuedostates of the automata for $f$ and $g$ are depicted in gray. Transitions in red are present in the automata for $f$ and $g$, but overridden by a uniform extension with the transitions in blue.}\label{fig:thompson-schematic}
\end{figure}

\begin{figure}
    \begin{subfigure}[b]{0.30\textwidth}
        \centering
        \begin{tikzpicture}
        \node[smallstate,label={$\iota_e$}] (ie1) {};
        \node[smallstate,label=above:{$*_p$},right=2cm of ie1] (se1) {};

        \draw[->] (ie1) edge[transition-edge] node[below] {\tiny$b/p$} (se1);
        \draw[->] (ie1) edge[output-edge] node[below=1mm] {\tiny$\bneg{b}$} ($(ie1) + (0, -0.40)$);
        \draw[->] (se1) edge[output-edge] node[below=1mm] {\tiny$\bneg{b}$} ($(se1) + (0, -0.40)$);
        \draw[->] (se1) edge[transition-edge,loop right,looseness=50] node[right] {\tiny$b/p$} (se1);
        \end{tikzpicture}
        \vspace{8mm}
        \caption{$e = \kw{while}\ b\ \kw{do}\ p$}
    \end{subfigure}
    \begin{subfigure}[b]{0.30\textwidth}
        \centering
        \begin{tikzpicture}
        \node[smallstate,label={$\iota_f$}] (if1) {};
        \node[smallstate,label=below:{$*_q$},below left=of if1] (sq1) {};
        \node[smallstate,label=below:{$*_r$},below right=of if1] (sr1) {};

        \draw[->] (if1) edge[transition-edge,bend right] node[above,sloped] {\tiny$c/q$} (sq1);
        \draw[->] (if1) edge[transition-edge,bend left] node[above,sloped] {\tiny$\bneg{c}/r$} (sr1);
        \draw[->] (sq1) edge[output-edge] node[left=1mm] {\tiny$1$} ($(sq1) + (-0.40, 0)$);
        \draw[->] (sr1) edge[output-edge] node[right=1mm] {\tiny$1$} ($(sr1) + (0.40, 0)$);
        \end{tikzpicture}
        \vspace{3mm}
        \caption{$f = \kw{if}\ c\ \kw{then}\ q\ \kw{else}\ r$}
    \end{subfigure}
    \begin{subfigure}[b]{0.33\textwidth}
        \begin{tikzpicture}
        \node[smallstate,label={$\iota_g$}] (ief2) {};
        \node[smallstate,label=above:{$*_p$},right=2cm of ief2] (sp2) {};

        \node[smallstate,label=below:{$*_q$},below=15mm of ief2] (sq2) {};
        \node[smallstate,label=below:{$*_r$},below=15mm of sp2] (sr2) {};

        \draw[->] (ief2) edge[transition-edge] node[above] {\tiny$b/p$} (sp2);
        \draw[->] (sp2) edge[transition-edge,loop right,looseness=50] node[right] {\tiny$b/p$} (sp2);
        \draw[->] (ief2) edge[transition-edge, bend right] node[left] {\tiny$\bneg{b}c/q$} (sq2);
        \draw[->] (sp2) edge[transition-edge, bend left] node[right] {\tiny$\bneg{b}\bneg{c}/r$} (sr2);
        \draw[->] (ief2) edge[transition-edge, bend right] node[above,sloped,pos=0.8] {\tiny$\bneg{b}\bneg{c}/r$} (sr2);
        \draw[->] (sp2) edge[transition-edge] node[above,sloped,pos=0.4] {\tiny$\bneg{b}c/q$} (sq2);

        \draw[->] (sq2) edge[output-edge] node[left=1mm] {\tiny$1$} ($(sq2) + (-0.40, 0)$);
        \draw[->] (sr2) edge[output-edge] node[right=1mm] {\tiny$1$} ($(sr2) + (0.40, 0)$);
        \end{tikzpicture}
        \caption{$g = e \cdot f$}
    \end{subfigure}
    \caption{Concrete construction of an automaton using the Thompson construction. First, we construct an automaton for $e$, then an automaton for $f$, and finally we combine these into an automaton for $g$. In these examples, $p,q,r$ are single action letters, not arbitrary expressions.}\label{fig:thompson-examples}
\end{figure}

\subsection{Automata to Expressions: Solving Linear Systems}
The previous construction shows that every GKAT expression can be translated to
an equivalent G-automaton. In this section we consider the reverse direction,
from G-automata to GKAT expressions. The main idea is to interpret the
coalgebra structure as a system of equations, with one
variable and equation per state,
and show that there are GKAT expressions solving the system,
modulo equivalence; this idea goes back to \citet{conway-1971} and
\citet{backhouse-1975}. Not all systems arising from G-coalgebras have
a solution, and so not all G-coalgebras can be captured by GKAT expressions.
However, we identify a subclass of G-coalgebras that can be represented as GKAT
terms.
%
By showing that this class contains the coalgebras produced by our
expressions-to-automata translation, we obtain an equivalence between GKAT
expressions and coalgebras in this class.

We start by defining when a map assigning expressions to coalgebra states
is a solution.

\begin{definition}[Solution]\label{def:solution}
Let $\X = \angl{X, \delta^\X}$ be a $G$-coalgebra.
  We say that $s\colon X \to \Exp$ is a \emph{solution} to $\X$ if for all $x \in X$ it holds that
\[
    s(x) \equiv \dsum_{\alpha \leq 1} \floor{\delta^\X(x)(\alpha)}_s
    \qquad
    \text{where}
    \quad
    \floor{0}_s \defeq 0
    \quad
    \floor{1}_s \defeq 1
    \quad
    \floor{\angl{p, x}}_s \defeq p \cdot s(x)
\]
\end{definition}
\begin{example}
Consider the Thompson automata in \Cref{fig:thompson-examples}.
\begin{enumerate}[label={(\alph*)}]
    \item
    Solving the first automaton requires, by \Cref{def:solution}, finding
    an expression $s_e(*_p)$ such that $s_e(*_p) \equiv p \cdot s_e(*_p) +_b 1$.
    By~(\nameref{ax:unroll}), we know that $s_e(*_p) = p^{(b)}$ is valid;
    in fact,~(\nameref{ax:fixpoint}) tells us that this choice of $x$ is the \emph{only} valid solution up to GKAT-equivalence.
    If we include $\iota_e$ as a state, we can choose $s_e(\iota_e) = p^{(b)}$ as well.

    \item
    The second automaton has an easy solution: both $*_q$ and $*_r$ are solved by setting $s_f(*_q) = s_f(*_r) = 1$.
    If we include $\iota_f$ as a state, we can choose $s_f(\iota_f) = q \cdot s_f(*_q) +_b r \cdot s_f(*_r) \equiv q +_b r$.

    \item
    The third automaton was constructed from the first two;
    similarly, we can construct its solution from the solutions to the first two.
    We set $s_g(*_p) = s_e(*_p) \cdot s_f(\iota_f)$, and $s_g(*_q) = s_f(*_q)$,
    and  $s_g(*_r) = s_f(*_r)$.
    If we include $\iota_g$ as a state, we can choose
    $s_g(\iota_g) = s_e(\iota_e) \cdot s_f(\iota_f)$.
\end{enumerate}
\end{example}

\noindent
Solutions are language-preserving maps from states to  expressions
in the following sense:

\begin{restatable}{proposition}{solpreserveslang}%
\label{prop:sol-preserves-lang}
If $s$ solves $\X$ and $x$ is a state, then $\den{s(x)} = \lang\X(x)$.
\end{restatable}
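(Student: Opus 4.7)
The plan is to unfold the defining equivalence for $s$ at the level of languages and then conclude by induction on string length. By soundness of the GKAT axioms, applying $\den{-}$ to $s(x) \equiv \dsum_{\alpha \leq 1} \floor{\delta^\X(x)(\alpha)}_s$ gives $\den{s(x)} = \den{\dsum_{\alpha \leq 1} \floor{\delta^\X(x)(\alpha)}_s}$. Unfolding the right-hand side through the semantics of guarded union, sequencing, and the fusion product $\diamond$, the branch at atom $\alpha$ contributes $\{\alpha\}$ when $\delta^\X(x)(\alpha) = 1$, nothing when $\delta^\X(x)(\alpha) = 0$, and $\{\alpha p y : y \in \den{s(t)}\}$ when $\delta^\X(x)(\alpha) = (p,t)$. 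Summing across atoms yields the pointwise recurrence
\[
\den{s(x)} = \{\alpha \in \At : \delta^\X(x)(\alpha) = 1\} \cup \{\alpha p y : \exists t.\ \delta^\X(x)(\alpha) = (p,t) \land y \in \den{s(t)}\}.
\]

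This is exactly the system~\eqref{def:accept} defining $\lang\X$. Because every recursive occurrence of $\den{s(\cdot)}$ on the right is prefixed by an atom-action pair, the recurrence is productive, so it admits a unique solution on $\Gs$. To make this explicit, one inducts on the length of $w \in \Gs$: in the base case $w = \alpha$, both $w \in \den{s(x)}$ and $w \in \lang\X(x)$ reduce to $\delta^\X(x)(\alpha) = 1$; in the step case $w = \alpha p y$, both reduce to the existence of a $t$ with $\delta^\X(x)(\alpha) = (p, t)$ and $y$ in the corresponding language for $t$, which agree by the inductive hypothesis. Hence $\den{s(x)} = \lang\X(x)$.

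The main obstacle is the unfolding step, which requires carefully computing how $\den{-}$ interacts with the atom-indexed generalized union $\dsum$ (whose semantics is unambiguous by the skew-commutativity and skew-associativity axioms), with the fusion of singleton atom languages, and with the denotation $\den{p} = \{\alpha p \beta : \alpha, \beta \in \At\}$ of primitive actions. Once the recurrence for $\den{s(x)}$ has been derived, matching it against~\eqref{def:accept} and running the induction are routine.
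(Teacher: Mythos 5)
Your proof is correct and follows essentially the same route as the paper's: apply soundness of $\equiv$ to the defining equivalence of a solution and then induct on the length of $w \in \Gs$, with the same base case $w=\alpha$ and step case $w=\alpha p v$. The only difference is cosmetic: you unfold the whole atom-indexed sum $\dsum_{\alpha \leq 1} \floor{\delta^\X(x)(\alpha)}_s$ into a union over atoms before inducting, whereas the paper multiplies by the leading atom of $w$ to select the single relevant branch $\alpha \cdot \floor{\delta^\X(x)(\alpha)}_s$, thereby avoiding an explicit computation of the semantics of $\dsum$.
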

\begin{proof}[Proof Sketch]
Show that $w \in \den{s(x)} \Iff w \in \lang\X(x)$ by induction on the
length of $w \in \Gs$.
\end{proof}

We would like to build solutions for $G$-coalgebras, but \citet{KT08a} showed
that this is not possible in general: there is a 3-state $G$-coalgebra that does
not correspond to any $\kw{while}$~program, but instead can only be modeled by a
program with multi-level breaks. In order to obtain an exact correspondence to
GKAT programs, we first identify a sufficient condition for $G$-coalgebras to
permit solutions, and then show that the Thompson coalgebra defined previously
meets this condition.

\begin{definition}[Well-nested Coalgebra]%
\label{def:simple}
Let $\X = \angl{X, \delta^\X}$ and $\Y = \angl{Y, \delta^\Y}$ range
over  $G$-coalgebras.
The collection of \emph{well-nested} coalgebras is inductively defined as follows:
\begin{enumerate}[label={(\roman*)},ref={\roman*}]
    \item\label{rule:s1}
    If $\X$ has no transitions, \ie, if  $\delta^\X \in X \to 2^\At$, then $\X$ is well-nested.
    \item\label{rule:s2}
    If $\X$ and $\Y$ are well-nested and $h \in G(X + Y)$,
    then $(\X + \Y)\umod{X,h}$ is well-nested.
\qedhere
\end{enumerate}
\end{definition}

\noindent
We are now ready to construct solutions to well-nested coalgebras.

\begin{restatable}[Existence of Solutions]{theorem}{solutionsexist}%
\label{thm:simple-solvable}
Any well-nested coalgebra admits a solution.
\end{restatable}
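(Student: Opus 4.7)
The plan is to induct on the inductive construction of well-nestedness (\Cref{def:simple}).

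For the base case, where $\X = \angl{X, \delta^\X}$ has no transitions, every $\delta^\X(x)(\alpha)$ lies in $2$. Define $s(x) = \dsum_{\alpha \leq 1,\, \delta^\X(x)(\alpha) = 1} \alpha$, which is a Boolean expression. Unfolding the guarded sum $\dsum_{\alpha \leq 1} \floor{\delta^\X(x)(\alpha)}_s$ and using $\floor{0}_s = 0$ and $\floor{1}_s = 1$ confirms that $s$ is a solution.

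For the inductive case, consider $\Z = (\X + \Y)\umod{X, h}$ with $\X, \Y$ well-nested and $h \in G(X + Y)$; by induction, there exist solutions $s_\X$ for $\X$ and $s_\Y$ for $\Y$. The idea is to first build a single expression $e_h$ representing the behavior of the pseudostate $h$, and then set $s(y) = s_\Y(y)$ for $y \in Y$ and $s(x) = s_\X(x) \cdot e_h$ for $x \in X$. Intuitively, the uniform continuation hands control to $h$ whenever a state in $\X$ would have accepted, and $e_h$ captures what happens next.

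To build $e_h$, partition $\At$ by whether $h$ ``loops back'' into $X$: let $A = \{\alpha \in \At : h(\alpha) = (p, x) \text{ with } x \in X\}$ and set $b_A = \dsum_{\alpha \in A} \alpha$. Writing $h(\alpha) = (p_\alpha, x_\alpha)$ for $\alpha \in A$, define $r = \dsum_{\alpha \in A} p_\alpha \cdot s_\X(x_\alpha)$ and $g = \dsum_{\alpha \leq 1,\, \alpha \notin A} \floor{h(\alpha)}_{s_\Y}$, noting that for $\alpha \notin A$, $h(\alpha)$ lies in $2 \cup \{(p, y) : y \in Y\}$, so $g$ makes sense without any reference to $e_h$. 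Unfolding $\dsum_{\alpha \leq 1} \floor{h(\alpha)}_s$ according to this partition and pulling $e_h$ out using right distributivity shows that $e_h$ should satisfy $e_h \equiv r \cdot e_h +_{b_A} g$. Since each branch of $r$ starts with an action letter, $E(r) \equiv 0$, so the fixpoint axiom~(\nameref{ax:fixpoint}) legitimately lets us choose $e_h \defeq r^{(b_A)} \cdot g$.

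It remains to verify that $s$ solves $\Z$. For $y \in Y$, the uniform continuation leaves $\delta^\Y(y)$ untouched, and $\delta^\Y(y)$ transitions only to $Y$-states, so the claim follows from the fact that $s_\Y$ solves $\Y$. For $x \in X$, we unfold $s_\X(x) \cdot e_h$ using that $s_\X$ solves $\X$ and distribute $e_h$ into each branch with~(\nameref{ax:rightdistr}): branches with $\delta^\X(x)(\alpha) \in \{0\} \cup \{(p, x') : x' \in X\}$ match their counterparts in $\dsum_\alpha \floor{\delta^\Z(x)(\alpha)}_s$ directly, while for branches with $\delta^\X(x)(\alpha) = 1$ we iterate branch selection~(\nameref{fact:branchsel}) on $e_h$ to show that, under the $\alpha$-guard of the surrounding sum, $\alpha \cdot e_h \equiv \alpha \cdot \floor{h(\alpha)}_s$, which matches the expected branch. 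The main obstacle is isolating the recursive portion of $h$ cleanly enough to invoke the fixpoint axiom; once this is done, the remaining verification is a routine but somewhat tedious exercise in axiomatic manipulation of guarded sums.
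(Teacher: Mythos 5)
Your proof is correct and takes essentially the same route as the paper's: rule induction on the well-nestedness derivation, with the base case solved by the Boolean expression of accepting atoms, and the inductive case solved by post-composing the continued component's solutions with the expression $r^{(b_A)}\cdot g$, which is exactly the paper's $\ell$ (loop over the transitions of $h$ back into the continued part, followed by the exit behavior of $h$). The only cosmetic difference is that the fixpoint axiom is not actually needed to justify the choice of $e_h$ — unrolling plus right distributivity already shows $r^{(b_A)}\cdot g$ satisfies the required recursion — but your appeal to it is harmless.
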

\begin{proof}[Proof Sketch]
Assume $\X$ is well-nested.
We proceed by rule induction on the well-nestedness derivation.
\begin{enumerate}
  \item[(\ref{rule:s1})]
  Suppose $\delta^\X \colon X \to 2^\At$. Then
  \[
      s^{\X}(x) \defeq \sum \set{ \alpha \in \At }{ \delta^\X(x)(\alpha) = 1 }
  \]
  is a solution to $\X$.

  \item[(\ref{rule:s2})]
  Let $\Y = \angl{Y, \delta^\Y}$ and $\Z = \angl{Z, \delta^\Z}$ be well-nested $G$-coalgebras, and let $h \in G(Y + Z)$ be such that $\X = (\Y + \Z)\umod{Y,h}$.
  By induction, $\Y$ and $\Z$ admit solutions $s^\Y$ and $s^\Z$ respectively;
  we need to find a solution $s^\X$ to $X = Y + Z$.
  The idea is to retain the solution that we had for states in $\Z$---whose
  behavior has not changed under uniform continuation---while modifying the
  solution to states in $\Y$ in order to account for transitions from $h$.
  To this end, we choose the following expressions:
  \begin{mathpar}
    b \defeq \sum \set{\alpha \in \At}{h(\alpha) \in \Sigma \times X}
    \and
    \ell \defeq {\Bigl( \dsum_{\alpha \leq b} \floor{h(\alpha)}_{s^\Y} \Bigr)}^{(b)} \cdot \dsum_{\alpha \leq \bneg{b}} \floor{h(\alpha)}_{s^\Z}
  \end{mathpar}
  We can then define $s$ by setting $s(x) = s^\Y(x) \cdot \ell$ for $x \in Y$, and $s(x) = s^\Z(x)$ for $x \in Z$.
  A detailed argument showing that $s$ is a solution can be found in
  the \extended{appendix}{extended version of the paper~\cite{gkat-full}}.
  \qedhere
  \end{enumerate}
 \end{proof}

\noindent
As it turns out, we can do a round-trip, showing that the solution to the (initial state of the) Thompson automaton for an expression is equivalent to the original expression.
\begin{restatable}[Correctness II]{theorem}{thompsonroundtrip}%
\label{thm:thompson-roundtrip}
Let $e \in \Exp$. Then $\X_e^\iota$ admits a solution $s$ such that $e \equiv s(\iota)$.
\end{restatable}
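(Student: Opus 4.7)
The plan is to proceed by structural induction on $e$, matching the Thompson construction (\Cref{fig:thompson}) step-for-step against the rules for well-nestedness (\Cref{def:simple}). It is convenient to strengthen the statement for the induction as follows: for every $e$, the coalgebra $\X_e$ is well-nested and admits a solution $s_e$ with $\dsum_\alpha \floor{\iota_e(\alpha)}_{s_e} \equiv e$. Given such an $s_e$, the automaton $\X_e^\iota$ is well-nested too---it arises from $\X_e$ by one more uniform continuation adjoining $\iota$ whose transitions come from $\iota_e$---and extending by $s(\iota) \defeq \dsum_\alpha \floor{\iota_e(\alpha)}_{s_e}$ yields a solution of $\X_e^\iota$ for which $s(\iota) \equiv e$ by the strengthened hypothesis.

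The base and easy inductive cases reduce to routine algebraic computation. For $e = b$, the state space is empty and $\dsum_{\alpha \leq b} 1 \equiv b$ by Boolean algebra. For $e = p$, the sole state $*$ satisfies $s_p(*) \equiv 1$, so $\dsum_\alpha p \cdot 1 \equiv p$. For guarded union $e = f +_b g$, the IH solutions combine on the disjoint union $X_f + X_g$, and the sum at $\iota$ regroups into $f +_b g$ using skew-associativity~(\nameref{ax:skewassoc}) and guardedness~(\nameref{ax:guard-if}). For sequencing $e = f \cdot g$, applying the recipe of case~(\ref{rule:s2}) in the proof of \Cref{thm:simple-solvable} with $\Y = \X_f$, $\Z = \X_g$, and $h = \iota_g$ yields $s(x) = s_f(x) \cdot g$ for $x \in X_f$ and $s(y) = s_g(y)$ for $y \in X_g$; the value at $\iota$ then unfolds to $f \cdot g$ immediately via the IH.

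The main obstacle is the loop case $e = f^{(b)}$. Here $\X_e = \X_f \umod{X_f, \iota_e}$, and the continuation pseudostate $\iota_e$ either transitions back into $\X_f$ (for $\alpha \leq b$ with $\iota_f(\alpha) \neq 1$) or halts. Instantiating \Cref{thm:simple-solvable} with $\Z$ empty yields a loop expression roughly of the form ${(b \cdot D')}^{(b)} \cdot \bneg{b}$, where $D'$ collects the productive transitions of $\iota_f$. The delicate step is to identify $b \cdot D'$ with $b \cdot D(f)$ using the IH together with \Cref{lem:d-vs-e}; then loop tightening~(\nameref{ax:tighten}), the productive loop lemma (\Cref{lem:prod}), and loop guardedness~(\nameref{fact:guard-loop-out}) reduce the whole expression to $f^{(b)}$. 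This is where strict productivity pays off: because $D(f)$ is productive, the fixpoint axiom~(\nameref{ax:fixpoint}) is applicable and the loop can be factored out. The bulk of the remaining work is bookkeeping over indexed sums of atoms, for which we rely on the freedom in $\dsum$-reassociation granted by skew-associativity~(\nameref{ax:skewassoc}) and skew-commutativity~(\nameref{ax:skewcomm}).
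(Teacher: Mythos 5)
Your overall skeleton matches the paper's: strengthen the induction to produce a solution $s_e$ of the Thompson coalgebra $\X_e$ with $\dsum_{\alpha\leq 1}\floor{\iota_e(\alpha)}_{s_e}\equiv e$, then extend to $\X_e^\iota$ by defining the value at $\iota$ to be that sum. Your base, guarded-union, and sequencing cases coincide with the paper's (the paper writes the sequencing solution $s_e(x)=s_f(x)\cdot g$ directly and checks it via \Cref{lem:solution-alt} rather than extracting it from \Cref{thm:simple-solvable}, but it is the same map, since with $h=\iota_g$ the recipe's loop guard is $0$ and the loop collapses by \nameref{fact:neutral-loop}).

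Where you genuinely diverge is the loop case, and there your sketch has a gap. The paper does not route through \Cref{thm:simple-solvable} at all: it defines $s_e(x)\defeq s_f(x)\cdot f^{(b)}$ outright and verifies the solution equations atom-wise, using \Cref{thm:ft}, \nameref{ax:tighten}, \nameref{ax:unroll}, \nameref{fact:branchsel}, and \Cref{lem:epsilon-vs-iota}; no re-derivation of $f^{(b)}$ from a generic recipe is needed. If you instead instantiate the recipe with $\Y=\X_f$, $\Z$ empty, and $h=\iota_e$, the resulting expression is \emph{not} of the form ${(b\cdot D')}^{(b)}\cdot\bneg{b}$: its guard is $b'=\sum\set{\alpha\in\At}{\alpha\leq b,\ \iota_f(\alpha)\in\Sigma\times X_f}$, the sub-test of $b$ on which $\iota_f$ is productive, so what you get is ${\bigl(\dsum_{\alpha\leq b'}\floor{\iota_f(\alpha)}_{s_f}\bigr)}^{(b')}\cdot\bneg{b}$. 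Identifying the body with $b'\cdot D(f)$ (atom-wise IH plus \Cref{lem:d-vs-e} and \Cref{lem:epsilon-vs-iota}) and applying \nameref{fact:guard-loop-in} and \Cref{lem:prod} only brings you to $f^{(b')}\cdot\bneg{b}$; tightening, guardedness, and the productive-loop lemma rewrite loop \emph{bodies}, never the guard, so they cannot turn $b'$ into $b$. The missing step is exactly this guard widening: one must first show, from the IH and \Cref{lem:epsilon-vs-iota}, that on every atom of $b\bneg{b'}$ the body (equivalently $D(f)$) is provably $0$ (if $\iota_f(\alpha)=1$ then $\alpha\leq E(f)$, so $\alpha\cdot D(f)\equiv 0$; if $\iota_f(\alpha)=0$ then $\alpha\cdot f\equiv 0$), and then argue, e.g.\ by showing $g\defeq{D(f)}^{(b')}\bneg{b}$ satisfies $g\equiv D(f)\cdot g+_b\bneg{b}$ and invoking \nameref{ax:fixpoint} (legitimate since $E(D(f))\equiv 0$) together with \Cref{lem:prod} and \nameref{fact:guard-loop-out}, that $g\equiv f^{(b)}$. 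You do mention \nameref{ax:fixpoint}, but only to "factor out the loop," not to bridge $b'$ and $b$, which is where the real difficulty sits. With that argument supplied your route closes; as written, the loop case does not go through, and the paper's direct choice of $s_e(x)=s_f(x)\cdot f^{(b)}$ is how it sidesteps the issue.
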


\noindent
Finally, we show that the automata construction of the previous section gives well-nested automata.

\begin{theorem}[well-nestedness of Thompson construction]%
\label{thm:thompson-simple}
$\X_e$ and $\X_e^\iota$ are well-nested for all expressions $e$.
\end{theorem}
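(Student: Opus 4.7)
The plan is to proceed by structural induction on $e$, first establishing that $\X_e$ is well-nested, and then deriving well-nestedness of $\X_e^\iota$ as a consequence. In each case, the task reduces to presenting the Thompson construction of \Cref{fig:thompson} as an instance of the two rules of \Cref{def:simple}.

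For the base cases $e = b$ and $e = p$, the coalgebra $\X_e$ has no action-transitions: $X_b$ is empty, and $\delta_p$ sends the unique state $*$ to $\1 \in 2^\At$. Thus $\delta_e$ factors through $X_e \to 2^\At$, so rule~(\ref{rule:s1}) applies directly. For $e = f +_b g$, the induction hypothesis provides that $\X_f$ and $\X_g$ are well-nested, and $\X_e$ is precisely their coproduct $\X_f + \X_g$; I would apply rule~(\ref{rule:s2}) with $h = \1$, relying on the observation that uniform continuation by $\1$ is the identity (it only ever replaces an outcome of $1$ with $\1(\alpha) = 1$). For $e = f \cdot g$, the construction instantiates rule~(\ref{rule:s2}) verbatim with $h = \iota_g$, viewed as an element of $G(X_f + X_g)$ via the evident inclusion $X_g \hookrightarrow X_f + X_g$. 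For $e = f^{(b)}$, I would take $\Y$ to be the empty $G$-coalgebra (well-nested by~(\ref{rule:s1})) and $h = \iota_e \in G(X_f)$; then $(\X_f + \Y)\umod{X_f, h}$ collapses to $\X_f\umod{X_f, \iota_e} = \X_e$, which is well-nested by rule~(\ref{rule:s2}).

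To handle $\X_e^\iota$, I would introduce a one-state coalgebra $\X'$ whose sole state $\iota$ satisfies $\delta'(\iota) = \1$; since $\X'$ has no action-transitions, it is well-nested by~(\ref{rule:s1}). Applying rule~(\ref{rule:s2}) to $\X'$ and $\X_e$ with $h = \iota_e \in G(\{\iota\} + X_e)$ yields $(\X' + \X_e)\umod{\{\iota\}, \iota_e}$. Unfolding the definition of uniform continuation, the state $\iota$ now fires according to $\iota_e$ (since $\delta'(\iota)(\alpha) = 1$ for every $\alpha$), while states in $X_e$ retain their transitions. This coincides exactly with $\X_e^\iota$, so it is well-nested.

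The only real subtlety I expect is the $+_b$ case, in which the Thompson construction performs a plain coproduct, whereas rule~(\ref{rule:s2}) formally requires a uniform continuation on top of the coproduct. The trick of taking $h = \1$ as a ``trivial'' continuation is conceptually easy, but it must be pointed out explicitly so the derivation remains within the letter of \Cref{def:simple}. Everything else amounts to matching the shape of each clause in \Cref{fig:thompson} against the two well-nestedness rules.
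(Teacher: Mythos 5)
Your proof is correct and follows essentially the same route as the paper: induction on $e$, observing that the base cases fall under rule~(\ref{rule:s1}) and that every clause of \Cref{fig:thompson} (and the passage to $\X_e^\iota$) is a uniform continuation of a coproduct, hence covered by rule~(\ref{rule:s2}). The only differences are presentational — you spell out the trivial continuation $h=\1$ for the guarded-union case and use the empty coalgebra as the second summand in the loop case, where the paper's sketch leaves these instantiations implicit.
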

\begin{proof}
We proceed by induction on $e$.
In the base, let $\Z=\angl{\emptyset, \emptyset}$ and $\I=\angl{\sset{*}, * \mapsto \bf{1}}$ denote the coalgebras with no states and with a single all-accepting state, respectively. Note that $\Z$ and $\I$ are well-nested, and that for $b \in \Bexp$ and $p \in \Sigma$ we have $\X_b = \Z$ and $\X_p = \I$.

All of the operations used to build $\X_e$, as detailed in \Cref{fig:thompson},
can be phrased in terms of an appropriate uniform continuation of a coproduct;
for instance, when $e = f^{(b)}$ we have that $\X_e = (\X_f + \I)\umod{X_f,\iota_e}$.
Consequently, the Thompson automaton $\X_e$ is well-nested by construction.
Finally, observe that $\X_e^\iota = (\I + \X_e)\umod{\sset{*},\iota_e}$; hence, $\X_e^\iota$ is well-nested as well.
\end{proof}

\noindent
\Cref{thm:thompson-correct,thm:simple-solvable,thm:thompson-simple} now give us the desired Kleene theorem.

\begin{corollary}[Kleene Theorem]
Let $L\subseteq \GS$. The following are equivalent:
\begin{enumerate}
\item $L = \sem{e}$ for a GKAT expression $e$.
\item $L = \lang\X(\iota)$ for a well-nested, finite-state $G$-automaton $\X$ with initial state $\iota$.
\end{enumerate}
\end{corollary}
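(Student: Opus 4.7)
The plan is to prove this as a straightforward assembly of the three preceding results plus the solution-preserves-language proposition; no new technical machinery is required.

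For the direction (1) $\Rightarrow$ (2), I would start with a GKAT expression $e$ such that $L = \sem{e}$, and take $\X = \X_e^\iota$ with initial state $\iota$. Three facts then need to be checked: that $\X_e^\iota$ is finite-state, well-nested, and accepts $L$. Finiteness is immediate from \Cref{prop:thompson-efficient}, which gives an $\OO(|e|)$ bound on the number of states. Well-nestedness is exactly \Cref{thm:thompson-simple}. And $\lang{\X_e^\iota}(\iota) = \sem{e} = L$ is precisely \Cref{thm:thompson-correct}.

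For the direction (2) $\Rightarrow$ (1), I would start with a well-nested finite-state $G$-automaton $\X = \angl{X, \delta^\X, \iota}$ with $L = \lang\X(\iota)$. By \Cref{thm:simple-solvable}, $\X$ admits a solution $s \colon X \to \Exp$. Taking $e \defeq s(\iota)$, \Cref{prop:sol-preserves-lang} yields $\sem{e} = \sem{s(\iota)} = \lang\X(\iota) = L$, as required.

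Since each implication follows by plugging together results already established, there is no serious obstacle here; the substantive work has been done in the construction of the Thompson automaton, the verification that it is well-nested, and the solvability of well-nested coalgebras. The corollary's role is simply to package these as the promised correspondence between GKAT expressions and the language-theoretic class of well-nested finite $G$-automata.
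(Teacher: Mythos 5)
Your proposal is correct and follows essentially the same route as the paper, which obtains the corollary by combining \Cref{thm:thompson-correct}, \Cref{thm:simple-solvable}, and \Cref{thm:thompson-simple}; your additional appeals to \Cref{prop:thompson-efficient} for finiteness and to \Cref{prop:sol-preserves-lang} for the converse direction simply make explicit what the paper leaves implicit.
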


\section{Decision procedure}%
\label{sec:decision}

We saw in the last section that GKAT expressions can be efficiently
converted to equivalent automata with a linear number of states.
Equivalence of automata can be established algorithmically, supporting a
decision procedure for GKAT that is significantly more efficient than decision
procedures for KAT\@. In this section, we describe our algorithm.

First, we define bisimilarity of automata states in the usual way~\cite{KT08a}.
\begin{definition}[Bisimilarity]%
\label{def:bisimilarity}
Let $\X$ and $\Y$ be $G$-coalgebras. A \emph{bisimulation} between $\X$ and $\Y$
is a binary relation ${R}\subs X\times Y$ such that if
$x \mathrel{R} y$, then the following implications hold:
\begin{enumerate}[(i)]
\item if $\delta^\X(x)(\alpha)\in 2$, then $\delta^\Y(y)(\alpha) = \delta^\X(x)(\alpha)$; and
\item if $\delta^\X(x)(\alpha) = (p,x')$, then $\delta^\Y(y)(\alpha) = (p,y')$
  and $x' \mathrel{R} y'$ for some $y'$.
\end{enumerate}
States $x$ and $y$ are called \emph{bisimilar}, denoted
$x\sim y$, if there exists a bisimulation relating $x$ and $y$.
\end{definition}

\noindent
As usual, we would like to reduce automata equivalence to bisimilarity.
It is easy to see that bisimilar states recognize the same language.
\begin{lemma}\label{lem:bisim-equiv}
If $\X$ and $\Y$ are G-coalgebras with bisimilar states $x \sim y$, then
$\lang\X(x) = \lang\Y(y)$.
\end{lemma}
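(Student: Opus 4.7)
The plan is a straightforward induction on the length of the guarded string accepted, exploiting the fact that $\lang\X(-)$ is defined inductively (as the least fixpoint of the system in~\eqref{def:accept}), so that acceptance of $\alpha p x$ is witnessed by acceptance of the strictly shorter $x$. Let $R \subseteq X \times Y$ be a bisimulation witnessing $x \sim y$. I will prove the stronger statement: for every $w \in \Gs$ and every pair $x' \mathrel{R} y'$, one has $w \in \lang\X(x')$ if and only if $w \in \lang\Y(y')$. Specializing to $x \mathrel{R} y$ gives the lemma.

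I would proceed by induction on $|w|$. In the base case $w = \alpha \in \At$, by definition $w \in \lang\X(x')$ iff $\delta^\X(x')(\alpha) = 1$, which falls squarely under clause~(i) of \Cref{def:bisimilarity} and is therefore equivalent to $\delta^\Y(y')(\alpha) = 1$, i.e.\ $w \in \lang\Y(y')$. For the inductive step, take $w = \alpha p u$ with $u \in \Gs$ shorter than $w$. If $w \in \lang\X(x')$, then $\delta^\X(x')(\alpha) = (p, x'')$ and $u \in \lang\X(x'')$; clause~(ii) yields some $y''$ with $\delta^\Y(y')(\alpha) = (p, y'')$ and $x'' \mathrel{R} y''$, and the induction hypothesis delivers $u \in \lang\Y(y'')$, hence $w \in \lang\Y(y')$.

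The only slightly delicate point, and what I would flag as the ``main obstacle'', is the converse direction $w \in \lang\Y(y') \Rightarrow w \in \lang\X(x')$: the bisimulation conditions as stated are phrased only in one direction (from $\X$ to $\Y$). However, because $G$-coalgebras are deterministic, the converse direction comes for free. Concretely, if $\delta^\Y(y')(\alpha) = (p, y'')$, then $\delta^\X(x')(\alpha) \in 2$ is ruled out by clause~(i) (it would force $\delta^\Y(y')(\alpha) \in 2$); hence $\delta^\X(x')(\alpha) = (q, x'')$ for some $q, x''$, and applying clause~(ii) to this transition together with functionality of $\delta^\Y(y')$ forces $q = p$ and pairs $x''$ with $y''$ via $R$. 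With this observation the inductive step runs symmetrically and the proof is complete.
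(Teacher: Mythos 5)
Your proof is correct and follows essentially the same route as the paper: an induction on the length of $w$, handling the base case via clause~(i) of \Cref{def:bisimilarity} and the inductive step via clause~(ii). The extra care you take with the converse direction (using determinism of the transition map to recover the implication from $\Y$ back to $\X$) simply makes explicit what the paper's terse chain of biconditionals leaves implicit, so there is no substantive difference.
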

\begin{proof}
We verify that $w \in \lang\X(x) \Iff w \in \lang\Y(y)$ by
induction on the length of $w \in \Gs$:
\begin{itemize}[left=1ex..1.4\parindent]
  \item For $\alpha \in \Gs$, we have $
    \alpha \in \lang\X(x)
    \Iff \delta^\X(x)(\alpha) = 1
    \Iff \delta^\Y(y)(\alpha) = 1
    \Iff \alpha \in \lang\Y(y)$.
  \item For $\alpha p w \in \Gs$, we use bisimilarity and the
    induction hypothesis to derive
    \begin{align*}
      \alpha p w \in \lang\X(x)
      &\iff \exists x'.\, \delta^\X(x)(\alpha) = (p, x') \land w \in \lang\X(x')\\
      &\iff \exists y'.\, \delta^\Y(y)(\alpha) = (p, y') \land w \in \lang\Y(y')
      \iff \alpha p w \in \lang\Y(y).
    \qedhere
    \end{align*}
\end{itemize}
\end{proof}

\noindent
The converse direction, however, does not hold for $G$-coalgebras in general.
To see the problem, consider the following automaton, where $\alpha \in \At$
is an atom and $p \in \Sigma$ is an action:
\[
\begin{tikzpicture}[node distance=2cm, baseline=-1ex]
  \node (s1) [smallstate,label=above:{$s_1$}] {};
  \node (s2) [smallstate,label=above:{$s_2$},right of=s1] {};

  \draw (s1) edge[transition-edge] node[above] {\small$\alpha/p$} (s2);
\end{tikzpicture}
\]
Both states recognize the empty language, that is
\ie, $\lang{}(s_1) = \lang{}(s_2) = \emptyset$; but $s_2$ rejects
immediately, whereas $s_1$ may first take a transition.
As a result, $s_1$ and $s_2$ are not bisimilar.
Intuitively, the language accepted by a state does not distinguish
between early and late rejection, whereas bisimilarity does.
We solve this by disallowing late rejection, \ie, transitions that can
never lead to an accepting state; we call coalgebras that respect this
restriction \emph{normal}.

\subsection{Normal Coalgebras}%
\label{sec:normal}

We classify states and coalgebras as follows.
\begin{definition}[Live, Dead, Normal]
Let $\X = \angl{X,\delta^\X}$ denote a $G$-coalgebra.
A state $s \in X$ is \emph{accepting} if $\delta^\X(s)(\alpha)=1$
for some $\alpha \in \At$. A state is \emph{live} if it can transition
to  an accepting state one or more steps, or \emph{dead} otherwise.
A coalgebra is \emph{normal} if it has no transitions to dead states.
\end{definition}
\begin{remark}
Note that, equivalently, a state is live iff $\lang\X(s) \neq \emptyset$ and
dead iff $\lang\X(s) = \emptyset$.
Dead states can exist in a normal coalgebra, but they must immediately reject
all $\alpha\in\At$, since any successor of a dead state would also be dead.
\end{remark}

\begin{example}
  Consider the following automaton.


  \[
    \begin{tikzpicture}[node distance=1.5cm]
      \node (i)   [smallstate,label=above:{$\iota$}] {};
      \node (s1)  [smallstate,right of=i,label=above:{$s_1$}] {};
      \node (s2)  [smallstate,right of=s1,label=above:{$s_2$}] {};
      \node (s3)  [smallstate,left of=i,label=above:{$s_3$}] {};

      \draw
        (i)  edge[transition-edge] node[above] {\small$\alpha/p$} (s1)
        (i)  edge[transition-edge,above] node[above] {\small$\beta/p$} (s3)
        (s1) edge[transition-edge] node[above] {\small$\alpha/q$} (s2)
        (s2) edge[transition-edge,loop below, looseness=50] node[below] {\small$\alpha/q$} (s2)
        (s3) edge[transition-edge,loop below, looseness=50] node[below] {\small$\alpha/q$} (s3)
        (s3) edge[output-edge] node[left=1.2mm] {\small$\beta$} ($(s3) + (-0.45, 0)$);
    \end{tikzpicture}
  \]
  The state $s_3$ is accepting.
  The states $\iota$ and $s_3$ are live, since they can reach an accepting state.
  The states $s_1$ and $s_2$ are dead, since they can only reach non-accepting
  states.  The automaton is not normal, since it contains
  the transitions $\iota \trans{\alpha/p} s_1$,
  $s_1\trans{\alpha/q} s_2$, and $s_2\trans{\alpha/q} s_2$ to
  dead states $s_1$ and $s_2$.
  We can \emph{normalize} the automaton by removing these transitions:


  \[
    \begin{tikzpicture}[node distance=1.5cm]
      \node (i)   [smallstate,label=above:{$\iota$}] {};
      \node (s1)  [smallstate,right of=i,label=above:{$s_1$}] {};
      \node (s2)  [smallstate,right of=s1,label=above:{$s_2$}] {};
      \node (s3)  [smallstate,left of=i,label=above:{$s_3$}] {};

      \draw
        (i)  edge[transition-edge,above] node[above] {\small$\beta/p$} (s3)
        (s3) edge[transition-edge,loop below, looseness=50] node[below] {\small$\alpha/q$} (s3)
        (s3) edge[output-edge] node[left=1.2mm] {\tiny$\beta$} ($(s3) + (-0.45, 0)$);
    \end{tikzpicture}
  \]
The resulting automaton is normal: the dead states $s_1$ and $s_2$ reject
all $\alpha \in \At$ immediately.
\qed%
\end{example}

The example shows how $G$-coalgebra can be normalized.
Formally, let $\X = \angl{X, \delta}$ denote a coalgebra
with dead states $D \subseteq X$. We define the normalized
coalgebra $\hat{\X} \defeq \angl{X, \hat{\delta}}$ as follows: \[
  \hat{\delta}(s)(\alpha) \defeq \begin{cases}
    0                 &\text{if } \delta(s)(\alpha) \in \Sigma \times D\\
    \delta(s)(\alpha) &\text{otherwise.}
  \end{cases}
\]

\begin{lemma}[Correctness of normalization]%
\label{lem:normalization-correct}
Let $\X$ be a $G$-coalgebra. Then the following holds:
\begin{enumerate}[(i)]
\item $\X$ and $\hat{\X}$ have the same solutions: that is, $s: X \to \Exp$ solves $\X$ if and only if $s$ solves $\hat{\X}$; and
\item $\X$ and $\hat{\X}$ accept the same languages: that is, $\lang\X = \lang{\hat{\X}}$; and
\item $\hat{\X}$ is normal.
\end{enumerate}
\end{lemma}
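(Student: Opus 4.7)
The plan is to prove the three parts in the order (iii), (ii), (i), so that later parts may rely on earlier ones. The key conceptual insight needed throughout is that the set of dead states is preserved under normalization, which follows because $\hat\delta$ only ever replaces transitions by rejection: any accepting path in $\X$ must pass through live states, so it is preserved verbatim in $\hat\X$, and conversely any accepting path in $\hat\X$ is also a path in $\X$. Thus the dead states of $\hat\X$ coincide with the set $D$ of dead states of $\X$.

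For (iii), once this coincidence is established, normality is immediate from the definition: $\hat\delta(s)(\alpha) \in \Sigma \times X$ forces $\delta(s)(\alpha) \notin \Sigma \times D$, so $\hat\X$ has no transitions into $D$, i.e. no transitions to dead states. For (ii), I would proceed by induction on the length of $w \in \Gs$. The base case $w = \alpha$ is immediate because $\hat\delta(s)(\alpha) = 1 \iff \delta(s)(\alpha) = 1$ (the normalization only replaces $\Sigma \times D$ with $0$). For $w = \alpha p w'$ there are two cases: if $\delta(s)(\alpha) = (p, t)$ with $t$ live, then $\hat\delta(s)(\alpha) = \delta(s)(\alpha)$ and the induction hypothesis gives the required equivalence; if $t$ is dead, then $\lang\X(t) = \emptyset$ forces $\alpha p w' \notin \lang\X(s)$, while $\hat\delta(s)(\alpha) = 0$ forces $\alpha p w' \notin \lang{\hat\X}(s)$, so both sides are false.

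For (i), the central observation is that on any dead state $t$, a solution $s$ must satisfy $s(t) \equiv 0$. Indeed, if $s$ solves $\X$ then \Cref{prop:sol-preserves-lang} gives $\sem{s(t)} = \lang\X(t) = \emptyset$, and \Cref{cor:partial-completeness} then yields $s(t) \equiv 0$; the symmetric argument works when $s$ solves $\hat\X$, now invoking part (ii) to conclude $\sem{s(t)} = \lang{\hat\X}(t) = \lang\X(t) = \emptyset$. Consequently, for every atom $\alpha$ on which $\delta(x)(\alpha) = (p, t) \in \Sigma \times D$, we have
\[
  \floor{\delta(x)(\alpha)}_s \;=\; p \cdot s(t) \;\equiv\; p \cdot 0 \;\equiv\; 0 \;=\; \floor{\hat\delta(x)(\alpha)}_s
\]
by \nameref{ax:absright}, while on all other atoms the two summands agree syntactically. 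Since $\equiv$ is a congruence and $\dsum$ is well-defined modulo $\equiv$, the right-hand sides of the solution equations for $\X$ and $\hat\X$ are equivalent, so $s$ satisfies one system precisely when it satisfies the other.

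The main obstacle is the direction of (i) requiring that every solution must assign $0$ to dead states; this is not apparent from the definition of a solution in isolation, and relies on invoking both \Cref{prop:sol-preserves-lang} and the Partial Completeness corollary (\Cref{cor:partial-completeness}), which themselves depend on nontrivial axiomatic reasoning. Once that fact is in hand, the proof reduces to the definitional unfolding sketched above.
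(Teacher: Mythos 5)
Your proof is correct and follows essentially the same route as the paper: the crux in both is that any solution must send dead states to $0$ (via \Cref{prop:sol-preserves-lang} and \Cref{cor:partial-completeness}), part (ii) is the same induction on string length using that dead states accept nothing, and (iii) reduces to the coincidence of dead states. The only cosmetic differences are your ordering of the parts and that you replace the paper's appeal to the auxiliary characterization $\alpha \cdot s(x) \equiv \alpha \cdot \floor{\delta^\X(x)(\alpha)}_s$ by a direct summand-wise replacement in the $\dsum$ using congruence of $\equiv$, which is equally valid.
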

\begin{proof}
For the first claim, suppose $s$ solves $\X$.
It suffices (\extended{by \Cref{lem:solution-alt}}
  {see the extended version~\cite{gkat-full}})
to show that for $x \in X$ and $\alpha \in \At$ we have $\alpha \cdot s(x) \equiv \alpha \cdot \floor{\delta^{\hat{\X}}(x)(\alpha)}_s$.
We have two cases.
\begin{itemize}[left=1ex..1.4\parindent]
    \item
    If $\delta^\X(x)(\alpha) = (p, x')$ with $x'$ dead, then by \Cref{prop:sol-preserves-lang} we know that $\sem{s(x')} = \ell^\X(x') = \emptyset$.
    By \Cref{cor:partial-completeness}, it follows that $s(x') \equiv 0$.
    Recalling that $\delta^{\hat{\X}}(x)(\alpha) = 0$ by construction,
    \[
        \alpha \cdot s(x)
            \equiv \alpha \cdot \floor{\delta^\X(x)(\alpha)}_s
            \equiv \alpha \cdot p \cdot s(x')
            \equiv \alpha \cdot 0
            \equiv \alpha \cdot \floor{\delta^{\hat{\X}}(x)(\alpha)}_s
    \]

    \item
    Otherwise, we know that $\delta^{\hat{\X}}(x)(\alpha) = \delta^\X(x)(\alpha)$, and thus
    \[
        \alpha \cdot s(x) \equiv \alpha \cdot \floor{\delta^\X(x)(\alpha)}_s \equiv \alpha \cdot \floor{\delta^{\hat{\X}}(x)(\alpha)}_s
    \]
\end{itemize}
The other direction of the first claim can be shown analogously.

For the second claim, we can establish $x \in \lang\X(s) \Iff x \in \lang{\hat{\X}}(s)$ for all states $s$ by a straightforward induction on the length of $x \in \Gs$, using that dead states accept the empty language.

For the third claim, we note that the dead states of $\X$ and $\hat{\X}$ coincide
by claim two; thus $\hat{\X}$ has no transition to dead states by construction.
\end{proof}

%

\subsection{Bisimilarity for Normal Coalgebras}%
\label{sec:bisimNormal}

We would like to show that, for normal coalgebras, states are bisimilar
if and only if they accept the same language. This will allow us to reduce
language-equivalence to bisimilarity, which is easy to establish
algorithmically. We need to take a slight detour.

Recall the determinacy property satisfied by GKAT languages
(\Cref{def:action-det}): a language $L \subseteq \Gs$ is deterministic
if, whenever strings $x,y \in L$ agree on the first $n$ atoms, they also agree on
the first $n$ actions (or absence thereof). Now, let $\L \subseteq 2^\Gs$
denote the set of deterministic languages. $\L$ carries a coalgebra
structure $\angl{\L, \delta^\L}$ whose transition map $\delta^\L$ is
the semantic Brzozowski derivative:
\begin{align*}
\delta^\LL(L)(\alpha) \defeq \begin{cases}
(p,\set{x \in \Gs}{\alpha p x \in L}) & \text{if $\set{x \in \Gs}{\alpha p x \in L}\ne\emptyset$}\\
1 & \text{if $\alpha\in L$}\\
0 & \text{otherwise.}
\end{cases}
\end{align*}
Note that the map is well-defined by determinacy: precisely one of the three cases holds.

Next, we define \emph{structure-preserving maps} between $G$-coalgebras
in the usual way:
\begin{definition}[Homomorphism]
A homomorphism between G-coalgebras $\X$ and $\Y$ is a map $h\colon X \to Y$
from states of $\X$ to states of $\Y$ that respects the transition
structures in the following sense: \[
  \delta^\Y(h(x)) = (G h)(\delta^\X(x)).
\]
More concretely, for all $\alpha \in \At$, $p \in \Sigma$, and $x,x' \in X$,
\begin{enumerate}[(i)]
\item if $\delta^\X(x)(\alpha)\in 2$, then $\delta^\Y(h(x))(\alpha) = \delta^X(x)(\alpha)$; and
\item if $\delta^\X(x)(\alpha) = (p,x')$, then $\delta^\Y(h(x))(\alpha) = (p,h(x'))$.
\qed%
\end{enumerate}
\end{definition}

\noindent
We can now show that the acceptance map $\lang\X \colon X \to 2^\Gs$ is
structure-preserving in the above sense. Moreover, it is the \emph{only}
structure-preserving map from states to deterministic languages:

\begin{restatable}{theorem}{finalfornormal}%
\label{thm:final-for-normal}%
If $\X$ is normal, then $\lang\X \colon X \to 2^{\Gs}$
is the unique homomorphism $\X \to \L$.
\end{restatable}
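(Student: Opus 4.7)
My plan is to split the argument in two: first show that $\lang\X$ is a well-defined coalgebra homomorphism $\X \to \L$ (the step that requires normality), and then show that any homomorphism $\X \to \L$ must coincide with $\lang\X$ by induction on guarded strings.

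For the homomorphism part, I would first verify that $\lang\X(x) \in \L$ for every $x \in X$: this is immediate because $\delta^\X$ is a function, so $\lang\X(x)$ cannot contain two guarded strings that agree on their first $k$ atoms but diverge on the $(k{+}1)$th action, nor both a prefix $\alpha$ and an extension $\alpha p w$. Then I would unfold the equation $\delta^\L(\lang\X(x)) = (G\lang\X)(\delta^\X(x))$ pointwise on $\alpha \in \At$, by cases on $\delta^\X(x)(\alpha)$. The cases $\delta^\X(x)(\alpha) \in \sset{0,1}$ follow directly from the definitions of $\lang\X$ and $\delta^\L$. The interesting case is $\delta^\X(x)(\alpha) = (p, x')$: here $\set{w \in \Gs}{\alpha p w \in \lang\X(x)} = \lang\X(x')$, and I need this set to be non-empty so that $\delta^\L$ selects its transition branch. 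This is precisely where normality is used: since $\X$ has no transitions to dead states, $x'$ is live, so $\lang\X(x') \neq \emptyset$, giving $\delta^\L(\lang\X(x))(\alpha) = (p, \lang\X(x'))$ as required.

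For uniqueness, let $h \colon \X \to \L$ be any homomorphism. I would prove $w \in h(x) \iff w \in \lang\X(x)$ for all $x \in X$ and $w \in \Gs$ by induction on the length of $w$. The base case $w = \alpha$ uses the homomorphism equation restricted to output values: $\alpha \in h(x) \iff \delta^\L(h(x))(\alpha) = 1 \iff \delta^\X(x)(\alpha) = 1 \iff \alpha \in \lang\X(x)$. For $w = \alpha p w'$, the homomorphism equation gives $\delta^\L(h(x))(\alpha) = (p',h(x'))$ whenever $\delta^\X(x)(\alpha) = (p',x')$, so $\alpha p w' \in h(x)$ forces $p=p'$ together with $w' \in h(x')$; the induction hypothesis reduces this to $w' \in \lang\X(x')$, equivalently $\alpha p w' \in \lang\X(x)$. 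If instead $\delta^\X(x)(\alpha) \in \sset{0,1}$, both $h(x)$ and $\lang\X(x)$ exclude every string beginning with $\alpha p$.

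I expect the main subtlety to be the role of normality in the transition branch of the homomorphism square: without it, $\lang\X(x')$ could be empty, and the side condition in the definition of $\delta^\L$ would collapse $\delta^\L(\lang\X(x))(\alpha)$ to $0$ rather than $(p, \emptyset)$, breaking the square. This also explains why a non-normal coalgebra admits no homomorphism into $\L$ whatsoever, and thus why normality is indispensable: it ensures that the unique candidate $\lang\X$ actually lifts to $\L$.
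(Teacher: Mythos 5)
Your proposal is correct and follows essentially the same route as the paper: establish that each $\lang\X(x)$ satisfies the determinacy property, verify the homomorphism square pointwise by cases on $\delta^\X(x)(\alpha)$ with normality invoked exactly in the transition case to guarantee $\lang\X(x')\neq\emptyset$, and prove uniqueness by induction on the length of guarded strings using only the homomorphism equations. Your closing observation that non-normal coalgebras admit no homomorphism into $\L$ at all is also accurate, though not needed for the statement.
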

\noindent
Since the identity function is trivially a homomorphism,
\Cref{thm:final-for-normal} implies that $\lang\L$ is the identity.
That is, in the G-coalgebra $\L$, the state $L \in \L$ accepts the language $L$!
This proves that every deterministic language is recognized by a $G$-coalgebra,
possibly with an infinite number of states.

\Cref{thm:final-for-normal} says that $\L$ is \emph{final}
for normal $G$-coalgebras. The desired connection between bisimilarity
and language-equivalence is then a standard corollary~\cite{rutten-2000}:

\begin{corollary}%
\label{lem:equiv-bisim}
Let $\X$ and $\Y$ be normal with states $s$ and $t$.
Then $s \sim t$ if and only if $\lang{\X}(s) = \lang{\Y}(t)$.
\end{corollary}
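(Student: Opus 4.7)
The plan is to split the biconditional into its two directions. The forward direction, $s \sim t \Rightarrow \lang\X(s) = \lang\Y(t)$, is essentially already done: it is precisely Lemma \ref{lem:bisim-equiv}, which does not even require normality. So all the work lies in the converse, and the key tool is the finality result of Theorem \ref{thm:final-for-normal}.

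For the converse, I would define the candidate bisimulation
\[
  R \defeq \set{(x,y) \in X \times Y}{\lang\X(x) = \lang\Y(y)},
\]
and verify that $R$ is a bisimulation between $\X$ and $\Y$. Since $(s,t) \in R$ by assumption, this gives $s \sim t$. To prove $R$ is a bisimulation, fix $(x,y) \in R$ and let $L = \lang\X(x) = \lang\Y(y)$. By Theorem \ref{thm:final-for-normal}, both $\lang\X$ and $\lang\Y$ are coalgebra homomorphisms into $\L$, so for each $\alpha \in \At$ the homomorphism conditions give
\[
  (G\lang\X)(\delta^\X(x))(\alpha) \;=\; \delta^\L(L)(\alpha) \;=\; (G\lang\Y)(\delta^\Y(y))(\alpha).
\]
If this common value lies in $2$, then unfolding the action of $G$ on homomorphisms shows $\delta^\X(x)(\alpha) = \delta^\Y(y)(\alpha) \in 2$, which is exactly bisimulation clause (i). If instead $\delta^\L(L)(\alpha) = (p, L')$, then by the definition of $G$ on arrows there exist $x' \in X$ and $y' \in Y$ with $\delta^\X(x)(\alpha) = (p, x')$, $\delta^\Y(y)(\alpha) = (p, y')$, and $\lang\X(x') = L' = \lang\Y(y')$; so $(x', y') \in R$, verifying clause (ii).

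There isn't really a hard step here: this is the standard coalgebraic fact that the kernel of a cospan of homomorphisms into a final coalgebra is a bisimulation, and it is entirely mechanical once Theorem \ref{thm:final-for-normal} is available. The one subtlety worth flagging is that normality of both $\X$ and $\Y$ is essential precisely because Theorem \ref{thm:final-for-normal} requires it; without normality, the late-rejection counterexample discussed just before Section \ref{sec:normal} shows that language-equivalent states need not be bisimilar, so the converse direction would outright fail. Thus the proof is essentially a one-line appeal to finality followed by unpacking the definition of a $G$-coalgebra homomorphism.
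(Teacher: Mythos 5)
Your proposal is correct and matches the paper's proof essentially verbatim: the forward direction is delegated to \Cref{lem:bisim-equiv}, and the converse defines the same relation $R = \set{(x,y)}{\lang\X(x) = \lang\Y(y)}$ and shows it is a bisimulation by appealing to \Cref{thm:final-for-normal} to treat $\lang\X$ and $\lang\Y$ as homomorphisms into $\L$, with the same two-case analysis on $\delta^\L$. No gaps; your remark on where normality enters is also the right one.
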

\begin{proof}
The implication from left to right is \Cref{lem:bisim-equiv}.
For the other implication, we observe that the relation
$R \defeq \set{(s,t) \in X \times Y}{\lang\X(s) = \lang\Y(t)}$ is a
bisimulation, using that $\lang\X$ and $\lang\Y$ are homomorphisms by \Cref{thm:final-for-normal}:
\begin{itemize}[left=1ex..1.4\parindent]
\item Suppose $s \mathrel{R} t$ and $\delta^\X(s)(\alpha) \in 2$. Then $
    \delta^\X(s)(\alpha)
    = \delta^\L(\lang\X(s))(\alpha)
    = \delta^\L(\lang\Y(t))(\alpha)
    = \delta^\Y(t)(\alpha)$.

\item Suppose $s \mathrel{R} t$ and $\delta^\X(s)(\alpha) = (p,s')$. Then $
    \delta^\L(\lang\Y(t))(\alpha)
    = \delta^\L(\lang\X(s))(\alpha)
    = (p, \lang\X(s'))$.

  This implies that $\delta^\Y(t)(\alpha) = (p,t')$ for some $t'$,
  using that $\lang\Y$ is a homomorphism. Hence \[
    (p, \lang\Y(t'))
    = \delta^\L(\lang\Y(t))(\alpha)
    = (p, \lang\X(s'))
  \] by the above equation, which implies $s' \mathrel{R} t'$ as required.
\qedhere
\end{itemize}
\end{proof}
\noindent
We prove a stronger version of this result in
\extended{\Cref{lem:bisim}.}{the extended version of this paper~\cite{gkat-full}.}
\subsection{Deciding Equivalence}

We now have all the ingredients required for deciding efficiently
whether two expressions are equivalent. Given two expressions $e_1$ and $e_2$,
the algorithm proceeds as follows:
\begin{enumerate}
  \item Convert $e_1$ and $e_2$ to equivalent Thompson automata $\X_1$ and $\X_2$;
  \item Normalize the automata, obtaining $\hat{\X}_1$ and $\hat{\X}_2$;
  \item Check bisimilarity of the start states $\iota_1$ and $\iota_2$
     using Hopcroft-Karp (see \Cref{alg:hopcroft-karp});
  \item Return \textbf{true} if $\iota_1 \sim \iota_2$, otherwise return \textbf{false}.
\end{enumerate}

\begin{algorithm}[t]
\caption{Hopcroft and Karp's algorithm~\cite{HopcroftKarp71},
  adapted to $G$-automata.}%
\label{alg:hopcroft-karp}
\SetKwInput{Input}{Input}
\SetKwInput{Output}{Output}
\SetKw{True}{true}
\SetKw{False}{false}
\SetKw{KwNot}{not}
\SetKw{Continue}{continue}

\Input{$G$-automata $\X=\angl{X, \delta^\X, \iota^\X}$ and
  $\Y=\angl{Y, \delta^\Y, \iota^\Y}$, finite and normal; $X$, $Y$ disjoint.}
\Output{{\True} if $\X$ and $\Y$ are equivalent, {\False} otherwise.}
\BlankLine%

todo $\gets$ Queue.singleton($\iota^\X, \iota^\Y$)%
  \tcp*[r]{state pairs that need to be checked}
forest $\gets$ UnionFind.disjointForest($X \uplus Y$)\;

\While{\rm \KwNot todo.isEmpty()}{
  $x,y \gets$ todo.pop()\;
  $r_x,r_y \gets$ forest.find($x$), forest.find($y$)\;
  \lIf(\tcp*[f]{safe to assume bisimilar}){$r_x = r_y$}{\Continue}
  \For(\tcp*[f]{check \Cref{def:bisimilarity}}){$\alpha \in \At$}{
    \nllabel{loc:atom-loop}
    \Switch{$\delta^\X(x)(\alpha), \delta^\Y(y)(\alpha)$}{
      \uCase(\tcp*[f]{case (i) of \Cref{def:bisimilarity}})
        {\rm $b_1, b_2$ with $b_1 = b_2$}{\Continue}
      \uCase(\tcp*[f]{case (ii) of \Cref{def:bisimilarity}})
        {$(p,x'),\,(p,y')$}{todo.push($x',y'$)}
      \lOther(\tcp*[f]{not bisimilar})
        {\KwRet{\False}}
    }
  }

  forest.union($r_x,r_y$)%
    \tcp*[r]{mark as bisimilar}
}
\KwRet{\True}\;
\end{algorithm}

\begin{theorem}%
\label{thm:gkat-complexity-1}
  The above algorithm decides whether $\den{e_1} = \den{e_2}$ in time
  $\OO(n \cdot \alpha(n))$ for $|\At|$ constant, where $\alpha$
  denotes the inverse Ackermann function and $n = |e_1| + |e_2|$.
\end{theorem}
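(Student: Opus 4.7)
The plan is to combine correctness of each of the four algorithmic steps with a complexity analysis, leaning on results already established earlier in the paper. Correctness reduces to a chain of equivalences: by \Cref{thm:thompson-correct}, $\den{e_i} = \lang{\X_{e_i}^\iota}(\iota_i)$ for $i \in \{1,2\}$; by \Cref{lem:normalization-correct}, normalization preserves the accepted language, so $\lang{\X_{e_i}^\iota}(\iota_i) = \lang{\hat{\X}_{e_i}^\iota}(\iota_i)$; and since $\hat{\X}_{e_1}^\iota \uplus \hat{\X}_{e_2}^\iota$ is normal, \Cref{lem:equiv-bisim} gives $\lang{\hat{\X}_{e_1}^\iota}(\iota_1) = \lang{\hat{\X}_{e_2}^\iota}(\iota_2)$ iff $\iota_1 \sim \iota_2$. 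It remains to argue that \Cref{alg:hopcroft-karp} correctly decides bisimilarity of $\iota_1$ and $\iota_2$; this is by the standard argument for Hopcroft--Karp: the union-find structure maintains an under-approximation of the coarsest bisimulation, and whenever the algorithm returns \textbf{true}, the relation implicitly built on popped pairs is a bisimulation (by construction of the \textbf{switch} statement), while a \textbf{false} return witnesses a violation of one of the clauses of \Cref{def:bisimilarity}.

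For the complexity analysis, let $n = |e_1| + |e_2|$ and write $N$ for the total number of states in $\hat{\X}_{e_1}^\iota \uplus \hat{\X}_{e_2}^\iota$. By \Cref{prop:thompson-efficient}, step~1 runs in $\OO(n)$ time and yields automata with $N = \OO(n)$ states. Step~2 (normalization) amounts to identifying dead states, which can be done by a reverse breadth-first search from the accepting states along the transition graph; since the total number of transitions is $\OO(|\At| \cdot N) = \OO(N)$ under the constant-$|\At|$ assumption, this runs in $\OO(n)$ as well.

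Step~3 is the heart of the complexity bound. Every iteration of the outer \textbf{while} loop either terminates (returning \textbf{false}), skips (when $r_x = r_y$), or performs a \texttt{union} that merges two previously-distinct equivalence classes in the disjoint forest. Since there are at most $N - 1$ nontrivial unions, the loop executes $\OO(N)$ times before the queue is exhausted or failure is reported; moreover, each union can add at most $|\At|$ new pairs to the queue, so the total number of pairs ever pushed is also $\OO(|\At| \cdot N) = \OO(N)$. Each iteration performs a constant number of \texttt{find}/\texttt{union} calls plus the inner loop over $\At$, which is $\OO(|\At|) = \OO(1)$. Using the standard amortized bound of $\OO(\alpha(N))$ per union-find operation, the total cost of step~3 is $\OO(N \cdot \alpha(N)) = \OO(n \cdot \alpha(n))$.

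The main obstacle, and the part I would write most carefully, is pinning down the invariant of \Cref{alg:hopcroft-karp} that justifies the \textbf{true} return: namely, that upon termination the set of pairs whose roots have been merged in \texttt{forest} is contained in a bisimulation, so that $\iota_1 \sim \iota_2$ holds. One also needs to verify that the assumption ``$|\At|$ constant'' is used only in suppressing the $|\At|$ factor in both the number of transitions generated by the Thompson construction and the per-pair work in step~3; without that assumption the bound would become $\OO(|\At| \cdot n \cdot \alpha(n))$. Combining all four steps then yields the claimed $\OO(n \cdot \alpha(n))$ bound.
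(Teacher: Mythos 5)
Your proposal is correct and follows essentially the same route as the paper: correctness via the chain \Cref{thm:thompson-correct}, \Cref{lem:normalization-correct}, \Cref{lem:equiv-bisim}, and the complexity bound via the linear-size Thompson construction, dead-state elimination by reverse BFS, and the near-linear union-find bound for Hopcroft--Karp (which the paper simply cites as Tarjan's classic result rather than re-deriving the amortized analysis as you sketch). The extra detail you give on the Hopcroft--Karp invariant and the counting of union operations is a fuller account of what the paper delegates to the citation, not a different argument.
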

\begin{proof}
  The algorithm is correct by \Cref{thm:thompson-correct},
  \Cref{lem:normalization-correct}, and \Cref{lem:equiv-bisim}: \[
    \den{e_1} = \den{e_2}
    \ \iff\ %
    \lang{\X_1}(\iota_1) = \lang{\X_2}(\iota_2)
    \ \iff\ %
    \lang{\hat{\X}_1}(\iota_1) = \lang{\hat{\X}_2}(\iota_2)
    \ \iff\ %
    \iota_1 \sim \iota_2
  \]
  For the complexity claim, we analyze the running time of steps (1)--(3) 
  of the algorithm:
  \begin{enumerate}
    \item Recall by \Cref{prop:thompson-efficient} that the Thompson
    construction converts $e_i$ to  an automaton with $\OO(|e_i|)$
    states in time $\OO(|e_i|)$. Hence this step takes time $\OO(n)$.
    \item Normalizing $\X_i$ amounts to computing its dead states.
    This requires time $\OO(|e_i|)$
    using a breadth-first traversal as follows (since there are at most
    $|\At| \in \OO(1)$ transitions per state).
    We find all states that can reach an accepting state by first
    marking all accepting states, and then performing a
    reverse breadth-first search rooted at the accepting states.
    All marked states are then live; all unmarked states are dead.
    \item Since $\hat{\X}_i$ has $\OO(|e_i|)$ states and there are at most
    $|\At| \in \OO(1)$ transitions per state, Hopcroft-Karp requires time
    $\OO(n \cdot \alpha(n))$ by a classic result due to \citet{Tarjan75}.
    \qedhere
  \end{enumerate}
\end{proof}

\Cref{thm:gkat-complexity-1} establishes a stark complexity gap with KAT, where the same decision problem is
\textsc{PSPACE}-complete~\cite{CKS96a} even for a constant number of atoms.
Intuitively, the gap arises because GKAT expressions can be translated
to linear-size deterministic automata, whereas KAT expressions may require exponential-size deterministic automata.

A shortcoming of \Cref{alg:hopcroft-karp} is that it may scale poorly if the
number of atoms $|\At|$ is large. It is worth noting that there are
symbolic variants~\cite{pous-2014} of the algorithm that avoid enumerating
$\At$ explicitly (\cf{} \cref{loc:atom-loop} of \Cref{alg:hopcroft-karp}),
and can often scale to very large alphabets in practice.
As a concrete example, a version of GKAT specialized to
probabilistic network verification was recently shown~\cite{mcnetkat} to scale
to data-center networks with thousands of switches.
In the worst case, however, we have the following hardness result:
\begin{proposition}%
\label{prop:gkat-complexity-2}
If $|\At|$ is not a constant, GKAT equivalence is co-\textsc{NP}-hard,
but in \textsc{PSPACE}.
\end{proposition}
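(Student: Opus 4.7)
The plan is to establish the two bounds separately, leveraging the KAT connection (Remark about KAT) for the upper bound and a direct reduction from Boolean satisfiability for the lower bound.

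For the \textsc{PSPACE} upper bound, I would invoke the homomorphism $\phi\colon \Exp \to \Kexp$ from the Remark about the KAT connection. A routine structural induction shows that $|\phi(e)| = O(|e|)$: each guarded operation is rewritten into a constant-size KAT context involving only its original subterms and one additional copy of the Boolean guard, which itself is already a syntactic subterm and whose size is bounded by $|e|$. Since $\sem{e} = \KK\sem{\phi(e)}$, GKAT equivalence reduces in polynomial time to KAT equivalence, which is known to be in \textsc{PSPACE} by Cohen, Kozen, and Smith~\cite{CKS96a}. Hence GKAT equivalence is in \textsc{PSPACE} as well.

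For co-\textsc{NP}-hardness, I would reduce from UNSAT. Given a Boolean formula $\varphi$ over propositional variables $x_1, \ldots, x_n$, treat the $x_i$ as primitive tests $t_i \in T$, so that $\varphi \in \Bexp \subseteq \Exp$ is a GKAT expression of size linear in $|\varphi|$. By definition of the language model,
\[
  \sem{\varphi} = \set{\alpha \in \At}{\alpha \leq \varphi},
\]
and this set is empty if and only if no truth assignment over $\{t_1, \ldots, t_n\}$ satisfies $\varphi$, \ie, $\varphi$ is unsatisfiable. Since $\sem{0} = \emptyset$, the equivalence $\varphi \equiv_{\textsf{GKAT}} 0$ holds (in the sense $\sem{\varphi} = \sem{0}$) exactly when $\varphi \in$ UNSAT. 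As UNSAT is co-\textsc{NP}-complete and the reduction is clearly polynomial time (the input size is $|\varphi|$ but $|\At| = 2^n$ grows exponentially, which is exactly the regime covered by the hypothesis), we conclude that GKAT equivalence is co-\textsc{NP}-hard.

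The main obstacle here is conceptual rather than technical: the bound is intentionally loose. Closing the gap between co-\textsc{NP} and \textsc{PSPACE} would require either a sharper decision procedure that exploits GKAT's determinism to avoid the subset-construction blowup implicit in the KAT-based algorithm, or a hardness reduction that genuinely uses the iteration/sequencing structure of GKAT rather than just its Boolean layer. For the proposition as stated, however, only the two directions above are needed, and neither requires any additional machinery beyond the KAT embedding and the explicit description of $\sem{b}$ on Boolean expressions.
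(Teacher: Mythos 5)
Your proposal is correct and matches the paper's own argument: the co-\textsc{NP}-hardness is obtained exactly as in the paper by reducing Boolean unsatisfiability to $\sem{\varphi} = \sem{0}$, and the \textsc{PSPACE} upper bound is inherited from KAT through the homomorphism $\phi$ of the KAT-connection remark. Your extra detail that $|\phi(e)| = O(|e|)$ is fine but not needed---any polynomial-time translation suffices for the upper bound.
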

\begin{proof}
For the hardness result, we observe that Boolean \emph{un}satisfiability
reduces to GKAT equivalence:
$b \in \Bexp$ is unsatisfiable, interpreting the primitive tests
as variables, iff $\den{b} = \emptyset$.
The \textsc{PSPACE} upper bound is inherited from KAT by \Cref{rem:kat-connection}.
\end{proof}

\section{Completeness for the language model}%
\label{sec:complete}

In \Cref{sec:axioms} we presented an axiomatization that is sound with respect to the language model, and put forward the conjecture that it is also complete. We have already proven a partial completeness result (\Cref{cor:partial-completeness}). In this section, we return to this matter
and show we can prove completeness with a generalized version of~(\nameref{ax:fixpoint}).

\subsection{Systems of Left-affine Equations}

A \emph{system of left-affine equations} (or simply, a \emph{system}) is a finite set of equations of the form
\newcommand{\var}[1]{\boldsymbol{#1}}
\begin{gather}
\var{x_1} = e_{11}\var{x_1} +_{b_{11}} \cdots +_{b_{1(n-1)}} e_{1n}\var{x_n}
  +_{b_{1n}} d_1\nonumber\\[-2mm]
\vdots\label{eq:system}\\[-1mm]
\var{x_n} = e_{n1}\var{x_1} +_{b_{n1}} \cdots +_{b_{n(n-1)}} e_{nn}\var{x_n}
  +_{b_{nn}} d_n\nonumber
\end{gather}
where $+_b$ associates to the right,
the $\var{x_i}$ are variables, the $e_{ij}$ are GKAT expressions,
and the $b_{ij}$ and $d_i$ are Boolean guards satisfying the following
row-wise disjointness property for row $1 \leq i \leq n$:
\begin{itemize}
  \item for all $j \neq k$, the guards $b_{ij}$ and $b_{ik}$ are disjoint:
    $b_{ij}\cdot b_{ik} \equiv_\text{BA} 0$; and
  \item for all $1 \leq j \leq n$, the guards $b_{ij}$ and $d_i$ are disjoint:
    $b_{ij}\cdot d_i \equiv_\text{BA} 0$.
\end{itemize}
Note that by the disjointness property, the ordering of the summands is
irrelevant: the system is invariant (up to $\equiv$) under column permutations.
A \emph{solution} to such a system is a function
$s:\sset{\var{x_1}, \dots, \var{x_n}} \to \Exp$
assigning expressions to variables such that, for row $1 \leq i \leq n$:
\[
  s(\var{x_i}) \equiv
    e_{i1}\cdot s(\var{x_1}) +_{b_{i1}} \cdots
    +_{b_{i(n-1)}} e_{in} \cdot s(\var{x_n}) +_{b_{in}} d_i
\]

Note that any finite $G$-coalgebra gives rise to a system where each variable
represents a state, and the equations define what it means to be a solution to
the coalgebra (c.f. \Cref{def:solution}); indeed, a solution to a $G$-coalgebra
is precisely a solution to its corresponding system of equations, and vice versa.
In particular, for a coalgebra $\X$ with states $x_1$ to $x_n$, the parameters
for equation $i$ are:
\begin{mathpar}
b_{ij} = \sum \set{\alpha \in \At}{\delta^\X(x_i)(\alpha) \in \Sigma \times \sset{x_j}}
\\
d_i = \sum \set{\alpha \in \At}{\delta^\X(x_i)(\alpha) = 1}
\and
e_{ij} = \dsum_{\alpha\colon \delta^\X(x_i)(\alpha) = (p_\alpha,x_j)} p_\alpha
\end{mathpar}

Systems arising from $G$-coalgebras have a useful property:
for all $e_{ij}$, it holds that $E(e_{ij}) \equiv 0$.
This property is analogous to the \emph{empty word property} in
Salomaa's axiomatization of regular languages~\cite{Salomaa66}; we call such systems \emph{Salomaa}.

To obtain a general completeness result beyond \Cref{sec:partial-completness}, we assume an additional axiom:
\begin{quote}
\textit{Uniqueness axiom}.
Any system of left-affine equations that is Salomaa has \emph{at most} one solution, modulo $\equiv$.
More precisely, whenever $s$ and $s'$ are solutions to a Salomaa system, it holds that $s(x_i) \equiv s'(x_i)$ for all $1 \leq i \leq n$.
\end{quote}

\begin{remark}
We do not assume that a solution always exists, but only that if it does, then it is unique up to $\equiv$.
It would be unsound to assume that all such systems have solutions;
the following automaton and its system, due to~\cite{KT08a}, constitutes a counterexample:
\begin{mathpar}%
\begin{tikzpicture}[baseline]
      \node (e)   {};
      \node[inner sep=0pt,minimum size=0pt] (1) at (30:1.5) {$x_1$};
      \node[inner sep=0pt,minimum size=0pt] (2) at (270:1.5) {$x_2$};
      \node[inner sep=0pt,minimum size=0pt] (0) at (150:1.5) {$x_0$};
      \node (0out) at (150:2.5) {\small$\alpha_0 + \alpha_3$};
      \node (1out) at (30:2.5) {\small$\alpha_1 + \alpha_3$};
      \node (2out) at (300:2.2) {\small$\alpha_2 + \alpha_3$};
      \draw
        (0)  edge[transition-edge,bend left] node[above,sloped] {\small$\alpha_1/p_{01}$} (1)
        (0)  edge[transition-edge] node[above,sloped] {\small$\alpha_2/p_{02}$} (2)
        (1)  edge[transition-edge] node[below,sloped] {\small$\alpha_0/p_{10}$} (0)
        (1)  edge[transition-edge,bend left] node[below,sloped] {\small$\alpha_2/p_{12}$} (2)
        (2)  edge[transition-edge] node[above,sloped] {\small$\alpha_1/p_{21}$} (1)
        (2)  edge[transition-edge,bend left] node[below,sloped] {\small$\alpha_0/p_{20}$} (0)
        (0) edge[output-edge] (0out)
             (1) edge[output-edge] (1out)
         (2) edge[output-edge] (2out);
\end{tikzpicture}
\and
\begin{array}{c@{\kern1ex}c@{\kern1ex}c}
\var{x_0} &\equiv&
  p_{01}\var{x_1} +_{\alpha_1} p_{02}\var{x_2} +_{\alpha_2} (\alpha_0 + \alpha_3) \\
\var{x_1} &\equiv&
  p_{10}\var{x_0}  +_{\alpha_0}  p_{12}\var{x_2} +_{\alpha_2} (\alpha_1 + \alpha_3) \\
\var{x_2} &\equiv&
  p_{20}\var{x_0} +_{\alpha_1} p_{21}\var{x_1} +_{\alpha_0} (\alpha_2 + \alpha_3)
\end{array}
\end{mathpar}
As shown in~\cite{KT08a}, no corresponding while program exists for this system.
\end{remark}

When $n=1$, a system is a single equation of the form $x = ex +_b d$.
In this case,~(\nameref{ax:unroll}) tells us that a solution does exist, namely $e^{(b)}d$, and~(\nameref{ax:fixpoint}) says that this solution is unique up to $\equiv$ under the proviso $E(e) \equiv 0$.
In this sense, we can regard the uniqueness axiom as a generalization of~(\nameref{ax:fixpoint}) to systems with multiple variables.

\begin{restatable}{theorem}{uniquesalomaa}
The uniqueness axiom is sound in the model of guarded strings: given a system of left-affine equations as in~\eqref{eq:system} that is Salomaa, there exists at most one $R:\{\seq x1n\} \to \pGS$ s.t.
\[
  R(x_i) = \left(
    \bigcup_{1 \leq j \leq n} \den{b_{ij}} \diamond \den{e_{ij}} \diamond R(x_j) \right)
    \cup \den{d_i}
\]
\end{restatable}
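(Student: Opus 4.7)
The plan is to prove uniqueness by strong induction on the number of actions in a guarded string, showing that any two candidate solutions $R$ and $R'$ agree on every $w \in \Gs$ at every variable $x_i$. The first step is to translate the Salomaa hypothesis $E(e_{ij}) \equiv 0$ into a concrete statement about languages. Using the inductive definition of $E$, one checks by a routine structural induction on $e$ that $\sem{E(e)} = \sem{e} \cap \At$, so the Salomaa condition becomes $\sem{e_{ij}} \cap \At = \emptyset$: every guarded string denoted by $e_{ij}$ contains at least one action. Consequently, every string in $\sem{b_{ij}} \diamond \sem{e_{ij}} \diamond R(x_j)$ has at least one action, since fusion with $\sem{b_{ij}} \subseteq \At$ on the left does not add actions but also does not remove them from the middle factor.

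For the base case $n = 0$, the strings in $\Gs$ are single atoms $\alpha \in \At$. The observation above rules out any atom appearing in the union $\bigcup_{j} \sem{b_{ij}} \diamond \sem{e_{ij}} \diamond R(x_j)$, so the fixed-point equation forces $R(x_i) \cap \At = \sem{d_i}$, and identically for $R'$. For the inductive step, fix $n \geq 1$ and assume $R(x_j)$ and $R'(x_j)$ agree on all strings with fewer than $n$ actions, for every $j$. If $w$ has exactly $n$ actions and $w \in R(x_i)$, the equation for $x_i$ expresses $w$ either as an element of $\sem{d_i}$, in which case $w \in R'(x_i)$ immediately, or as a fusion $w = x \diamond y$ with $x \in \sem{b_{ij}} \diamond \sem{e_{ij}}$ and $y \in R(x_j)$ for some $j$. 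Since $x$ contains at least one action, $y$ has strictly fewer than $n$ actions, so the induction hypothesis yields $y \in R'(x_j)$ and hence $w \in R'(x_i)$. The reverse inclusion is symmetric.

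The main delicate point is the fusion decomposition in the inductive step: one must argue that whenever $w \in \sem{b_{ij}} \diamond \sem{e_{ij}} \diamond R(x_j)$ there exist witnesses $x, y$ for which the number of actions in $y$ is strictly smaller than in $w$. This is precisely where the Salomaa hypothesis bites, via the characterization $\sem{E(e_{ij})} = \sem{e_{ij}} \cap \At = \emptyset$. The other potential subtlety, ambiguity across indices $j$, is harmless: it is enough to pick any one decomposition to apply the induction hypothesis, so the row-wise disjointness of the guards $b_{ij}$ is not required for uniqueness itself (only for the system to be well-posed as derived from a coalgebra).
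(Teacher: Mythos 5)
Your proof is correct, but it takes a genuinely different route from the paper. The paper recasts the system as a matrix-vector equation $x = Mx + D$ over languages, equips $2^{\Gs}$ with the ultrametric $d(A,B) = 2^{-\len{A \symdiff B}}$ (where $\len{\cdot}$ counts the fewest actions in a string of the set), proves this space complete, shows that the Salomaa condition makes the map $\sigma(x) = Mx + D$ contractive with constant $1/2$, and invokes the Banach fixpoint theorem. Your argument is the elementary unfolding of that contraction: a direct induction on the number of actions, with the base case handled by the observation that $\sem{E(e)} = \sem{e} \cap \At$ (so Salomaa means $\sem{e_{ij}}$ contains no atoms) and the step handled by the fact that fusing on the left with a string containing an action strictly decreases the action count of the remaining suffix. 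Both hinge on exactly the same use of the Salomaa hypothesis; what the paper's packaging buys is a reusable metric/topological framework (which resurfaces in its coalgebraic development) and, via Banach, existence of a language-level fixpoint as a byproduct, whereas your induction proves only uniqueness---which is all the statement asks---with less machinery. Your closing remark is also accurate on both counts: row-wise disjointness of the guards plays no role in your induction, and even in the paper's computation the appeal to disjointness (the guarded-union identity for distances) could be replaced by the general inequality $d(A_1 \cup A_2, B_1 \cup B_2) \leq \max(d(A_1,B_1), d(A_2,B_2))$, so uniqueness indeed does not depend on it.
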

\begin{proof}[Proof Sketch]
We recast this system as a matrix-vector equation of the form $x = Mx + D$ in the KAT of $n$-by-$n$ matrices over $\pGS$; solutions to $x$ in this equation are in one-to-one correspondence with functions $R$ as above.
We then show that the map $\sigma(x) = Mx + D$ on the set of $n$-dimensional vectors over $\pGS$ is contractive in a certain metric, and therefore has a unique fixpoint by the Banach fixpoint theorem; hence, there can be at most one solution $x$.
\end{proof}

\subsection{General Completeness}

Using the generalized version of the fixpoint axiom, we can now establish completeness.

\begin{theorem}[Completeness]
The axioms are complete w.r.t. $\den{-}$: given $e_1, e_2 \in \Exp$,
\[
    \den{e_1} = \den{e_2} \implies e_1\equiv e_2.
\]
\end{theorem}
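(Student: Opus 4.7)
The plan is to combine the Kleene theorem of \Cref{sec:kleene} with the uniqueness axiom. Given $e_1, e_2$ with $\den{e_1} = \den{e_2}$, I would first build the Thompson automata $\X^\iota_{e_1}, \X^\iota_{e_2}$: by \Cref{thm:thompson-simple} these are well-nested, by \Cref{thm:thompson-correct} they recognize $\den{e_1}$ and $\den{e_2}$ respectively, and by \Cref{thm:thompson-roundtrip} each admits a solution $s_i$ satisfying $s_i(\iota_i) \equiv e_i$. Normalizing via \Cref{lem:normalization-correct}, I obtain \emph{normal} finite coalgebras $\hat{\X}^\iota_{e_i}$ on which $s_i$ is still a solution and which still recognize $\den{e_i}$.

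I would then form the disjoint union $\X \defeq \hat{\X}^\iota_{e_1} + \hat{\X}^\iota_{e_2}$, which is again normal and finite and on which $s \defeq s_1 \cup s_2$ is a solution. Since $\den{e_1} = \den{e_2}$, the initial states $\iota_1$ and $\iota_2$ accept the same language in $\X$; by \Cref{lem:equiv-bisim} they are therefore bisimilar.

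The heart of the argument is to use this bisimulation to fabricate a fresh coalgebra on which uniqueness can be applied. Pick a bisimulation $R$ on $\X$ containing $(\iota_1, \iota_2)$ and turn $R$ into a $G$-coalgebra by mirroring $\X$: for $(x, y) \in R$, if $\delta^\X(x)(\alpha) = (p, x')$ then by the bisimulation condition $\delta^\X(y)(\alpha) = (p, y')$ with $(x', y') \in R$, and we set $\delta^R(x, y)(\alpha) \defeq (p, (x', y'))$; when $\delta^\X(x)(\alpha) \in 2$, both coordinates agree and we copy the value. Both projections $\pi_k\colon R \to \X$ are then coalgebra homomorphisms, which implies that both $t_k(x, y) \defeq s(\pi_k(x, y))$ (for $k=1,2$) are solutions to the finite left-affine system associated to $R$.

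This system is Salomaa, since every coefficient is a guarded sum of action letters $p \in \Sigma$ and $E(p) \equiv 0$. The uniqueness axiom therefore forces $t_1 \equiv t_2$ pointwise, and evaluating at $(\iota_1, \iota_2)$ gives
\[
e_1 \equiv s_1(\iota_1) = t_1(\iota_1, \iota_2) \equiv t_2(\iota_1, \iota_2) = s_2(\iota_2) \equiv e_2.
\]
The main obstacle I anticipate is the precise definition of the $G$-coalgebra structure on $R$ and the verification that the pullback maps $t_k$ actually solve the system of $R$ equation by equation; the rest of the argument is a matter of orchestrating results already established, together with the fact that a solution to $\X$ pulls back along any coalgebra homomorphism into $\X$.
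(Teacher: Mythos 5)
Your proposal is correct and follows essentially the same route as the paper: Thompson automata with roundtrip solutions, normalization, a bisimulation obtained from language equality, endowing the bisimulation with a $G$-coalgebra structure, observing that both pulled-back maps solve the resulting Salomaa system, and concluding via the uniqueness axiom. The only differences are cosmetic—you work on the disjoint union rather than between the two automata, and you phrase the key verification as ``solutions pull back along coalgebra homomorphisms,'' which is exactly the computation the paper carries out explicitly via its alternative characterization of solutions.
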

\begin{proof}
Let $\X_1$ and $X_2$ be the Thompson automata corresponding to $e_1$ and $e_2$, with initial states $\iota_1$ and $\iota_2$, respectively.
\Cref{thm:thompson-roundtrip} shows there are solutions $s_1$ and $s_2$,
with $s_1(\iota_1) \equiv e_1$ and $s_2(\iota_2) \equiv e_2$;
and we know from \Cref{lem:normalization-correct} that $s_1$ and $s_2$ solve
the normalized automata $\hat{\X_1}$ and $\hat{\X_2}$.
By \Cref{lem:normalization-correct}, \Cref{thm:thompson-correct},
and the premise,
we derive that $\hat{\X_1}$ and $\hat{\X_2}$ accept the same language:
\[
  \lang{\hat{\X_1}}(\iota_1) = \lang{\X_1}(\iota_1) = \den{e_1} = \den{e_2} = \lang{\X_2}(\iota_2) = \lang{\hat{\X_2}}(\iota_2).
\]

This implies, by \Cref{lem:equiv-bisim}, that there is a bisimulation $R$ between $\hat{\X_1}$ and $\hat{\X_2}$ relating $\iota_1$ and $\iota_2$.
This bisimulation can be given the following transition structure,
\[
    \delta^\Bis(x_1, x_2)(\alpha) \defeq
        \begin{cases}
        0 & \text{if } \delta^{\hat{\X_1}}(x_1)(\alpha) = 0 \text{ and } \delta^{\hat{\X_2}}(x_2)(\alpha) = 0 \\
        1 & \text{if } \delta^{\hat{\X_1}}(x_1)(\alpha) = 1 \text{ and } \delta^{\hat{\X_2}}(x_2)(\alpha) = 1 \\
        (p, (x_1', x_2')) & \text{if } \delta^{\hat{\X_1}}(x_1)(\alpha) = (p, x_1') \text{ and } \delta^{\hat{\X_2}}(x_2)(\alpha) = (p, x_2')
        \end{cases}
\]
turning $\Bis = \angl{R, \delta^\Bis}$ into a $G$-coalgebra; note that $\delta^\Bis$ is well-defined since $R$ is a bisimulation.

Now, define $s_1', s_2': R \to \Exp$ by $s_1'(x_1, x_2) = s_1(x_1)$ and $s_2'(x_1, x_2) = s_2(x_2)$.
We claim that $s_1'$ and $s_2'$ are both solutions to $\Bis$; to see this, note that for $\alpha \in \At$, $(x_1, x_2) \in R$, and $i \in \sset{1,2}$, we have that
\begin{align*}
\alpha \cdot s_i'(x_i, x_i)
    &\equiv \alpha \cdot s_i(x_i)
        \tag{Def. $s_i'$} \\
    &\equiv \alpha \cdot \floor{\delta^{\hat{\X_i}}(x_i)(\alpha)}_{s_i}
        \tag{$s_i$ solves $\hat{\X_i}$} \\
    &\equiv \alpha \cdot \floor{\delta^\Bis(x_1, x_2)(\alpha)}_{s_i'}
        \tag{Def. $s_i'$ and $\floor{-}$}
\end{align*}
Thus, $s_i'$ is a solution \extended{by \Cref{lem:solution-alt}}{(c.f.\ the extended proof of \Cref{lem:normalization-correct}~\cite{gkat-full})}.

Since the system of left-affine equations induced by $\Bis$ is Salomaa, the uniqueness axiom then tells us that $s_1(\iota_1) = s_1'(\iota_1, \iota_2) \equiv s_2'(\iota_1, \iota_2) = s_2(\iota_2)$; hence, we can conclude that $e_1 \equiv e_2$.
\end{proof}

\section{Coalgebraic Structure}%
\label{sec:discussion}

The coalgebraic theory of GKAT is quite different from that of KA and
KAT because the final $G$-coalgebra, without the normality assumption
from \cref{sec:normal}, is not characterized by sets of finite guarded
strings. Even including infinite accepted strings is not enough, as
this still cannot distinguish between early and late rejection. It is
therefore of interest to characterize the final $G$-coalgebra and
determine its precise relationship to the language model. We give a
brief overview of these results, which give insight into the nature of
halting versus looping and underscore the role of topology in
coequational specifications.

In \extended{\Cref{sec:coalg}}{the extended version of the
paper~\cite{gkat-full}} we give two characterizations of the final
$G$-coalgebra, one in terms of nonexpansive maps
$\At^\omega\to\Sigma^*\cup\Sigma^\omega$ with natural metrics defined
on both spaces\extended{~(\cref{sec:nonexpansive})}{} and one in terms
of labeled trees with nodes indexed by
$\At^+$\extended{~(\cref{sec:labeledtrees})}{}, and show their
equivalence.
\extended{In \cref{sec:bisim}, we state and prove \Cref{lem:bisim}}
  {We also state and prove}
 a stronger form of the bisimilarity lemma (\Cref{lem:equiv-bisim}).

We have discussed the importance of the determinacy property
(\Cref{def:action-det}). In \extended{\cref{sec:detclosure}}{the
extended version~\cite{gkat-full}} we identify another
important property satisfied by all languages $\Lang{\X}(s)$, a
certain closure property defined in terms of a natural topology on
$\Ato$.
\extended{In \cref{sec:languagemodel}, we}{We} define a language model $\LL'$, a $G$-coalgebra whose states are the subsets of $\GS\cup\wGs$ satisfying the determinacy and closure properties and whose transition structure is the semantic Brzozowski derivative:
\begin{align*}
\delta^{\LL'}(A)(\alpha) = \begin{cases}
(p,\set{x}{\alpha px\in A}) & \text{if $\set{x}{\alpha px\in A}\ne\emptyset$}\\
1 & \text{if $\alpha\in A$}\\
0 & \text{otherwise.}
\end{cases}
\end{align*}
Although this looks similar to the language model $\LL$ of \Cref{sec:bisimNormal}, they are not the same: states of $\LL$ contain finite strings only, and $\LL$ and $\LL'$ are not isomorphic.

We show that $\Lang{}$ is identity on $\LL'$ and that $\LL'$ is isomorphic to a subcoalgebra of the final $G$-coalgebra. It is not the final $G$-coalgebra, because early and late rejection are not distinguished: an automaton could transition before rejecting or reject immediately. Hence, $L:(X,\delta^X)\to\LL'$ is not a homomorphism in general. However, normality prevents this behavior, and $L$ is a homomorphism if $(X,\delta^X)$ is normal. Thus $\LL'$ contains the unique homomorphic image of all normal $G$-coalgebras.

Finally, \extended{in \cref{thm:normalfinal}}{} we identify a subcoalgebra $\LL''\subs\LL'$ that is normal and final in the category of normal $G$-coalgebras. The subcoalgebra $\LL''$ is defined topologically; roughly speaking, it consists of sets $A\subs\GS\cup\GSo$ such that $A$ is the topological closure of $A\cap\GS$.
Thus $\LL''$ is isomorphic to the language model $\LL$ of \Cref{sec:bisimNormal}: the states of $\LL$ are obtained from those of $\LL''$ by intersecting with $\GS$, and the states of $\LL''$ are obtained from those of $\LL$ by taking the topological closure. Thus $\LL$ is isomorphic to a coequationally-defined subcoalgebra of the final $G$-coalgebra.

We also remark that $\LL'$ itself is final in the category of $G$-coalgebras that satisfy a weaker property than normality, the so-called \emph{early failure property}, which can also be characterized topologically.

\section{Related Work}%
\label{sec:related-work}

\emph{Program schematology} is one of the oldest areas of study in the mathematics of computing. It is concerned with questions of translation and representability among and within classes of program schemes, such as flowcharts, while programs, recursion schemes, and schemes with various data structures such as counters, stacks, queues, and dictionaries~\cite{GarlandLuckham73,Ianov60,Rutledge64,PatersonHewitt70,ShepherdsonSturgis63,LuckhamParkPaterson70}.
A classical pursuit in this area was to find mechanisms to transform unstructured flowcharts to structured form~\cite{ashcroft.manna:translation,boehm.jacopini:flow,Kosaraju73,oulsnam:unraveling,Peterson73,ramshaw:eliminating,williams.ossher:conversion,Morris97}. A seminal result was the \emph{B\"ohm-Jacopini theorem}~\cite{boehm.jacopini:flow}, which established that all flowcharts can be converted to while programs provided auxiliary variables are introduced. B\"ohm and Jacopini conjectured that the use of auxiliary variables was necessary in general, and this conjecture was confirmed independently by \citet{ashcroft.manna:translation} and \citet{Kosaraju73}.

Early results in program schematology, including those of~\cite{boehm.jacopini:flow,ashcroft.manna:translation,Kosaraju73}, were typically formulated at the first-order uninterpreted (schematic) level. However, many restructuring operations can be accomplished without reference to first-order constructs. It was shown in~\cite{KT08a} that a purely propositional formulation of the B\"ohm-Jacopini theorem is false: there is a three-state deterministic propositional flowchart that is not equivalent to any propositional while program. As observed by a number of authors (\eg~\cite{Kosaraju73,Peterson73}), while loops with multi-level breaks are sufficient to represent all deterministic flowcharts without introducing auxiliary variables, and~\cite{Kosaraju73} established a strict hierarchy based on the allowed levels of the multi-level breaks. That result was reformulated and proved at the propositional level in~\cite{KT08a}.

The notions of functions on a domain, variables ranging over that domain, and variable assignment are inherent in first-order logic, but are absent at the propositional level. Moreover, many arguments rely on combinatorial graph restructuring operations, which are difficult to formalize. Thus the value of the propositional view is twofold: it operates at a higher level of abstraction and brings topological and coalgebraic concepts and techniques to bear.

Propositional while programs and their encoding in terms of the regular operators goes back to early work on Propositional Dynamic Logic~\cite{FL79}. GKAT as an independent system and its semantics were introduced in~\cite{KT08a,K08b} under the name \emph{propositional while programs}, although the succinct form of the program operators is new here. Also introduced in~\cite{KT08a,K08b} were the functor $G$ and automaton model (\Cref{sec:kleene}), the determinacy property (\Cref{def:action-det}) (called \emph{strict determinacy} there), and the concept of normality (\Cref{sec:normal}) (called \emph{liveness} there). The linear construction of an automaton from a while program was sketched in~\cite{KT08a,K08b}, based on earlier results for KAT automata~\cite{K03a}, but the complexity of deciding equivalence was not addressed. The more rigorous alternative construction given here (\Cref{sec:Thompson}) is needed to establish well-nestedness, thereby enabling our Kleene theorem. The existence of a complete axiomatization was not considered in~\cite{KT08a,K08b}.

Guarded strings, which form the basis of our language semantics, come from~\cite{Kaplan69}.

The axiomatization we propose for GKAT is closely related to Salomaa's axiomatization
of regular expressions based on unique fixed points~\cite{Salomaa66} and to Silva's coalgebraic generalization
of KA~\cite{silva-thesis}. The proof technique we used for completeness is inspired by~\cite{silva-thesis}.

The relational semantics is inspired by that for KAT~\cite{KS96a}, which goes back to work on Dynamic Logic~\cite{FL79}. Because the fixpoint axiom uses a non-algebraic side condition, extra care is needed to define the relational interpretation for GKAT\@.

\section{Conclusions and Future Directions}

We have presented a comprehensive algebraic and coalgebraic study of
GKAT, an abstract programming language with uninterpreted actions. Our
main contributions include: (i) a new automata construction yielding a
nearly linear time decidability result for program equivalence; (ii) a
Kleene theorem for GKAT providing an exact correspondence between
programs and a well-defined class of automata; and (iii) a set of
sound and complete axioms for program equivalence.

We hope this paper is only the beginning of a long and beautiful
journey into understanding the (co)algebraic properties of efficient 
fragments of imperative programming languages. We briefly discuss some
limitations of our current development and our vision for future work.

As in Salomaa's axiomatization of KA, our axiomatization is not fully
algebraic:
the side condition of (\nameref{ax:fixpoint}) is only sensible for the language model.
As a result, the current completeness proof does not generalize to
other natural models of interest---\eg, probabilistic or
relational. To overcome this limitation, we would like to adapt
Kozen's axiomatization of KA to GKAT by developing a natural order for
GKAT programs. In the case of KA we have $e\leq f \defiff e+f =f$, but
this natural order is no longer definable in the absence of $+$ and so we need to
axiomatize $e \leq f$ for GKAT programs directly. This
appears to be the main missing piece to obtain an algebraic
axiomatization.

On the coalgebraic side, we are interested in studying the different classes of
$G$-coalgebras from a coequational perspective. Normal coalgebras, for instance,
form a covariety, and hence are characterized by coequations. If well-nested $G$-coalgebras could be shown to form a covariety,
this would imply completeness of the axioms
without the extra uniqueness axiom from \Cref{sec:complete}.

Various extensions of KAT to reason about richer programs (KAT+B!,
NetKAT, ProbNetKAT) have been proposed, and it is natural to ask
whether extending GKAT in similar directions will yield interesting
algebraic theories and decision procedures for domain-specific
applications. For instance, recent work~\cite{mcnetkat} on a
probabilistic network verification tool suggests that
GKAT is better suited for
probabilistic models than KAT, as it avoids mixing non-determinism and
probabilities.
The complex semantics of  probabilistic programs would make a framework for
equational and automated reasoning especially valuable.

In a different direction, a language model containing infinite traces
could be interesting in many applications, as it could serve as a
model to reason about non-terminating programs---\eg, loops in NetKAT
in which packets may be forwarded forever. An interesting open
question is whether the infinite language model can be finitely
axiomatized.

Finally, another direction would be to extend the GKAT decision
procedure to handle extra equations. For instance, both KAT+B! and 
NetKAT have independently-developed decision procedures, that are
similar in flavor. This raises the question of whether the GKAT
decision procedure could be extended in a more generic way,
similar to the Nelson-Oppen approach~\cite{NelsonO79} for
combining decision procedures used in SMT solving.

\begin{acks}
We are grateful to the anonymous reviewers for their feedback and help in
improving this paper, and
thank Paul Brunet, Fredrik Dahlqvist, and Jonathan DiLorenzo for numerous
discussions and suggestions on GKAT\@.
Jonathan DiLorenzo and Simon Docherty provided helpful feedback on drafts of
this paper.
We thank the Bellairs Research Institute of McGill University for
providing a wonderful research environment.
This work was supported in part by the University of Wisconsin, a Facebook TAV
award,
ERC starting grant Profoundnet (679127),
a Royal Society Wolfson fellowship,
a Leverhulme Prize (PLP-2016-129), 
NSF grants AitF-1637532, CNS-1413978, and SaTC-1717581,
and gifts from Fujitsu and InfoSys.
\end{acks}

\bibliography{popl20}


\begin{thebibliography}{51}


\ifx \showCODEN    \undefined \def \showCODEN     #1{\unskip}     \fi
\ifx \showDOI      \undefined \def \showDOI       #1{#1}\fi
\ifx \showISBNx    \undefined \def \showISBNx     #1{\unskip}     \fi
\ifx \showISBNxiii \undefined \def \showISBNxiii  #1{\unskip}     \fi
\ifx \showISSN     \undefined \def \showISSN      #1{\unskip}     \fi
\ifx \showLCCN     \undefined \def \showLCCN      #1{\unskip}     \fi
\ifx \shownote     \undefined \def \shownote      #1{#1}          \fi
\ifx \showarticletitle \undefined \def \showarticletitle #1{#1}   \fi
\ifx \showURL      \undefined \def \showURL       {\relax}        \fi
\providecommand\bibfield[2]{#2}
\providecommand\bibinfo[2]{#2}
\providecommand\natexlab[1]{#1}
\providecommand\showeprint[2][]{arXiv:#2}

\bibitem[\protect\citeauthoryear{Anderson, Foster, Guha, Jeannin, Kozen,
  Schlesinger, and Walker}{Anderson et~al\mbox{.}}{2014}]%
        {AFGJKSW13a}
\bibfield{author}{\bibinfo{person}{Carolyn~Jane Anderson},
  \bibinfo{person}{Nate Foster}, \bibinfo{person}{Arjun Guha},
  \bibinfo{person}{Jean-Baptiste Jeannin}, \bibinfo{person}{Dexter Kozen},
  \bibinfo{person}{Cole Schlesinger}, {and} \bibinfo{person}{David Walker}.}
  \bibinfo{year}{2014}\natexlab{}.
\newblock \showarticletitle{{NetKAT}: Semantic Foundations for Networks}. In
  \bibinfo{booktitle}{\emph{Proc. Principles of Programming Languages
  ({POPL})}}. \bibinfo{publisher}{ACM}, \bibinfo{address}{New York, NY, USA},
  \bibinfo{pages}{113--126}.
\newblock
\urldef\tempurl%
\url{https://doi.org/10.1145/2535838.2535862}
\showDOI{\tempurl}


\bibitem[\protect\citeauthoryear{Angus and Kozen}{Angus and Kozen}{2001}]%
        {AK01a}
\bibfield{author}{\bibinfo{person}{Allegra Angus} {and} \bibinfo{person}{Dexter
  Kozen}.} \bibinfo{year}{2001}\natexlab{}.
\newblock \bibinfo{booktitle}{\emph{Kleene Algebra with Tests and Program
  Schematology}}.
\newblock \bibinfo{type}{{T}echnical {R}eport} TR2001-1844.
  \bibinfo{institution}{Computer Science Department, Cornell University}.
\newblock


\bibitem[\protect\citeauthoryear{Ashcroft and Manna}{Ashcroft and
  Manna}{1972}]%
        {ashcroft.manna:translation}
\bibfield{author}{\bibinfo{person}{Edward~A. Ashcroft} {and}
  \bibinfo{person}{Zohar Manna}.} \bibinfo{year}{1972}\natexlab{}.
\newblock \showarticletitle{The translation of GOTO programs into WHILE
  programs}. In \bibinfo{booktitle}{\emph{Proc. Information Processing
  ({IFIP})}}, Vol.~\bibinfo{volume}{1}. \bibinfo{publisher}{North-Holland},
  \bibinfo{address}{Amsterdam, The Netherlands}, \bibinfo{pages}{250--255}.
\newblock


\bibitem[\protect\citeauthoryear{Backhouse}{Backhouse}{1975}]%
        {backhouse-1975}
\bibfield{author}{\bibinfo{person}{Roland Backhouse}.}
  \bibinfo{year}{1975}\natexlab{}.
\newblock \emph{\bibinfo{title}{Closure algorithms and the star-height problem
  of regular languages}}.
\newblock \bibinfo{thesistype}{Ph.D. Dissertation}. \bibinfo{school}{University
  of London}.
\newblock
\urldef\tempurl%
\url{http://ethos.bl.uk/OrderDetails.do?uin=uk.bl.ethos.448525}
\showURL{%
\tempurl}


\bibitem[\protect\citeauthoryear{Barth and Kozen}{Barth and Kozen}{2002}]%
        {BK02a}
\bibfield{author}{\bibinfo{person}{Adam Barth} {and} \bibinfo{person}{Dexter
  Kozen}.} \bibinfo{year}{2002}\natexlab{}.
\newblock \bibinfo{booktitle}{\emph{Equational Verification of Cache Blocking
  in {LU} Decomposition using {K}leene Algebra with Tests}}.
\newblock \bibinfo{type}{{T}echnical {R}eport} TR2002-1865.
  \bibinfo{institution}{Computer Science Department, Cornell University}.
\newblock


\bibitem[\protect\citeauthoryear{Birkhoff and Bartee}{Birkhoff and
  Bartee}{1970}]%
        {birkhoff-bartee-1970}
\bibfield{author}{\bibinfo{person}{Garrett Birkhoff} {and}
  \bibinfo{person}{Thomas~C. Bartee}.} \bibinfo{year}{1970}\natexlab{}.
\newblock \bibinfo{booktitle}{\emph{Modern applied algebra}}.
\newblock \bibinfo{publisher}{McGraw-Hill}, \bibinfo{address}{New York, NY,
  USA}.
\newblock
\showISBNx{978-0070053816}


\bibitem[\protect\citeauthoryear{B{\"o}hm and Jacopini}{B{\"o}hm and
  Jacopini}{1966}]%
        {boehm.jacopini:flow}
\bibfield{author}{\bibinfo{person}{Corrado B{\"o}hm} {and}
  \bibinfo{person}{Guiseppe Jacopini}.} \bibinfo{year}{1966}\natexlab{}.
\newblock \showarticletitle{Flow Diagrams, {T}uring Machines and Languages with
  only Two Formation Rules}.
\newblock \bibinfo{journal}{\emph{Commun. ACM}} (\bibinfo{date}{May}
  \bibinfo{year}{1966}), \bibinfo{pages}{366--371}.
\newblock
\urldef\tempurl%
\url{https://doi.org/10.1145/355592.365646}
\showDOI{\tempurl}


\bibitem[\protect\citeauthoryear{Bonchi and Pous}{Bonchi and Pous}{2013}]%
        {bonchi-pous-2013}
\bibfield{author}{\bibinfo{person}{Filippo Bonchi} {and}
  \bibinfo{person}{Damien Pous}.} \bibinfo{year}{2013}\natexlab{}.
\newblock \showarticletitle{Checking {NFA} equivalence with bisimulations up to
  congruence}. In \bibinfo{booktitle}{\emph{Proc. Principles of Programming
  Languages ({POPL})}}. \bibinfo{publisher}{ACM}, \bibinfo{address}{New York,
  NY, USA}, \bibinfo{pages}{457--468}.
\newblock
\urldef\tempurl%
\url{https://doi.org/10.1145/2429069.2429124}
\showDOI{\tempurl}


\bibitem[\protect\citeauthoryear{Cohen}{Cohen}{1994a}]%
        {Co94c}
\bibfield{author}{\bibinfo{person}{Ernie Cohen}.}
  \bibinfo{year}{1994}\natexlab{a}.
\newblock \bibinfo{title}{Lazy Caching in {K}leene Algebra}.
\newblock
\newblock
\urldef\tempurl%
\url{http://citeseerx.ist.psu.edu/viewdoc/summary?doi=10.1.1.57.5074}
\showURL{%
\tempurl}


\bibitem[\protect\citeauthoryear{Cohen}{Cohen}{1994b}]%
        {Co94d}
\bibfield{author}{\bibinfo{person}{Ernie Cohen}.}
  \bibinfo{year}{1994}\natexlab{b}.
\newblock \bibinfo{booktitle}{\emph{Using Kleene algebra to reason about
  concurrency control}}.
\newblock \bibinfo{type}{{T}echnical {R}eport}.
  \bibinfo{institution}{Telcordia, Morristown, NJ}.
\newblock


\bibitem[\protect\citeauthoryear{Cohen, Kozen, and Smith}{Cohen
  et~al\mbox{.}}{1996}]%
        {CKS96a}
\bibfield{author}{\bibinfo{person}{Ernie Cohen}, \bibinfo{person}{Dexter
  Kozen}, {and} \bibinfo{person}{Frederick Smith}.}
  \bibinfo{year}{1996}\natexlab{}.
\newblock \bibinfo{booktitle}{\emph{The complexity of {K}leene algebra with
  tests}}.
\newblock \bibinfo{type}{{T}echnical {R}eport} TR96-1598.
  \bibinfo{institution}{Computer Science Department, Cornell University}.
\newblock


\bibitem[\protect\citeauthoryear{Conway}{Conway}{1971}]%
        {conway-1971}
\bibfield{author}{\bibinfo{person}{John~Horton Conway}.}
  \bibinfo{year}{1971}\natexlab{}.
\newblock \bibinfo{booktitle}{\emph{Regular Algebra and Finite Machines}}.
\newblock \bibinfo{publisher}{Chapman and Hall}, \bibinfo{address}{London,
  United Kingdom}.
\newblock


\bibitem[\protect\citeauthoryear{Erosa and Hendren}{Erosa and Hendren}{1994}]%
        {ErosaH94}
\bibfield{author}{\bibinfo{person}{Ana~M. Erosa} {and}
  \bibinfo{person}{Laurie~J. Hendren}.} \bibinfo{year}{1994}\natexlab{}.
\newblock \showarticletitle{Taming Control Flow: {A} Structured Approach to
  Eliminating Goto Statements}. In \bibinfo{booktitle}{\emph{Proc. Computer
  Languages ({ICCL})}}. \bibinfo{publisher}{{IEEE} Computer Society},
  \bibinfo{address}{Los Alamitos, CA, USA}, \bibinfo{pages}{229--240}.
\newblock
\showISBNx{0-8186-5640-9}
\urldef\tempurl%
\url{https://doi.org/10.1109/ICCL.1994.288377}
\showDOI{\tempurl}


\bibitem[\protect\citeauthoryear{Fischer and Ladner}{Fischer and
  Ladner}{1979}]%
        {FL79}
\bibfield{author}{\bibinfo{person}{Michael~J. Fischer} {and}
  \bibinfo{person}{Richard~E. Ladner}.} \bibinfo{year}{1979}\natexlab{}.
\newblock \showarticletitle{Propositional dynamic logic of regular programs}.
\newblock \bibinfo{journal}{\emph{J. Comput. System Sci.}}
  \bibinfo{volume}{18}, \bibinfo{number}{2} (\bibinfo{year}{1979}),
  \bibinfo{pages}{194--211}.
\newblock
\urldef\tempurl%
\url{https://doi.org/10.1016/0022-0000(79)90046-1}
\showDOI{\tempurl}


\bibitem[\protect\citeauthoryear{Foster, Kozen, Mamouras, Reitblatt, and
  Silva}{Foster et~al\mbox{.}}{2016}]%
        {probnetkat-cantor}
\bibfield{author}{\bibinfo{person}{Nate Foster}, \bibinfo{person}{Dexter
  Kozen}, \bibinfo{person}{Konstantinos Mamouras}, \bibinfo{person}{Mark
  Reitblatt}, {and} \bibinfo{person}{Alexandra Silva}.}
  \bibinfo{year}{2016}\natexlab{}.
\newblock \showarticletitle{Probabilistic {NetKAT}}. In
  \bibinfo{booktitle}{\emph{Proc. European Symposium on Programming ({ESOP})}}.
  \bibinfo{publisher}{ACM}, \bibinfo{address}{New York, NY, USA},
  \bibinfo{pages}{282--309}.
\newblock
\urldef\tempurl%
\url{https://doi.org/10.1007/978-3-662-49498-1_12}
\showDOI{\tempurl}


\bibitem[\protect\citeauthoryear{Foster, Kozen, Milano, Silva, and
  Thompson}{Foster et~al\mbox{.}}{2015}]%
        {FKMST15a}
\bibfield{author}{\bibinfo{person}{Nate Foster}, \bibinfo{person}{Dexter
  Kozen}, \bibinfo{person}{Matthew Milano}, \bibinfo{person}{Alexandra Silva},
  {and} \bibinfo{person}{Laure Thompson}.} \bibinfo{year}{2015}\natexlab{}.
\newblock \showarticletitle{A Coalgebraic Decision Procedure for {NetKAT}}. In
  \bibinfo{booktitle}{\emph{Proc. Principles of Programming Languages
  ({POPL})}}. \bibinfo{publisher}{ACM}, \bibinfo{address}{New York, NY, USA},
  \bibinfo{pages}{343--355}.
\newblock
\urldef\tempurl%
\url{https://doi.org/10.1145/2676726.2677011}
\showDOI{\tempurl}


\bibitem[\protect\citeauthoryear{Garland and Luckham}{Garland and
  Luckham}{1973}]%
        {GarlandLuckham73}
\bibfield{author}{\bibinfo{person}{Stephen~J. Garland} {and}
  \bibinfo{person}{David~C. Luckham}.} \bibinfo{year}{1973}\natexlab{}.
\newblock \showarticletitle{Program schemes, recursion schemes, and formal
  languages}.
\newblock \bibinfo{journal}{\emph{J. Comput. System Sci.}} \bibinfo{volume}{7},
  \bibinfo{number}{2} (\bibinfo{year}{1973}), \bibinfo{pages}{119 -- 160}.
\newblock
\showISSN{0022-0000}
\urldef\tempurl%
\url{https://doi.org/10.1016/S0022-0000(73)80040-6}
\showDOI{\tempurl}


\bibitem[\protect\citeauthoryear{Giry}{Giry}{1982}]%
        {giry1982categorical}
\bibfield{author}{\bibinfo{person}{Michele Giry}.}
  \bibinfo{year}{1982}\natexlab{}.
\newblock \showarticletitle{A categorical approach to probability theory}.
\newblock In \bibinfo{booktitle}{\emph{Categorical aspects of topology and
  analysis}}. \bibinfo{publisher}{Springer}, \bibinfo{address}{Berlin,
  Heidelberg}, \bibinfo{pages}{68--85}.
\newblock
\urldef\tempurl%
\url{https://doi.org/10.1007/BFb0092872}
\showDOI{\tempurl}


\bibitem[\protect\citeauthoryear{Hendren, Donawa, Emami, Gao, Justiani, and
  Sridharan}{Hendren et~al\mbox{.}}{1992}]%
        {mccat}
\bibfield{author}{\bibinfo{person}{Laurie~J. Hendren}, \bibinfo{person}{C.
  Donawa}, \bibinfo{person}{Maryam Emami}, \bibinfo{person}{Guang~R. Gao},
  \bibinfo{person}{Justiani}, {and} \bibinfo{person}{B. Sridharan}.}
  \bibinfo{year}{1992}\natexlab{}.
\newblock \showarticletitle{Designing the McCAT Compiler Based on a Family of
  Structured Intermediate Representations}. In \bibinfo{booktitle}{\emph{Proc.
  Languages and Compilers for Parallel Computing ({LCPC})}}.
  \bibinfo{publisher}{Springer}, \bibinfo{address}{Berlin, Heidelberg},
  \bibinfo{pages}{406--420}.
\newblock
\urldef\tempurl%
\url{https://doi.org/10.1007/3-540-57502-2_61}
\showDOI{\tempurl}


\bibitem[\protect\citeauthoryear{Hopcroft and Karp}{Hopcroft and Karp}{1971}]%
        {HopcroftKarp71}
\bibfield{author}{\bibinfo{person}{John~E. Hopcroft} {and}
  \bibinfo{person}{Richard~M. Karp}.} \bibinfo{year}{1971}\natexlab{}.
\newblock \bibinfo{booktitle}{\emph{A linear algorithm for testing equivalence
  of finite automata}}.
\newblock \bibinfo{type}{{T}echnical {R}eport} TR 71-114.
  \bibinfo{institution}{Cornell University}.
\newblock


\bibitem[\protect\citeauthoryear{Ianov}{Ianov}{1960}]%
        {Ianov60}
\bibfield{author}{\bibinfo{person}{I. Ianov}.} \bibinfo{year}{1960}\natexlab{}.
\newblock \showarticletitle{The Logical Schemes of Algorithms}.
\newblock \bibinfo{journal}{\emph{Problems of Cybernetics}}
  (\bibinfo{year}{1960}), \bibinfo{pages}{82--140}.
\newblock


\bibitem[\protect\citeauthoryear{Kaplan}{Kaplan}{1969}]%
        {Kaplan69}
\bibfield{author}{\bibinfo{person}{Donald~M. Kaplan}.}
  \bibinfo{year}{1969}\natexlab{}.
\newblock \showarticletitle{Regular Expressions and the Equivalence of
  Programs}.
\newblock \bibinfo{journal}{\emph{J. Comput. System Sci.}}  \bibinfo{volume}{3}
  (\bibinfo{year}{1969}), \bibinfo{pages}{361--386}.
\newblock
\urldef\tempurl%
\url{https://doi.org/10.1016/S0022-0000(69)80027-9}
\showDOI{\tempurl}


\bibitem[\protect\citeauthoryear{Kleene}{Kleene}{1956}]%
        {kleene-1956}
\bibfield{author}{\bibinfo{person}{Stephen~C. Kleene}.}
  \bibinfo{year}{1956}\natexlab{}.
\newblock \showarticletitle{Representation of Events in Nerve Nets and Finite
  Automata}.
\newblock \bibinfo{journal}{\emph{Automata Studies}} (\bibinfo{year}{1956}),
  \bibinfo{pages}{3--41}.
\newblock


\bibitem[\protect\citeauthoryear{Kosaraju}{Kosaraju}{1973}]%
        {Kosaraju73}
\bibfield{author}{\bibinfo{person}{S.~Rao Kosaraju}.}
  \bibinfo{year}{1973}\natexlab{}.
\newblock \showarticletitle{Analysis of structured programs}. In
  \bibinfo{booktitle}{\emph{Proc. Theory of Computing ({STOC})}}.
  \bibinfo{publisher}{ACM}, \bibinfo{address}{New York, NY, USA},
  \bibinfo{pages}{240--252}.
\newblock
\urldef\tempurl%
\url{https://doi.org/10.1145/800125.804055}
\showDOI{\tempurl}


\bibitem[\protect\citeauthoryear{Kozen}{Kozen}{1985}]%
        {K85a}
\bibfield{author}{\bibinfo{person}{Dexter Kozen}.}
  \bibinfo{year}{1985}\natexlab{}.
\newblock \showarticletitle{A probabilistic {{\em PDL}}}.
\newblock \bibinfo{journal}{\emph{J. Comput. System Sci.}}
  \bibinfo{volume}{30}, \bibinfo{number}{2} (\bibinfo{date}{April}
  \bibinfo{year}{1985}), \bibinfo{pages}{162--178}.
\newblock
\urldef\tempurl%
\url{https://doi.org/10.1016/0022-0000(85)90012-1}
\showDOI{\tempurl}


\bibitem[\protect\citeauthoryear{Kozen}{Kozen}{1996}]%
        {K96b}
\bibfield{author}{\bibinfo{person}{Dexter Kozen}.}
  \bibinfo{year}{1996}\natexlab{}.
\newblock \showarticletitle{Kleene algebra with tests and commutativity
  conditions}. In \bibinfo{booktitle}{\emph{Proc. Tools and Algorithms for the
  Construction and Analysis of Systems ({TACAS})}}
  \emph{(\bibinfo{series}{Lecture Notes in Computer Science})},
  Vol.~\bibinfo{volume}{1055}. \bibinfo{publisher}{Springer-Verlag},
  \bibinfo{address}{Passau, Germany}, \bibinfo{pages}{14--33}.
\newblock
\urldef\tempurl%
\url{https://doi.org/10.1007/3-540-61042-1_35}
\showDOI{\tempurl}


\bibitem[\protect\citeauthoryear{Kozen}{Kozen}{1997}]%
        {K97c}
\bibfield{author}{\bibinfo{person}{Dexter Kozen}.}
  \bibinfo{year}{1997}\natexlab{}.
\newblock \showarticletitle{Kleene algebra with tests}.
\newblock \bibinfo{journal}{\emph{ACM Transactions on Programming Languages and
  Systems (TOPLAS)}} \bibinfo{volume}{19}, \bibinfo{number}{3}
  (\bibinfo{date}{May} \bibinfo{year}{1997}), \bibinfo{pages}{427--443}.
\newblock
\urldef\tempurl%
\url{https://doi.org/10.1145/256167.256195}
\showDOI{\tempurl}


\bibitem[\protect\citeauthoryear{Kozen}{Kozen}{2003}]%
        {K03a}
\bibfield{author}{\bibinfo{person}{Dexter Kozen}.}
  \bibinfo{year}{2003}\natexlab{}.
\newblock \showarticletitle{Automata on Guarded Strings and Applications}.
\newblock \bibinfo{journal}{\emph{Mat{\'e}matica Contempor{\^a}nea}}
  \bibinfo{volume}{24} (\bibinfo{year}{2003}), \bibinfo{pages}{117--139}.
\newblock


\bibitem[\protect\citeauthoryear{Kozen}{Kozen}{2008}]%
        {K08b}
\bibfield{author}{\bibinfo{person}{Dexter Kozen}.}
  \bibinfo{year}{2008}\natexlab{}.
\newblock \showarticletitle{Nonlocal Flow of Control and {K}leene Algebra with
  Tests}. In \bibinfo{booktitle}{\emph{Proc. Logic in Computer Science
  ({LICS})}}. \bibinfo{publisher}{IEEE}, \bibinfo{address}{New York, NY, USA},
  \bibinfo{pages}{105--117}.
\newblock
\urldef\tempurl%
\url{https://doi.org/10.1109/LICS.2008.32}
\showDOI{\tempurl}


\bibitem[\protect\citeauthoryear{Kozen and Patron}{Kozen and Patron}{2000}]%
        {KP00}
\bibfield{author}{\bibinfo{person}{Dexter Kozen} {and}
  \bibinfo{person}{Maria-Cristina Patron}.} \bibinfo{year}{2000}\natexlab{}.
\newblock \showarticletitle{Certification of compiler optimizations using
  {K}leene algebra with tests}. In \bibinfo{booktitle}{\emph{Proc.
  Computational Logic ({CL})}} \emph{(\bibinfo{series}{Lecture Notes in
  Artificial Intelligence})}, Vol.~\bibinfo{volume}{1861}.
  \bibinfo{publisher}{Springer-Verlag}, \bibinfo{address}{London, United
  Kingdom}, \bibinfo{pages}{568--582}.
\newblock
\urldef\tempurl%
\url{https://doi.org/10.1007/3-540-44957-4_38}
\showDOI{\tempurl}


\bibitem[\protect\citeauthoryear{Kozen and Smith}{Kozen and Smith}{1996}]%
        {KS96a}
\bibfield{author}{\bibinfo{person}{Dexter Kozen} {and}
  \bibinfo{person}{Frederick Smith}.} \bibinfo{year}{1996}\natexlab{}.
\newblock \showarticletitle{Kleene algebra with tests: Completeness and
  decidability}. In \bibinfo{booktitle}{\emph{Proc. Computer Science Logic
  ({CSL})}} \emph{(\bibinfo{series}{Lecture Notes in Computer Science})},
  Vol.~\bibinfo{volume}{1258}. \bibinfo{publisher}{Springer-Verlag},
  \bibinfo{address}{Utrecht, The Netherlands}, \bibinfo{pages}{244--259}.
\newblock
\urldef\tempurl%
\url{https://doi.org/10.1007/3-540-63172-0_43}
\showDOI{\tempurl}


\bibitem[\protect\citeauthoryear{Kozen and Tseng}{Kozen and Tseng}{2008}]%
        {KT08a}
\bibfield{author}{\bibinfo{person}{Dexter Kozen} {and}
  \bibinfo{person}{Wei-Lung~(Dustin) Tseng}.} \bibinfo{year}{2008}\natexlab{}.
\newblock \showarticletitle{The {B\"o}hm-{J}acopini Theorem is False,
  Propositionally}. In \bibinfo{booktitle}{\emph{Proc. Mathematics of Program
  Construction ({MPC})}} \emph{(\bibinfo{series}{Lecture Notes in Computer
  Science})}, Vol.~\bibinfo{volume}{5133}. \bibinfo{publisher}{Springer},
  \bibinfo{address}{Berlin, Heidelberg}, \bibinfo{pages}{177--192}.
\newblock
\urldef\tempurl%
\url{https://doi.org/10.1007/978-3-540-70594-9_11}
\showDOI{\tempurl}


\bibitem[\protect\citeauthoryear{Luckham, Park, and Paterson}{Luckham
  et~al\mbox{.}}{1970}]%
        {LuckhamParkPaterson70}
\bibfield{author}{\bibinfo{person}{David~C. Luckham}, \bibinfo{person}{David
  M.~R. Park}, {and} \bibinfo{person}{Michael~S. Paterson}.}
  \bibinfo{year}{1970}\natexlab{}.
\newblock \showarticletitle{On formalised computer programs}.
\newblock \bibinfo{journal}{\emph{J. Comput. System Sci.}} \bibinfo{volume}{4},
  \bibinfo{number}{3} (\bibinfo{year}{1970}), \bibinfo{pages}{220--249}.
\newblock
\urldef\tempurl%
\url{https://doi.org/10.1016/S0022-0000(70)80022-8}
\showDOI{\tempurl}


\bibitem[\protect\citeauthoryear{Mislove}{Mislove}{2006}]%
        {MISLOVE2006261}
\bibfield{author}{\bibinfo{person}{Michael~W. Mislove}.}
  \bibinfo{year}{2006}\natexlab{}.
\newblock \showarticletitle{On Combining Probability and Nondeterminism}.
\newblock \bibinfo{journal}{\emph{Electronic Notes in Theoretical Computer
  Science}}  \bibinfo{volume}{162} (\bibinfo{year}{2006}), \bibinfo{pages}{261
  -- 265}.
\newblock
\showISSN{1571-0661}
\urldef\tempurl%
\url{https://doi.org/10.1016/j.entcs.2005.12.113}
\showDOI{\tempurl}
\newblock
\shownote{Proc. Algebraic Process Calculi ({APC}).}


\bibitem[\protect\citeauthoryear{Morris, Gray, and Filman}{Morris
  et~al\mbox{.}}{1997}]%
        {Morris97}
\bibfield{author}{\bibinfo{person}{Paul~H. Morris}, \bibinfo{person}{Ronald~A.
  Gray}, {and} \bibinfo{person}{Robert~E. Filman}.}
  \bibinfo{year}{1997}\natexlab{}.
\newblock \showarticletitle{GOTO Removal Based on Regular Expressions}.
\newblock \bibinfo{journal}{\emph{Journal of Software Maintenance: Research and
  Practice}} \bibinfo{volume}{9}, \bibinfo{number}{1} (\bibinfo{year}{1997}),
  \bibinfo{pages}{47--66}.
\newblock
\urldef\tempurl%
\url{https://doi.org/10.1002/(SICI)1096-908X(199701)9:1<47::AID-SMR142>3.0.CO;2-V}
\showDOI{\tempurl}


\bibitem[\protect\citeauthoryear{Nelson and Oppen}{Nelson and Oppen}{1979}]%
        {NelsonO79}
\bibfield{author}{\bibinfo{person}{Greg Nelson} {and} \bibinfo{person}{Derek~C.
  Oppen}.} \bibinfo{year}{1979}\natexlab{}.
\newblock \showarticletitle{Simplification by Cooperating Decision Procedures}.
\newblock \bibinfo{journal}{\emph{ACM Transactions on Programming Languages and
  Systems (TOPLAS)}} \bibinfo{volume}{1}, \bibinfo{number}{2}
  (\bibinfo{year}{1979}), \bibinfo{pages}{245--257}.
\newblock
\urldef\tempurl%
\url{https://doi.org/10.1145/357073.357079}
\showDOI{\tempurl}


\bibitem[\protect\citeauthoryear{Oulsnam}{Oulsnam}{1982}]%
        {oulsnam:unraveling}
\bibfield{author}{\bibinfo{person}{G. Oulsnam}.}
  \bibinfo{year}{1982}\natexlab{}.
\newblock \showarticletitle{Unraveling unstructured programs}.
\newblock \bibinfo{journal}{\emph{Comput. J.}} \bibinfo{volume}{25},
  \bibinfo{number}{3} (\bibinfo{year}{1982}), \bibinfo{pages}{379--387}.
\newblock
\urldef\tempurl%
\url{https://doi.org/10.1093/comjnl/25.3.379}
\showDOI{\tempurl}


\bibitem[\protect\citeauthoryear{Paterson and Hewitt}{Paterson and
  Hewitt}{1970}]%
        {PatersonHewitt70}
\bibfield{author}{\bibinfo{person}{Michael~S. Paterson} {and}
  \bibinfo{person}{Carl~E. Hewitt}.} \bibinfo{year}{1970}\natexlab{}.
\newblock \showarticletitle{Comparative schematology}. In
  \bibinfo{booktitle}{\emph{Record of Project MAC Conference on Concurrent
  Systems and Parallel Computation}}. \bibinfo{publisher}{ACM},
  \bibinfo{address}{New York, NY, USA}, \bibinfo{pages}{119--127}.
\newblock


\bibitem[\protect\citeauthoryear{Peterson, Kasami, and Tokura}{Peterson
  et~al\mbox{.}}{1973}]%
        {Peterson73}
\bibfield{author}{\bibinfo{person}{W.~Wesley Peterson}, \bibinfo{person}{Tadao
  Kasami}, {and} \bibinfo{person}{Nobuki Tokura}.}
  \bibinfo{year}{1973}\natexlab{}.
\newblock \showarticletitle{On the Capabilities of while, repeat, and exit
  Statements}.
\newblock \bibinfo{journal}{\emph{Commun. ACM}} \bibinfo{volume}{16},
  \bibinfo{number}{8} (\bibinfo{year}{1973}), \bibinfo{pages}{503--512}.
\newblock
\urldef\tempurl%
\url{https://doi.org/10.1145/355609.362337}
\showDOI{\tempurl}


\bibitem[\protect\citeauthoryear{Pous}{Pous}{2015}]%
        {pous-2014}
\bibfield{author}{\bibinfo{person}{Damien Pous}.}
  \bibinfo{year}{2015}\natexlab{}.
\newblock \showarticletitle{Symbolic Algorithms for Language Equivalence and
  {K}leene Algebra with Tests}. In \bibinfo{booktitle}{\emph{Proc. Principles
  of Programming Languages ({POPL})}}. \bibinfo{publisher}{ACM},
  \bibinfo{address}{New York, NY, USA}, \bibinfo{pages}{357--368}.
\newblock
\urldef\tempurl%
\url{https://doi.org/10.1145/2676726.2677007}
\showDOI{\tempurl}


\bibitem[\protect\citeauthoryear{Ramshaw}{Ramshaw}{1988}]%
        {ramshaw:eliminating}
\bibfield{author}{\bibinfo{person}{Lyle Ramshaw}.}
  \bibinfo{year}{1988}\natexlab{}.
\newblock \showarticletitle{Eliminating goto's while preserving program
  structure}.
\newblock \bibinfo{journal}{\emph{J. ACM}} \bibinfo{volume}{35},
  \bibinfo{number}{4} (\bibinfo{year}{1988}), \bibinfo{pages}{893--920}.
\newblock
\urldef\tempurl%
\url{https://doi.org/10.1145/48014.48021}
\showDOI{\tempurl}


\bibitem[\protect\citeauthoryear{Rutledge}{Rutledge}{1964}]%
        {Rutledge64}
\bibfield{author}{\bibinfo{person}{Joseph~D. Rutledge}.}
  \bibinfo{year}{1964}\natexlab{}.
\newblock \showarticletitle{On Ianov's Program Schemata}.
\newblock \bibinfo{journal}{\emph{J. ACM}} \bibinfo{volume}{11},
  \bibinfo{number}{1} (\bibinfo{date}{Jan.} \bibinfo{year}{1964}),
  \bibinfo{pages}{1--9}.
\newblock
\showISSN{0004-5411}
\urldef\tempurl%
\url{https://doi.org/10.1145/321203.321204}
\showDOI{\tempurl}


\bibitem[\protect\citeauthoryear{Rutten}{Rutten}{2000}]%
        {rutten-2000}
\bibfield{author}{\bibinfo{person}{Jan J. M.~M. Rutten}.}
  \bibinfo{year}{2000}\natexlab{}.
\newblock \showarticletitle{Universal coalgebra: a theory of systems}.
\newblock \bibinfo{journal}{\emph{Theoretical Computer Science}}
  \bibinfo{volume}{249}, \bibinfo{number}{1} (\bibinfo{year}{2000}),
  \bibinfo{pages}{3--80}.
\newblock
\urldef\tempurl%
\url{https://doi.org/10.1016/S0304-3975(00)00056-6}
\showDOI{\tempurl}


\bibitem[\protect\citeauthoryear{Salomaa}{Salomaa}{1966}]%
        {Salomaa66}
\bibfield{author}{\bibinfo{person}{Arto Salomaa}.}
  \bibinfo{year}{1966}\natexlab{}.
\newblock \showarticletitle{Two complete axiom systems for the algebra of
  regular events}.
\newblock \bibinfo{journal}{\emph{J. ACM}} \bibinfo{volume}{13},
  \bibinfo{number}{1} (\bibinfo{date}{January} \bibinfo{year}{1966}),
  \bibinfo{pages}{158--169}.
\newblock


\bibitem[\protect\citeauthoryear{Shepherdson and Sturgis}{Shepherdson and
  Sturgis}{1963}]%
        {ShepherdsonSturgis63}
\bibfield{author}{\bibinfo{person}{John~C. Shepherdson} {and}
  \bibinfo{person}{Howard~E. Sturgis}.} \bibinfo{year}{1963}\natexlab{}.
\newblock \showarticletitle{Computability of Recursive Functions}.
\newblock \bibinfo{journal}{\emph{J. ACM}} \bibinfo{volume}{10},
  \bibinfo{number}{2} (\bibinfo{year}{1963}), \bibinfo{pages}{217--255}.
\newblock
\urldef\tempurl%
\url{https://doi.org/10.1145/321160.321170}
\showDOI{\tempurl}


\bibitem[\protect\citeauthoryear{Silva}{Silva}{2010}]%
        {silva-thesis}
\bibfield{author}{\bibinfo{person}{Alexandra Silva}.}
  \bibinfo{year}{2010}\natexlab{}.
\newblock \emph{\bibinfo{title}{Kleene Coalgebra}}.
\newblock \bibinfo{thesistype}{Ph.D. Dissertation}. \bibinfo{school}{Radboud
  University}.
\newblock


\bibitem[\protect\citeauthoryear{Smolka, Kumar, Kahn, Foster, Hsu, Kozen, and
  Silva}{Smolka et~al\mbox{.}}{2019}]%
        {mcnetkat}
\bibfield{author}{\bibinfo{person}{Steffen Smolka}, \bibinfo{person}{Praveen
  Kumar}, \bibinfo{person}{David~M. Kahn}, \bibinfo{person}{Nate Foster},
  \bibinfo{person}{Justin Hsu}, \bibinfo{person}{Dexter Kozen}, {and}
  \bibinfo{person}{Alexandra Silva}.} \bibinfo{year}{2019}\natexlab{}.
\newblock \showarticletitle{Scalable verification of probabilistic networks}.
  In \bibinfo{booktitle}{\emph{Proc. Programming Language Design and
  Implementation ({PLDI})}}. \bibinfo{publisher}{ACM}, \bibinfo{address}{New
  York, NY, USA}, \bibinfo{pages}{190--203}.
\newblock
\urldef\tempurl%
\url{https://doi.org/10.1145/3314221.3314639}
\showDOI{\tempurl}


\bibitem[\protect\citeauthoryear{Tarjan}{Tarjan}{1975}]%
        {Tarjan75}
\bibfield{author}{\bibinfo{person}{Robert~Endre Tarjan}.}
  \bibinfo{year}{1975}\natexlab{}.
\newblock \showarticletitle{Efficiency of a Good But Not Linear Set Union
  Algorithm}.
\newblock \bibinfo{journal}{\emph{J. ACM}} \bibinfo{volume}{22},
  \bibinfo{number}{2} (\bibinfo{year}{1975}), \bibinfo{pages}{215--225}.
\newblock
\urldef\tempurl%
\url{https://doi.org/10.1145/321879.321884}
\showDOI{\tempurl}


\bibitem[\protect\citeauthoryear{Thompson}{Thompson}{1968}]%
        {thompson-1968}
\bibfield{author}{\bibinfo{person}{Ken Thompson}.}
  \bibinfo{year}{1968}\natexlab{}.
\newblock \showarticletitle{Regular Expression Search Algorithm}.
\newblock \bibinfo{journal}{\emph{Commun. ACM}} \bibinfo{volume}{11},
  \bibinfo{number}{6} (\bibinfo{year}{1968}), \bibinfo{pages}{419--422}.
\newblock
\urldef\tempurl%
\url{https://doi.org/10.1145/363347.363387}
\showDOI{\tempurl}


\bibitem[\protect\citeauthoryear{Varacca and Winskel}{Varacca and
  Winskel}{2006}]%
        {VaraccaWinskel06}
\bibfield{author}{\bibinfo{person}{Daniele Varacca} {and}
  \bibinfo{person}{Glynn Winskel}.} \bibinfo{year}{2006}\natexlab{}.
\newblock \showarticletitle{Distributing probability over non-determinism}.
\newblock \bibinfo{journal}{\emph{Mathematical Structures in Computer Science}}
  \bibinfo{volume}{16}, \bibinfo{number}{1} (\bibinfo{year}{2006}),
  \bibinfo{pages}{87--113}.
\newblock
\urldef\tempurl%
\url{https://doi.org/10.1017/S0960129505005074}
\showDOI{\tempurl}


\bibitem[\protect\citeauthoryear{Williams and Ossher}{Williams and
  Ossher}{1978}]%
        {williams.ossher:conversion}
\bibfield{author}{\bibinfo{person}{M. Williams} {and} \bibinfo{person}{H.
  Ossher}.} \bibinfo{year}{1978}\natexlab{}.
\newblock \showarticletitle{Conversion of unstructured flow diagrams into
  structured form}.
\newblock \bibinfo{journal}{\emph{Comput. J.}} \bibinfo{volume}{21},
  \bibinfo{number}{2} (\bibinfo{year}{1978}), \bibinfo{pages}{161--167}.
\newblock
\urldef\tempurl%
\url{https://doi.org/10.1093/comjnl/21.2.161}
\showDOI{\tempurl}


\end{thebibliography}

\extended{\begin{appendices}
\crefalias{section}{appsec}

\section{Omitted Proofs}%
\label{apx:omitted}

\soundcompleteforrel*
\begin{proof}
Recall from \Cref{rem:kat-connection} that there is a language-preserving
map $\phi$ from GKAT to KAT expressions.
As with  GKAT's language model, GKAT's relational model is inherited from
KAT;\@ that is, $\rden{i}{-} = \krden{i}{-} \circ \phi$.
Thus, the claim follows by \citet{KS96a}, who showed the equivalent of
\Cref{thm:sound-complete-for-rel} for KAT:\@ \[
  \KK\den{e} = \KK\den{f} \iff \forall i.\, \krden{i}{e} = \krden{i}{f}.
\qedhere
\]
\end{proof}

\begin{lemma}%
\label{lem:pmodel-well-defined}
$\pden{i}{e}$ is a well-defined subprobability kernel. In particular,
$\pden{i}{{(e +_b 1)}^n \cdot \bneg{b}}(\sigma)(\sigma')$ increases monotonically
in $n$ and the limit for $n\to\infty$ exists.
\end{lemma}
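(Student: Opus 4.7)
The plan is to argue by structural induction on $e$ that $\pden{i}{e}$ is a well-defined sub-Markov kernel. The base cases and the composite cases other than the loop go through by routine calculations: for $p \in \Sigma$ and $b \in \Bexp$ the definition directly yields a sub-Markov kernel; for $e\cdot f$ we use that Kleisli composition preserves sub-Markov kernels since $\sum_{\sigma'}\sum_{\sigma''}\pden{i}{e}(\sigma)(\sigma'')\pden{i}{f}(\sigma'')(\sigma') \leq \sum_{\sigma''}\pden{i}{e}(\sigma)(\sigma'') \leq 1$; and for $e +_b f$ we use that on any state $\sigma$ at most one of the two indicator functions is nonzero, so the total mass stays $\leq 1$. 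The only real work is the loop case, which is exactly what the lemma highlights.

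For the loop case, abbreviate $f_n(\sigma)(\sigma') \defeq \pden{i}{{(e +_b 1)}^n \cdot \bneg{b}}(\sigma)(\sigma')$. I would first show by induction on $n$ that $f_n(\sigma)(\sigma') \leq f_{n+1}(\sigma)(\sigma')$ for all $\sigma, \sigma'$. In the base case ($n = 0$ versus $n = 1$), I split on whether $\sigma$ satisfies $b$: if $\sigma \models \bneg{b}$ then $(e +_b 1)$ acts as the identity kernel $\delta_\sigma$, so $f_0(\sigma) = f_1(\sigma)$; if $\sigma \models b$, then $f_0(\sigma)(\sigma') = 0$, which is clearly dominated by $f_1(\sigma)(\sigma')$. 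For the inductive step, using the identity $f_{n+1}(\sigma)(\sigma') = \sum_{\sigma''}\pden{i}{e +_b 1}(\sigma)(\sigma'') \cdot f_n(\sigma'')(\sigma')$ and the analogous one for $f_n$, monotonicity propagates summand-wise from the induction hypothesis $f_n \geq f_{n-1}$.

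Once monotonicity is established, existence of the limit follows from boundedness: by the inductive hypothesis applied to the subterms, each ${(e +_b 1)}^n \cdot \bneg{b}$ is a sub-Markov kernel, so $\sum_{\sigma'} f_n(\sigma)(\sigma') \leq 1$, and in particular each $f_n(\sigma)(\sigma')$ lies in $[0,1]$. Hence $\lim_{n \to \infty} f_n(\sigma)(\sigma')$ exists as a monotone bounded sequence of reals. To conclude that $\pden{i}{e^{(b)}}$ is itself a sub-Markov kernel, I apply the monotone convergence theorem to swap limit and sum:
\[
\sum_{\sigma'} \pden{i}{e^{(b)}}(\sigma)(\sigma') \;=\; \sum_{\sigma'} \lim_{n \to \infty} f_n(\sigma)(\sigma') \;=\; \lim_{n \to \infty} \sum_{\sigma'} f_n(\sigma)(\sigma') \;\leq\; 1.
\]

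The main obstacle is the monotonicity step. The subtlety is that the sequence $f_n$ is not obviously monotone from the syntactic form alone, since the extra factor of $(e +_b 1)$ on the left could a priori decrease the mass; the argument works only because once $b$ fails the extra factor acts as the identity on the supporting distribution, and once $b$ holds the base case contribution is zero and hence trivially dominated. Making this explicit cleanly requires the case split on $\sigma \models b$ described above. Everything else (applying Fubini/Tonelli, checking sub-Markov bounds on the non-loop constructors) is routine.
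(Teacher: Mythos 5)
Your proof is correct, and at the top level it matches the paper's strategy: reduce everything to the loop case, show that $F_n \defeq \pden{i}{{(e+_b 1)}^n\cdot\bneg{b}}(\sigma)(\sigma')$ is monotone and bounded, and pass to the limit. The difference lies in how monotonicity is established. The paper gives a direct, non-inductive argument: it peels the \emph{last} factor, writing $F_{n+1}$ as $\sum_{\sigma''}\pden{i}{{(e+_b1)}^{n}}(\sigma)(\sigma'')\cdot\pden{i}{e+_b1}(\sigma'')(\sigma')$ (after dropping the trailing test, which is harmless once $F_n>0$ forces $\sigma'\in\sat^\dagger(\bneg{b})$), and then keeps only the diagonal summand $\sigma''=\sigma'$, where $e+_b1$ acts as the Dirac kernel. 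You peel the \emph{first} factor instead and run an induction on $n$, pushing the same key observation (that $e+_b1$ is the identity kernel on $\bneg{b}$-states) into the base case via the split on whether $\sigma$ satisfies $b$; the inductive step is then a summand-wise comparison. Both hinge on exactly the same fact applied at opposite ends of the composition: the paper's version avoids the induction on $n$, while yours avoids reasoning about the target state $\sigma'$. Two minor points of care. The paper organizes the outer induction as a well-founded induction on the order $\prec$ generated by the subterm order plus ${(e+_b1)}^n\cdot\bneg{b}\prec e^{(b)}$, precisely because ${(e+_b1)}^n\cdot\bneg{b}$ is not a subterm of $e^{(b)}$; in your structural induction you should say explicitly that its sub-Markov property follows from the hypothesis for $e$ together with closure of sub-Markov kernels under the guarded-sum and composition clauses (your ``routine'' cases), not from the IH applied to that expression directly --- a presentational point, not a gap. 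Finally, your explicit monotone-convergence step bounding $\sum_{\sigma'}\pden{i}{e^{(b)}}(\sigma)(\sigma')$ is needed for the sub-Markov claim and is left implicit in the paper, so it is worth keeping.
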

\begin{proof}
We begin by showing the first claim by well-founded induction on $\prec$,
the smallest partial order subsuming the subterm order and satisfying \[
  {(e +_b 1)}^n \cdot \bneg{b} \prec e^{(b)}
\] for all $e, b, n$.
The claim is obvious except when $e = f^{(b)}$.
In that case, we have by induction hypothesis that
$F_n \defeq \pden{i}{{(f +_b 1)}^n \cdot \bneg{b}}(\sigma)(\sigma')$ is well
defined and bounded above by $1$ for all $n$.
To establish that $\lim_{n \to \infty} F_n$ exist and is also bounded above by
$1$, it then suffices to show the claim that $F_n$ increases monotonically
in $n$.

If $F_n=0$ then $F_{n+1} \geq F_n$ holds trivially, so assume $F_n > 0$.
This implies that $\sigma' \in \sat^\dagger(\bneg{b})$. Thus
\begin{align*}
F_{n+1}
&= \pden{i}{{(f +_b 1)}^{n+1} \cdot \bneg{b}}(\sigma)(\sigma')
  \tag{def.}\\
&= \pden{i}{{(f +_b 1)}^{n+1}}(\sigma)(\sigma')
  \tag{$\sigma' \in \sat^\dagger(\bneg{b})$}\\ 
&= {\textstyle\sum_{\sigma''}} \pden{i}{{(f +_b 1)}^{n}}(\sigma)(\sigma'') \cdot
  \pden{i}{f+_b 1}(\sigma'')(\sigma')
  \tag{def.}\\
&\geq \pden{i}{{(f +_b 1)}^{n}}(\sigma)(\sigma') \cdot \pden{i}{f+_b 1}(\sigma')(\sigma')
  \tag{nonnegativity}\\
&= \pden{i}{{(f +_b 1)}^{n}}(\sigma)(\sigma')
  \tag{$\sigma' \in \sat^\dagger(\bneg{b})$}\\ 
&= F_n
&&&&\qedhere
\end{align*}
\end{proof}

\soundcompleteforprob*
\begin{proof}
By mutual implication.
\begin{itemize}
\item[$\Rightarrow$:]
  For soundness, we will define a map $
    \kappa_i \colon \Gs \to \State \to \Dist(\State)
  $ that interprets guarded strings as sub-Markov kernels, and
  lift it to languages via pointwise summation: \[
    \kappa_i(L) \defeq \sum_{w \in L} \kappa_i(w)
  \]
  To establish the claim, we will then show the following equality:
  \begin{equation}
  \label{eq:prob-interp-sound}
    \pden{i}{-} = \kappa_i \circ \den{-}
  \end{equation}

  We define $\kappa_i \colon \Gs \to \State \to \Dist(\State)$ inductively
  as follows:
  \begin{align*}
    \kappa_i(\alpha)(\sigma) &\defeq
      [\sigma \in \sat^\dagger(\alpha)] \cdot \delta_\sigma\\
    \kappa_i(\alpha p w)(\sigma)(\sigma') &\defeq
      [\sigma \in \sat^\dagger(\alpha)] \cdot
      \sum_{\sigma''} \eval(p)(\sigma)(\sigma'') \cdot
      \kappa_i(w)(\sigma'')(\sigma)
  \end{align*}
  To prove \Cref{eq:prob-interp-sound}, it suffices to establish the following equations:
  \begin{align}\label{eq:4-1}
    \kappa_i(\den{p}) &=
      \eval(p)\\\label{eq:4-2}
    \kappa_i(\den{b})(\sigma) &=
      [\sigma \in \sat^\dagger(b)] \cdot \delta_{\sigma}\\\label{eq:4-3}
    \kappa_i(\den{e \cdot f})(\sigma)(\sigma') &=
      \sum_{\sigma''} \kappa_i(\den{e})(\sigma)(\sigma'') \cdot
      \kappa_i(\den{f})(\sigma'')(\sigma')\\\label{eq:4-4}
    \kappa_i(\den{e +_b f})(\sigma) &=
      [\sigma \in \sat^\dagger(b)] \cdot \kappa_i(\den{e})(\sigma)
      \mathrlap{{}+{}}\phantom{{}={}}  [\sigma \in \sat^\dagger(\bneg{b})] \cdot \kappa_i(\den{f})(\sigma)
  \end{align}
  From there, \Cref{eq:prob-interp-sound} follows by a straightforward
  well-founded induction on $\prec$, the partial from the proof of
  \Cref{lem:pmodel-well-defined}.

  For \Cref{eq:4-1}, we have
  \begin{align*}
    \kappa_i(\den{p})(\sigma)(\sigma')
    &= \sum_{\alpha,\beta} \kappa_i(\alpha p \beta)(\sigma)(\sigma')\\
    &= \sum_{\alpha,\beta,\sigma''} [\sigma \in \sat^\dagger(\alpha)] \cdot
      \eval(p)(\sigma)(\sigma'') \cdot
      [\sigma'' \in \sat^\dagger(\beta)] \cdot \delta_{\sigma'}(\sigma'')\\
    &= \sum_{\alpha,\beta} [\sigma \in \sat^\dagger(\alpha)] \cdot
      \eval(p)(\sigma)(\sigma') \cdot
      [\sigma' \in \sat^\dagger(\beta)]\\
    &= \eval(p)(\sigma)(\sigma'),
  \end{align*}
  where the last line follows because
  $\set{\sat^\dagger(b)}{b \in \Bexp} \subseteq 2^\State$
  is a Boolean algebra of sets  with atoms
  $\sat^\dagger(\alpha)$, $\alpha \in \At$, meaning that \[
    \State = \biguplus_{\alpha \in \At} \sat^\dagger(\alpha)
    \quad \text{ and thus } \quad
    \sum_{\alpha} [\sigma \in \sat^\dagger(\alpha)] = 1.
  \]

  For \Cref{eq:4-2}, we have
  \begin{align*}
  \kappa_i(\den{b})(\sigma)
    = \sum_{\alpha \leq b} \kappa_i(\alpha)(\sigma)
    = \sum_{\alpha \leq b} [\sigma \in \sat^\dagger(\alpha)] \cdot \delta_\sigma
    &= [\sigma \in \biguplus_{\alpha \leq b} \sat^\dagger(\alpha)] \cdot \delta_\sigma \\
    &= [\sigma \in \sat^\dagger(b)] \cdot \delta_\sigma.
  \end{align*}

  For  \Cref{eq:4-3}, we need the following auxiliary facts:
  \begin{enumerate}[label={(\textbf{{A}\arabic*})}, align=left, leftmargin=4ex]
  \item $\kappa_i(\alpha x)(\sigma)(\sigma') =
    [\sigma \in \sat^\dagger(\alpha)] \cdot \kappa_i(\alpha x)(\sigma)(\sigma')$
  \item $\kappa_i(x\alpha)(\sigma)(\sigma') =
    [\sigma' \in \sat^\dagger(\alpha)] \cdot \kappa_i(x\alpha)(\sigma)(\sigma')$
  \item $\kappa_i(x\alpha y)(\sigma)(\sigma') =
    \sum_{\sigma''} \kappa_i(x\alpha)(\sigma)(\sigma'') \cdot
    \kappa_i(\alpha y)(\sigma'')(\sigma')$
  \item $\den{e} \diamond \den{f} \cong
    \set{(\alpha, x\alpha, \alpha y)}{\alpha \in \At, x\alpha \in \den{e},
      \alpha y \in \den{f}}$
  \end{enumerate}
  Fact (A1) is immediate by definition of $\kappa_i$, and facts (A2)
  and (A3) follow by straightforward inductions on $|x|$. We defer the proof
  of (A4) to \Cref{lem:seq-isomorphism}.
  We can then compute:
  \begin{align*}
    &\phantom{{}={}} \kappa_i(\den{e\cdot f})(\sigma)(\sigma')\\
    &= \sum_{w \in \den{e} \diamond \den{f}} \kappa_i(w)(\sigma)(\sigma')\\
    &= \sum_{\alpha \in \At} \sum_{x\alpha \in \den{e}} \sum_{\alpha y \in \den{f}}
      \kappa_i(x\alpha y)(\sigma)(\sigma')
      \tag{by A4}\\
    &= \sum_{\alpha \in \At} \sum_{x\alpha \in \den{e}} \sum_{\alpha y \in \den{f}}
      \sum_{\sigma''} \kappa_i(x\alpha)(\sigma)(\sigma'') \cdot
        \kappa_i(\alpha y)(\sigma'')(\sigma')
      \tag{by A3}\\
    &= \sum_{\sigma''} \sum_{\alpha,\beta \in \At}
      \sum_{x\alpha \in \den{e}} \sum_{\alpha y \in \den{f}}
      [\alpha=\beta] \cdot
        \kappa_i(x\alpha)(\sigma)(\sigma'') \cdot
        \kappa_i(\beta y)(\sigma'')(\sigma')
  \end{align*}
  and, observing that
  \begin{align*}
    &\phantom{{}\implies{}} \kappa_i(x\alpha)(\sigma)(\sigma'') \cdot
    \kappa_i(\beta y)(\sigma'')(\sigma') \neq 0\\
    &\implies \sigma'' \in \sat^\dagger(\alpha) \land \sigma'' \in \sat^\dagger(\beta)
      \tag{by A1 and A2}\\
    &\implies \sigma'' \in \sat^\dagger(\alpha\cdot\beta)
      \tag{Boolean algebra}\\
    &\implies \alpha = \beta
      \tag{$\alpha, \beta \in \At$} 
  \end{align*}
  we obtain  \Cref{eq:4-3}:
  \begin{align*}
    \kappa_i(\den{e\cdot f})(\sigma)(\sigma')
    &= \sum_{\sigma''}
      \sum_{\substack{\alpha,\beta \in \At\\x\alpha \in \den{e}\\\beta y \in \den{f}}}
        \kappa_i(x\alpha)(\sigma)(\sigma'') \cdot
        \kappa_i(\beta y)(\sigma'')(\sigma')\\
    &= \sum_{\sigma''}
        \kappa_i(\den{e})(\sigma)(\sigma'') \cdot
        \kappa_i(\den{f})(\sigma'')(\sigma').
  \end{align*}

  For  \Cref{eq:4-4}, we need the following identity (for all $\alpha, x, b, \sigma$):
  \begin{equation}
  \label{eq:aux-prop-sound}
    [\alpha \leq b] \cdot \kappa_i(\alpha x)(\sigma)
    = [\sigma \in \sat^\dagger(b)] \cdot \kappa_i(\alpha x)(\sigma)
  \end{equation}
  Using A1, it suffices to show the equivalence
  \begin{align*}
    \alpha \leq b \land \sigma \in \sat^\dagger(\alpha)
    \quad \iff \quad
    \sigma \in \sat^\dagger(b) \land \sigma \in \sat^\dagger(\alpha)
  \end{align*}
  The implication from left to right follows directly by monotonicity of
  $\sat^\dagger$.
  For the implication from right to left, we have that either $\alpha \leq b$
  or $\alpha \leq \bneg{b}$. Using again monotonicity of $\sat^\dagger$, the
  possibility $\alpha \leq \bneg{b}$ is seen to cause a contradiction.

  With Identity~\cref{eq:aux-prop-sound} at our disposal,  \Cref{eq:4-4} is easy to establish:
  \begin{align*}
  &\phantom{{}={}} \kappa_i(\den{e +_b f})(\sigma)\\
  &= \sum_{w \in \den{e +_b f}} \kappa_i(w)(\sigma)\\
  &= \sum_{\alpha x \in \den{e}} [\alpha \leq b] \cdot \kappa_i(\alpha x)(\sigma)
    + \sum_{\beta y \in \den{f}} [\alpha \leq \bneg{b}] \cdot \kappa_i(\beta y)(\sigma)\\
  &= \sum_{\alpha x \in \den{e}} [\sigma \in \sat^\dagger(b)] \cdot
      \kappa_i(\alpha x)(\sigma)
    + \sum_{\beta y \in \den{f}} [\sigma \in \sat^\dagger(\bneg{b})] \cdot
      \kappa_i(\beta y)(\sigma)\\
  &= [\sigma \in \sat^\dagger(b)] \cdot \kappa_i({\den{e}})(\sigma) +
    [\sigma \in \sat^\dagger(\bneg{b})] \cdot \kappa_i({\den{f}})(\sigma)
  \end{align*}
  This concludes the soundness proof.

\item[$\Leftarrow$:]
  For completeness, we will exhibit an interpretation $i$ over the state space
  $\Gs$ such that
  \begin{equation}
    \label{eq:prob-interp-compl}
    \den{e} = \set{\alpha x \in \Gs}{\pden{i}{e}(\alpha)(\alpha x) \neq 0}.
  \end{equation}
  Define $i \defeq (\Gs, \eval, \sat)$, where
  \begin{mathpar}
    \eval(p)(w) \defeq \Unif(\set{wp\alpha}{\alpha \in \At})
    \and
    \sat(t) \defeq \set{x\alpha \in \Gs}{\alpha \leq t}
  \end{mathpar}
  We need two auxiliary observations:

  \begin{enumerate}[label={(\textbf{{A}\arabic*})}, align=left, leftmargin=4ex]
    \item $\alpha \in \sat^\dagger(b) \iff \alpha \in \den{b}$\label{fact:A1}
    \item Monotonicity: $\pden{i}{e}(v)(w) \neq 0 \implies \exists x.\, w=vx$.
  \end{enumerate}

  They follow by straightforward inductions on $b$ and $e$, respectively.
  To establish \Cref{eq:prob-interp-compl}, it suffices to show the following
  equivalence for all $x,y \in {(\At \cup \Sigma)}^*$: \[
    \pden{i}{e}(x\alpha)(x\alpha y) \neq 0 \quad \iff \quad
     \alpha y \in \den{e}
  \]
  We proceed by well-founded induction on the ordering $\prec$ on expressions
  from the proof of \Cref{lem:pmodel-well-defined}:
  \begin{enumerate}[\textbullet]
  \item
  For $e=b$, we use fact~\ref{fact:A1} to derive that \[
    \pden{i}{b}(x\alpha) = [\alpha \in \sat^\dagger(b)] \cdot \delta_{x\alpha}
    = [\alpha \in \den{b}] \cdot \delta_{x\alpha}.
  \]
  Thus we have \[
    \pden{i}{b}(x\alpha)(x\alpha y) \neq 0
    \ \iff \
    y=\epsilon \land \alpha \in \den{b}
    \ \iff \
    \alpha y \in \den{b}.
  \]

  \item
  For $e=p$, we have that \[
    \pden{i}{p}(x\alpha) = \Unif(\set{x\alpha p\beta}{\beta \in \At}).
  \]
  It follows that \[
    \pden{i}{p}(x\alpha)(x\alpha y) \neq 0
    \ \iff \
    \exists \beta.\, y=p\beta
    \ \iff \
    \alpha y \in \den{p}.
  \]

  \item
  For $e +_b f$, we have that \[
    \pden{i}{e +_b f}(x \alpha)(x\alpha y) = \begin{cases}
      \pden{i}{e}(x \alpha)(x\alpha y) &\text{if } \alpha \in \sat^\dagger(b)\\
      \pden{i}{f}(x \alpha)(x\alpha y) &\text{if } \alpha \in \sat^\dagger(\bneg{b})
    \end{cases}
  \]
  We will argue the case $\alpha \in \sat^\dagger(b)$ explicitly; the
  argument for the case $\alpha \in \sat^\dagger(\bneg{b})$ is analogous.
  We compute:
  \begin{align*}
    \pden{i}{e +_b f}(x\alpha)(x\alpha y) \neq 0 \
    &\iff \ \pden{i}{e}(x\alpha)(x\alpha y) \neq 0
      \tag{premise}\\
    &\iff \ \alpha y \in \den{e}
      \tag{ind. hypothesis}\\
    &\iff \ \alpha y \in \den{b} \diamond \den{e}
      \tag{A1 and premise}\\
    &\iff \ \alpha y \in \den{e +_b f}
      \tag{A1 and premise}
  \end{align*}

  \item
  For $e \cdot f$, recall that
  \begin{align*}
    \pden{i}{e\cdot f}(x\alpha)(x\alpha y)
    &=\sum_{w} \pden{i}{e}(x\alpha)(w) \cdot \pden{i}{e}(w)(x\alpha y).
  \end{align*}
  Thus,
  \begin{align*}
    &\phantom{{}\iff{}} \pden{i}{e\cdot f}(x\alpha)(x\alpha y) \neq 0\\
    &\iff \exists w.\;
      \pden{i}{e}(x\alpha)(w) \neq 0
      \land
      \pden{i}{e}(w)(x\alpha y) \neq 0
      \tag{arg. above}\\ 
    &\iff \exists z.\;
      \pden{i}{e}(x\alpha)(x\alpha z) \neq 0
      \land
      \pden{i}{f}(x\alpha z)(x\alpha y) \neq 0
      \tag{A2}\\
    &\iff \exists z.\;
      \alpha z \in \den{e}
      \land
      \pden{i}{f}(x\alpha z)(x\alpha y) \neq 0
      \tag{ind. hypothesis}\\
    &\iff \exists z, \beta.\;
      z\beta \in \den{e}
      \land
      \pden{i}{f}(xz\beta)(x\alpha y) \neq 0
      \tag{A2}\\
    &\iff \exists z, z',\beta.\;
      z\beta \in \den{e}
      \land
      \pden{i}{f}(xz\beta)(xz\beta z') \neq 0
      \land \alpha y = z\beta z'
      \tag{A2}\\
    &\iff \exists z, z',\beta.\;
      z\beta \in \den{e}
      \land
      \beta z' \in \den{f}
      \land \alpha y = z\beta z'
      \tag{ind. hypothesis}\\
    &\iff \alpha y \in \den{e \cdot f}
      \tag{def. $\den{-}$, $\diamond$}
  \end{align*}
  \item
  For $e^*$, recall that \[
    \pden{i}{e^*}(x\alpha)(x\alpha y) = \lim_{n \to \infty}
      \pden{i}{{(e +_b 1)}^n \cdot \bneg{b}}(x\alpha)(x\alpha y)
  \]
  Since this is the limit of a monotone sequence by
  \Cref{lem:pmodel-well-defined}, it follows that
  \begin{align*}
    &\phantom{{}\iff{}} \pden{i}{e^*}(x\alpha)(x\alpha y) \neq 0\\
    &\iff \exists n.\;
      \pden{i}{{(e +_b 1)}^n \cdot \bneg{b}}(x\alpha)(x\alpha y) \neq 0
      \tag{arg. above}\\ 
    &\iff \exists n.\;
      \alpha y \in \den{{(e +_b 1)}^n \cdot \bneg{b}}
      \tag{ind. hypothesis}\\
    &\iff \exists m.\;
      \alpha y \in \den{{(be)}^m \cdot \bneg{b}}
      \tag{to be argued}\\
    &\iff \alpha y \in \den{e^{(b)}}
      \tag{def. $\den{-}$}
  \end{align*}
  The penultimate step is justified by the identity
  \begin{equation}
  \label{eq:aux-prob-compl}
    \den{{(e +_b 1)}^n \cdot \bneg{b}} =
      \bigcup_{m=0}^{n} \den{{(be)}^m \cdot \bneg{b}},
  \end{equation}
  which we establish by induction on $n \geq 0$:

  The case $n=0$ is trivial. For $n > 0$, we have the following KAT
  equivalence:
  \begin{align*}
    {(be + \bneg{b})}^n \cdot \bneg{b}
    &\equiv (be + \bneg{b}) \cdot {(be + \bneg{b})}^{n-1} \cdot \bneg{b}\\
    &\equiv (be + \bneg{b}) \cdot \sum_{m=0}^{n-1} {(be)}^m \cdot \bneg{b}
      \tag{ind. hypothesis}\\
    &\equiv \sum_{m=0}^{n-1} (be) \cdot {(be)}^m \cdot \bneg{b} +
      \sum_{m=0}^{n-1} \bneg{b} \cdot {(be)}^m \cdot \bneg{b}\\
    &\equiv \sum_{m=1}^{n} {(be)}^m \cdot \bneg{b} + \bneg{b}\equiv \sum_{m=0}^{n} {(be)}^m \cdot \bneg{b},
  \end{align*}
  where the induction hypothesis yields the KAT equivalence \[
    {(be + \bneg{b})}^{n-1} \cdot \bneg{b} \equiv
      \sum_{m=0}^{n-1} {(be)}^m \cdot \bneg{b}
  \] thanks to completeness of the KAT axioms for the language model.
  The claim follows by soundness of the KAT axioms.
  \end{enumerate}
\end{itemize}
This concludes the proof of \Cref{thm:sound-complete-for-prob}.
\end{proof}

\begin{lemma}%
\label{lem:seq-isomorphism}
For deterministic languages $L,K \in \L$, we have the following isomorphism: \[
  L \diamond K \ \cong \
    \set{(\alpha, x\alpha, \alpha y)}{\alpha \in \At, x\alpha \in L,
      \alpha y \in K}
\]
\end{lemma}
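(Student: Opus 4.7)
The plan is to exhibit an explicit bijection between the two sets and then verify it is well-defined, using the determinacy property (\Cref{def:action-det}) of $L$ in the key step.

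Define $\phi \colon \{(\alpha, x\alpha, \alpha y) : \alpha \in \At,\ x\alpha \in L,\ \alpha y \in K\} \to L \diamond K$ by $\phi(\alpha, x\alpha, \alpha y) \defeq x\alpha \diamond \alpha y = x\alpha y$. Well-definedness on the right-hand set follows immediately from the definition of the fusion product, and surjectivity follows because every element of $L \diamond K$ arises, by definition, as $u \diamond v$ with $u \in L$ and $v \in K$, which forces $u = x\alpha$ and $v = \alpha y$ for some common final/initial atom $\alpha$.

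The main work, and the only subtle step, is injectivity: suppose we have two triples $(\alpha, x\alpha, \alpha y)$ and $(\beta, x'\beta, \beta y')$ whose images agree, i.e. $x\alpha y = x'\beta y'$. First I would argue that $|x\alpha| = |x'\beta|$ by invoking determinacy of $L$. Indeed, both $x\alpha$ and $x'\beta$ are prefixes of the common string $w = x\alpha y = x'\beta y'$, so (WLOG) one is a prefix of the other; say $x'\beta = x\alpha p_{n+1}\alpha_{n+1}\cdots p_m\alpha_m$ with $m > n$, where $x\alpha = \alpha_0 p_1\alpha_1\cdots p_n\alpha_n$. Then $x\alpha$ and $x'\beta$ agree on their first $n+1$ atoms, but their $(n+1)$-th action disagrees: it is ``absent'' for $x\alpha$ and is $p_{n+1}$ for $x'\beta$. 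This contradicts the determinacy of $L$. Hence $|x\alpha| = |x'\beta|$, so equality of prefixes forces $x = x'$ and $\alpha = \beta$, and then cancellation yields $y = y'$.

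The hard (or at least easy-to-miss) part is recognizing that the determinacy hypothesis in \Cref{def:action-det} is precisely what rules out two distinct ``fusion points'' between strings of $L$ and $K$; without it, a string could be split into a prefix from $L$ and suffix from $K$ at multiple positions, and the bijection would fail. No analogous use of determinacy of $K$ is needed, since the fusion point is already pinned down once we fix the prefix in $L$. This completes the bijection and hence the isomorphism.
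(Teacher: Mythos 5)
Your proof is correct and takes essentially the same approach as the paper: the map $(\alpha, x\alpha, \alpha y) \mapsto x\alpha y$ is clearly surjective, and distinct fusion points are ruled out by applying the determinacy property of $L$ to the two prefixes ending at the purported split points, exactly the paper's key idea. The only cosmetic difference is that the paper phrases the contradiction inside the derivative language $L_{x_1} = \{w : x_1 w \in L\}$ (which would contain both $\alpha$ and $\alpha z \beta$), while you apply determinacy of $L$ directly to $x\alpha$ and $x'\beta$; both arguments use determinacy of $L$ only, as you note.
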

\begin{proof}
We clearly have a surjective map \[
  (\alpha, x\alpha, y\alpha) \mapsto x\alpha y
\]
from right to left. To see that this map is also injective, we show that for all
$x_1 \alpha, x_2\beta \in L$ and $\alpha y_1, \beta y_2 \in K$ satisfying
$x_1 \alpha y_1 = x_2 \beta y_2$, we must have
$(\alpha, x_1\alpha, \alpha y_2) = (\beta, x_2\beta,\beta y_2)$.
This is obvious when $|x_1| = |x_2|$, so assume $|x_1| \neq |x_2|$.
We will show that this is impossible.

\WLOG we have  $|x_1| < |x_2|$. By the assumed equality, it follows that $x_2$ 
must be of the form $x_2 = x_1 \alpha z$ for some $z$, and further
$zy_1 = \beta y_2$. Now consider the language \[
  L_{x_1} \defeq \set{w \in \Gs}{x_1 w \in L}.
\]
The language is deterministic, and it contains both $\alpha$ and
$\alpha z \beta$; but the latter contradicts the former.
\end{proof}

\soundness*
\begin{proof}
  Formally, the proof proceeds by induction on the construction of $\equiv$ as a congruence.
  Practically, it suffices to verify soundness of each rule---the inductive cases of the congruence are straightforward because of how $\sem{-}$ is defined.
  \begin{itemize}
    \item[(\nameref{ax:idemp})] For $e +_b e \equiv e$, we derive
      \begin{align*}
        \sem{e +_b e} &= \sem{e} +_{\sem{b}} \sem{e}
                        \tag{Def. $\sem{-}$} \\
                      &= \sem{b} \diamond \sem{e} \cup \bneg{\sem{b}} \diamond \sem{e}
                        \tag{Def. $+_B$} \\
                      &= (\sem{b} \cup \bneg{\sem{b}}) \diamond \sem{e}
                        \tag{Def. $\diamond$} \\
                      &= \At \diamond \sem{e}
                        \tag{Def. $\bneg{B}$} \\
                      &= \sem{e}
                        \tag{Def. $\diamond$}
      \intertext{%
    \item[(\nameref{ax:skewcomm})] For $e +_b f \equiv f +_{\bneg{b}} e$, we derive
      }
        \sem{e +_b f} &= \sem{e} +_{\sem{b}} \sem{f}
                        \tag{Def. $\sem{-}$} \\
                      &= \sem{b} \diamond \sem{e} \cup \bneg{\sem{b}} \diamond \sem{f}
                        \tag{Def. $+_B$} \\
                      &= \bneg{\sem{b}} \diamond \sem{f} \cup \sem{b} \diamond \sem{e}
                        \tag{Def. $\cup$} \\
                      &= \sem{\bneg{b}} \diamond \sem{f} \cup \bneg{\sem{\bneg{b}}} \diamond \sem{e}
                        \tag{Def. $\sem{-}$, $\bneg{B}$} \\
                      &= \sem{f} +_{\sem{\bneg{b}}} \sem{e}
                        \tag{Def. $+_B$} \\
                      &= \sem{f +_{\bneg{b}} e}
                        \tag{Def. $\sem{-}$}
      \intertext{%
    \item[(\nameref{ax:skewassoc})] For $(e +_b f) +_c g \equiv e +_{bc} (f +_c g)$, we derive
      }
        \sem{(e +_b f) +_c g}
            &= \sem{c} \diamond (\sem{b} \diamond \sem{e} \cup \bneg{\sem{b}} \diamond \sem{f}) \cup \bneg{\sem{c}} \diamond \sem{g}
                \tag{Def. $\sem{-}$} \\
            &= \sem{b} \diamond \sem{c} \diamond \sem{e} \cup \bneg{\sem{b}} \diamond \sem{c} \diamond \sem{f} \cup \bneg{\sem{c}} \diamond \sem{g}
                \tag{Def. $\diamond$} \\
            &= \sem{bc} \diamond \sem{e} \cup \sem{\bneg{bc}} \diamond (\sem{c} \diamond \sem{f} \cup \bneg{\sem{c}} \diamond \sem{g})
                \tag{Def. $\sem{-}$, $\diamond$} \\
            &= \sem{e} +_{\sem{bc}} (\sem{f} +_{\sem{c}} \sem{g})
                \tag{Def. $+_B$} \\
            &= \sem{e +_{bc} (f +_c g)}
                \tag{Def. $\sem{-}$}
      \intertext{%
    \item[(\nameref{ax:guard-if})] For $e +_b f \equiv be +_b f$, we derive
      }
        \sem{e +_b f} &= \sem{e} +_{\sem{b}} \sem{f}
                        \tag{Def. $\sem{-}$} \\
                      &= \sem{b} \diamond \sem{e} \cup \bneg{\sem{b}} \diamond \sem{f}
                        \tag{Def. $+_B$} \\
                      &= \sem{b} \diamond (\sem{b} \diamond \sem{e}) \cup \bneg{\sem{b}} \diamond \sem{f}
                        \tag{Def. $\diamond$} \\
                      &= (\sem{b} \diamond \sem{e}) +_{\sem{b}} \sem{f}
                        \tag{Def. $+_B$} \\
                      &= \sem{be +_b f}
                        \tag{Def. $\sem{-}$}
      \intertext{%
    \item[(\nameref{ax:rightdistr})] For $(e +_b f) \cdot g \equiv eg +_b fg$, we derive
      }
        \sem{(e +_b f) \cdot g} &= (\sem{e} +_{\sem{b}} \sem{f}) \diamond \sem{g}
                                    \tag{Def. $\sem{-}$} \\
                                &= (\sem{b} \diamond \sem{e} \cup \bneg{\sem{b}} \diamond \sem{f}) \diamond \sem{g}
                                    \tag{Def. $+_B$} \\
                                &= (\sem{b} \diamond \sem{e}) \diamond \sem{g} \cup (\bneg{\sem{b}} \diamond \sem{f}) \diamond \sem{g}
                                    \tag{Def. $\diamond$} \\
                                &= \sem{b} \diamond (\sem{e} \diamond \sem{g}) \cup \bneg{\sem{b}} \diamond (\sem{f} \diamond \sem{g})
                                    \tag{Def. $\diamond$} \\
                                &= (\sem{e} \diamond \sem{g}) +_{\sem{b}} (\sem{f} \diamond \sem{g})
                                    \tag{Def. $\diamond$} \\
                                &= \sem{eg +_b fg}
                                    \tag{Def. $\sem{-}$}
      \intertext{%
    \item[(\nameref{ax:seqassoc})] For $(e \cdot f) \cdot g \equiv e \cdot (f \cdot g)$, we derive
      }
        \sem{(e \cdot f) \cdot g} &= (\sem{e} \diamond \sem{f}) \diamond \sem{g}
                                    \tag{Def. $\sem{-}$} \\
                                  &= \sem{e} \diamond (\sem{f} \diamond \sem{g})
                                    \tag{Def. $\diamond$} \\
                                  &= \sem{e \cdot (f \cdot g)}
                                    \tag{Def. $\sem{-}$}
      \intertext{%
    \item[(\nameref{ax:absleft})] For $0 \cdot e \equiv 0$, we derive
      }
        \sem{0 \cdot e} &= \sem{0} \diamond \sem{e}
                            \tag{Def. $\sem{-}$} \\
                        &= \emptyset \diamond \sem{e}
                            \tag{Def. $\sem{-}$} \\
                        &= \emptyset
                            \tag{Def. $\diamond$} \\
                        &= \sem{0}
                            \tag{Def. $\sem{-}$}
      \intertext{%
    \item[(\nameref{ax:absright})] The proof for $e \cdot 0 \equiv 0$ is similar to the above.
    \item[(\nameref{ax:neutrleft})] For $1 \cdot e \equiv e$, we derive
      }
        \sem{1 \cdot e} &= \sem{1} \diamond \sem{e}
                            \tag{Def. $\sem{-}$} \\
                        &= \At \diamond \sem{e}
                            \tag{Def. $\sem{-}$} \\
                        &= \sem{e}
                            \tag{Def. $\diamond$}
      \intertext{%
    \item[(\nameref{ax:neutrright})] The proof for $e \cdot 1 \equiv e$ is similar to the above.
    \item[(\nameref{ax:unroll})] For $e^{(b)} \equiv ee^{(b)} +_b 1$, we derive
      }
        \sem{e^{(b)}} &= \sem{e}^{(\sem{b})}
                        \tag{Def. $\sem{-}$} \\
                      &= \bigcup_{n \geq 0} {(\sem{b} \diamond \sem{e})}^n \diamond \bneg{\sem{b}}
                        \tag{Def. $L^{(B)}$} \\
                      &= \bneg{\sem{b}} \diamond \sem{1} \cup \sem{b} \diamond \sem{e} \diamond \bigcup_{n \geq 0} {(\sem{b} \diamond \sem{e})}^n \diamond \bneg{\sem{b}}
                        \tag{Def. $\diamond$, $L^n$, $\bigcup$} \\
                      &= \bneg{\sem{b}} \diamond \sem{1} \cup \sem{b} \diamond \sem{e} \diamond \sem{e^{(b)}}
                        \tag{Def. $L^{(B)}$} \\
                      &= \sem{e \cdot e^{(b)} +_b 1}
                        \tag{Def. $\sem{-}$, $+_B$}
      \intertext{%
    \item[(\nameref{ax:tighten})] For ${(ce)}^{(b)} \equiv {(e +_c 1)}^{(b)}$, we first argue that if $w \in {(\sem{b} \diamond \sem{c} \diamond \sem{e} \cup \bneg{\sem{c}})}^n$ for some $n$, then $w \in {(\sem{b} \diamond \sem{c} \diamond \sem{e})}^m$ for some $m \leq n$, by induction on $n$.
        In the base, where $n = 0$, we have $w \in \At$; hence, the claim holds immediately.
        For the inductive step, let $n > 0$ and write
        \begin{mathpar}
        w = w_0 \diamond w'
        \and
        w_0 \in \sem{b} \diamond \sem{c} \diamond \sem{e} \cup \bneg{\sem{c}}
        \and
        w' \in {(\sem{b} \diamond \sem{c} \diamond \sem{e} \cup \bneg{\sem{c}})}^{n-1}
        \end{mathpar}
        By induction, we know that $w' \in {(\sem{b} \diamond \sem{c} \diamond \sem{e})}^{m'}$ for $m' \leq n - 1$.
        If $w_0 \in \bneg{\sem{c}}$, then $w = w'$, and the claim goes through if we choose $m = m'$.
        Otherwise, if $w_0 \in \sem{b} \diamond \sem{c} \diamond \sem{e}$, then
        \[
            w
                = w_0 \diamond w
                \in \sem{b} \diamond \sem{c} \diamond \sem{e} \diamond {(\sem{b} \diamond \sem{c} \diamond \sem{e})}^{m'}
                = {(\sem{b} \diamond \sem{c} \diamond \sem{e})}^{m'+1}
        \]
        and thus the claim holds if we choose $m = m'+1$.
        Using this, we derive
      }
        \sem{{(ce)}^{(b)}} &= \sem{ce}^{(\sem{b})}
                            \tag{Def. $\sem{-}$} \\
                         &= \bigcup_{n \geq 0} {(\sem{b} \diamond \sem{c} \diamond \sem{e})}^n \diamond \bneg{\sem{b}}
                            \tag{Def. $L^{(B)}$} \\
                         &= \bigcup_{n \geq 0} {(\sem{b} \diamond \sem{c} \diamond \sem{e} \cup \bneg{\sem{c}})}^n \diamond \bneg{\sem{b}}
                            \tag{above derivation} \\
                         &= {(\sem{c} \diamond \sem{e} \cup \bneg{\sem{c}} \diamond \sem{1})}^{(\sem{b})}
                            \tag{Def. $L^{(B)}$, $\diamond$, $\sem{-}$} \\
                         &= {(\sem{e} +_{\sem{c}} \sem{1})}^{(\sem{b})}
                            \tag{Def. $+_B$} \\
                         &= \sem{{(e +_c 1)}^{(b)}}
                            \tag{Def. $\sem{-}$}
      \end{align*}
      This completes the proof.
    \qedhere
  \end{itemize}
\end{proof}

%
%
%

\ft*
\begin{proof}
By induction on $e$. For a primitive action $p$,
$D_\alpha(p)=(p,1)$, for all $\alpha\in \At$, and $E(p)=0$. Then
\begin{align*}
p &\stackrel{\text{\nameref{fact:trivright}}}\equiv 1 +_0 p \stackrel{\text{Lem.\ref{lemma:dsum-idemp}}}\equiv 1+_0 \dsum_{\alpha\leq 1} \alpha\cdot p \cdot 1 = 1+_{E(p)}\dsum_{\alpha\colon D_\alpha(e) = (p_\alpha, e_\alpha)} p_\alpha \cdot e_\alpha.
\end{align*}
For a primitive test $c$,
$D_\alpha(c)=[\alpha\le c]$ and $E(c)=c$. Then
\begin{align*}
c &\stackrel{\text{\nameref{fact:neutrright2}}}\equiv 1+_c 0 = 1+_{E(c)}\dsum_{\alpha\colon D_\alpha(e) = (p_\alpha, e_\alpha)} p_\alpha \cdot e_\alpha.
\end{align*}
For a conditional $e_1 +_c e_2$, we have inductively:
\begin{align}
e_i &\equiv 1 +_{E(e_i)} \dsum_{\alpha\colon D_\alpha(e_i) = (p_\alpha, e_\alpha)} p_\alpha \cdot e_\alpha, \quad \ i\in\{1,2\}.\label{eq:e1e2}
\end{align}
Then
\begin{align*}
e_1 +_c e_2
&\equiv \left(1 +_{E(e_1)} \dsum_{\alpha\colon D_\alpha(e_1) = (p_\alpha, e_\alpha)} p_\alpha \cdot e_\alpha\right) +_c \left(1 +_{E(e_2)} \dsum_{\alpha\colon D_\alpha(e_2) = (p_\alpha, e_\alpha)} p_\alpha \cdot e_\alpha\right)\tag{Eq. \cref{eq:e1e2}}
\\
&= 1 +_{E(e_1) +_c E(e_2)} \left(\dsum_{\alpha\colon D_\alpha(e_1) = (p_\alpha, e_\alpha)} p_\alpha \cdot e_\alpha\right.+_c\ \left.\dsum_{\alpha\colon D_\alpha(e_1) = (p_\alpha, e_\alpha)} p_\alpha \cdot e_\alpha\right) \tag{skew assoc.}\\
&= 1 +_{E(e_1 +_c e_2)} \dsum_{\alpha\colon D_\alpha(e_1+_c e_2) = (p_\alpha, e_\alpha)} p_\alpha \cdot e_\alpha. \tag{def $D_\alpha(e_1+_c e_2)$}
\end{align*}
For sequential composition $e_1\cdot e_2$, suppose $e_1$ and $e_2$ are decomposed as in~\eqref{eq:e1e2}.
\begin{align*}
&e_1\cdot e_2\\
&\equiv \left(1 +_{E(e_1)} \dsum_{\alpha\colon D_\alpha(e_1) = (p_\alpha, e_\alpha)} p_\alpha \cdot e_\alpha\right)\cdot e_2\tag{Eq. \cref{eq:e1e2}}
\\
&= e_2  +_{E(e_1)} \dsum_{\alpha\colon D_\alpha(e_1) = (p_\alpha, e_\alpha)} p_\alpha \cdot e_\alpha \cdot e_2\tag{right distri. \nameref{ax:rightdistr}}\\
&= \left(1 +_{E(e_2)} \dsum_{\alpha\colon D_\alpha(e_2) = (p_\alpha, e_\alpha)} p_\alpha \cdot e_\alpha\right)+_{E(e_1)} \dsum_{\alpha\colon D_\alpha(e_1) = (p_\alpha, e_\alpha)} p_\alpha \cdot e_\alpha \cdot e_2\tag{Eq. \cref{eq:e1e2}}
\\\\
&= 1 +_{E(e_1)E(e_2)} \left( \left(\dsum_{\alpha\colon D_\alpha(e_2) = (p_\alpha, e_\alpha)} p_\alpha \cdot e_\alpha\right) +_{E(e_1)}  \left(\dsum_{\alpha\colon D_\alpha(e_1) = (p_\alpha, e_\alpha)} p_\alpha \cdot e_\alpha \cdot e_2\right)\right)\tag{skew assoc. \nameref{ax:skewassoc}}
\\
&= 1 +_{E(e_1)E(e_2)} \dsum_{\alpha\colon \substack{D_\alpha(e_1) = (p_\alpha, e_\alpha)\\D_\alpha(e_2) = (p_\alpha, e_\alpha)}} \left(p_\alpha \cdot e_\alpha +_{E(e_1)}  p_\alpha \cdot e_\alpha \cdot e_2\right)\tag{skew assoc. $\dsum$}
\\
&= 1 +_{E(e_1e_2)}  \dsum_{\alpha\colon D_\alpha(e_1e_2) = (p_\alpha, e_\alpha)} p_\alpha \cdot e_\alpha\tag{def $E(e_1\cdot e_2)$ and $D_\alpha(e_1\cdot e_2)$} \\\\ 
\end{align*}
Finally, for a while loop $e^{(c)}$ we will use \Cref{lem:prod} (Productive Loop):
\begin{align*}
e^{(c)}
&\equiv {\left(D(e)\right)}^{(c)} \tag{\Cref{lem:prod}}\\
&\equiv 1 +_{\bar c} D(e)\cdot {\left(D(e)\right)}^{(c)}\tag{\nameref{ax:unroll} and \nameref{ax:skewcomm}}\\
&\equiv 1 +_{\bar c} \left(\dsum_{\alpha\colon D_\alpha(e) = (p_\alpha, x_\alpha)} p_\alpha \cdot x_\alpha\right)e^{(c)}\tag{\Cref{lem:prod} and def. of $D(e)$}\\ 
&\equiv 1 +_{\bar c}  \left(\dsum_{\alpha\colon D_\alpha(e) = (p_\alpha, x_\alpha)} p_\alpha \cdot x_\alpha \cdot e^{(c)}\right) \tag{\nameref{ax:rightdistr}}\\
&= 1 +_{E(e^{(c)})} \dsum_{\alpha\colon D_\alpha(e^{(c)}) = (p_\alpha, e_\alpha)} p_\alpha \cdot e_\alpha. \tag{Def. $D(e^{(c)})$ and $E(e^{(c)})=\bar c$} 
\end{align*}
\end{proof}

\lemmaed*
\begin{proof}
\begin{align*}
E(D(e)) &= E\left(\dsum_{\alpha\colon D_\alpha(e)=(p_\alpha ,e_\alpha)}  p_\alpha \cdot e_\alpha\right) = \sum_{\alpha\colon D_\alpha(e)=(p_\alpha ,e_\alpha)} E(p_\alpha \cdot e_\alpha) = 0\\
\bar{E(e)}\cdot D(e)&   =  \bar{E(e)}\cdot \left( \dsum_{\alpha\colon D_\alpha(e)=(p_\alpha ,e_\alpha)}  p_\alpha \cdot e_\alpha\right) \stackrel{\text{Lem~\ref{lemma:dsum-push}}}\equiv \dsum_{\alpha\colon \substack{D_\alpha(e)=(p_\alpha ,e_\alpha)\\\alpha\leq\bar{E(e)}}} p_\alpha \cdot e_\alpha\stackrel{*}= D(e)\\
\bar{E(e)}\cdot e&   \stackrel{\text{FT}}\equiv  \bar{E(e)}\cdot ( 1 +_{E(e)} D(e)) \stackrel{\text{\nameref{fact:branchsel}}}\equiv D(e)
\end{align*}
Note that for * we use the observation that  for all $\alpha$ such that $D_\alpha(e)=(p_\alpha ,e_\alpha)$ it is immediate that $\alpha \not\leq E(e)$ and hence the condition $\alpha\leq\bar{E(e)}$ is redundant.
\end{proof}

\derivable*
\begin{proof}
  We start by deriving the remaining facts for guarded union.
  \begin{itemize}
    \item[(\nameref{fact:skewassoc-dual})] For $e +_b (f +_c g) \equiv (e +_b f) +_{b + c} g$, we derive
      \begin{align}
        e +_b (f +_c g) &\equiv (g +_{\bneg{c}} f) +_{\bneg{b}} e
                        \tag{\nameref{ax:skewcomm}} \\
                        &\equiv g +_{\bneg{b} \bneg{c}} (f +_{\bneg{b}} e)
                        \tag{\nameref{ax:skewassoc}} \\
                        &\equiv g +_{\bneg{b + c}} (f +_{\bneg{b}} e)
                        \tag{Boolean algebra} \\
                        &\equiv (e +_b f) +_{b + c} g
                        \tag{\nameref{ax:skewcomm}}
      \intertext{%
    \item[(\nameref{fact:guard-if-dual})] For $e +_b f \equiv e +_b \bneg{b}f$, we derive
      }
        e +_b f &\equiv f +_{\bneg{b}} e
                \tag{\nameref{ax:skewcomm}} \\
                &\equiv \bneg{b} f +_{\bneg{b}} e
                \tag{\nameref{ax:guard-if}} \\
                &\equiv e +_b \bneg{b}f
                \tag{\nameref{ax:skewcomm}, Boolean algebra}
      \intertext{%
    \item[(\nameref{fact:leftdistr})] For $b \cdot (e +_b c f) \equiv be +_c bf$, we derive
      }
        b (e +_c f) &\equiv b \cdot (f +_{\bneg{c}} e)
                    \tag{\nameref{ax:skewcomm}} \\
                    &\equiv ((b + c) (b + \bneg{c})) (f +_{\bneg{c}} e)
                    \tag{Boolean algebra} \\
                    &\equiv (b + c) ((b + \bneg{c}) (f +_{\bneg{c}} e))
                    \tag{\nameref{ax:seqassoc}} \\
                    &\equiv (b + c) ((f +_{\bneg{c}} e) +_{b + \bneg{c}} 0 )
                    \tag{\nameref{fact:neutrright2}} \\
                    &\equiv (b + c) (f +_{\bneg{c}} (e +_b 0))
                    \tag{\nameref{fact:skewassoc-dual}} \\
                    &\equiv (b + c) ((e +_b 0) +_c f)
                    \tag{\nameref{ax:skewcomm}} \\
                    &\equiv (b + c) (be +_c f)
                    \tag{\nameref{fact:neutrright2}} \\
                    &\equiv (be +_c f) +_{b + c} 0
                    \tag{\nameref{fact:neutrright2}} \\
                    &\equiv be +_c (f +_b 0)
                    \tag{\nameref{fact:skewassoc-dual}} \\
                    &\equiv be +_c bf
                    \tag{\nameref{fact:neutrright2}}
      \intertext{%
    \item[(\nameref{fact:trivright})] For $e +_0 f \equiv f$, we derive
      }
        e +_0 f &\equiv (0 \cdot e) +_0 f
                \tag{\nameref{ax:guard-if}} \\
                &\equiv 0 +_0 f
                \tag{\nameref{ax:absleft}} \\
                &\equiv (0 \cdot f) +_0 f
                \tag{\nameref{ax:absleft}} \\
                &\equiv f +_0 f
                \tag{\nameref{ax:guard-if}} \\
                &\equiv f
                \tag{\nameref{ax:idemp}}
      \intertext{%
    \item[(\nameref{fact:branchsel})] For $b \cdot (e +_b f) \equiv be$, we derive
      }
        b(e +_b f) &\equiv be +_b bf
                   \tag{\nameref{fact:guard-if-dual}} \\
                   &\equiv be +_b \bneg{b} b f
                   \tag{\nameref{fact:guard-if-dual}} \\
                   &\equiv be +_b 0 f
                   \tag{Boolean algebra} \\
                   &\equiv be +_b 0
                   \tag{\nameref{ax:absleft}} \\
                   &\equiv be
                   \tag{\nameref{fact:neutrright2}}
      \end{align}
  \end{itemize}
  Next, we derive the remaining loop facts.
  \begin{itemize}
    \item[(\nameref{fact:guard-loop-out})] For $e^{(b)} \equiv e^{(b)}\bneg{b}$, we derive
      \begin{align}
        e^{(b)} &\equiv {(D(e))}^{(b)}
                \tag{Productive loop lemma} \\
                &\equiv D(e) {(D(e))}^{(b)} +_b 1
                \tag{\nameref{ax:unroll}} \\
                &\equiv D(e) {(D(e))}^{(b)} +_b \bneg{b}
                \tag{\nameref{fact:guard-if-dual}} \\
                &\equiv {(D(e))}^{(b)} \bneg{b}
                \tag{\nameref{ax:fixpoint}} \\
                &\equiv e^{(b)} \bneg{b}
                \tag{Productive loop lemma}
      \intertext{%
    \item[(\nameref{fact:guard-loop-in})] For $e^{(b)} \equiv {(be)}^{(b)}$, we derive
      }
        e^{(b)} &\equiv {(D(e))}^{(b)}
                \tag{Productive loop lemma} \\
                &\equiv D(e) {(D(e))}^{(b)} +_b 1
                \tag{\nameref{ax:unroll}} \\
                &\equiv b \cdot D(e) {(D(e))}^{(b)} +_b 1
                \tag{\nameref{ax:guard-if}} \\
                &\equiv {(b \cdot D(e))}^{(b)}
                \tag{\nameref{ax:fixpoint}} \\
                &\equiv {(D(be))}^{(b)}
                \tag{Def. $D$} \\
                &\equiv {(be)}^{(b)}
                \tag{Productive loop lemma}
      \intertext{%
    \item[(\nameref{fact:neutral-loop})] For $e^{(0)} \equiv 1$, we derive
      }
        e^{(0)} &\equiv {(0 \cdot e)}^{(0)}
                \tag{\nameref{fact:guard-loop-in}} \\
                &\equiv 0^{(0)}
                \tag{\nameref{ax:absleft}} \\
                &\equiv 0 \cdot 0^{(0)} +_0 1
                \tag{\nameref{ax:unroll}} \\
                &\equiv 0 +_0 1
                \tag{\nameref{ax:absleft}} \\
                &\equiv 1
                \tag{\nameref{fact:trivright}}
      \intertext{%
    \item[(\nameref{fact:absorb-loop})] For $e^{(1)} \equiv 0$, we derive
      }
        e^{(1)} &\equiv e^{(1)} \cdot \bneg{1}
                \tag{\nameref{fact:guard-loop-out}} \\
                &\equiv e^{(1)} \cdot 0
                \tag{Boolean algebra} \\
                &\equiv 0
                \tag{\nameref{ax:absright}}
      \intertext{%
    \item[(\nameref{fact:absorb-loop-bool})] For $b^{(c)} \equiv \bneg{c}$, we derive
      }
        b^{(c)} &\equiv {(D(b))}^{(c)}
                \tag{Productive loop lemma} \\
                &\equiv 0^{(c)}
                \tag{Def. $D$} \\
                &\equiv 0 \cdot 0^{(c)} +_c 1
                \tag{\nameref{ax:unroll}} \\
                &\equiv 0 +_c 1
                \tag{\nameref{ax:absleft}} \\
                &\equiv 1 +_{\bneg{c}} 0
                \tag{\nameref{ax:skewcomm}} \\
                &\equiv \bneg{c} \cdot 1
                \tag{\nameref{fact:neutrright2}} \\
                &\equiv \bneg{c}
                \tag{Boolean algebra}
      \end{align}
      This completes the proof. \qedhere
  \end{itemize}
\end{proof}

\hoarecompleteness*
\begin{proof}
By induction on $e$.
In the base, there are two cases to consider.
\begin{itemize}[left=1ex..1.4\parindent]
    \item
    If $e = d$ for some Boolean $d$, then the claim follows by completeness of
    the Boolean algebra axioms, which $\equiv$ subsumes by definition.

    \item
    If $e = a \in \Sigma$, then $\sem{bec} = \sem{be}$ implies $\sem{c} = \sem{1}$, hence $c \equiv 1$ by completeness of Boolean algebra; the claim then follows.
\end{itemize}
For the inductive step, there are three cases:
\begin{itemize}[left=1ex..1.4\parindent]
    \item
    If $e = e_0 +_d e_1$, then $\sem{bec} = \sem{be}$ implies that $\sem{dbe_{0}c} = \sem{dbe_{0}}$ and $\sem{\bneg{d}be_{1}c} = \sem{\bneg{d}be_{1}}$.
    By induction, we then know that $dbe_{0}c \equiv dbe_0$ and $\bneg{d}be_{1}c \equiv \bneg{d}be_1$.
    We can then derive as follows:
    \begin{align*}
    b(e_0 +_d e_1)c
        &\equiv be_{0}c +_d be_{1}c
            \tag{\nameref{fact:guard-if-dual}} \\
        &\equiv dbe_{0}c +_d \bneg{d}be_{1}c
            \tag{\nameref{ax:guard-if}, \nameref{fact:guard-if-dual}} \\
        &\equiv dbe_{0}c +_d \bneg{d}be_{1}
            \tag{$\bneg{d}be_{1}c \equiv \bneg{d}be_1$} \\ 
        &\equiv dbe_{0} +_d \bneg{d}be_{1}
            \tag{$dbe_{0}c \equiv dbe_0$} \\ 
        &\equiv be_{0} +_d be_1
            \tag{\nameref{ax:guard-if}, \nameref{fact:guard-if-dual}} \\
        &\equiv b \cdot (e_0 +_d e_1)
            \tag{\nameref{fact:guard-if-dual}}
    \end{align*}

    \item
    If $e = e_0 \cdot e_1$, then let
    $
        d = \sum \{ \alpha \in \At : \sem{be_0\alpha} \neq \emptyset \}
    $.
    We then know that $\sem{be_{0}d} = \sem{be_0}$, and hence $be_{0}d \equiv be_0$ by induction.
    We furthermore claim that $\sem{de_{1}c} = \sem{de_1}$.
    To see this, note that if $\alpha{}w\beta \in \sem{de_1}$, then $\alpha \leq d$, and hence there exists an $x\alpha \in \sem{be_{0}\alpha} \subseteq \sem{be_{0}d} = \sem{be_0}$.
    Thus, we know that $x\alpha{}w\beta \in \sem{be_{0}e_1} = \sem{be_{0}e_{1}c}$, meaning that $\beta \leq c$; hence, we know that $\alpha{}w\beta \in \sem{de_{1}c}$.
    By induction, $de_1c \equiv de_1$.
    We then derive:
    \begin{align*}
    be_{0}e_{1}c
        &\equiv be_{0}de_{1}c
            \tag{$be_{0} \equiv be_{0}d$} \\ 
        &\equiv be_{0}de_{1}
            \tag{$de_{1} \equiv de_{1}c$} \\ 
        &\equiv be_{0}e_{1}
            \tag{$be_{0} \equiv be_{0}d$} 
    \end{align*}

    \item
    If $e = e_0^{(d)}$, first note that if $b \equiv 0$, then the claim follows trivially.
    Otherwise, let
    \[
        h = \sum \{ \alpha \in \At : \exists n. \sem{b} \diamond \sem{de_0}^n \diamond \sem{\alpha} \neq \emptyset \}.
    \]
    We make the following observations.
    \begin{enumerate}[(i), leftmargin=1cm]
        \item\label{apx:invariant:start}
        Since $b \not\equiv 0$, we have that $\sem{b} \diamond \sem{{de_0}}^0 \diamond \sem{b} = \sem{b} \neq \emptyset$, and thus $b \leq h$.

        \item\label{apx:invariant:end}
        If $\alpha \leq h\bneg{d}$, then in particular $\gamma{}w\alpha \in \sem{b} \diamond \sem{de_0}^n \diamond \sem{\alpha}$ for some $n$ and $\gamma{}w$.
        Since $\alpha \leq \bneg{d}$, it follows that $\gamma{}w\alpha \in \sem{be_0^{(d)}} = \sem{be_0^{(d)}c}$, and thus $\alpha \leq c$.
        Consequently, $h\bneg{d} \leq c$.

        \item\label{apx:invariant:hold}
        If $\alpha{}w\beta \in \sem{dhe_0}$, then $\alpha \leq h$ and hence there exists an $n$ such that $\gamma{}x\alpha \in \sem{b} \diamond \sem{de_0}^n \diamond \sem{\beta}$.
        But then $\gamma{}x\alpha{}w\beta \in \sem{b}\ \diamond \sem{de_0}^{n+1} \diamond \sem{\beta}$, and therefore $\beta \leq h$.
        We can conclude that $\sem{dhe_0} = \sem{dhe_{0}h}$; by induction, it follows that $dhe_{0}h \equiv dhe_0$.
    \end{enumerate}

    Using these observations and the invariance lemma (\Cref{lem:invariance}), we derive
    \begin{align*}
    be_{0}^{(d)}c
        &\equiv bhe_0^{(d)}c
            \tag{By~\ref{apx:invariant:start}} \\
        &\equiv b \cdot {(he_0)}^{(d)}hc
            \tag{Invariance and~\ref{apx:invariant:hold}} \\
        &\equiv b \cdot {(he_0)}^{(d)}\bneg{d}hc
            \tag{\nameref{fact:guard-loop-out}} \\
        &\equiv b \cdot {(he_0)}^{(d)}\bneg{d}h
            \tag{By~\ref{apx:invariant:end}} \\
        &\equiv b \cdot {(he_0)}^{(d)}h
            \tag{\nameref{fact:guard-loop-out}} \\
        &\equiv bh{e_0}^{(d)}
            \tag{Invariance and~\ref{apx:invariant:hold}} \\
        &\equiv b {e_0}^{(d)}
            \tag{By~\ref{apx:invariant:start}}
    \end{align*}
\end{itemize}
This completes the proof.\qedhere
\end{proof}

\solpreserveslang*
\begin{proof}
We show that \[
  w \in \den{s(x)} \quad \iff \quad w \in \lang\X(x)
\]
for all states $x$ by induction on the length of $w \in \Gs$.
We will use that $w$ is of the form $w=\alpha u$
for some $\alpha \in \At$, $u \in {(At \cdot \Sigma)}^*$
and thus
\begin{align*}
  w \in \den{s(x)}
  &\iff w \in \den{\alpha \cdot s(x)}
    \tag{def. $\den{-}$}\\
  &\iff w \in \den{\floor{\delta^\X(x)(\alpha)}_s}
    \tag{def. sol., soundness}\\
\intertext{
  For $w=\alpha$, we have
}
  \alpha \in \den{\floor{\delta^\X(x)(\alpha)}_s}
  &\iff \delta^\X(x)(\alpha) = 1
    \tag{def. $\floor{-}$ \& $\den{-}$}\\
  &\iff \alpha \in \lang\X(x)
    \tag{def. $\lang\X$}\\
\intertext{
  For $w=\alpha p v$, we have
}
  \alpha pv \in \den{\floor{\delta^\X(x)(\alpha)}_s}
  &\iff \exists y.\; \delta^\X(x)(\alpha) = \angl{p,y} \land v \in \den{s(y)}
    \tag{def. $\floor{-}$ \& $\den{-}$}\\
  &\iff \exists y.\; \delta^\X(x)(\alpha) = \angl{p,y} \land v \in \lang\X(y)
    \tag{induction}\\
  &\iff \alpha p v \in \lang\X(x)
    \tag{def. $\lang\X$}
\end{align*}
This concludes the proof.
\end{proof}

\begin{lemma}%
\label{lem:solution-alt}
Let $\X = \angl{X, \delta^\X}$ be a $G$-coalgebra.
A function $s\colon X \to \Exp$ is a solution to $\X$ if and only if for all $\alpha \in \At$ and $x \in X$ it holds that $\alpha \cdot s(x) \equiv \alpha \cdot \floor{\delta^\X(x)(\alpha)}_s$.
\end{lemma}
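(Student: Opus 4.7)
The plan is to prove both directions by direct manipulation of the generalized guarded sum, relying on basic axiomatic facts about $\dsum$, right distributivity, guardedness, and branch selection. The key lemma I will use repeatedly is that, for any atom $\alpha$ and any family ${\{e_\beta\}}_{\beta \leq 1}$, one has $\alpha \cdot \dsum_{\beta \leq 1} e_\beta \equiv \alpha \cdot e_\alpha$. This follows from left distributivity~(\nameref{fact:leftdistr}) applied inside the generalized sum, together with branch selection~(\nameref{fact:branchsel}) which eliminates all summands with guards disjoint from $\alpha$; since $\alpha$ is an atom of the free Boolean algebra, exactly one summand survives.

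For the forward direction, assume $s$ is a solution, so $s(x) \equiv \dsum_{\beta \leq 1} \floor{\delta^\X(x)(\beta)}_s$ by \Cref{def:solution}. Multiplying both sides on the left by $\alpha$ and applying the key lemma above yields $\alpha \cdot s(x) \equiv \alpha \cdot \floor{\delta^\X(x)(\alpha)}_s$, as required.

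For the backward direction, assume $\alpha \cdot s(x) \equiv \alpha \cdot \floor{\delta^\X(x)(\alpha)}_s$ holds for every $\alpha \in \At$. I first observe that $1 \equiv \dsum_{\alpha \leq 1} \alpha$ (a straightforward consequence of the Boolean algebra axioms subsumed by $\equiv$; cf.\ the remark after the definition of the generalized sum). Combining this with~(\nameref{ax:neutrleft}) and right distributivity~(\nameref{ax:rightdistr}), I get $s(x) \equiv \dsum_{\alpha \leq 1} \alpha \cdot s(x)$. Applying the hypothesis termwise (using that $\equiv$ is a congruence for $\dsum$, which follows from it being a congruence for $+_b$) yields $s(x) \equiv \dsum_{\alpha \leq 1} \alpha \cdot \floor{\delta^\X(x)(\alpha)}_s$. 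Finally, I push the $\alpha$ back into the bracket using guardedness~(\nameref{ax:guard-if}) applied within the generalized sum, which rewrites each $\alpha \cdot \floor{\delta^\X(x)(\alpha)}_s$ to $\floor{\delta^\X(x)(\alpha)}_s$ in context, recovering the defining equation for a solution.

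The only step requiring some care is the key lemma about absorbing atoms into generalized sums, since $\dsum$ is defined with an arbitrary choice of splitting atom; however, by the remark following the definition of $\dsum$, the resulting expression is independent of this choice up to $\equiv$, so it suffices to pick a convenient splitting (starting with $\alpha$ itself) and apply~(\nameref{fact:branchsel}) and~(\nameref{fact:trivright}) to eliminate the other summands. I expect this to be the main technical obstacle, but it is a routine induction on the size of the index set.
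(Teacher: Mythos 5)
Your proposal is correct and takes essentially the same route as the paper's own proof: both directions proceed by the same algebraic manipulations of the generalized guarded sum---absorbing the atom $\alpha$ via branch selection, expanding $s(x)$ as $\dsum_{\alpha \leq 1} \alpha \cdot s(x)$, and then adding or removing the atom guards---which the paper simply packages as the auxiliary $\dsum$ lemmas (\Cref{lemma:dsum-push,lemma:dsum-idemp,lemma:dsum-guarded}). Inlining those facts as your ``key lemma'' and the observation $1 \equiv \dsum_{\alpha \leq 1} \alpha$ is only a cosmetic difference.
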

\begin{proof}
We shall use some of the observations about $\dsum$ from \Cref{sec:gen_choice}.
\begin{itemize}
    \item[$(\Rightarrow)$]
    Let $s$ be a solution to $\X$; we then derive for $\alpha \in \At$ and $x \in X$ that
    \begin{align*}
    \alpha \cdot s(x)
        &\equiv \alpha \cdot \dsum_{\alpha \leq 1} \floor{\delta^{\X}(x)(\alpha)}_s
            \tag{$s$ solves $\X$} \\
        &\equiv \alpha \cdot \dsum_{\alpha \leq \alpha} \floor{\delta^{\X}(x)(\alpha)}_s
            \tag{\Cref{lemma:dsum-push}} \\
        &\equiv \alpha \cdot \floor{\delta^{\X}(x)(\alpha)}_s
            \tag{Def. $\dsum$, \nameref{fact:branchsel}}
    \intertext{%
    \item[$(\Leftarrow)$]
    Suppose that for all $\alpha \in \At$ and $x \in X$ we have $\alpha \cdot s(x) \equiv \alpha \cdot \floor{\delta^\X(x)(\alpha)}_s$.
    We can then derive
    }
    s(x)
        &\equiv \dsum_{\alpha \leq 1} s(x)
            \tag{\Cref{lemma:dsum-idemp}} \\
        &\equiv \dsum_{\alpha \leq 1} \alpha \cdot s(x)
            \tag{\Cref{lemma:dsum-guarded}} \\
        &\equiv \dsum_{\alpha \leq 1} \alpha \cdot \floor{\delta^\X(x)(\alpha)}_s
            \tag{premise} \\
        &\equiv \dsum_{\alpha \leq 1} \floor{\delta^\X(x)(\alpha)}_s
            \tag{\Cref{lemma:dsum-guarded}}
    \end{align*}
    This completes the proof.
    \qedhere
\end{itemize}
\end{proof}

\solutionsexist*
\begin{proof}
Assume $\X$ is well-nested.
We proceed by rule induction on the well-nestedness derivation.
\begin{enumerate}
  \item[(\ref{rule:s1})]
  Suppose $\delta^\X \colon X \to 2^\At$. Then
  \[
      s^{\X}(x) \defeq \sum \set{ \alpha \in \At }{ \delta^\X(x)(\alpha) = 1 }
  \]
  is a solution to $\X$.

  \item[(\ref{rule:s2})]
  Suppose $\X = (\Y+\Z)\umod{Y,h}$, where
  $h \in G(Y+Z)$ and $\Y$ and $\Z$ are well-nested with solutions
  $s^\Y$ and $s^\Z$. We need to exhibit a solution $s^\X$ to $\X$.
  For $y \in Y$ and $z \in Z$ we define
  \begin{mathpar}
    s^\X(y) \defeq s^\Y(y) \cdot \ell
    \and
    s^\X(z) \defeq s^\Z(z)
    \\
    \ell \defeq {\Bigl( \dsum_{\alpha \leq b} \floor{h(\alpha)}_{s^\Y} \Bigr)}^{(b)}
      \cdot \dsum_{\alpha \leq \bneg{b}} \floor{h(\alpha)}_{s^\Z}
    \and
    b \defeq \sum \set{\alpha \in \At}{h(\alpha) \in \Sigma \times Y}
  \end{mathpar}

  By \Cref{lem:solution-alt}, it then suffices to prove that for $x \in Y+Z$ and $\alpha \in \At$, we have
  \[
      \alpha \cdot s^\X(x)
          \equiv \alpha \cdot \floor{\delta^\X(x)(\alpha)}_{s^\X}
  \]
  There are two cases to distinguish.
  \begin{itemize}
    \item
    If $x \in Z$, then
    \begin{align*}
    \alpha \cdot s^{\X}(x)
        &= \alpha \cdot s^\Z(x)
            \tag{def. $s^\X$} \\
        &\equiv \alpha \cdot \floor{\delta^\Z(x)(\alpha)}_{s^\Z}
            \tag{$s^\Z$ solves $\Z$} \\
        &= \alpha \cdot \floor{\delta^\Z(x)(\alpha)}_{s^\X}
            \tag{def. $s^\X$} \\
        &= \alpha \cdot \floor{\delta^\X(x)(\alpha)}_{s^\X}
            \tag{def. $\X$}
    \end{align*}

    \item
    If $x \in Y$, then we find by choice of $s^\X$ and $s^\Y$ that
    \[
        \alpha \cdot s^\X(x)
            = \alpha \cdot s^\Y(x) \cdot \ell
            = \alpha \cdot \floor{\delta^{\Y}(x)(\alpha)}_{s^\Y} \cdot \ell
    \]
    We distinguish three subcases:
    \begin{itemize}
      \item
        If $\delta^\Y(x)(\alpha) \in \{ 0 \} \cup \Sigma \times Y$ then
        $\delta^\Y(x)(\alpha) = \delta^\X(x)(\alpha)$ and thus
      \begin{align*}
        \alpha \cdot \floor{\delta^{\Y}(x)(\alpha)}_{s^\Y} \cdot \ell
        &= \alpha \cdot \floor{\delta^\X(x)(\alpha)}_{s^\Y} \cdot \ell
            \tag{def. $\X$} \\
        &\equiv \alpha \cdot \floor{\delta^\X(x)(\alpha)}_{s^\X}
            \tag{def. $s^\X$}
      \intertext{\item
        If $\delta^\Y(x)(\alpha) = 1$ and $h(\alpha) \in \Sigma \times Y$,
        then $\alpha \leq b$ and we can derive
      }
        \alpha \cdot \floor{\delta^{\Y}(x)(\alpha)}_{s^\Y} \cdot \ell
        &\equiv \alpha \cdot \ell
            \tag{def. $\floor{-}$} \\
        &\equiv \alpha \cdot \floor{h(\alpha)}_{s^\Y} \cdot \ell
            \tag{$\alpha \leq b$} \\ 
        &= \alpha \cdot \floor{h(\alpha)}_{s^\X}
            \tag{def. $s^\X$} \\
        &= \alpha \cdot \floor{\delta^\X(x)(\alpha)}_{s^\X}
            \tag{def. $\X$}
      \intertext{\item
        If $\delta^\Y(x)(\alpha) = 1$ and $h(\alpha) \not\in \Sigma \times Y$,
        then $\alpha \leq \bneg{b}$ and we can derive
      }
        \alpha \cdot \floor{\delta^{\Y}(x)(\alpha)}_{s^\Y} \cdot \ell
        &\equiv \alpha \cdot \ell
            \tag{def. $\floor{-}$} \\
        &\equiv \alpha \cdot \floor{h(\alpha)}_{s^\Z}
            \tag{$\alpha \leq \bneg{b}$} \\ 
        &= \alpha \cdot \floor{h(\alpha)}_{s^\X}
            \tag{def. $s^\X$} \\
        &= \alpha \cdot \floor{\delta^\X(x)(\alpha)}_{s^\X}
            \tag{def. $\X$}
      \end{align*}
    \end{itemize}
  \end{itemize}
\end{enumerate}
This completes the proof.
\end{proof}

\begin{lemma}%
\label{lem:epsilon-vs-iota}
Let $e \in \Exp$ and $\alpha \in \At$.
Then $\iota_e(\alpha) = 1$ if and only if $\alpha \leq E(e)$.
\end{lemma}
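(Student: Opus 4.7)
The plan is to prove the lemma by straightforward structural induction on $e$, simply matching each case of the Thompson construction in \Cref{fig:thompson} against the corresponding clause in the definition of $E$. Because both $\iota_e$ and $E(e)$ are defined by recursion on the same structure, each inductive case reduces to a direct unfolding.

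For the base cases, when $e = b \in \Bexp$, we have $\iota_b(\alpha) = [\alpha \leq b]$ and $E(b) = b$, so the two conditions coincide by definition. When $e = p \in \Sigma$, we have $\iota_p(\alpha) = (p, *) \neq 1$ for every $\alpha$, while $E(p) = 0$ means no atom satisfies $\alpha \leq E(p)$; both sides of the biconditional are vacuously false.

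For the inductive cases, I would handle them as follows. For $e = f +_b g$, split on whether $\alpha \leq b$ or $\alpha \leq \bneg{b}$ (exhaustive for atoms); in the first subcase $\iota_e(\alpha) = \iota_f(\alpha)$ equals $1$ iff (by IH) $\alpha \leq E(f)$, which, combined with $\alpha \leq b$, is equivalent to $\alpha \leq bE(f) + \bneg{b}E(g) = E(e)$; the second subcase is symmetric. For $e = f \cdot g$, the Thompson clause gives $\iota_e(\alpha) = 1$ precisely when $\iota_f(\alpha) = 1$ and $\iota_g(\alpha) = 1$; applying the IH twice yields $\alpha \leq E(f)$ and $\alpha \leq E(g)$, equivalently $\alpha \leq E(f) \cdot E(g) = E(e)$. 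For $e = f^{(b)}$, the Thompson clause shows $\iota_e(\alpha) = 1$ iff $\alpha \leq \bneg{b}$ (the other two branches yield $0$ or a pair), and this matches $E(e) = \bneg{b}$ directly.

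There is no real obstacle here: the lemma is essentially asserting that the initial pseudostate of the Thompson automaton accepts immediately exactly on the atoms on which $e$ may halt without action, which is the intended meaning of $E$. The only mild care needed is in the sequencing case, to observe that $\iota_{e \cdot f}(\alpha) = 1$ requires both $\iota_f(\alpha) = 1$ \emph{and} $\iota_e(\alpha) = 1$ (otherwise the first clause of the case split applies), and in the loop case, to check that the two non-accepting branches both produce values $\neq 1$.
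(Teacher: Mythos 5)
Your proposal is correct and follows essentially the same route as the paper's proof: a structural induction on $e$ that unfolds the Thompson construction against the definition of $E$ case by case (with the same case splits on $\alpha \leq b$ versus $\alpha \leq \bneg{b}$ for guarded union and loops, and the observation that sequencing requires both $\iota_f(\alpha)=1$ and $\iota_g(\alpha)=1$). Your handling of the action case is in fact slightly more careful than the paper's, which writes $\iota_p(\alpha)=0$ where the construction actually gives $(p,*)$; either way both sides of the biconditional are false, so nothing changes.
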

\begin{proof}
We proceed by induction on $e$.
In the base, there are two cases.
\begin{itemize}
    \item
    If $e = b \in \Bexp$, then $\iota_e(\alpha) = 1$ if and only if $\alpha \leq b = E(b)$.

    \item
    If $e = p \in \Sigma$, then $\iota_e(\alpha) = 0$ and $E(e) = 0$.
\end{itemize}

For the inductive step, there are three cases.
\begin{itemize}
    \item
    If $e = f +_b g$, then suppose $\alpha \leq b$.
    In that case, $\iota_e(\alpha) = 1$ holds if and only if $\iota_f(\alpha) = 1$, which by induction is true precisely when $\alpha \leq E(f)$, which is equivalent to $\alpha \leq E(f +_b g)$.
    The other case can be treated analogously.

    \item
    If $e = f \cdot g$, then $\iota_e(\alpha) = 1$ implies that $\iota_f(\alpha) = 1$ and $\iota_g(\alpha) = 1$, which means that $\alpha \leq E(f)$ and $\alpha \leq E(g)$ by induction, and hence $\alpha \leq E(e)$.
    The other implication can be derived in a similar fashion.

    \item
    If $e = f^{(b)}$, then $\iota_e(\alpha) = 1$ is equivalent to $\alpha \leq \overline{b} = E(e)$.
    \qedhere
\end{itemize}
\end{proof}

\thompsonroundtrip*
\begin{proof}
We proceed by induction on $e$, showing that we can construct a solution $s_e$ to $\X_e$.
For the main claim, if we then show that $e \equiv \dsum_{\alpha \leq 1} \floor{\iota_e(\alpha)}_{s_e}$, it follows that we can extend $s_e$ to a solution $s$ of $\X_e^\iota$, by setting $s(\iota) = e$ and $s(x) = s_e(x)$ for $x \in X_e$.
In the base, there are two cases.
\begin{itemize}
    \item
    If $e = b \in \Bexp$, then we choose for $s_e$ the (empty) map from $X_e$ to $\Exp$; this (vacuously) makes $s_e$ a solution to $\X_e$.
    For the second part, we can derive using \Cref{lemma:dsum-idemp,lemma:dsum-guarded}:
    \[
        b
            \equiv \dsum_{\alpha \leq 1} b
            \equiv \dsum_{\alpha \leq 1} \alpha b
            \equiv \dsum_{\alpha \leq 1} \alpha \cdot [\alpha \leq b]
            \equiv \dsum_{\alpha \leq 1} [\alpha \leq b]
            \equiv \dsum_{\alpha \leq 1} \floor{\iota_b(\alpha)}_{s_e}
    \]

    \item
    If $e = p \in \Sigma$, then we choose $s_e(*) = 1$.
    To see that $s_e$ is a solution to $\X_e$, note by \Cref{lemma:dsum-idemp}:
    \[
        s_e(*)
            = 1
            \equiv \dsum_{\alpha \leq 1} 1
            \equiv \dsum_{\alpha \leq 1} \floor{\delta_p(*)(\alpha)}_{s_e}
    \]
    For the second part, derive as follows, using the same Lemma:
    \[
        e
            = p
            \equiv \dsum_{\alpha \leq 1} p
            \equiv \dsum_{\alpha \leq 1} p \cdot s_e(*)
            \equiv \dsum_{\alpha \leq 1} \floor{\iota_p(\alpha)}_{s_e}
    \]
\end{itemize}

\noindent
For the inductive step, there are three cases.
\begin{itemize}
    \item
    If $e = f +_b g$, then by induction we have solutions $s_f$ and $s_g$ to $\X_f$ and $\X_g$ respectively.
    We now choose $s_e$ as follows:
    \[
        s_e(x) =
            \begin{cases}
            s_f(x) & x \in X_f \\
            s_g(x) & x \in X_g
            \end{cases}
    \]
    To see that $s_e$ is a solution, we use \Cref{lem:solution-alt}.
    Suppose $x \in \X_f$; we derive for $\alpha \in \At$ that
    \begin{align*}
        \alpha \cdot \floor{\delta_e(x)(\alpha)}_{s_e}
            &\equiv \alpha \cdot \floor{\delta_f(x)(\alpha)}_{s_e} \tag{def. $\delta_e$} \\
            &\equiv \alpha \cdot \floor{\delta_f(x)(\alpha)}_{s_f} \tag{def. $s_e$} \\
            &\equiv \alpha \cdot s_f(x) \tag{induction} \\
            &\equiv \alpha \cdot s_e(x) \tag{def. $s_e$}
    \end{align*}
    The case where $x \in \X_g$ is similar.
    For the second part of the claim, we derive
    \begin{align*}
    e
        &= f +_b g \\
        &\equiv \Bigl( \dsum_{\alpha \leq 1} \floor{\iota_f(\alpha)}_{s_f} \Bigr) +_b \Bigl( \dsum_{\alpha \leq 1} \floor{\iota_g(\alpha)}_{s_g} \Bigr)
            \tag{induction} \\
        &\equiv \Bigl( b \cdot \dsum_{\alpha \leq 1} \floor{\iota_f(\alpha)}_{s_f} \Bigr) +_b \Bigl( \bneg{b} \cdot \dsum_{\alpha \leq 1} \floor{\iota_g(\alpha)}_{s_g} \Bigr)
            \tag{\nameref{ax:guard-if}, \nameref{fact:guard-if-dual}} \\
        &\equiv \Bigl( \dsum_{\alpha \leq b} \floor{\iota_f(\alpha)}_{s_f} \Bigr) +_b \Bigl( \dsum_{\alpha \leq \bneg{b}} \floor{\iota_g(\alpha)}_{s_g} \Bigr)
            \tag{\Cref{lemma:dsum-push}} \\
        &\equiv \Bigl( \dsum_{\alpha \leq b} \floor{\iota_e(\alpha)}_{s_e} \Bigr) +_b \Bigl( \dsum_{\alpha \leq \bneg{b}} \floor{\iota_e(\alpha)}_{s_e} \Bigr)
            \tag{def. $\iota_e$} \\
        &\equiv \Bigl( b \cdot \dsum_{\alpha \leq 1} \floor{\iota_e(\alpha)}_{s_e} \Bigr) +_b \Bigl( \bneg{b} \cdot \dsum_{\alpha \leq 1} \floor{\iota_e(\alpha)}_{s_e} \Bigr)
            \tag{\Cref{lemma:dsum-push}} \\
        &\equiv \Bigl( \dsum_{\alpha \leq 1} \floor{\iota_e(\alpha)}_{s_e} \Bigr) +_b \Bigl( \dsum_{\alpha \leq 1} \floor{\iota_e(\alpha)}_{s_e} \Bigr)
            \tag{\nameref{ax:guard-if}, \nameref{fact:guard-if-dual}} \\
        &\equiv \Bigl( \dsum_{\alpha \leq 1} \floor{\iota_e(\alpha)}_{s_e} \Bigr)
            \tag{\nameref{ax:idemp}}
    \end{align*}
    The case where $\alpha \leq \overline{b}$ follows similarly.

    \item
    If $e = f \cdot g$, then by induction we have solutions $s_f$ and $s_g$ to $\X_f$ and $\X_g$ respectively.
    We now choose $s_e$ as follows:
    \[
        s_e(x) =
            \begin{cases}
            s_f(x) \cdot g & x \in X_f \\
            s_g(x) & x \in X_g
            \end{cases}
    \]
    To see that $s_e$ is a solution to $\X_e$, we use \Cref{lem:solution-alt}; there are three cases to consider.
    \begin{itemize}
        \item
        If $x \in X_f$ and $\delta_f(x)(\alpha) = 1$, then we can derive
        \begin{align*}
        \alpha \cdot \floor{\delta_e(x)(\alpha)}_{s_e}
            &\equiv \alpha \cdot \floor{\iota_g(\alpha)}_{s_e}
                \tag{def. $\delta_e$} \\
            &\equiv \alpha \cdot \floor{\iota_g(\alpha)}_{s_g}
                \tag{def. $s_e$} \\
            &\equiv \alpha \cdot g
                \tag{induction} \\
            &\equiv \alpha \cdot \floor{\delta_f(x)(\alpha)}_{s_f} \cdot g
                \tag{premise} \\
            &\equiv \alpha \cdot s_f(x) \cdot g
                \tag{induction} \\
            &\equiv \alpha \cdot s_e(x)
                \tag{def. $s_e$}
        \intertext{%
            \item
            If $x \in X_f$ and $\delta_f(x)(\alpha) \neq 1$, then we can derive
        }
        \alpha \cdot \floor{\delta_e(x)(\alpha)}_{s_e}
            &\equiv \alpha \cdot \floor{\delta_f(x)(\alpha)}_{s_e}
                \tag{def. $\delta_e$} \\
            &\equiv \alpha \cdot \floor{\delta_f(x)(\alpha)}_{s_f} \cdot g
                \tag{premise} \\
            &\equiv \alpha \cdot s_f(x) \cdot g
                \tag{induction} \\
            &\equiv \alpha \cdot s_e(x)
                \tag{def. $s_e$}
        \intertext{
            \item
            If $x \in X_g$, then we can derive
        }
        \alpha \cdot \floor{\delta_e(x)(\alpha)}_{s_e}
            &\equiv \alpha \cdot \floor{\delta_g(x)(\alpha)}_{s_e}
                \tag{def. $\delta_e$} \\
            &\equiv \alpha \cdot \floor{\delta_g(x)(\alpha)}_{s_g}
                \tag{def. $s_e$} \\
            &\equiv \alpha \cdot s_g(x)
                \tag{induction} \\
            &\equiv \alpha \cdot s_e(x)
                \tag{def. $s_e$}
        \end{align*}
    \end{itemize}

    For the second claim, suppose $\iota_f(\alpha) = 1$; we then derive
    \begin{align*}
    \alpha \cdot f \cdot g
        &\equiv \alpha \cdot \floor{\iota_f(\alpha)}_{s_f} \cdot g
            \tag{induction} \\
        &\equiv \alpha \cdot g
            \tag{premise} \\
        &\equiv \alpha \cdot \floor{\iota_g(\alpha)}_{s_g}
            \tag{induction} \\
        &\equiv \alpha \cdot \floor{\iota_e(\alpha)}_{s_e}
            \tag{def. $\iota_e$} \\
    \intertext{%
        Otherwise, if $\iota_f(\alpha) \neq 1$, then we derive
    }
    \alpha \cdot f \cdot g
        &\equiv \alpha \cdot \floor{\iota_f(\alpha)}_{s_f} \cdot g
            \tag{induction} \\
        &\equiv \alpha \cdot \floor{\iota_f(\alpha)}_{s_e}
            \tag{def. $s_e$} \\
        &\equiv \alpha \cdot \floor{\iota_e(\alpha)}_{s_e}
            \tag{def. $\iota_e$}
    \end{align*}
    From the above and \Cref{lemma:dsum-idemp} we can conclude that $e = f \cdot g \equiv \dsum_{\alpha \leq 1} \floor{\iota_e(\alpha)}_{s_e}$.

    \item
    If $e = f^{(b)}$, then by induction we have a solution $s_f$ to $\X_f$.
    We now choose $s_e$ by setting $s_e(x) = s_f(x) \cdot e$.
    To see that $s_e$ is a solution to $\X_e$, we use \Cref{lem:solution-alt}; there are two cases:
    \begin{itemize}
        \item
        If $\delta_f(x)(\alpha) = 1$, then we can derive
        \begin{align*}
        \alpha \cdot \floor{\delta_e(x)(\alpha)}_{s_e}
            &\equiv \alpha \cdot \floor{\iota_e(\alpha)}_{s_e}
                \tag{def. $\delta_e$} \\
            &\equiv \alpha \cdot e
                \tag{induction} \\
            &\equiv \alpha \cdot \floor{\delta_f(x)(\alpha)}_{s_f} \cdot e
                \tag{premise} \\
            &\equiv \alpha \cdot s_f(x) \cdot e
                \tag{induction} \\
            &\equiv \alpha \cdot s_e(x)
                \tag{def. $s_e$}
        \intertext{%
            \item
            Otherwise, if $\delta_f(x)(\alpha) \neq 1$, then we can derive
        }
        \alpha \cdot \floor{\delta_e(x)(\alpha)}_{s_e}
            &\equiv \alpha \cdot \floor{\delta_f(x)(\alpha)}_{s_e}
                \tag{def. $\delta_e$} \\
            &\equiv \alpha \cdot \floor{\delta_f(x)(\alpha)}_{s_f} \cdot e
                \tag{premise} \\
            &\equiv \alpha \cdot s_f(x) \cdot e
                \tag{induction} \\
            &\equiv \alpha \cdot s_e(x)
                \tag{def. $s_e$}
        \end{align*}
    \end{itemize}

    For the second part of the claim, we consider three cases:
    \begin{itemize}
        \item
        If $\alpha \leq b$ and $\iota_f(\alpha) = 1$, then derive
        \begin{align*}
        \alpha \cdot e
            &\equiv \alpha \cdot {(1 +_{E(f)} f)}^{(b)}
                \tag{\Cref{thm:ft}} \\
            &\equiv \alpha \cdot {(\bneg{E(f)} \cdot f)}^{(b)}
                \tag{\nameref{ax:skewcomm}, \nameref{ax:tighten}} \\
            &\equiv \alpha \cdot (\bneg{E(f)} \cdot f \cdot {(\bneg{E(f)} \cdot f)}^{(b)} +_b 1)
                \tag{\nameref{ax:unroll}} \\
            &\equiv \alpha \cdot \bneg{E(f)} \cdot f \cdot e
                \tag{$\alpha \leq b$, \nameref{fact:branchsel}} \\ 
            &\equiv 0
                \tag{\Cref{lem:epsilon-vs-iota}} \\
            &\equiv \alpha \cdot \floor{\iota_e(\alpha)}_{s_e}
                \tag{def. $\iota_e$}
        \intertext{%
            \item
            If $\alpha \leq b$ and $\iota_f(\alpha) \neq 1$, then we derive
        }
        \alpha \cdot e
            &\equiv \alpha \cdot (ff^{(b)} +_b 1)
                \tag{\nameref{ax:unroll}} \\
            &\equiv \alpha \cdot ff^{(b)}
                \tag{$\alpha \leq b$, \nameref{fact:branchsel}} \\ 
            &\equiv \alpha \cdot \floor{\iota_f(\alpha)}_{s_f} \cdot e
                \tag{induction} \\
            &\equiv \alpha \cdot \floor{\iota_f(\alpha)}_{s_e}
                \tag{premise} \\
            &\equiv \alpha \cdot \floor{\iota_e(\alpha)}_{s_e}
                \tag{def. $\iota_e$}
        \intertext{%
            \item
            Otherwise, if $\alpha \leq \overline{b}$, then we derive
        }
        \alpha \cdot e
            &\equiv \alpha \cdot (ff^{(b)} +_b 1)
                \tag{\nameref{ax:unroll}} \\
            &\equiv \alpha
                \tag{$\alpha \leq \overline{b}$, \nameref{fact:branchsel}} \\ 
            &\equiv \alpha \cdot \floor{\iota_e(\alpha)}_{s_e}
                \tag{def. $\iota_e$}
        \end{align*}
        The claim then follows by \Cref{lemma:dsum-idemp}.
        \qedhere
    \end{itemize}
\end{itemize}
\end{proof}

\finalfornormal*

\begin{proof}
We need to establish  the following claims:
\begin{enumerate}[(1)]
 \item the language $\lang\X(s)$ is deterministic for all states $s \in X$;
 \item the map $\lang\X$ is a homomorphism $\X \to \L$; and
 \item the map $\lang\X$ is the unique homomorphism $\X \to \L$.
\end{enumerate}
Before we turn to proving these claims, let $L \subseteq \Gs$ be a language
and define
\[
L_{\alpha p} \defeq \set{x \in \Gs}{\alpha p x \in L}.
\]
We will need the following implication:
\begin{equation}
\label{eq:normal-thm-aux}
 \delta^\X(s)(\alpha) = (p,t)
 \quad \implies \quad
 {\lang\X(s)}_{\alpha p} = \lang\X(t).
\end{equation}
To see that it holds, we observe that given the premise, we have \[
 w\in {\lang\X(s)}_{\alpha p}
 \iff \alpha p w \in \lang\X(s)
 \iff w \in \lang\X(t).
\]
We can now show the main claims:
\begin{enumerate}[(1)]
\item
We begin by showing that $\lang\X(s)$ is deterministic for $s \in X$.
Recall that a language $L$ is deterministic if,
whenever $x,y$ are in the language
and $x$ and $y$ agree on their first $n$ atoms, then they agree
on their first $n$ actions (or lack thereof). More precisely, we need to show that
\[
\left.\begin{array}{l}
x = \alpha_1p_1\alpha_2p_2 \cdots \alpha_n p_n x' \in \lang\X(s)\\
y = \alpha_1q_1\alpha_2q_2 \cdots \alpha_n q_n y' \in \lang\X(s)
\end{array}\right\}\implies p_i=q_i \quad (\forall 1\leq i\leq n), 
\]
where the final actions  may be absent (\ie, $p_n = x' = \epsilon$ or $q_n = y' = \epsilon$).
We proceed by induction on $n$.
The case $n=0$ is trivially true.
For $n\geq 1$, take $x$ and $y$ as above. We proceed by case distinction:
\begin{itemize}
\item
  If $p_1$ is absent, \ie, $n=1$ and $p_1 = x' = \epsilon$,
  then by \Cref{def:accept} we must
  have $\delta^\X(s)(\alpha_1) = 1$ and thus cannot have
  $q_1 \in \Sigma$; hence $q_1$ is also absent, as required.
\item
  Otherwise $p_1 \in \Sigma$ is a proper action. Then by \Cref{def:accept},
  there exist $t,t'\in X$ such that:
  \[
  \begin{array}{ll}
   \delta^\X(s)(\alpha_1) = (p_1,t) \land \alpha_2p_2 \cdots \alpha_n p_n x' \in \lang\X(t)\\
    \delta^\X(s)(\alpha_1) = (q_1,t') \land \alpha_2q_2 \cdots \alpha_n q_n y' \in \lang\X(t')
    \end{array}
  \]
  This implies $(p_1,t)=(q_1,t')$ and hence
  \[
  \begin{array}{ll}
   p_1=q_1 \land \alpha_2p_2 \cdots \alpha_n p_n x' \in \lang\X(t) \land \alpha_2q_2 \cdots \alpha_n q_n y' \in \lang\X(t)
    \end{array}
  \]
  Using the induction hypothesis we can now also conclude that $p_2=q_2$, \ldots $p_n=q_n$.
\end{itemize}

\item
 Next, we show that $\lang\X$ is a homomorphism: $(G\,\lang\X) \circ \delta^\X = \delta^\LL \circ \lang\X$.

 If $\delta^\X(x)(\alpha) = 1$, then $\alpha \in \lang\X(x)$ and hence
 $\delta^\LL(\lang\X(x))(\alpha) = 1$ by definition of $\delta^\LL$.

 If $\delta^\X(x)(\alpha) = 0$, then $\alpha \not\in \lang\X(x)$ and
 for all $p \in \Sigma, w \in \Gs$, $\alpha p w \not\in \lang\X(x)$ and
 hence ${\lang\X(x)}_{\alpha p} = \emptyset$.
 Thus $\delta^\LL(\lang\X(x))(\alpha) = 0$ by definition of $\delta^\LL$.

 If $\delta^\X(x)(\alpha) = \angl{p,y}$, then $y$ is live by normality and thus
 there exists a word $w_y \in \lang\X(y)$. Thus,
 \begin{align*}
   &\phantom{{}\implies{}}~ \alpha p w_y \in \lang\X(x)
     \tag{def. $\lang\X$}\\
   &\implies~ w_y \in {\lang\X(x)}_{\alpha p}
     \tag{def. $L_{\alpha{} p}$}\\
   &\implies~ \delta^\LL(\lang\X(x))(\alpha) = \angl{p, {\lang\X(x)}_{\alpha p}}
     \tag{def. $\delta^\LL$}\\
   &\implies~ \delta^\LL(\lang\X(x))(\alpha) = \angl{p, \lang\X(y)}
     \tag{\Cref{eq:normal-thm-aux}}
 \end{align*}

 \item
 For uniqueness, let $L$ denote an arbitrary homomorphism $\X \to \L$. We
 will show that \[
   w \in L(x) \ \iff \ w \in \lang\X(x)
 \] by induction on $|w|$.

 For $w = \alpha$,
 \begin{align*}
   \alpha \in L(x)
   &\iff \delta^\LL(L(x)) = 1
     \tag{def. $\delta$}\\
   &\iff \delta^\X(x)(\alpha) = 1
     \tag{$L$ is hom.}\\ 
   &\iff \alpha \in \lang\X(x)
     \tag{def. $\lang\X$}
 \end{align*}

 For $w = \alpha p v$,
 \begin{align*}
   &\phantom{{}\iff{}} \alpha p v \in L(x)\\
   &\iff \delta^\LL(L(x))(\alpha) = \angl{p, {L(x)}_{\alpha p}} \land v \in {L(x)}_{\alpha p}
     \tag{def. $\delta^\LL$, $L_{\alpha{} p}$}\\
   &\iff \exists y.\; \delta^\X(x)(\alpha) = \angl{p, y} \land v \in L(y)
     \tag{$L$ is hom., \Cref{eq:normal-thm-aux}}\\ 
   &\iff \exists y.\; \delta^\X(x)(\alpha) = \angl{p, y} \land v \in \lang\X(y)
     \tag{induction}\\
   &\iff \alpha p v \in \lang\X(x)
     \tag{def. $\lang\X$}
 \end{align*}
 \end{enumerate}
 This concludes the proof.
 \end{proof}

\uniquesalomaa*
\begin{proof}
We recast this system as a matrix-vector equation of the form $x = Mx + D$ in the Kleene algebra with Tests of $n$-by-$n$ matrices over $\pGS$; solutions to $x$ in this equation are in one-to-one correspondence with functions $R$ as above.

We now argue that the solution is unique when the system is Salomaa. We do this by showing that the map $\sigma(x) = Mx + D$ is contractive in a certain metric on ${(\pGS)}^n$, therefore has a unique fixpoint by the Banach fixpoint theorem.

For a finite guarded string $x\in\GS$, let $\len x$ denote the number of action symbols in $x$. For example, $\len\alpha=0$ and $\len{\alpha p\beta}=1$. For $A,B\subs\GS$, define
\begin{align*}
\len A &= \begin{cases}
\min\set{\len x}{x\in A} & A\ne\emptyset\\
\infty & A=\emptyset
\end{cases} &
d(A,B) &= 2^{-\len{A\symdiff B}}
\end{align*}
where $2^{-\infty}=0$ by convention. One can show that $d(-,-)$ is a metric; in fact, it is an ultrametric, as $d(A,C) \le \max d(A,B), d(B,C)$, a consequence of the inclusion $A\symdiff C\subs A\symdiff B\cup B\symdiff C$. Intuitively, two sets $A$ and $B$ are close if they agree on short guarded strings; in other words, the shortest guarded string in their symmetric difference is long. Moreover, the space is complete, as any Cauchy sequence $A_n$ converges to the limit
\begin{align*}
\bigcup_m\bigcap_{n>m} A_n &= \set{x\in\GS}{\text{$x\in A_n$ for all but finitely many $n$}}.
\end{align*}

For $n$-tuples of sets $\seq A1n$ and $\seq B1n$, define
\begin{align*}
d(\seq A1n,\seq B1n) &= \max_{i=1}^n d(A_i,B_i).
\end{align*}
This also gives a complete metric space ${(\powerset\GS)}^n$.

For $A,B,C\subs\GS$, from \Cref{lem:metric}(i) and the fact $\len{A\diamond B} \ge \len A+\len B$, we have 
\begin{align*}
\len{(A\diamond B)\symdiff(A\diamond C)} &\ge \len{A\diamond (B\symdiff C)} \ge \len A + \len{B\symdiff C},
\end{align*}
from which it follows that
\begin{align*}
d(A\diamond B,A\diamond C) &\le 2^{-\len A}d(B,C).
\end{align*}
In particular, if $D_\alpha(e)\ne 1$ for all $\alpha$, it is easily shown by induction on $e$ that $\len x\ge 1$ for all $x\in\sem e$, thus $\len{\sem e} \ge 1$, and
\begin{align}
d(\sem{e}\diamond B,\sem{e}\diamond C) &\le 2^{-\len{\sem{e}}}d(B,C) \le \textstyle\frac 12d(B,C).\label{eq:dseq}
\end{align}
From \Cref{lem:metric}(ii) and the fact $\len{A\cup B} = \min\len A,\len B$, we have 
\begin{align*}
\len{(bA_1\cup\bneg bA_2)\symdiff(bB_1\cup\bneg bB_2)}
&= \len{(bA_1\symdiff bB_1) \cup (\bneg bA_2\symdiff\bneg bB_2)}\\
&= \min\len{bA_1\symdiff bB_1},\len{\bneg bA_2\symdiff\bneg bB_2},
\end{align*}
from which it follows that
\begin{align*}
d(bA_1\cup\bneg bA_2,bB_1\cup\bneg bB_2)
&= \max d(bA_1,bB_1),d(\bneg bA_2,\bneg bB_2).
\end{align*}
Extrapolating to any guarded sum by induction,
\begin{align}
d(\bigcup_\alpha \alpha A_\alpha,\bigcup_\alpha \alpha B_\alpha) &= \max_\alpha d(\alpha A_\alpha,\alpha B_\alpha).\label{eq:dunion}
\end{align}
Putting everything together,
\begin{align*}
d(\sigma(A),\sigma(B))
&= \max_i d(\bigcup_j\sem{e_{ij}}\diamond A_j\cup\sem{d_i},\bigcup_j\sem{e_{ij}}\diamond B_j\cup\sem{d_i})\\
&= \max_i (\max (\max_j d(\sem{e_{ij}}\diamond A_j,\sem{e_{ij}}\diamond B_j)),d(\sem{d_i},\sem{d_i})) & \text{by~\eqref{eq:dunion}}\\
&= \max_i \max_j d(\sem{e_{ij}}\diamond A_j,\sem{e_{ij}}\diamond B_j)\\
&\le \textstyle\frac 12\max_j d(A_j,B_j) & \text{by~\eqref{eq:dseq}}\\
&= \textstyle\frac 12 d(A,B).
\end{align*}
Thus the map $\sigma$ is contractive in the metric $d$ with constant of contraction $1/2$. By the Banach fixpoint theorem, $\sigma$ has a unique solution.
\end{proof}

\begin{lemma}Let $A\symdiff B$ denote the symmetric difference of $A$ and $B$. We have:%
\label{lem:metric}
\begin{enumerate}[{\upshape(i)}]
\item
$(A\diamond B)\symdiff(A\diamond C) \subs A\diamond (B\symdiff C)$.
\item
$(bA_1\cup\bneg bA_2)\symdiff(bB_1\cup\bneg bB_2) = (bA_1\symdiff bB_1) \cup (\bneg bA_2\symdiff\bneg bB_2)$.
\end{enumerate}
\end{lemma}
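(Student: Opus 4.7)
Both claims are elementary set-theoretic identities about how the fusion product and Boolean-guarded union interact with symmetric difference, so the plan is a direct element-chase for each.

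For (i), I will show the inclusion by taking an arbitrary $z \in (A \diamond B) \symdiff (A \diamond C)$ and, without loss of generality, assume $z \in A \diamond B$ and $z \notin A \diamond C$. The first membership gives a factorization $z = x \diamond y$ with $x \in A$ and $y \in B$. The key point is that this very factorization forces $y \notin C$: otherwise $x \in A$ and $y \in C$ would give $z = x \diamond y \in A \diamond C$, contradicting the second assumption. Hence $y \in B \setminus C \subseteq B \symdiff C$, and therefore $z = x \diamond y \in A \diamond (B \symdiff C)$. The symmetric case (with $B$ and $C$ swapped) is identical. Note that equality need not hold, because cancellations in a symmetric difference on the right can disappear; the inclusion is all we need.

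For (ii), the central observation is that every guarded string in $bX$ begins with an atom $\alpha \leq b$, while every string in $\bneg{b}Y$ begins with an atom $\alpha \leq \bneg{b}$, so the two families are disjoint as subsets of $\Gs$. Consequently, for any sets $U_1, V_1 \subseteq b\Gs$ and $U_2, V_2 \subseteq \bneg{b}\Gs$, we have $U_1 \cap V_2 = U_2 \cap V_1 = \emptyset$. I will use this to compute
\[
  (U_1 \cup U_2) \setminus (V_1 \cup V_2) = (U_1 \setminus V_1) \cup (U_2 \setminus V_2),
\]
because $U_1 \setminus (V_1 \cup V_2) = U_1 \setminus V_1$ (as $U_1 \cap V_2 = \emptyset$), and similarly for $U_2$. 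Applying the same reasoning to $(V_1 \cup V_2) \setminus (U_1 \cup U_2)$ and taking the union yields
\[
  (U_1 \cup U_2) \symdiff (V_1 \cup V_2) = (U_1 \symdiff V_1) \cup (U_2 \symdiff V_2).
\]
Instantiating $U_i = bA_i$ or $\bneg{b}A_i$ and $V_i = bB_i$ or $\bneg{b}B_i$ as appropriate gives the desired equality.

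There is no real obstacle here; both parts reduce to careful bookkeeping. The only point that requires slight caution is (i), where factorizations $z = x \diamond y$ are not unique in general — but the argument uses only the existence of one factorization with $x \in A, y \in B$ and the fact that for that particular $y$, membership in $C$ would immediately place $z$ in $A \diamond C$. The disjointness used in (ii) is the one genuinely non-trivial ingredient, and it depends only on the Boolean-algebra structure of $\At$.
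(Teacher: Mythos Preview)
Your proof is correct and essentially matches the paper's argument: part (i) is the same element-chase, and part (ii) relies on the same underlying fact that $b$-guarded and $\bneg{b}$-guarded strings are disjoint. The paper packages (ii) slightly differently, first deriving the general identity $A \symdiff B = (bA \symdiff bB) \cup (\bneg{b}A \symdiff \bneg{b}B)$ from $b(A \symdiff B) = bA \symdiff bB$ and then specializing, but your direct decomposition of the set differences is equivalent and arguably more transparent.
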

\begin{proof}
(i)
Suppose $x\in(A\diamond B)\setminus(A\diamond C)$. Then $x=y\diamond z$ with $y\in A$ and $z\in B$. But $z\not\in C$ since $x\not\in A\diamond C$, so $z\in B\setminus C$, therefore $x\in A\diamond(B\setminus C)$. Since $x$ was arbitrary, we have shown
\begin{align*}
(A\diamond B)\setminus(A\diamond C) &\subs A\diamond (B\setminus C).
\end{align*}
It follows that
\begin{align*}
(A\diamond B)\symdiff(A\diamond C)
&= (A\diamond B)\setminus(A\diamond C) \cup (A\diamond C)\setminus(A\diamond B)\\
&\subs A\diamond(B\setminus C) \cup A\diamond(C\setminus B)\\
&= A\diamond((B\setminus C) \cup (C\setminus B))\\
&= A\diamond (B\symdiff C).
\end{align*}

(ii)
Using the facts
\begin{align*}
A &= bA \cup \bneg bA &
b(A\symdiff B) &= bA\symdiff bB,
\end{align*}
we have
\begin{align*}
A\symdiff B
&= b(A\symdiff B) \cup \bneg b(A\symdiff B) = (bA\symdiff bB) \cup (\bneg bA\symdiff\bneg bB),
\end{align*}
therefore
\begin{align*}
\lefteqn{(bA_1\cup\bneg bA_2)\symdiff(bB_1\cup\bneg bB_2)}\qquad\\
&= (b(bA_1\cup\bneg bA_2)\symdiff b(bB_1\cup\bneg bB_2)) \cup (\bneg b(bA_1\cup\bneg bA_2)\symdiff\bneg b(bB_1\cup\bneg bB_2))\\
&= (bA_1\symdiff bB_1) \cup (\bneg bA_2\symdiff\bneg bB_2).
\end{align*}
\end{proof}

\section{Generalized guarded union}\label{sec:gen_choice}

In \Cref{sec:ft} we needed a more general type of guarded union:

\generalsum*

The definition above is ambiguous in that the choice of $\beta$ is not fixed. However, that does not change the meaning of the expression above, as far as $\equiv$ is concerned.

\begin{restatable}{lemma}{generalsumsound}
The operator $\dsum$ above is well-defined up-to $\equiv$.
\end{restatable}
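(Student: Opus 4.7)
The plan is to prove by induction on $|\Phi|$ that any two choices of $\beta$ at the outermost step of the recursion yield $\equiv$-equivalent expressions; combined with the induction hypothesis applied to $\Phi \setminus \{\beta\}$, this gives well-definedness at every recursive level. The base cases $\Phi = \emptyset$ and $|\Phi| = 1$ are immediate because there is only one possible choice. For the inductive step with $|\Phi| \geq 2$, if the two candidate choices $\beta_1, \beta_2 \in \Phi$ coincide the claim follows from the induction hypothesis applied to $\Phi \setminus \{\beta_1\}$, so the nontrivial case is $\beta_1 \neq \beta_2$. Using the induction hypothesis on $\Phi \setminus \{\beta_1\}$ and $\Phi \setminus \{\beta_2\}$, both inner sums can be reorganized to pull out the other distinguished atom first, so the task reduces to an adjacent-swap equivalence
\[
  e_{\beta_1} +_{\beta_1} (e_{\beta_2} +_{\beta_2} T) \;\equiv\; e_{\beta_2} +_{\beta_2} (e_{\beta_1} +_{\beta_1} T),
\]
where $T = \dsum_{\alpha \in \Phi \setminus \{\beta_1, \beta_2\}} e_\alpha$.

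For this adjacent-swap step, the strategy I would use avoids direct guard-rewriting (which is problematic: \nameref{ax:skewcomm} alone would produce the guard $\bneg{\beta_1}$ rather than $\beta_2$, and those two Boolean expressions are genuinely inequivalent) and instead re-partitions the left-hand side by a fresh case split on $\beta_2$, via the identity $g \equiv \beta_2 g +_{\beta_2} \bneg{\beta_2} g$ (derivable from \nameref{ax:idemp}, \nameref{ax:guard-if}, and \nameref{fact:guard-if-dual}). I would then compute the two projections separately: using \nameref{fact:leftdistr} to push the Boolean prefix inside each conditional, \nameref{fact:branchsel} together with its easily-derivable dual $\bneg{b}(e +_b f) \equiv \bneg{b}f$ to simplify the nested guarded unions, and the disjointness $\beta_1 \beta_2 \equiv_{\text{BA}} 0$ of distinct atoms to clear out dead branches. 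These steps yield $\beta_2 \cdot (\text{LHS}) \equiv \beta_2 e_{\beta_2}$ and $\bneg{\beta_2} \cdot (\text{LHS}) \equiv \bneg{\beta_2}(e_{\beta_1} +_{\beta_1} T)$; reassembling by the partition identity and then absorbing the outer Boolean prefixes via \nameref{ax:guard-if} and \nameref{fact:guard-if-dual} recovers $e_{\beta_2} +_{\beta_2} (e_{\beta_1} +_{\beta_1} T)$.

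The main obstacle will be the adjacent-swap step itself: the two sides of the swap have conditionals with different guards ($\beta_1$ versus $\beta_2$), and no chain of the primitive congruence rules keeps the shape of the conditional while replacing one guard by the other, precisely because $\beta_2$ and $\bneg{\beta_1}$ agree only on the ``live'' atoms $\beta_1 + \beta_2$. The re-partitioning trick sidesteps this by introducing $\beta_2$ as a brand-new outer guard, thereby converting a guard-rewriting problem into a Boolean-prefixed simplification problem that is well within reach of \nameref{fact:leftdistr} and \nameref{fact:branchsel}. Once this swap is in hand, the main induction is purely organizational.
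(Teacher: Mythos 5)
Your proof is correct. The outer scaffolding---induction on $|\Phi|$, with the inductive step reduced (via congruence and the induction hypothesis on the smaller sets) to commuting two adjacent outermost choices---is the same as the paper's; where you genuinely diverge is in how you establish the two-atom swap $e_{\beta_1} +_{\beta_1} (e_{\beta_2} +_{\beta_2} T) \equiv e_{\beta_2} +_{\beta_2} (e_{\beta_1} +_{\beta_1} T)$. The paper regroups: it merges the nested union with skew associativity (U3'), swaps the two branches with skew commutativity (U2), regroups with (U3), and then simplifies the resulting composite guards such as $\bneg{\beta_1}(\beta_1 + \beta_2)$ by Boolean reasoning. You instead introduce the target guard by the partition $g \equiv \beta_2 g +_{\beta_2} \bneg{\beta_2} g$, compute the two projections using left distributivity (U5'), branch selection (U8) and its dual, and disjointness of distinct atoms, and then reassemble with the guardedness laws (U4, U4'); I checked the projection computations and they go through as you claim. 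Both routes rest only on axioms and facts derived independently of $\dsum$, so neither is circular. What each buys: the paper's chain is shorter and uses the skew laws exactly for the regrouping task they were designed for, while your projection argument is slightly longer but keeps every guard atomic and so avoids composite-guard bookkeeping entirely---indeed, the paper's final ``Boolean algebra'' step also quietly replaces the inner guard $\beta + \gamma$ by $\beta$, which is only justified within the scope of the outer $\bneg{\gamma}$ branch, i.e.\ by exactly the kind of atom-wise reasoning your projections make explicit. As a side benefit, your method isolates a reusable principle---two expressions are provably equivalent once their $\alpha$-prefixed projections agree for every atom---which is the syntactic counterpart of the atom-wise characterization of solutions the paper relies on elsewhere.
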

\begin{proof}
We proceed by induction on the number of atoms in $\Phi$. In the base cases, when $\Phi=\emptyset$ or $\Phi=\{\alpha\}$, the claim holds immediately as the whole expression is equal to, respectively, $0$ and $e_\alpha$. For the inductive step, we need to show that for any $\beta,\gamma \in \Phi$:
\[
e_\beta +_\beta \Bigl(\; \dsum\limits_{\alpha \in \Phi\setminus\{\beta\}} e_\alpha \Bigr) \equiv e_\gamma +_\gamma \Bigl(\; \dsum\limits_{\alpha \in \Phi\setminus\{\gamma\}} e_\alpha \Bigr)
\]
We can derive
\begin{align*}
e_\beta +_\beta \Bigl(\; \dsum\limits_{\alpha \in \Phi\setminus\{\beta\}} e_\alpha \Bigr)
    &\equiv e_\beta +_\beta \Bigl( e_\gamma +_\gamma \Bigl( \; \dsum\limits_{\alpha \in \Phi\setminus\{\beta,\gamma\}} e_\alpha \Bigr) \Bigr)
            \tag{induction}    \\
    &\equiv (e_\beta +_\beta  e_\gamma) +_{\beta + \gamma} \Bigl( \; \dsum\limits_{\alpha \in \Phi\setminus\{\beta,\gamma\}} e_\alpha \Bigr)
                \tag{\nameref{fact:skewassoc-dual}} \\
    &\equiv (e_\gamma +_{\bar\beta}  e_\beta) +_{\beta + \gamma} \Bigl( \; \dsum\limits_{\alpha \in \Phi\setminus\{\beta,\gamma\}} e_\alpha \Bigr)
        \tag{\nameref{ax:skewcomm}} \\
    &\equiv e_\gamma +_{\bar\beta (\beta + \gamma)}  \Bigl(e_\beta +_{\beta + \gamma} \Bigl( \; \dsum\limits_{\alpha \in \Phi\setminus\{\beta,\gamma\}} e_\alpha \Bigr) \Bigr)
        \tag{\nameref{ax:skewassoc}} \\
            &\equiv e_\gamma +_{\gamma}  \Bigl(e_\beta +_{\beta} \Bigl( \; \dsum\limits_{\alpha \in \Phi\setminus\{\beta,\gamma\}} e_\alpha \Bigr) \Bigr)
        \tag{Boolean algebra} \\
    &\equiv e_\gamma +_\gamma \Bigl(\; \dsum\limits_{\alpha \in \Phi\setminus\{\gamma\}} e_\alpha \Bigr)
        \tag{induction}
\end{align*}
This completes the proof.
\end{proof}

The following properties are useful for calculations with $\dsum$.
\begin{lemma}%
\label{lemma:dsum-push}
Let $b, c \in \Bexp$ and suppose that for every $\alpha \leq b$, we have an $e_\alpha \in \Exp$.
The following then holds:
\[
    c \cdot \dsum_{\alpha \leq b} e_\alpha \equiv \dsum_{\alpha \leq bc} e_\alpha
\]
Recall from above that the predicate $\alpha\leq b$ is replacing the set $\Phi=\{\alpha \mid \alpha \leq b\}$.
\end{lemma}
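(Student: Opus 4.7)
The natural approach is induction on the cardinality of the index set $\{\alpha \in \At : \alpha \leq b\}$, working with the observation that any such set can be presented as $\{\alpha : \alpha \leq b'\}$ for some Boolean $b'$ (indeed, the disjunction of its elements). In the base case, $b \equiv_{\text{BA}} 0$, so both sides are $0$: the left-hand side by \nameref{ax:absright} applied to the empty guarded sum, and the right-hand side because $bc \equiv_{\text{BA}} 0$ forces the index set to be empty.

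For the inductive step, choose any $\beta$ with $\beta \leq b$ (well-definedness of $\dsum$ up to $\equiv$ means this choice is immaterial) and expand $\dsum_{\alpha \leq b} e_\alpha \equiv e_\beta +_\beta \dsum_{\alpha \leq b, \alpha\neq\beta} e_\alpha$. Pushing $c$ through via \nameref{fact:leftdistr} gives $c e_\beta +_\beta c \cdot \dsum_{\alpha \leq b, \alpha\neq\beta} e_\alpha$. The proof then splits on whether $\beta \leq c$ or $\beta \leq \bneg{c}$. If $\beta \leq c$, then $\beta c \equiv_{\text{BA}} \beta$, so applying \nameref{ax:guard-if} twice (once forward, once in reverse) rewrites $c e_\beta +_\beta X$ to $e_\beta +_\beta X$, and the inductive hypothesis on the smaller index set $\{\alpha\leq b, \alpha\neq\beta\}$ rewrites the right summand to $\dsum_{\alpha \leq bc, \alpha\neq\beta} e_\alpha$; reassembling yields the right-hand side $\dsum_{\alpha\leq bc} e_\alpha$.

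The genuinely delicate case is $\beta \leq \bneg{c}$, which is where I expect to spend the most care. Here $\beta c \equiv_{\text{BA}} 0$, so by \nameref{ax:guard-if} and \nameref{ax:absleft} the head summand collapses, leaving $0 +_\beta c \cdot \dsum_{\alpha \leq b,\alpha\neq\beta} e_\alpha$. Applying the inductive hypothesis to the smaller sum (using that $\{\alpha \leq b, \alpha\neq\beta, \alpha\leq c\} = \{\alpha\leq bc\}$, since $\beta\not\leq c$ excludes $\beta$ from $\{\alpha\leq bc\}$ anyway) produces $0 +_\beta \dsum_{\alpha\leq bc} e_\alpha$. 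Using \nameref{ax:skewcomm} followed by \nameref{fact:neutrright2} converts this to $\bneg{\beta} \cdot \dsum_{\alpha\leq bc} e_\alpha$. The key observation is that $\{\alpha \leq bc\}$ is a subset of $\{\alpha \leq b,\alpha\neq\beta\}$ and therefore strictly smaller than the original index set, so the inductive hypothesis may be applied a \emph{second} time, rewriting $\bneg{\beta} \cdot \dsum_{\alpha\leq bc} e_\alpha$ to $\dsum_{\alpha \leq bc\bneg{\beta}} e_\alpha$. Finally, $bc\bneg{\beta} \equiv_{\text{BA}} bc$ because $\beta c \equiv 0$ implies $bc\beta \equiv 0$, so the two sums are indexed by the same set of atoms and are literally the same expression (modulo the choice-of-$\beta$ ambiguity, which is handled by the well-definedness lemma).

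The main obstacle is the second case: one might initially try to argue $0 +_\beta X \equiv X$ directly, but this is false in general (e.g.\ $0 +_1 X \equiv 0$). The correct move is to recognise that $0 +_\beta X$ is $\bneg{\beta} X$ and to invoke the inductive hypothesis a second time on the strictly smaller index set arising from $\{\alpha \leq bc\}$, which is only legal because $\beta$ has been excluded from this set precisely when $\beta \leq \bneg{c}$. Once this accounting is correct, the remaining manipulations are routine applications of the axioms in \Cref{fig:axioms} and the derivable facts in \Cref{fig:axioms-derivable}.
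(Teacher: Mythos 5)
Your proof is correct and takes essentially the same route as the paper's: induction on the number of atoms below $b$, with the same case split on whether the chosen atom $\beta$ satisfies $\beta \leq c$. The only divergence is in the $\beta \leq \bneg{c}$ case, where the paper collapses $c \cdot (e_\beta +_\beta \cdots)$ in one step via branch selection (\nameref{fact:branchsel}) and invokes the inductive hypothesis once, whereas you distribute, rewrite $0 +_\beta X$ as $\bneg{\beta}\,X$, and apply the hypothesis a second time to the strictly smaller index set of $bc$ --- a legitimate, if slightly longer, variant of the same argument.
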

\begin{proof}
We proceed by induction on the number of atoms below $b$.
In the base, where $b \equiv 0$, the claim holds vacuously.
For the inductive step, assume the claim holds for all $b'$ with strictly fewer atoms.
Let $\beta \in \At$ with $b = \beta + b'$ and $\beta \not\leq b'$.
There are two cases.
\begin{itemize}
    \item
    If $\beta \leq c$, then we derive
    \begin{align*}
    c \cdot \dsum_{\alpha \leq b} e_\alpha
        &\equiv c \cdot \Bigl (e_\beta +_\beta \Bigl( \dsum_{\alpha \leq b'} e_\alpha \Bigr) \Bigr)
            \tag{Def. $\dsum$} \\
        &\equiv c \cdot e_\beta +_\beta c \cdot \Bigl( \dsum_{\alpha \leq b'} e_\alpha \Bigr)
            \tag{\nameref{fact:guard-if-dual}} \\
        &\equiv c \cdot e_\beta +_\beta \Bigl( \dsum_{\alpha \leq b'c} e_\alpha \Bigr)
            \tag{induction} \\
        &\equiv e_\beta +_\beta \Bigl( \dsum_{\alpha \leq b'c} e_\alpha \Bigr)
            \tag{\nameref{ax:guard-if}, Boolean algebra} \\ 
        &\equiv \dsum_{\alpha \leq bc} e_\alpha
            \tag{Def. $\dsum$, Boolean algebra}
    \intertext{%
        where in the last step we use $b + c \equiv \beta + b'c$ and $\beta \not\leq b'c$.
        \item
        If $\beta \not\leq c$, then we derive
    }
    c \cdot \dsum_{\alpha \leq b} e_\alpha
        &\equiv c \cdot \Bigl (e_\beta +_\beta \Bigl( \dsum_{\alpha \leq b'} e_\alpha \Bigr) \Bigr)
            \tag{Def. $\dsum$} \\
        &\equiv c \cdot \Bigl( \dsum_{\alpha \leq b'} e_\alpha \Bigr)
            \tag{\nameref{fact:branchsel}, Boolean algebra} \\ 
        &\equiv \dsum_{\alpha \leq b'c} e_\alpha
            \tag{induction} \\
        &\equiv \dsum_{\alpha \leq bc} e_\alpha
            \tag{Boolean algebra}
    \end{align*}
    where for the last step we use $bc \equiv (b' + \beta)c = b'c$.
    \qedhere
\end{itemize}
\end{proof}

\begin{lemma}%
\label{lemma:dsum-idemp}
For all $e \in \Exp$ and $b \in \Bexp$, we have $\dsum\limits_{\alpha \leq b} e \equiv be$
\end{lemma}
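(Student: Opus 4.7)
The plan is to proceed by induction on the number of atoms below $b$, mirroring the inductive structure of the definition of $\dsum$.

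For the base case, when $b \equiv_{\text{BA}} 0$, the indexing set $\{\alpha \in \At \mid \alpha \leq b\}$ is empty, so by definition $\dsum_{\alpha \leq b} e \equiv 0$, and on the other side $be \equiv 0 \cdot e \equiv 0$ by~(\nameref{ax:absleft}). Both sides are equal to $0$.

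For the inductive step, pick any atom $\beta \leq b$ and write $b \equiv_{\text{BA}} \beta + b'$ where $b' = \sum\{\alpha \in \At : \alpha \leq b, \alpha \neq \beta\}$. By construction, $\beta$ is not below $b'$, so $\beta \cdot b' \equiv_{\text{BA}} 0$, and $b'$ has strictly fewer atoms than $b$. Unfolding the definition of $\dsum$ and applying the induction hypothesis gives
\[
    \dsum_{\alpha \leq b} e \;\equiv\; e +_\beta \dsum_{\alpha \leq b'} e \;\equiv\; e +_\beta b'e.
\]
It then remains to show $e +_\beta b'e \equiv be$.

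For this last step, I would apply~(\nameref{ax:guard-if}) to rewrite $e +_\beta b'e$ as $\beta e +_\beta b'e$, and then apply right-distributivity~(\nameref{ax:rightdistr}) in reverse to obtain $(\beta +_\beta b') \cdot e$. The Boolean expression $\beta +_\beta b'$ is equivalent (via the inclusion of Boolean equivalence into $\equiv$) to $\beta \cdot \beta + \bneg{\beta} \cdot b' \equiv_{\text{BA}} \beta + \bneg{\beta}b'$, which collapses to $\beta + b' \equiv_{\text{BA}} b$ using the disjointness $\beta \cdot b' \equiv_{\text{BA}} 0$. The main routine-but-nontrivial obstacle is precisely this Boolean bookkeeping: making sure that the chosen decomposition $b \equiv_{\text{BA}} \beta + b'$ indeed yields $\beta \cdot b' \equiv_{\text{BA}} 0$, so that the Boolean identity $\beta +_\beta b' \equiv_{\text{BA}} b$ goes through cleanly. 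Once this is in hand, the chain $e +_\beta b'e \equiv \beta e +_\beta b'e \equiv (\beta +_\beta b') \cdot e \equiv b \cdot e$ closes the induction.
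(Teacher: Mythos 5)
Your induction (on the number of atoms below $b$, peeling off an atom $\beta$ with $b\equiv_{\text{BA}}\beta+b'$ and $\beta\not\leq b'$) is exactly the paper's, and everything up to $e+_\beta b'e$ is fine. The last step, however, contains a genuine gap. After factoring with~(\nameref{ax:rightdistr}) you are left with the expression $\beta+_\beta b'$, and you dispose of it "via the inclusion of Boolean equivalence into $\equiv$". But $\beta+_\beta b'$ is \emph{not} a Boolean expression: $+_\beta$ is the GKAT guarded union on $\Exp$, not the Boolean disjunction of $\Bexp$, and the axiom that $b\equiv_{\text{BA}}c$ implies $b\equiv c$ only applies to terms of $\Bexp$. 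So the identity $\beta+_\beta b'\equiv b$ cannot be obtained by Boolean reasoning alone; it is a GKAT fact that itself needs a derivation from the axioms (for instance $b\equiv b+_\beta b\equiv \beta b+_\beta\bneg{\beta}b\equiv \beta+_\beta b'$ using~\nameref{ax:idemp},~\nameref{ax:guard-if},~\nameref{fact:guard-if-dual} and Boolean reasoning on the two components separately). Notice that this missing derivation is essentially the paper's own proof instantiated at $e=1$, so your detour through right-distributivity does not actually save the work; the paper avoids the issue by keeping $e$ inside the guarded union and rewriting $e+_\beta b'e\equiv\beta e+_\beta\bneg{\beta}b'e\equiv\beta be+_\beta\bneg{\beta}be\equiv be+_\beta be\equiv be$ directly.

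A smaller point: the "disjointness bookkeeping" $\beta\cdot b'\equiv_{\text{BA}}0$ that you flag as the main obstacle is a red herring for the Boolean part, since $\beta+\bneg{\beta}b'\equiv_{\text{BA}}\beta+b'$ holds by absorption regardless of disjointness. The real obligation is the one above: converting a guarded union of tests into a Boolean test inside $\equiv$. Once you state and derive that small lemma (for tests $c,d$: $c+_b d\equiv bc+\bneg{b}d$), your argument closes correctly.
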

\begin{proof}
The proof proceeds by induction on the number of atoms below $b$.
In the base, where $b \equiv 0$, the claim holds immediately.
Otherwise, assume the claim holds for all $b' \in \Bexp$ with strictly fewer atoms than $b$.
Let $\beta \in \At$ be such that $b = \beta \vee b'$ and $\beta \not\leq b'$.
We then calculate:
\begin{align*}
\dsum_{\alpha \leq b} e
    &\equiv e +_\beta \Bigl( \dsum_{\alpha \leq b'} e \Bigr)
        \tag{Def. $\dsum$} \\
    &\equiv e +_\beta b'e
        \tag{induction} \\
    &\equiv \beta e +_\beta \bneg{\beta}b'e
        \tag{\nameref{ax:guard-if}, \nameref{fact:guard-if-dual}} \\
    &\equiv \beta be +_\beta \bneg{\beta}be
        \tag{Boolean algebra} \\
    &\equiv be +_\beta be
        \tag{\nameref{ax:guard-if}, \nameref{fact:guard-if-dual}} \\
    &\equiv be
        \tag{\nameref{ax:idemp}}
\end{align*}
This completes the proof.
\end{proof}

\begin{lemma}%
\label{lemma:dsum-guarded}
Let $b \in \Bexp$ and suppose that for $\alpha \leq b$ we have an $e_\alpha \in \Exp$.
The following holds:
\[
    \dsum_{\alpha \leq b} e_\alpha
    \equiv
    \dsum_{\alpha \leq b} \alpha e_\alpha
\]
\end{lemma}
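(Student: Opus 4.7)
The plan is to proceed by induction on the number of atoms below $b$. In the base case, $b \equiv_\text{BA} 0$, so both indexing sets are empty and by definition of $\dsum$ both sides equal $0$.

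For the inductive step, pick any $\beta \leq b$ and write $b \equiv_\text{BA} \beta + b'$ with $\beta \not\leq b'$, so that $b'$ has one fewer atom below it than $b$. Unfolding the definition of $\dsum$ on both sides,
\[
  \dsum_{\alpha \leq b} e_\alpha \;\equiv\; e_\beta +_\beta \dsum_{\alpha \leq b'} e_\alpha,
  \qquad
  \dsum_{\alpha \leq b} \alpha e_\alpha \;\equiv\; \beta e_\beta +_\beta \dsum_{\alpha \leq b'} \alpha e_\alpha.
\]
By the induction hypothesis applied to $b'$, the two ``tail'' sums are already $\equiv$-equivalent, so it suffices to show $e_\beta +_\beta X \equiv \beta e_\beta +_\beta X$ for any $X$. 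But this is a direct instance of the guardedness axiom \nameref{ax:guard-if} ($e +_b f \equiv be +_b f$) with $b = \beta$. Chaining these two equivalences gives the desired equality.

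I do not anticipate any real obstacle here: the only subtlety is that $\dsum$ is only well-defined up to $\equiv$ in its choice of ``pivot'' atom, but the preceding lemma in the paper (the one establishing that $\dsum$ is well-defined) lets us freely choose $\beta$ as the pivot on both sides, which is what makes the single application of \nameref{ax:guard-if} suffice. Everything else is bookkeeping in Boolean algebra to justify the split $b \equiv_\text{BA} \beta + b'$.
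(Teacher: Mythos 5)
Your proof is correct and is essentially the paper's own argument: induction on the number of atoms below $b$, splitting off a pivot atom $\beta$ with $b \equiv_\text{BA} \beta + b'$, applying the guardedness axiom (\nameref{ax:guard-if}) to the head and the induction hypothesis to the tail, with the well-definedness of $\dsum$ justifying the common choice of pivot. No gaps.
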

\begin{proof}
The proof proceeds by induction on the number of atoms below $b$.
In the base, where $b \equiv 0$, the claim holds immediately.
Otherwise, assume that the claim holds for all $b' \in \Bexp$ with strictly fewer atoms.
Let $\beta \in \At$ be such that $b = \beta \vee b'$ and $\beta \not\leq b'$.
We then calculate:
\begin{align*}
\dsum_{\alpha \leq b} e_\alpha
    &\equiv e +_\beta \Bigl( \dsum_{\alpha \leq b'} e_\alpha \Bigr)
        \tag{Def. $\dsum$} \\
    &\equiv \beta e_\beta +_\beta \Bigl( \dsum_{\alpha \leq b'} \alpha e_\alpha \Bigr)
        \tag{\nameref{ax:guard-if}} \\
    &\equiv \dsum_{\alpha \leq b} \alpha e_\alpha
        \tag{Def. $\dsum$}
\end{align*}
This completes the proof.
\end{proof}

\section{Coalgebraic Structure}%
\label{sec:coalg}

%
%

\subsection{Final coalgebra}%
\label{sec:final}

We give two alternative characterizations of the final $G$-coalgebra.

\subsubsection{Nonexpansive maps}%
\label{sec:nonexpansive}

For any $x\in A\star + A^\omega$, let $x|_n$ denote the prefix of $x$ of length $n$, or $x$ itself if the length of $x$ is less than $n$. One characterization of the final $G$-coalgebra is $(\FF,D)$, where
\begin{itemize}
\item
$\FF$ is the set of maps
$f:\At^\omega \to (\Sigma\star\times 2) + \Sigma^\omega$ that are \emph{nonexpansive} under the usual metric on $\At^\omega$ and $(\Sigma\star\times 2) + \Sigma^\omega$; that is, if $x,y\in\At^\omega$ and $x|_n=y|_n$, then $f(x)|_n=f(y)|_n$.

Nonexpansiveness is the manifestation of determinacy in the final coalgebra. It follows from nonexpansiveness that $\hd f(\alpha x)=\hd f(\alpha y)$ for all $x,y\in\At^\omega$. Here $\hd$ and $\tl$ are the usual head and tail functions on nonnull finite or infinite sequences.

\item
$D_\alpha:\FF\to 2+\Sigma\times\FF$, where
\begin{align*}
D_\alpha(f) &= \begin{cases}
(\hd f(\alpha x),\lam x{\tl f(\alpha x)}), & \text{if $f(\alpha x)\not\in 2$}\\
f(\alpha x), & \text{if $f(\alpha x)\in 2$.}
\end{cases}
\end{align*}
\end{itemize}

The unique homomorphism $N:(X,\delta)\to(\FF,D)$ is defined coinductively by
\begin{align*}
N(s) &= \lam{x:\At^\omega}{\begin{cases}
p\cdot N(t)(\tl x), & \text{if $\delta_{\hd x}(s) = (p,t)$}\\
\delta_{\hd x}(s), & \text{if $\delta_{\hd x}(s)\in 2$}.
\end{cases}}
\end{align*}
A state $s$ of this coalgebra is live if $N(s)(x)\in\Sigma\star\times\{1\}$ for some $x\in\At^\omega$.

\subsubsection{Labeled trees}%
\label{sec:labeledtrees}

Another characterization of the final $G$-coalgebra is in terms of labeled trees.
The nodes of the trees are represented by elements of $\At\star$ and the labels are elements of $\Sigma$ and $2$. A \emph{labeled tree} is a partial map $t:\At^+\pfun\Sigma+2$ such that if $t(x)\in 2$, then $t(y)\in\Sigma$ for all nonnull proper prefixes $y$ of $x$, and $t(z)$ is undefined for all proper extensions $z$ of $x$. The root $\eps$ is always unlabeled. In this characterization, the structure map is
\begin{align*}
& D_\alpha:(\At^+\pfun\Sigma)\to 2 + \Sigma\times(\At^+\pfun\Sigma)\\
& D_\alpha(t) = \begin{cases}
(t(\alpha),t@\alpha) & \text{if $\alpha\in\dom t$ and $t(\alpha)\in\Sigma$,}\\
t(\alpha), & \text{if $\alpha\in\dom t$ and $t(\alpha)\in 2$,}\\
\text{undefined}, & \text{if $\alpha\not\in\dom t$}
\end{cases}
\end{align*}
where $t@\alpha$ is the subtree rooted at $\alpha$: $(t@\alpha)(x) = t(\alpha x)$, $x\in\At^+$.

The unique homomorphism $T$ from $(X,\delta)$ to the final coalgebra is defined coinductively by
\begin{align*}
T(s)(\alpha) &= \begin{cases}
p, & \text{if $\delta_\alpha(s)=(p,t)$}\\
\delta_\alpha(s), & \text{if $\delta_\alpha(s)\in 2$}
\end{cases}
\qquad
T(s)(\alpha x) = \begin{cases}
T(t)(x), & \text{if $\delta_\alpha(s)=(p,t)$}\\
\text{undefined}, & \text{if $\delta_\alpha(s)\in 2$.}
\end{cases}& x\in\At^+.
\end{align*}

\subsection{Bisimilarity}%
\label{sec:bisim}

Let $N$ and $T$ denote the unique homomorphisms from any $G$-coalgebra to the final $G$-coalgebra as characterized in \cref{sec:nonexpansive,sec:labeledtrees}, respectively. Let $\sim$ denote bisimilarity (\Cref{def:bisimilarity}).

\begin{lemma}%
\label{lem:bisim}
The following are equivalent:
\begin{enumerate}[\upshape(i)]
\item $s\sim t$;
\item $N(s)=N(t)$;
\item $T(s)=T(t)$.
\end{enumerate}
In addition, if $X$ and $Y$ are normal, then these conditions are also equivalent to
\begin{enumerate}[\upshape(i)]
\setcounter{enumi}3
\item $\lang{}(s) = \lang{}(t)$;
\item $\Lang{}(s)=\Lang{}(t)$.
\end{enumerate}
\end{lemma}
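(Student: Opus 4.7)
The plan is to establish the equivalences in a cycle, leveraging the universal property of the final $G$-coalgebra and the normality assumption where needed.

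First, for the equivalence of (i), (ii), and (iii) in the general case, I would observe that both $(\FF,D)$ and $(\At^+\pfun\Sigma+2, D)$ are final $G$-coalgebras, so they are isomorphic, and the unique mediating isomorphism commutes with $N$ and $T$; this immediately gives (ii) $\iff$ (iii). For (i) $\implies$ (ii), I would use the fact that if $R\subseteq X\times Y$ is a bisimulation with $(s,t)\in R$, then $R$ carries a $G$-coalgebra structure $\delta^R$ such that the two projections $\pi_1:R\to X$ and $\pi_2:R\to Y$ are $G$-coalgebra homomorphisms; by finality, $N\circ\pi_1 = N\circ\pi_2$, so $N(s)=N(t)$. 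For (ii) $\implies$ (i), I would show that the kernel $K = \{(s,t) \mid N(s)=N(t)\}$ is itself a bisimulation, by a direct check on the definition of $D$: if $N(s)=N(t)$ and $\delta^X(s)(\alpha)=(p,s')$, then $D_\alpha(N(s)) = (p, N(s')) = D_\alpha(N(t))$, forcing $\delta^Y(t)(\alpha)=(p,t')$ with $N(s')=N(t')$, so $(s',t')\in K$; the case $\delta^X(s)(\alpha)\in 2$ is analogous.

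For the equivalences involving (iv) and (v) under the normality assumption, I would proceed as follows. The direction (i) $\iff$ (iv) is already established as \Cref{lem:equiv-bisim}, so it suffices to connect (iv) and (v). The inclusion $\lang{}(s)=\Lang{}(s)\cap\Gs$ shows immediately that (v) $\implies$ (iv). For (iv) $\implies$ (v), I would invoke the fact that for normal coalgebras, the language $\Lang{}(s)$ is fully determined by $\lang{}(s)$: indeed, the discussion in \cref{sec:discussion} identifies $\Lang{}(s)$ as the topological closure of $\lang{}(s)$ in $\GS\cup\wGs$, or equivalently, an infinite string $\alpha_0 p_1 \alpha_1 p_2 \cdots$ lies in $\Lang{}(s)$ iff each of its finite prefixes extends to some word in $\lang{}(s)$, which by normality (no transitions to dead states) is equivalent to the existence of a genuine infinite run from $s$ consistent with the prefix. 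Since this characterization depends only on $\lang{}(s)$, equality of finite-string languages propagates to equality of full languages.

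The main obstacle, I expect, will be the (iv) $\implies$ (v) step: one must argue carefully that in a normal coalgebra every infinite accepted string is the ``limit'' of finite accepted strings sharing its prefixes, so that the finite-string language determines the infinite-string language. This is where normality does real work — in a non-normal coalgebra, a state could loop forever along a path all of whose finite prefixes lead only to dead continuations, producing an infinite accepted string with no finite-string ``witnesses.'' The cleanest way to handle this is to appeal to the topological characterization developed in the coalgebraic structure section, reducing the step to a standard closure argument, rather than redoing the coinductive bookkeeping by hand.
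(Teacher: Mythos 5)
Your proposal is correct, and its core ingredients coincide with the paper's: the two presentations of the final coalgebra are related by the evident one-to-one correspondence (your ``both are final, hence isomorphic'' is the abstract form of the paper's explicit translation between nonexpansive maps and labeled trees), and the equivalence of (i) and (ii) rests on the kernel of $N$ being a bisimulation. Where you diverge is in the plumbing. For (i)$\implies$(ii) the paper argues coinductively that any bisimulation refines the kernel of $N$, computing $N(s)(\alpha x)=N(t)(\alpha x)$ directly, whereas you equip the bisimulation itself with a $G$-coalgebra structure and invoke finality; this works smoothly here because the transition structure is deterministic, so the witnessing pair $(x',y')$ is uniquely determined and no choice is needed. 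For (iv) and (v) the paper simply defers to~\cite{KT08a}, noting that normality makes $\lang{}(s)$ dense in $\Lang{}(s)$; your route is more self-contained: you get (i)$\iff$(iv) from \Cref{lem:equiv-bisim} (which is legitimate and non-circular, since that corollary is proved via \Cref{thm:final-for-normal} independently of the present lemma), and you prove (iv)$\iff$(v) by the prefix/closure characterization of $\Lang{}(s)$ in terms of $\lang{}(s)$, which is essentially \Cref{lem:normal}(i). One small imprecision, which you yourself correct in your final paragraph: normality is needed only for the direction ``infinite accepted string $\implies$ every finite prefix has a witness in $\lang{}(s)$'' (targets of transitions are live), while the converse direction needs only determinism of the transition map; stated that way, your closure argument goes through without further work.
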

\begin{proof}
The equivalence of (ii) and (iii) follows from the one-to-one correspondence between non-expansive maps $f:\At^\omega \to (\Sigma\star\times 2) + \Sigma^\omega$ and labeled trees $t:\At^+\pfun\Sigma+2$ and the observation that $N$ and $T$ assign corresponding values to any state. Given a labeled tree $t$, the corresponding non-expansive map $f$ assigns to $x\in\At^\omega$ the concatenation of the labels $t(x|_n)$ along the path $x$. Conversely, given $f$, the corresponding $t$ assigns to $x\in\At^n$ the $n$-th symbol of $f(y)$ for any $y\in\Ato$ such that $x\prec y$.

To show the equivalence of (i) and (ii), we show that the kernel of $N$ is the maximal bisimulation. To show that it is a bisimulation, suppose $N(s)=N(t)$. If $\delta_\alpha(s)\in 2$, then for any $x$, $N(t)(\alpha x) = N(s)(\alpha x) = \delta_\alpha(s)\in 2$, so $
\delta_\alpha(t) = N(t)(\alpha x) = N(s)(\alpha x) = \delta_\alpha(s)$.

If $\delta_\alpha(s) = (p,u)$, then for any $x$, $N(t)(\alpha x) = N(s)(\alpha x) = p\cdot N(u)(x)$. Thus it must be that $\delta_\alpha(t) = (p,v)$ for some $v$ and
\begin{align*}
p\cdot N(v)(x) = N(t)(\alpha x) = N(s)(\alpha x) = p\cdot N(u)(x).
\end{align*}
Since $x$ was arbitrary, $N(u) = N(v)$.

Now we show that any bisimulation refines the kernel of $N$; that is, if $s\equiv t$, then $N(s) = N(t)$. Let $\alpha x\in\Ato$ be arbitrary. If $\delta_\alpha(s)=\delta_\alpha(t)\in 2$, then
\begin{align*}
N(s)(\alpha x) &= \delta_{\alpha}(s) = \delta_{\alpha}(t) = N(t)(\alpha x).
\end{align*}
On the other hand, if $\delta_\alpha(s) = (p,u)$ and $\delta_\alpha(t) = (p,v)$, then $u\equiv v$ and
\begin{align*}
N(s)(\alpha x) &= p\cdot N(u)(x) &
N(t)(\alpha x) &= p\cdot N(v)(x).
\end{align*}
By the coinductive hypothesis, $N(u)=N(v)$, therefore
\begin{align*}
N(s)(\alpha x) &= p\cdot N(u)(x) = p\cdot N(v)(x) = N(t)(\alpha x).
\end{align*}
Thus in all cases, $N(s)(\alpha x) = N(t)(\alpha x)$. As $\alpha x$ was arbitrary, $N(s) = N(t)$.

For (iv) and (v), see~\cite[\S2.5]{KT08a}. It was shown there that if the two normal $G$-automata accept the same set of finite strings, then they are bisimilar. This is because normality implies that $\lang{}(s)$ is dense in $\Lang{}(s)$. The result of~\cite[\S2.5]{KT08a} was proved under the assumption of no failures, but this assumption turns out not to be needed. Normality implies that any dead state $s$ must immediately fail under all inputs, that is, $\delta_\alpha(s)=0$ for all $\alpha$, thus any two such states are bisimilar.
\end{proof}

\subsection{Determinacy and closure}%
\label{sec:detclosure}

We have discussed the importance of the \emph{determinacy property} (\Cref{def:action-det}) of languages represented by GKAT expressions and $G$-automata.

In addition to the determinacy property, the languages $\Lang{}(s)$ satisfy a certain topological closure property. Let $\At^\omega$ have its Cantor space topology generated by basic open sets $\set{y\in\Ato}{x\preceq y}$ for $x\in\At\star$, where $\preceq$ is the prefix relation. This is the same as the metric topology generated by the metric $d(x,y) = 2^{-n}$, where $n$ is the length of the longest prefix on which $x$ and $y$ agree, or $0$ if $x=y$. The space is compact and metrically complete (all Cauchy sequences converge to a limit).

For $x\in\GS\cup\GSo$, let $\atoms(x)\in\At^+\cup\Ato$ be the sequence of atoms in $x$ and let $\act(x)\in\Sigma\star\cup\Sigma^\omega$ be the sequence of actions in $x$. E.g., $\atoms(\alpha p\beta q\gamma) = \alpha\beta\gamma$ and $\act(\alpha p\beta q\gamma) = pq$. For $A\subs\GS\cup\GSo$, let
\begin{align*}
\up A &= \bigcup_{x\in A} \set{y\in\Ato}{\atoms(x)\preceq y}.
\end{align*}

\noindent The property of $\Lang{}(s)$ of interest is

\paragraph{Closure}
A set $A\subs\GS\cup\GSo$ satisfies the \emph{closure property} if $\up A$ is topologically closed in $\Ato$.

\medskip

The set $\up{\Lang{}(s)}$ is the set of infinite sequences of atoms leading to acceptance, starting from state $s$. This is a closed set, as the limit of any Cauchy sequence of atoms leading to acceptance---whether after a finite or infinite time---also leads to acceptance.

The set $\up{\lang{}(s)}$ is the set of infinite sequences of atoms leading to acceptance after a \emph{finite} time, starting from state $s$. This is an open set, as it is the union of basic open sets $\up{\{x\}}$ for $x\in\lang{}(s)$.

The set $\up{\Lang{}(s)}$ is the disjoint union of strings in $\Ato$ leading to acceptance after a finite (respectively, infinite) time:
\begin{align*}
\up{\Lang{}(s)} &= \up{\lang{}(s)}\ \uplus\ \set{\atoms(x)}{x\in\Lo(s)}
\end{align*}
The two sets on the right-hand side are disjoint due to the determinacy property. The complement of this set is $\Ato\setminus\up{\Lang{}(s)}$, the set of strings leading to rejection after a finite time starting from $s$.
Since $\up{\Lang{}(s)}$ is closed, its complement $\Ato\setminus\up{\Lang{}(s)}$ is open, thus a union of basic open sets.
Let $B\subs\At\star$ be a collection of minimal-length finite strings of atoms such that
\begin{align*}
\Ato\setminus\up{\Lang{}(s)} &= \bigcup_{x\in B}\up{\{x\}}.
\end{align*}
Because the strings in $B$ are of minimal length, they are prefix-incomparable, thus the basic open sets $\up{\{x\}}$ for $x\in B$ are disjoint and maximal with respect to set inclusion.

\subsection{Language models}%
\label{sec:languagemodel}
The subsets of $\GS\cup\wGs$ satisfying the determinacy property and the closure property form the carrier of a $G$-coalgebra $\LL'$. The structure map is the semantic Brzozowski derivative:
\begin{align*}
\delta^{\LL'}_\alpha(A) = \begin{cases}
(p,\set{x}{\alpha px\in A}) & \text{if $\set{x}{\alpha px\in A}\ne\emptyset$}\\
1 & \text{if $\alpha\in A$}\\
0 & \text{otherwise.}
\end{cases}
\end{align*}
Exactly one of these conditions holds by determinacy.
Although this looks similar to the language model $\LL$ of \Cref{sec:bisimNormal}, they are not the same: the states of $\LL'$ contain finite and infinite strings, whereas the states of $\LL$ contain of finite strings only. The models $\LL$ and $\LL'$ are not isomorphic. We derive the precise relationship below.

\begin{lemma}
If $A\subs\GS\cup\wGs$ satisfies determinacy and closure, then so does $\set{x}{\alpha px\in A}$.
\end{lemma}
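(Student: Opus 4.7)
The plan is to verify determinacy and closure of $A' \defeq \set{x}{\alpha px\in A}$ separately, noting that determinacy is an immediate consequence of determinacy of $A$, while closure will follow from a continuity argument combined with determinacy.

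For determinacy of $A'$: if $x,y\in A'$ agree on their first $n$ atoms, then $\alpha px$ and $\alpha py$ lie in $A$ and agree on their first $n+1$ atoms. By determinacy of $A$, they agree on their first $n+1$ actions; the first action is $p$ on both sides, so $x$ and $y$ agree on their first $n$ actions. This step is routine.

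For closure of $A'$, the plan is as follows. If $A'=\emptyset$, then $\up{A'}=\emptyset$ is trivially closed. Otherwise, pick some $x_0\in A'$, so that $\alpha px_0\in A$. First I would exploit determinacy of $A$ to observe two things: (a) $\alpha\notin A$ (else $\alpha$ and $\alpha px_0$ would agree on the first atom but disagree on whether a first action exists), and (b) there is no $z\in A$ of the form $\alpha q z'$ with $q\neq p$ (else $\alpha qz'$ and $\alpha px_0$ would agree on the first atom but disagree on the first action). Consequently, every $z\in A$ whose atom-sequence starts with $\alpha$ must actually have the form $z=\alpha pz'$ with $z'\in A'$.

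Next I would introduce the prepending map $\phi\colon\Ato\to\Ato$ defined by $\phi(y)=\alpha y$, which is continuous (indeed an isometry scaling distances by $\tfrac12$) in the Cantor topology on $\Ato$. Using the observation above, I would verify the identity
\[
  \up{A'} \;=\; \phi^{-1}(\up{A}),
\]
by showing both inclusions: if $y\in\up{A'}$ with witness $x\in A'$, then $\alpha px\in A$ has $\atoms(\alpha px)=\alpha\cdot\atoms(x)\preceq\alpha\cdot y=\phi(y)$, so $\phi(y)\in\up{A}$; conversely, if $\phi(y)\in\up{A}$ via witness $z\in A$, then $\atoms(z)\preceq\alpha\cdot y$ forces $z$ to start with $\alpha$, and by the determinacy observation $z$ must be of the form $\alpha pz'$ with $z'\in A'$ and $\atoms(z')\preceq y$, giving $y\in\up{A'}$. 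Since $\up{A}$ is closed by hypothesis and $\phi$ is continuous, $\phi^{-1}(\up{A})=\up{A'}$ is closed, as required.

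The main (mild) obstacle is the reverse inclusion in the identity $\up{A'}=\phi^{-1}(\up{A})$, which is where determinacy of $A$ is genuinely used to rule out unintended witnesses in $A$ that start with $\alpha$ but do not continue with the action $p$; the topological half is then immediate from continuity of $\phi$.
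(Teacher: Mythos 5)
Your proof is correct, and while the determinacy half is exactly the paper's argument, the closure half takes a genuinely different (though closely related) route. The paper argues sequentially: it takes a Cauchy sequence in $\up{\set{x}{\alpha px\in A}}$, prepends $\alpha$ to get a Cauchy sequence in the closed set $\up A$, and then uses determinacy to see that the witness $z\in A$ for the limit must have the form $\alpha p y$, pulling the limit back into $\up{\set{x}{\alpha px\in A}}$. You instead package the same determinacy content as a set identity $\up{A'}=\phi^{-1}(\up A)$ for the prepending map $\phi(y)=\alpha y$ and conclude by continuity, with a separate trivial case when $A'=\emptyset$. Both hinge on the identical key step --- determinacy forcing every element of $A$ whose atom sequence starts with $\alpha$ to be of the form $\alpha p z'$ with $z'\in A'$ --- but your phrasing buys two things: it avoids sequence-chasing entirely (no appeal to sequential characterizations of closedness), and it makes explicit the nonemptiness hypothesis that the determinacy step needs, which in the paper is only implicit in the fact that a Cauchy sequence in $\up{A'}$ presupposes $A'\neq\emptyset$. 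Your case split is not cosmetic: for $A=\{\alpha\}$ one has $A'=\emptyset$ while $\phi^{-1}(\up A)=\At^\omega$, so the identity genuinely fails without it, and you were right to isolate that case. The paper's version, in exchange, is slightly more elementary in that it never needs the identity as a whole, only the one limit point at hand.
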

\begin{proof}
Suppose $A$ satisfies determinacy. Let $y,z\in\set{x}{\alpha px\in A}$ agree on their first $n$ atoms. Then $\alpha py,\alpha pz\in A$ and agree on their first $n+1$ atoms. Since $A$ satisfies determinacy, $\alpha py$ and $\alpha pz$ agree on their first $n+1$ actions. Then $y$ and $z$ agree on their first $n$ actions. As $y$ and $z$ were arbitrary, $\set{x}{\alpha px\in A}$ satisfies determinacy.

Now suppose $A$ also satisfies closure. Let $x_0,x_1,\ldots$ be a Cauchy sequence in $\up{\set{x}{\alpha px\in A}}$. Then $\alpha x_0,\alpha x_1,\ldots$ is a Cauchy sequence in $\up A$. Since $\up A$ is closed, the sequence has a limit $\alpha x\in\up A$. There must exist $z\in A$ such that $\alpha x=\atoms(z)$. By determinacy, $z$ must be of the form $\alpha p y$, and $\alpha x=\atoms(z)=\alpha\atoms(y)$, thus $\atoms(y)$ is the limit of $x_0,x_1,\ldots~$.
\end{proof}

\begin{lemma}%
\label{lem:LAA}
For $A$ a state of $\LL'$, $\Lang{}(A)=A$.
\end{lemma}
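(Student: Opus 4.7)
The statement decomposes into two inclusions, which I will prove separately. For $A \subseteq \Lang{}(A)$, my plan is to apply coinduction: since $\Lang{}$ is defined as a greatest fixpoint, it suffices to show that the relation $F = \{(A,w) : w \in A\}$ between states of $\LL'$ and elements of $\GS \cup \wGs$ is a post-fixpoint of the acceptance operator from \eqref{def:accept}. If $w = \alpha \in A$, determinacy forbids any string $\alpha p y \in A$, so the first clause of the semantic Brzozowski derivative cannot apply and $\delta^{\LL'}_\alpha(A) = 1$. If $w = \alpha p w' \in A$, then $\{x : \alpha p x \in A\}$ is nonempty and (by determinacy of $A$) $p$ is the unique first action following $\alpha$ among elements of $A$, so $\delta^{\LL'}_\alpha(A) = (p, A')$ with $w' \in A' = \{x : \alpha p x \in A\}$; the pair $(A', w')$ lies in $F$. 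Coinduction then yields $w \in \Lang{}(A)$.

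For the reverse inclusion $\Lang{}(A) \subseteq A$, I will split on whether $w$ is finite or infinite. The finite case $w = \alpha_0 p_1 \alpha_1 \cdots p_k \alpha_k$ is routine: unwinding the acceptance trace produces a sequence $A_0 = A, A_1, \ldots, A_k$ of derived states with $A_{n+1} = \{y : \alpha_0 p_1 \cdots p_{n+1} y \in A\}$, and the terminal equation $\delta^{\LL'}_{\alpha_k}(A_k) = 1$ forces $\alpha_k \in A_k$, which unfolds back to $w \in A$.

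The main obstacle is the infinite case, where I expect the closure property to be essential. For $w = \alpha_0 p_1 \alpha_1 p_2 \cdots$, the infinite trace produces nonempty states $A_n$, each containing a string beginning with $\alpha_n p_{n+1}$ (since $w_{\geq n} \in \Lang{}(A_n)$ forces the derivative at $\alpha_n$ to lie in $\Sigma \times \LL'$). Pulling these witnesses back to $A$ yields strings $v_n \in A$ whose first $n{+}1$ atoms agree with $\alpha_0 \cdots \alpha_n$; extending each $\atoms(v_n)$ arbitrarily to some $y_n \in \up A$ produces a Cauchy sequence in $\Ato$ converging to $\atoms(w)$. Closure of $\up A$ then gives $\atoms(w) \in \up A$, so there exists $x \in A$ with $\atoms(x) \preceq \atoms(w)$.

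The remaining step, which I expect to be the subtlest, is upgrading this atom-level prefix relationship to $x = w$ in $\GS \cup \wGs$. I will argue by induction on position, using determinacy of each $A_n$ to synchronize actions: if $x$ matches $w$ through position $n$, then the appropriate suffix $\alpha_n z_n$ of $x$ lies in $A_n$, and the $\Lang{}$ trace of $w$ also contributes a witness $\alpha_n p_{n+1} z \in A_n$, so determinacy of $A_n$ forces the next action of $x$ to be $p_{n+1}$. If $x$ were a proper finite prefix of $w$ ending at $\alpha_k$, then $\alpha_k \in A_k$ would coexist with $\alpha_k p_{k+1} z \in A_k$, violating determinacy of $A_k$. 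Hence $x$ must be infinite, coinciding with $w$ as a guarded string, and therefore $w \in A$.
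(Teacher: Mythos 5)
Your proof is correct, but it takes a more explicit route than the paper's. The paper disposes of the lemma in a few lines: it shows $\accept(A,x) \iff x \in A$ by a single two-case calculation ($x=\alpha$ and $x=\alpha p x'$), closing the recursive case with ``by the coinductive hypothesis''; it never separates finite from infinite strings and never mentions the closure property inside the proof. You instead split the equality into two inclusions: $A \subseteq \Lang{}(A)$ by exhibiting the membership relation as a post-fixpoint of the acceptance operator (this matches one direction of the paper's calculation), and $\Lang{}(A) \subseteq A$ by induction on length for finite strings plus a metric-limit argument for infinite strings, where you use closure of $\up A$ to get $\atoms(w) \in \up A$ and then determinacy of the iterated derivatives $A_n$ (which are again states of $\LL'$ by the preceding lemma) to upgrade the witness to $w$ itself. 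What your version buys is that it makes visible exactly where the two defining properties of $\LL'$ are consumed; in particular, closure is genuinely indispensable for the infinite-string half of the reverse inclusion -- for instance $A=\{(\alpha p)^n\beta : n\ge 0\}$ with $\alpha\neq\beta$ is deterministic but not closed, and $(\alpha p)^\omega \in \Lang{}(A)\setminus A$ -- a subtlety that the paper's terse bi-implication-by-coinduction leaves implicit, since a plain coinductive argument only yields the inclusion of the fixpoint $A$ into the greatest fixpoint $\Lang{}(A)$, not the converse. The paper's proof buys brevity; yours buys a self-contained justification of the direction that coinduction alone does not give.
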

\begin{proof}
We wish to show that $\accept(A,x)$ iff $x\in A$. For $\alpha\in\At$,
\begin{align*}
\accept(A,\alpha)
\Iff \delta^{\LL'}_\alpha(A) = 1
\Iff \alpha\in A.
\end{align*}
For $\alpha px$,
\begin{align*}
\accept(A,\alpha px)
&\Iff \exists B\ \delta^{\LL'}_\alpha(A) = (p,B) \wedge \accept(B,x)\\
&\Iff \lefteqn{\set{y}{\alpha py\in A}\ne\emptyset \wedge \accept(\set{y}{\alpha py\in A},x)}\\
&\Iff x\in\set{y}{\alpha py\in A} & \text{by the coinductive hypothesis}\\
&\Iff \alpha px\in A. &&\qedhere
\end{align*}
\end{proof}

The language model $\LL'$ embeds in the final $G$-coalgebra $\FF$ of \cref{sec:nonexpansive}. For $x\in\Ato$ and $z\in\Sigma\star\cup\Sigma^\omega$, let us write $x\|z$ for the unique $y\in\GS\cup\GSo$ such that $\atoms(y)\preceq x$ and $\act(y)=z$. For $A$ satisfying the determinacy and closure conditions,
\begin{align*}
A &\mapsto \lam{x:\Ato}{\begin{cases}
z, & z\in\Sigma^\omega \wedge x\|z\in A,\\
z1, & z\in\Sigma\star \wedge x\|z\in A,\\
z0, & z\in\Sigma\star,\ \text{$z$ is $\preceq$-minimal such that for no extension $z'$ of $z$ is $x\|z'\in A$.}
\end{cases}}
\end{align*}
However, $\LL'$ and $\FF$ are not isomorphic, because $\LL'$ does not distinguish early and late rejection: an automaton could take several transitions before rejecting or reject immediately, and the same set of finite and infinite strings would be accepted. Consequently, $L:(X,\delta^X)\to\LL'$ is not a coalgebra homomorphism in general. However, normality rules out this behavior. As we show in \Cref{lem:normal}, $L$ is a homomorphism if $(X,\delta^X)$ is normal. Thus $\LL'$ contains the unique homomorphic image of all normal $G$-coalgebras.

In \Cref{thm:normalfinal} we will identify a subcoalgebra $\LL''$ of $\LL'$ that is final in the category of normal $G$-coalgebras.

\begin{lemma}%
\label{lem:normal}
If $(X,\delta^X)$ is normal, then the following hold:
\begin{enumerate}[{\upshape(i)}]
\item
$\up{\Lang{}(s)}$ is the closure of $\up{\lang{}(s)}$ in $\At^\omega$;
\item
$\Lang{}:(X,\delta^X)\to\LL'$ is a coalgebra homomorphism.
\end{enumerate}
\end{lemma}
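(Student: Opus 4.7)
The plan is to prove the two parts in order, with part (i) providing a topological description of $\up{\Lang{}(s)}$ that will make part (ii) easier to verify.

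For part (i), the inclusion $\overline{\up{\lang{}(s)}} \subseteq \up{\Lang{}(s)}$ is the easy direction: $\up{\Lang{}(s)}$ is closed by the closure property (discussed in Section~\ref{sec:detclosure}), and it contains $\up{\lang{}(s)}$ since $\lang{}(s) \subseteq \Lang{}(s)$; hence it contains the closure. The reverse inclusion is the crux of the lemma and the place where normality does all the work. Pick $x \in \up{\Lang{}(s)}$. If $x$ extends $\atoms(w)$ for some finite $w \in \lang{}(s)$, then $x \in \up{\lang{}(s)}$ trivially. Otherwise $x = \atoms(w)$ for an infinite accepted trace $w = \alpha_0 p_1 \alpha_1 p_2 \cdots \in \wGs$. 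The key observation is that every state $s_n$ reached along this trace is live: dead states can neither accept (dead means $\lang{\X}(s_n)=\emptyset$) nor reach accepting states, so an infinite accepting trace could not pass through them. Hence for each $n$ there exists a finite accepted continuation from $s_n$, which, concatenated with the prefix $\alpha_0 p_1 \cdots p_n \alpha_n$, yields a string $w_n \in \lang{}(s)$ with $\atoms(w_n)$ agreeing with $x$ on at least the first $n{+}1$ atoms. The sequence $\atoms(w_n)$ converges to $x$ in the Cantor topology, witnessing $x \in \overline{\up{\lang{}(s)}}$.

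For part (ii), I need to verify that $\delta^{\LL'}_\alpha(\Lang{}(s)) = (G\,\Lang{})(\delta^X_\alpha(s))$ by a case analysis on $\delta^X_\alpha(s)$. When $\delta^X_\alpha(s) = 1$, then $\alpha \in \Lang{}(s)$ by definition of $\accept$, and determinacy of $\Lang{}(s)$ forbids any $\alpha p x \in \Lang{}(s)$, so $\delta^{\LL'}_\alpha(\Lang{}(s)) = 1$ as required. When $\delta^X_\alpha(s) = 0$, no word beginning with $\alpha$ is accepted from $s$, so $\alpha \notin \Lang{}(s)$ and $\{x \mid \alpha p x \in \Lang{}(s)\} = \emptyset$, giving $\delta^{\LL'}_\alpha(\Lang{}(s)) = 0$. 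The interesting case is $\delta^X_\alpha(s) = (p,t)$: a direct unfolding of $\accept$ yields the set equation $\{x \in \GS \cup \wGs \mid \alpha p x \in \Lang{}(s)\} = \Lang{}(t)$, so it only remains to show that this set is nonempty, which places us in the third clause of $\delta^{\LL'}$. Here normality is essential: since $s$ has a transition to $t$ and $\X$ has no transitions to dead states, $t$ is live, hence $\lang{}(t) \neq \emptyset$ and a fortiori $\Lang{}(t) \neq \emptyset$. This closes the case analysis and yields $\delta^{\LL'}_\alpha(\Lang{}(s)) = (p,\Lang{}(t))$, so $\Lang{}$ is a homomorphism.

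The main obstacle is the reverse inclusion in part (i); the rest is routine case-chasing. The subtlety is that one must argue that each intermediate state on an infinite accepted trace is live, which is exactly what normality guarantees, and that the resulting finite-string approximants indeed converge in the Cantor topology on $\Ato$. Once part (i) is established, part (ii) is essentially a translation exercise between $\accept$ and $\delta^{\LL'}$, with the one nontrivial observation being that normality ensures the witness set $\Lang{}(t)$ is nonempty so that $\delta^{\LL'}_\alpha$ selects the $(p,-)$ branch rather than degenerating to $0$.
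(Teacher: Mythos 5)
Your proposal is correct and follows essentially the same route as the paper's proof: for (i) you approximate an infinite accepted trace by the finite accepted strings obtained by extending its finite prefixes, and conclude via a Cauchy sequence in the Cantor topology; for (ii) you perform the same three-way case analysis on $\delta^X_\alpha(s)$, invoking normality only in the $(p,t)$ case to see that the witness set $\{x \mid \alpha p x \in \Lang{}(s)\} = \Lang{}(t)$ is nonempty, so the Brzozowski derivative takes the $(p,-)$ branch.

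One in-line justification in (i) is off, although the fact you need is true and you name the right reason at the end. You claim every state on the infinite trace is live because dead states ``can neither accept nor reach accepting states, so an infinite accepting trace could not pass through them.'' Coinductive acceptance of an infinite string never requires reaching an accepting state, only that transitions exist forever; a dead state with a self-loop on every atom carries infinite accepted traces, so this inference fails for general $G$-coalgebras. The correct argument, which is the one the paper uses and which your closing paragraph gestures at, is that normality forces dead states to reject immediately (every successor of a dead state is dead, and a normal coalgebra has no transitions to dead states), so each state $s_n$ with $n \ge 1$ along the trace is the target of a transition and hence live, and $s_0 = s$ is live because it steps to the live state $s_1$. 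With that repair, the extension of each prefix to a string in $\lang{}(s)$ and the convergence argument go through exactly as in the paper.
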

\begin{proof}
(i) Let $y\in\up{\Lang{}(s)}$. Then there exists $x\in\GS\cup\GSo$ such that either (a) $x\in\lang{}(s)$ and $\atoms(x)\prec y$, or (b) $x\in\Lo(s)$ and $\atoms(x)=y$. In the (a) case, $y\in\up{\lang{}(s)}$ and we are done. In the (b) case, by normality, all prefixes $z$ of $x$ have an extension $z'$ such that $z'\in\lang{}(s)$. The strings $\atoms(z')$ are in $\up{\lang{}(s)}$ and form a Cauchy sequence with limit $\atoms(x)=y$, thus $y$ is in the closure of $\up{\lang{}(s)}$.

(ii) We wish to show that for any $s\in X$ and $\alpha\in\At$,
\begin{align}
GL(\delta_\alpha^X(s)) &= \delta_\alpha^{\LL'}(\Lang{}(s)),\label{eq:normal}
\end{align}
where $GL(p,t) = (p,\Lang{}(t))$, $GL(1) = 1$, and $GL(0)=0$. We have
\begin{align*}
\delta_\alpha^X(s)=1 &\Imp \alpha\in \Lang{}(s) \Imp \delta_\alpha^{\LL'}(\Lang{}(s))=1,
\end{align*}
so~\eqref{eq:normal} holds if $\delta_\alpha^X(s)=1$. Similarly,
\begin{align*}
\delta_\alpha^X(s)=0 &\Imp \forall x\ \alpha x\not\in \Lang{}(s)
\Imp \alpha\not\in \Lang{}(s) \wedge \set x{\alpha px\in \Lang{}(s)} = \emptyset
\Imp \delta_\alpha^{\LL'}(\Lang{}(s)) = 0,
\end{align*}
so~\eqref{eq:normal} holds if $\delta_\alpha^X(s)=0$.

Finally, if $\delta_\alpha^X(s)=(p,t)$, then by normality $t$ is live, so $\lang{}(t)\ne\emptyset$. But if $x\in\lang{}(t)$, then $\alpha px\in\lang{}(s)\subs \Lang{}(s)$, so $\set x{\alpha px\in \Lang{}(s)}\ne\emptyset$. By definition of $\LL'$, $\delta_\alpha^{\LL'}(\Lang{}(s)) = (p,\set x{\alpha px\in \Lang{}(s)})$, and $\Lang{}(t)=\set x{\alpha px\in \Lang{}(s)}$ since $\accept(s,\alpha px)$ iff $\accept(t,x)$. Thus
\begin{align*}
GL(\delta_\alpha^X(s)) &= GL(p,t) = (p,\Lang{}(t)) = (p,\set x{\alpha px\in \Lang{}(s)}) = \delta_\alpha^{\LL'}(\Lang{}(s)).\qedhere
\end{align*}
\end{proof}

\begin{theorem}%
\label{thm:normalfinal}
Let $\LL''$ denote the subcoalgebra of $\LL'$ consisting of those sets $A\in\LL'$ such that $\up A$ is the closure of $\up{(A\cap\GS)}$; that is, such that $\up{(A\cap\GS)}$ is dense in $\up A$. Then $\LL''$ is normal and final in the category of normal $G$-coalgebras.
\end{theorem}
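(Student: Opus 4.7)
The plan is to verify four claims in sequence: (a) $\LL''$ is closed under $\delta^{\LL'}$ and hence a genuine subcoalgebra; (b) $\LL''$ is normal; (c) for any normal $G$-coalgebra $(X,\delta^X)$ there is a homomorphism into $\LL''$; and (d) this homomorphism is unique.

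For (a), suppose $A\in\LL''$ and $\delta^{\LL'}_\alpha(A)=(p,B)$, so $B=\set{x}{\alpha px\in A}$ and in particular $\alpha\notin A$. First I would establish two identities relating $A$ and $B$ inside the clopen cylinder $U_\alpha=\set{\alpha y}{y\in\Ato}$: namely $\set{\alpha y}{y\in\up B}=\up A\cap U_\alpha$, which is immediate from the definitions, and $\set{\alpha y}{y\in\up(B\cap\GS)}=\up(A\cap\GS)\cap U_\alpha$, which uses determinacy of $A$ to force any finite witness in $A$ of the form $\alpha q z$ to have $q=p$. Because $U_\alpha$ is clopen, topological closure commutes with intersection by $U_\alpha$, and because $y\mapsto\alpha y$ is a homeomorphism onto $U_\alpha$, closures transport across it. Applying the density hypothesis $\up A=\overline{\up(A\cap\GS)}$ and the injectivity of the cylinder embedding yields $\up B=\overline{\up(B\cap\GS)}$, so $B\in\LL''$.

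For (b), if $\delta^{\LL''}_\alpha(A)=(p,B)$ then $B\ne\emptyset$ by the definition of $\delta^{\LL'}$, so $\up B\ne\emptyset$; the density condition for $B\in\LL''$ then forces $\up(B\cap\GS)\ne\emptyset$, which means $B$ is live. For (c), \Cref{lem:normal} already does the work: part (ii) gives that $\Lang{}\colon(X,\delta^X)\to\LL'$ is a coalgebra homomorphism, and part (i) gives that $\up\Lang{}(s)=\overline{\up\lang{}(s)}=\overline{\up(\Lang{}(s)\cap\GS)}$, so $\Lang{}(s)\in\LL''$ for every state $s$.

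For (d), let $h\colon(X,\delta^X)\to\LL''$ be any homomorphism. The graph of $h$ is a functional bisimulation, so $s\sim h(s)$ as states of the coproduct $(X,\delta^X)+\LL''$, which is itself normal since both summands are and coproducts introduce no new transitions. By \Cref{lem:bisim} applied in this normal coproduct, $\Lang{}(s)=\Lang{}(h(s))$; and by \Cref{lem:LAA}, since $h(s)\in\LL''\subs\LL'$ and the two share their transition map, $\Lang{}(h(s))=h(s)$. Hence $h(s)=\Lang{}(s)$, proving uniqueness. The main obstacle will be step (a): although conceptually clean, the topological bookkeeping linking finite and infinite acceptances of $A$ with those of its derivative $B$ through the clopen cylinders demands care, and one must be explicit about where determinacy and the clopen structure of $\Ato$ enter.
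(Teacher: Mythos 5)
Your proof is correct and, at its core, follows the same route as the paper: your steps (b) and (c) are exactly the paper's argument --- every nonempty state of $\LL''$ has nonempty finite part and is therefore live, which gives normality since transitions only target nonempty sets, and existence of the homomorphism is \Cref{lem:normal}(ii) with its image landing in $\LL''$ by \Cref{lem:normal}(i) together with $\lang{}(s)=\Lang{}(s)\cap\GS$. You go beyond the paper in two places. First, your step (a): the paper simply declares $\LL''$ a subcoalgebra without checking that derivatives of states of $\LL''$ remain in $\LL''$, and your clopen-cylinder argument fills this in correctly; the only nitpick is that the identity $\{\alpha y : y\in\up B\}=\up A\cap U_\alpha$ (with $U_\alpha$ the cylinder of infinite atom sequences beginning with $\alpha$) is not purely definitional --- it also needs determinacy, to exclude $\alpha\in A$ and strings $\alpha q z\in A$ with $q\ne p$, including infinite ones --- but you have already isolated exactly those facts, so this is a matter of wording rather than a gap. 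Second, for uniqueness the paper is terse (``$L$ is the identity on $\LL''$, thus $\LL''$ is final''), implicitly using that homomorphisms preserve accepted languages; your version makes this precise by noting that the graph of $h$ is a bisimulation and invoking \Cref{lem:bisim} for normal coalgebras together with \Cref{lem:LAA}. That works; note only that the coproduct detour is unnecessary, since \Cref{def:bisimilarity} and \Cref{lem:bisim} already relate states of two different coalgebras, so they apply directly to $s$ and $h(s)$.
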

\begin{proof}
To show that $\LL''$ is normal, we need to show that $\sem A$ is nonempty for all nonempty $A\in\LL''$. Suppose $x\in A$. Either $x\in\GS$ itself or $\atoms(x)\in\up A$, in which case $\atoms(x)$ is the limit of strings in $\up{(A\cap\GS)}$. In either case $A\cap\GS$ is nonempty, thus
\begin{align*}
\sem A
&= \Lang{}(A)\cap\GS && \text{by definition of $\sem A$}\\
&= A\cap\GS && \text{by \Cref{lem:LAA}}\\
&\ne \emptyset.
\end{align*}
We have shown that $\LL''$ is normal. By \Cref{lem:normal}(ii), for any normal $G$-coalgebra $(X,\delta)$, $L:X\to\LL'$ is a coalgebra homomorphism, and by \Cref{lem:normal}(i), its image is in $\LL''$. By \Cref{lem:LAA}, $L$ is the identity on $\LL''$, thus $\LL''$ is final. 
\end{proof}

We conclude that $\LL''$ is isomorphic to the language model $\LL$ of \Cref{sec:bisimNormal}: the states of $\LL$ are obtained from those of $\LL''$ by intersecting with $\GS$, and the states of $\LL''$ are obtained from those of $\LL$ by taking the topological closure. This establishes that $\LL$ is isomorphic to a coequationally-defined subcoalgebra of the final $G$-coalgebra.

We remark that there is a weaker notion of normality that corresponds exactly to the language model $\LL'$.
Let us call a state $s$ of a $G$-coalgebra an \emph{explicit failure state} if all computations from $s$ lead to explicit failure after a finite time; that is, if $\Lang{}(s)=\emptyset$. By K\"onig's lemma, if $s$ is an explicit failure state, then there is a universal bound $k$ such that all computations from $s$ fail before $k$ steps. Every explicit failure state is a dead state, but the converse does not hold in general.

Let us say a $G$-coalgebra satisfies the \emph{early failure property} if there are no transitions to explicit failure states. The coalgebra $\LL'$ satisfies the early failure property and is final in the category of $G$-coalgebras satisfying early failure.

One can identify explicit failure states by depth-first search and convert to an equivalent automaton satisfying early failure by replacing all transitions to explicit failure states with immediate failure.

\section{Probabilistic Models -- continuous version}
\label{sec:prob-model-continuous}
In this subsection, we give a more general version of the probabilistic models of \Cref{sec:prob-model}, in terms
of Markov kernels, a common class of interpretations for probabilistic programming languages
(PPLs). We show that the language model is sound and complete for this class of models as well. We assume familiarity with basic measure theory.

We briefly review some basic primitives commonly used in the denotational
semantics of PPLs. For a measurable space $(X,\B)$, we
let $\Dist(X,\B)$ denote the set of subprobability measures over $X$, \ie, the set of
countably additive maps $\mu : \B \to [0,1]$ of total mass at most $1$: $\mu(X)\leq 1$. In the interest of readability, in what follows we will write $\Dist(X)$ leaving the $\B$ implicit.
A common distribution is the \emph{Dirac distribution} or \emph{point mass} on
$x \in X$, denoted $\delta_x\in\Dist(X)$; it is the map $A\mapsto [x\in A]$ assigning
probability $1$ or $0$ to a measurable set $A$ according as $A$ contains $x$.\footnote{The \emph{Iverson bracket} $[\phi]$ is defined to be $1$ if the statement
$\phi$ is true, and $0$ otherwise.}
Denotational models of PPLs typically interpret programs as
\emph{Markov kernels}, maps of type $X \to \Dist(X)$.
Such kernels can be composed in sequence using Kleisli composition,
since $\Dist(-)$ is a monad.

\begin{definition}[Probabilistic Interpretation]
Let $i = (\State, \B, \eval, \sat)$ be a triple consisting of
\begin{itemize}
  \item a measurable space $(\State,\B)$ with \emph{states} $\State$ and a $\sigma$-algebra of \emph{measurable sets} $\B\subs 2^\State$,
  \item for each action $p \in \Sigma$,
    a Markov kernel $\eval(p) \colon \State \to \Dist(\State)$, and
  \item for each primitive test $t \in T$,
    a measurable set of states $\sat(t) \in \mathsf B$.
\end{itemize}
The \emph{probabilistic interpretation} of $e \in \Exp$ with respect to $i$
is the Markov kernel $\pden{i}{e} \colon \State \to \Dist(\State)$ defined as follows:
\begin{align*}
  \pden{i}{p} &\defeq
    \eval(p)\\
  \pden{i}{b}(\sigma) &\defeq
    [\sigma \in \sat(b)] \cdot \delta_\sigma\\
  \pden{i}{e\cdot f}(\sigma)(A) &\defeq
    \int_{\sigma'} \pden{i}{e}(\sigma)(d\sigma') \cdot
    \pden{i}{f}(\sigma')(A)\quad\text{(Lebesgue integral)}\\
  \pden{i}{e +_b f}(\sigma) &\defeq
    [\sigma \in \sat(b)] \cdot \pden{i}{e}(\sigma)
    + [\sigma \in \sat(\bneg{b})] \cdot \pden{i}{f}(\sigma)\\
  \pden{i}{e^{(b)}}(\sigma) &\defeq
    \lim_{n\to\infty} \pden{i}{{(e +_b 1)}^n \cdot \bneg{b}}(\sigma)
&&\qedhere
\end{align*}
It is known from~\cite{K85a} that the limit in the definition of
$\pden{i}{e^{(b)}}$ exists, and that $\pden{i}{e}$ is a Markov kernel for
all $e$.
\end{definition}

\begin{restatable}{theorem}{soundcompleteforprobcont}%
\label{thm:sound-complete-for-prob-cont}
The language model is sound and complete for the probabilistic model in the
following sense:
\[ \den{e} = \den{f} \quad \iff \quad\forall i.\, \pden{i}{e} = \pden{i}{f}\]
\end{restatable}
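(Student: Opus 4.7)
The plan is to lift the discrete proof of \Cref{thm:sound-complete-for-prob} to the measure-theoretic setting with only cosmetic changes, exploiting the fact that the set $\Gs$ of guarded strings remains countable (since $\At$ is finite and each expression mentions only finitely many actions), so we may continue to treat languages $L \subseteq \Gs$ via countable sums of kernels while handling the state space $\State$ with Lebesgue integration.

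For soundness, I would define $\kappa_i \colon \Gs \to \State \to \Dist(\State)$ by induction on guarded strings, with
\[
  \kappa_i(\alpha)(\sigma)(A) \defeq [\sigma \in \sat(\alpha)] \cdot \delta_\sigma(A),
  \qquad
  \kappa_i(\alpha p w)(\sigma)(A) \defeq [\sigma \in \sat(\alpha)] \cdot \int_{\sigma''} \eval(p)(\sigma)(d\sigma'') \cdot \kappa_i(w)(\sigma'')(A),
\]
and lift to languages by $\kappa_i(L)(\sigma)(A) \defeq \sum_{w \in L} \kappa_i(w)(\sigma)(A)$, which is a well-defined subprobability kernel since the sum is countable and each summand is measurable in $\sigma$. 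The target identity $\pden{i}{-} = \kappa_i \circ \den{-}$ is then proved by well-founded induction on the order $\prec$ of \Cref{lem:pmodel-well-defined}, translating each of the four equations (A1)--(A4) from the discrete proof into their measure-theoretic analogues: the equation for $p$ uses $\sum_\alpha [\sigma \in \sat(\alpha)] = 1$ since $\set{\sat(\alpha)}{\alpha\in\At}$ partitions $\State$ into measurable pieces; the equation for $e\cdot f$ uses the factorization lemma (\Cref{lem:seq-isomorphism}) and Fubini/Tonelli to exchange sums and integrals; and the loop case reduces to exchanging the limit with integration, justified by monotone convergence applied to the monotone sequence $F_n$ from \Cref{lem:pmodel-well-defined}.

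For completeness, the canonical interpretation $i \defeq (\Gs, \B, \eval, \sat)$ carries over verbatim by taking $\B = 2^{\Gs}$ (the discrete $\sigma$-algebra on the countable set $\Gs$), with
\[
  \eval(p)(w) \defeq \Unif(\set{wp\alpha}{\alpha\in\At}),
  \qquad
  \sat(t) \defeq \set{x\alpha \in \Gs}{\alpha \leq t}.
\]
Since the state space is discrete, each integral collapses to a countable sum and the entire case analysis of the discrete proof --- establishing $\den{e} = \set{\alpha x \in \Gs}{\pden{i}{e}(\alpha)(\{\alpha x\}) \neq 0}$ by induction on $\prec$ --- applies without change.

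The main obstacle will be the measurability bookkeeping in the soundness direction: verifying that $\kappa_i(w)$ is a genuine Markov kernel (measurable in $\sigma$, countably additive in $A$) for each $w$, and that the two Fubini-style manipulations in the $e\cdot f$ case are legitimate. Both reduce to standard facts about Lebesgue integration against subprobability kernels and the fact that countable sums of measurable functions are measurable, but they should be stated carefully. Everything else --- including the well-definedness of $\pden{i}{e}$ and the existence of the loop limit --- is already established by the Kleene-style construction in~\cite{K85a}, which we may cite to dispatch \Cref{lem:pmodel-well-defined} in the general setting.
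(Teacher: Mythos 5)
Your proposal is correct and follows essentially the same route as the paper's own proof: the same kernel-valued interpretation $\kappa_i$ of guarded strings lifted to languages by countable summation to establish $\pden{i}{-} = \kappa_i \circ \den{-}$ (citing \cite{K85a} for well-definedness of the loop limit), and the same canonical interpretation over the state space $\Gs$ with uniform kernels for completeness. The only deviation is immaterial: you equip $\Gs$ with the discrete $\sigma$-algebra $2^{\Gs}$, whereas the paper uses the Borel sets of the Cantor-space topology; on the countable set $\Gs$ these coincide for all practical purposes, so your measurability bookkeeping is, if anything, slightly simpler.
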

\begin{proof}[Proof Sketch]
The $\Imp$ direction is essentially Lemma 1 of~\cite{K85a}.
\begin{itemize}
\item[$\Rightarrow$:]
  For soundness, we  define a map
  $\kappa_i \colon \Gs \to \State \to \Dist(\State)$
  that interprets guarded strings as Markov kernels as follows:
  \begin{align*}
    \kappa_i(\alpha)(\sigma) &\defeq
      [\sigma \in \sat(\alpha)] \cdot \delta_\sigma\\
    \kappa_i(\alpha p w)(\sigma)(A) &\defeq
      [\sigma \in \sat(\alpha)] \cdot
      \int_{\sigma'} \eval(p)(\sigma)(\sigma') \cdot
      \kappa_i(w)(\sigma')(A).
  \end{align*}
  We then lift $\kappa_i$ to languages via pointwise summation,  \[
    \kappa_i(L) \defeq \sum_{w \in L} \kappa_i(w)
  \]
  and establish that any probabilistic interpretation factors through
  the language model via $\kappa_i$:
  \begin{equation*}
    \pden{i}{-} = \kappa_i \circ \den{-}.
  \end{equation*}

\item[$\Leftarrow$:]
  For completeness, we construct an interpretation $i \defeq (\Gs, \B, \eval, \sat)$
  over the state space $\Gs$ as follows. Let $\B$ be the Borel sets of the Cantor space topology on $\GS$.
  \begin{mathpar}
    \eval(p)(w) \defeq \Unif(\set{wp\alpha}{\alpha \in \At})
    \and
    \sat(t) \defeq \set{x\alpha \in \Gs}{\alpha \leq t}
  \end{mathpar}
  and show that $\den{e}$ is fully determined by $\pden{i}{e}$:
  \begin{equation*}
    \den{e} = \set{\alpha x \in \Gs}{\pden{i}{e}(\alpha)(\{\alpha x\}) \neq 0}.
  \qedhere
  \end{equation*}
\end{itemize}
\end{proof}


\end{appendices}
}{}

\end{document}